\newtheorem{remark}{Remark}
\newtheorem{theorem}{Theorem}
\newtheorem{corollary}{Corollary}
\newtheorem{lemma}{Lemma}
\newtheorem{fact}{Fact}
\newtheorem{definition}{Definition}
\newtheorem{observation}{Observation}
\newtheorem{example}{Example}
\newcommand{\ind}{\mathds{1}}
\newcommand{\core}{\textsc{Core}}
\newcommand{\maxcore}{\textsc{Max-Core}}
\newcommand{\icore}{\textsc{Core}^{CZ}}
\newcommand{\rev}{\textsc{Rev}}
\newcommand{\srev}{\textsc{SRev}}
\newcommand{\prev}{\textsc{PRev}}
\newcommand{\estprev}{\widetilde{\textsc{PRev}}}
\newcommand{\vBeta}{\boldsymbol{\beta}}
\newcommand{\vC}{\mathbf{c}}
\newcommand{\vr}{\mathbf{r}}
\newcommand{\Med}{\textsc{Median}}
\newcommand{\opt}{\textsc{OPT}}
\newcommand{\sold}{\textsc{SOLD}}
\newcommand{\supp}{\textsc{Supp}}
\newcommand{\lamij}{\lambda_{ij}(t_{ij},\beta_{ij},\delta_{ij})}
\definecolor{MyGray}{rgb}{0.8,0.8,0.8}
\newcommand{\poly}{\text{poly}}
\def \cT  {{\mathcal{T}}}
\def \cV  {{\mathcal{V}}}
\def \cC  {{\mathcal{C}}}
\def \cD  {{\mathcal{D}}}
\def \cB  {{\mathcal{B}}}
\def \cG  {{\mathcal{G}}}
\def \cM  {{\mathcal{M}}}
\def \cF  {{\mathcal{F}}}
\def \cP  {{\mathcal{P}}}
\def \cE  {{\mathcal{E}}}
\def \cQ  {{\mathcal{Q}}}
\def \cJ  {{\mathcal{J}}}
\def \eps  {\varepsilon}
\def \Mpp  {{\mathcal{M}_{\text{PP}}}}
\def \Mtpt  {{\mathcal{M}_{\text{TPT}}}}
\def \Mopt  {{\mathcal{M}_{\text{OPT}}}}
\newcommand{\optlp}{\textsc{OPT}_{\textsc{LP}}}
\def \M  {{\mathcal{M}}}
\def \DD  {{\mathcal{D}}}
\def \bd  {{\bm{d}}}
\def \bx  {{\bm{x}}}
\def \R   {{\mathbb{R}}}
\def \ptruncated {{\cP^{tr(\eps)}}}
\def \ptruncatedI {{\cP_{i}^{tr(\eps,\cP)}}}
\def \ptruncatedEmp {{\widehat{\cP}^{tr(\eps)}}}
\def \pbox {{\cP^{box(\eps)}}}
\newcommand{\val}{\textsc{Val}}
\newcommand{\rt}{\text{RT}}
\newcommand{\notshow}[1]{{}}
\newenvironment{prevproof}[2]{\noindent {\em {Proof of {#1}~\ref{#2}:}}}{$\Box$\vskip \belowdisplayskip}
\newcommand{\argmax}{\mathop{\arg\max}}
\newcommand{\argmin}{\mathop{\arg\min}}
\newcommand{\E}{\mathop{\mathbb{E}}}
\def \Wconstraint {(1)}
\def \WconstraintNew {(1$'$)}
\def \LambdaMarginalConstraint {(3)}
\def \PiConstraint {(2)}
\def \ReduceDemandConstaint {(6)}
\def \MarginalToGlobalConstraint {(7)}
\def \CompareMarginalConstraint {(4)}
\def \BoundMeanDeltaConstraint {(8)}
\def \BoundSumDeltaConstraint {(9)}
\def \HatLambdaDistributionConstraint {(5)}
\def \WConstraintFP {(3)}
\def \BetaConstraintFP {(1)}
\def \WBetaConstraintFP {(4)}
\def \CiConstraintFP {(2)}
\newcommand{\dem}{\textsc{Dem}}
\newcommand{\adem}{\textsc{ADem}}
\newcommand{\xos}{\textsc{Xos}}
\newcommand{\alg}{\textsc{ALG}}
\newcommand{\Cor}{{\textsc{Corner}}}
\begin{document}

\title{Computing Simple Mechanisms: Lift-and-Round over Marginal Reduced Forms}
\author{Yang Cai\footnote{Supported by a Sloan Foundation Research Fellowship and the NSF Award CCF-1942583 (CAREER).} \\Computer Science Department\\Yale University\\yang.cai@yale.edu
 \and Argyris Oikonomou\footnote{Supported by a Sloan Foundation Research Fellowship and the NSF Award CCF-1942583 (CAREER).}\\Computer Science Department\\Yale University\\ argyris.oikonomou@yale.edu 
 \and Mingfei Zhao\footnote{Work done in part while the author was studying at Yale University, supported by the NSF Award CCF-1942583 (CAREER).}\\ Google Research\\mingfei@google.com
}

\maketitle

\begin{abstract}
    We study revenue maximization in multi-item multi-bidder auctions under the natural \emph{item-independence} assumption -- a classical problem in Multi-Dimensional Bayesian Mechanism Design. One of the biggest challenges in this area is developing algorithms to compute (approximately) optimal mechanisms that are not brute-force in the size of the bidder type space, which is usually exponential in the number of items in multi-item auctions. Unfortunately, such algorithms were only known for basic settings of our problem when bidders have unit-demand ~\cite{ChawlaHMS10,ChawlaMS15} or additive valuations~\cite{Yao15}. 

In this paper, we significantly improve the previous results and design the first algorithm that runs in time \emph{polynomial in the number of items and the number of bidders} to compute mechanisms that are $O(1)$-approximations to the optimal revenue when bidders have XOS valuations, resolving the open problem raised in~\cite{ChawlaM16,CaiZ17}. Moreover, the computed mechanism has a simple structure: It is either a posted price mechanism or a two-part tariff mechanism. As a corollary of our result, we show how to compute an approximately optimal and simple mechanism efficiently using \emph{only sample access} to the bidders' value distributions. Our algorithm builds on two innovations that allow us to search over the space of mechanisms efficiently: (i) a new type of succinct representation of mechanisms -- the \emph{marginal reduced forms}, and (ii) a novel \emph{Lift-and-Round procedure} that concavifies the problem.

\notshow{

\argyrisnote{
Notes:
	Should we mention that the results hold with high probability?
	Should we mention that for XOS we need an adjusted demand oracle? We haven't named it yet.
}
\mingfeinote{I think we can mention that in the intro.}

\mingfeinote{

A note for the abstract:

Result: We can compute simple mechanisms.

Technique: 
1. Marginal Reduced Form (Define it) + Relaxation

Cai and Zhao proved that for any mechanism, there exists a set of dual parameters such that the revenue is bounded.

We search over the product space of the dual parameters.

2. Multiplicative approximation for the Marginal Reduced Form.

3. Robustify the analysis from Cai and Zhao, to prove that any feasible solution of the LP can be bounded using simple mechanism. 

Make the learnable result from C-D constructive.

}

}

\end{abstract}

\thispagestyle{empty}
\addtocounter{page}{-1}
\newpage


\section{Introduction}\label{sec:intro}
Revenue-maximization in multi-item auctions has been recognized as a central problem in Economics and more recently in Computer Science. 
While Myerson's celebrated work showed that a simple mechanism is optimal in single-item settings~\cite{Myerson81}, the optimal multi-item mechanism is known to be prohibitively complex and notoriously difficult to characterize even in basic settings. Facing the challenge, a major research effort has been dedicated to understanding the computational complexity for finding an approximately revenue-optimal  mechanism in multi-item settings. Despite significant progress
, there is still a substantial gap in our understanding of the problem, 
for example, in the natural and extensively studied \emph{item-independent} setting, first introduced in the influential paper by Chawla, Hartline, and Kleinberg~\cite{ChawlaHK07}.

Formally, the item-independent setting is defined as follows: A seller is selling $m$ heterogeneous items to $n$ bidders, where the $i$-th bidder's type is drawn independently from an $m$-dimensional product distribution $D_i=\bigtimes_{j\in[m]} D_{ij}$.\footnote{$[m]$ denotes $\{1,2,...,m\}$.  $D_{ij}$ is the distribution of bidder $i$'s value for item $j$. The definition is extended to XOS in \Cref{sec:prelim}.} We only understand the computational complexity of finding the revenue-optimal mechanism in the item-independent setting for the two most basic valuations: unit-demand and additive valuations. First, we know that finding an exactly optimal mechanism is computationally intractable even for a single bidder with either unit-demand~\cite{chen2015complexity} or additive valuation~\cite{DaskalakisDT14}. Second, there exists a polynomial time algorithm that computes a mechanism whose revenue is at least a constant fraction of the optimal revenue when bidders have unit-demand~\cite{ChawlaHMS10,ChawlaMS15} or additive valuations~\cite{Yao15}. However, unit-demand and additive valuations are only two extremes within a broader class of value functions known as the constrained additive valuations, where the bidder's value is additive subject to a downward-closed feasibility constraint.\footnote{A bidder has constrained-additive valuation if the bidder's value for a bundle $S$ is defined as $\max_{V\in 2^S\cap \mathcal{I}}\sum_{j\in V} t_j$, where $t_j$ is the bidder's value for item $j$, and $\mathcal{I}$ is a downward-closed set system over the items specifying the feasible bundles. Note that constrained-additive valuations contain familiar valuations such as additive,  
unit-demand, 
or matroid-rank valuations. 
} Furthermore, all constrained additive valuations are contained in an even more general class known as the XOS valuations. Beyond unit-demand and additive valuations, our understanding was limited, and we only knew how to compute an approximately optimal mechanism when bidders are symmetric, i.e., all $D_i$'s are identical~\cite{ChawlaM16,CaiZ17}. Finding a polynomial time algorithm for asymmetric bidders was thus raised as a major open problem in both~\cite{ChawlaM16,CaiZ17}. In this paper, we resolve this~open~problem.

\medskip \noindent \hspace{0.7cm}\begin{minipage}{0.92\columnwidth}
\begin{enumerate}
\item[{\bf Result I:}] For the item-independent setting with (asymmetric) XOS bidders, 
there exists an algorithm that computes a Dominant Strategy Incentive Compatible (DSIC) and Individually Rational (IR) mechanism that achieves at least $c\cdot\opt$ for some absolute constant $c>0$, where $\opt$ is the optimal revenue achievable by any Bayesian Incentive Compatible (BIC) and IR mechanism. Our algorithm has running time polynomial in 
$\sum_{i\in[n],j\in[m]}|\cT_{ij}|$, where $\cT_{ij}$ is the support~of~$D_{ij}$. See \Cref{thm:main XOS-main body} for the formal statement.
\end{enumerate}
\end{minipage}

\noindent 

\vspace{-.1in}
\paragraph{Computing Approximately Optimal Mechanisms under Structured Distributions.} When the bidders' types are drawn from arbitrary distributions, a line of works provide algorithms for finding almost revenue-optimal mechanisms in multi-item settings in time polynomial in the total number of types, i.e., $\sum_{i\in [n]}|\supp(D_i)|$ ($\supp(D_i)$ denotes the support of $D_i$)
~\cite{AlaeiFHHM12,CaiDW12a,CaiDW12b,CaiDW13a,CaiDW13b,cai2021efficient}. 
However, the total number of types could be exponential in the number of items, e.g., there are $\sum_{i\in [n]}\left(\prod_{j\in[m]} |\cT_{ij}|\right)$ types in the item-independent case, making these algorithms unsuitable. For unstructured type distributions, such dependence is unavoidable as even describing the distributions requires time $\Omega\left(\sum_{i\in [n]}|\supp(D_i)|\right)$. What if the type distributions are \emph{structured and permit a more succinct description}, e.g., product measures? Arguably, high-dimensional distributions  that arise in practice (such as bidders' type distributions in multi-item auctions) are rarely arbitrary, as arbitrary high-dimensional distributions cannot be represented or learned efficiently; see e.g.~\cite{Daskalakis18} for a discussion. Indeed, one of the biggest challenges in Bayesian Algorithmic Mechanism Design is designing algorithms to compute (approximately) optimal mechanisms that are not brute-force in the size of the bidder type space when the type distributions are structured. In this paper, we develop computational tools to exploit the item-independence to obtain an exponential speed-up in running time. 



\vspace{-.15in}
\paragraph{Simple vs. Optimal.} An additional feature of our algorithm is that the mechanisms computed have a simple structure. It is either a \emph{posted price mechanism} or a \emph{two-part tariff mechanism}. Given the description of the two mechanisms, it is clear that both of them are~DSIC~and~IR. 

\vspace{-.15in}
\paragraph{Rationed Posted Price Mechanism (RPP).} There is a price $p_{ij}$ for bidder $i$ to purchase item $j$. The bidders arrive in some arbitrary order, and each bidder can purchase \emph{at most one} item among the available ones at the given price.\footnote{Usually, posted price mechanisms do not restrict the maximum number of items a bidder can buy. We consider a rationed version of posted price mechanism to make the computational task easy.} 

\vspace{-.15in}
\paragraph{Two-part Tariff Mechanism (TPT).}  All bidders face the same set of prices $\{p_j\}_{j\in[m]}$. Bidders arrive in some arbitrary order. For each bidder, we show her the available items and the associated price for each item, then ask her to pay an entry fee depending on the bidder's identity and the available items. If the bidder accepts the entry fee, she proceeds to purchase any of the available items at the given prices; if she rejects the entry fee, then she pays nothing and receives nothing.

A recent line of works focus on designing simple and approximately optimal mechanisms
~\cite{ChawlaHK07,ChawlaHMS10,Alaei11,HartN12,KleinbergW12,CaiH13,BILW14,Yao15,RubinsteinW15,CaiDW16,ChawlaM16,CaiZ17}. The main takeaway of these results is that in the item-independent setting, there exists a simple mechanism that achieves a constant fraction of the optimal revenue. The most general setting where such a simple $O(1)$-approximation is known is exactly the setting in \textbf{Result I}, where bidders have XOS valuations~\cite{CaiZ17}. More specifically, \cite{CaiZ17} show that there is a RPP or TPT that achieves a constant fraction of the optimal revenue, however their result is purely existential and does not suggest how to compute these simple mechanisms. Our result makes their existential result constructive.

Finally, combining our result with the learnability result for multi-item auctions in~\cite{CaiD17}, we can extend our algorithm to the case when we only have sample access to the distributions.

\medskip \noindent \hspace{0.7cm}\begin{minipage}{0.92\columnwidth}
\begin{enumerate}
\item[{\bf Result II:}] For constrained-additive bidders, there exists an algorithm that computes a simple, DSIC, and IR mechanism whose revenue is at least $c\cdot\opt-O(\varepsilon \cdot \poly(n,m))$ for some absolute constant $c>0$ in time polynomial in $n$, $m$, and $1/\varepsilon$, given sample access to bidders' type distributions, and assuming each bidder's value for each item lies in $[0,1]$. See \Cref{thm:sample access} for the formal~statement.
\end{enumerate}
\end{minipage}
\vspace{-.1in}



\subsection{Our Approach and Techniques}\label{subsec:technique}
Our main technical contribution is \emph{a novel relaxation of the revenue optimization problem that can be solved approximately in polynomial time} and \emph{an accompanying rounding scheme that converts the solution to a simple and approximately optimal mechanism}.\footnote{An influential framework known as the ex-ante relaxation has been widely used in Mechanism Design, but is insufficient for our problem. See \Cref{sec:exante} for a detailed discussion.} Our first step is to replace the objective of revenue with a duality-based benchmark of the revenue proposed in~\cite{CaiZ17}. One can view the new objective as maximizing the virtual welfare, similar to Myerson's elegant solution for the single-item case. The main difference is that, while one can use \emph{a fixed set of virtual valuations for any allocation} in the single-item case, due to the multi-dimensionality of our problem, \emph{the virtual valuations must depend on the allocation, causing the virtual welfare to be a non-concave function in the allocation}. In this paper, we develop algorithmic tools to concavify and approximately optimize the virtual welfare maximization problem. We believe our techniques will be useful to address other similar challenges in Multi-Dimensional Mechanism Design.

More specifically, for every BIC and IR mechanism $\cM$ with allocation rule $\sigma$ and payment rule $p$, one can choose a set of \emph{dual parameters}  $\theta(\sigma)$ based on $\sigma$ to construct an upper bound $U(\sigma,\theta(\sigma))$ for the revenue of $\cM$. We refer to $\theta$ as the dual parameters because $\theta$ corresponds to a set of ``canonical'' dual variables, which can be used to derive the virtual valuations via the Cai-Devanur-Weinberg duality framework~\cite{CaiDW16}. The upper bound  $U(\sigma,\theta(\sigma))$ is then simply the corresponding virtual welfare. The computational problem is to find an allocation $\sigma$ that (approximately) maximizes $U(\sigma,\theta(\sigma))$. With such a $\sigma$, we could use the result in~\cite{CaiZ17} to convert it to a simple and approximately optimal mechanism. Unfortunately, the function $U(\sigma,\theta(\sigma))$ is highly non-concave in $\sigma$,\footnote{See \Cref{sec:core non-concavity} for an example of the non-concavity of the function. } and thus hard to maximize efficiently. See~\Cref{sec:tour to LP} for a detailed~discussion.

\vspace{-.15in}
\paragraph{LP Relaxation via Lifting.} We further relax our objective, i.e., $U(\sigma,\theta(\sigma))$, to obtain a computationally tractable problem. One specific difficulty in optimizing $U(\sigma,\theta(\sigma))$ comes from the fact that $\theta(\sigma)$ is highly non-linear in $\sigma$. We address this difficulty in two steps. In the first step of our relaxation, we flip the dependence of $\sigma$ and $\theta$ by relaxing the problem to the following two-stage optimization problem (\Cref{fig:LP-fixed-parameter}): 

\vspace{.05in}
\textbf{- Stage I:} Maximize $H(\theta)$ subject to some constraints. $H(\theta)$ is the optimal value of the Stage II~problem.

\vspace{.05in}

\textbf{- Stage II:} Maximize an LP over $\sigma$ with $\theta$-dependent constraints.
\vspace{.05in}

\noindent This makes the problem much more structured and significantly disentangles the complex dependence between $\sigma$ and $\theta$. Yet we still do not know how to solve it efficiently. In the second step of our relaxation, we merge the two-stage optimization into a single LP. In particular, we \emph{lift the problem to a higher dimensional space and optimize over joint distributions of the allocation $\sigma$ and the dual parameters $\theta$ via an LP (\Cref{fig:bigLP}).}  Since the number of dual parameters is already exponential in the number of bidders and the number of items, it is too expensive to represent such a joint distribution explicitly. We show it is unnecessary to search over all joint distributions. By leveraging the independence across bidders and items, it suffices for us to consider a set of succinctly representable distributions -- the ones whose marginals over the dual parameters are product measures. See \Cref{sec:tour to LP} for a more detailed discussion on the development of our relaxation.

\vspace{-.15in}
\paragraph{``Rounding'' any Feasible Solution to a Simple Mechanism.}Can we still approximate the optimal solution of the LP relaxation using a simple mechanism? Unfortunately, the result from~\cite{CaiZ17} no longer applies. 
We provide a \emph{generalization of~\cite{CaiZ17}}, that is, given any feasible solution of our LP relaxation, we can construct in polynomial time a simple mechanism whose revenue is at least a constant fraction of the objective value of the feasible solution (\Cref{thm:bounding-lp-simple-mech}). Our proof (in \Cref{sec:proof of bounding LP with simple mech})
provides several novel ideas to handle the new challenges due to the relaxation, which may be of independent~interest.
\vspace{-.15in}

\paragraph{Marginal Reduced Forms.} We deliberately postpone the discussion on  how we represent the allocation of a mechanism until now. A widely used succinct representation a mechanism $\cM$ is known as the reduced form or the interim allocation rule:
$\{r_{ij}(t_i)\}_{i\in[n],j\in[m], t_{i}\in\bigtimes_{j\in[m]}\cT_{ij}}$ where $r_{ij}(t_i)$ is the probability for bidder $i$ to receive item $j$ when \emph{her type is $t_i=(t_{i1},\ldots, t_{im})$}~\cite{CaiDW12a}. Despite being more succinct than the ex-post allocation rule, the reduced form is still too expensive to store in our setting, as its size is exponential in $m$. 
A key innovation in our relaxation is the introduction of an even more succinct representation -- the \emph{marginal reduced forms} and a \emph{multiplicative approximation} to the polytope of all feasible marginal reduced forms. Although this is a natural concept, to the best of our knowledge, we are the first to introduce and make use of it. 
The marginal reduced form is represented as : $\left\{w_{ij}(t_{ij})\right\}_{i\in[n],j\in[m], t_{ij}\in\cT_{ij}}$, where $w_{ij}(t_{ij})$ is the probability for bidder $i$ to receive item $j$ in $\cM$ and \emph{her value for item $j$ is $t_{ij}$}.\footnote{We refer to $\{w_{ij}(t_{ij})\}_{i\in[n],j\in[m], t_{ij}\in\cT_{ij}}$ as the marginal reduced forms as they are the marginals of the reduced forms multiplied by the probability that $t_{ij}$ is bidder $i$'s value for item $j$, i.e., $\frac{w_{ij}(t_{ij})}{\Pr_{D_{ij}}[t_{ij}]}=\E_{t_{i,-j}\sim \bigtimes_{\ell\neq j}D_{i\ell}}\left[r_{ij}\left(t_{ij},t_{i,-j}\right)\right]$.} Importantly, the size of a marginal reduced form is polynomial in the input size of our problem. As our LP relaxation uses marginal reduced forms as decision variables, it is crucial for us to be able to optimize over the polytope $P$ that contains all \emph{feasible marginal reduced forms}. To the best of our knowledge, $P$ does not have a succinct explicit description or an efficient separation oracle. To overcome the obstacle, we provide \emph{an efficient separation oracle for a different polytope $Q$ that is a multiplicative approximation to $P$, i.e., $c\cdot P\subseteq Q\subseteq P$ for some absolute constant $c\in(0,1)$} (\Cref{thm:multiplicative approx for constraint additive-main body}). Using the separation oracle for $Q$, we can find a $c$-approximation to the optimum of the LP relaxation efficiently. Note that a sampling technique
was developed in~\cite{CaiDW12b} to approximate the polytope of feasible reduced forms. However, their technique only provides an ``additive approximation to the polytope'', which is insufficient for our purpose. Indeed, our multiplicative approximation holds for a wide class of polytopes that frequently appear in Mechanism Design (\Cref{thm:special case of multiplicative approx}). 
We believe our technique has further applications, for example, to convert the additive FPRAS of Cai-Daskalakis-Weinberg~\cite{CaiDW12a,CaiDW12b,CaiDW13a,CaiDW13b} to a multiplicative FPRAS.
\vspace{-.1in}

\subsection{Related Work}\label{sec:related work}
\paragraph{Simple vs. Optimal.} 
We provide an algorithm for the most general setting where an $O(1)$-approximation to the optimal revenue is known using simple mechanisms. It is worth mentioning that a recent result by D\"{u}tting et al.~\cite{DuttingKL20} shows that simple mechanisms can be used to obtain a $O(\log\log m)$-approximation to the optimal revenue even when the bidders have subadditive valuations. We leave it as an interesting open problem to extend our algorithm to bidders with subadditive valuations.

\vspace{-.15in}
\paragraph{$(1-\varepsilon)$-Approximation in Item-Independent Settings.} We focus on constant factor approximations for general valuations. For more specialized valuations, e.g., unit-demand/additive, there are several interesting results for finding $(1-\varepsilon)$-approximation to the ``optimal mechanism''. For example, PTASes are known if we restrict our attention to finding the optimal simple mechanism for a single bidder, e.g., item-pricing~\cite{CaiD11b} or partition mechanisms~\cite{rubinstein2016computational}. For multiple bidders, PTASes are known for bidders with additive valuations under extra assumptions on  distributions (such as i.i.d., MHR,\footnote{That is, $f_{ij}(v)/1-F_{ij}(v)$ is monotone non-decreasing (MHR) for each $i,j$, where $f_{ij}$ is the pdf and $F_{ij}$ is the cdf.} etc.)~\cite{DaskalakisW12,CaiH13}.  The only result that does not require simplicity of the mechanism or extra assumptions on the distribution is \cite{kothari2019approximation}, but their algorithm is only a quasi-polynomial time approximation scheme (QPTAS) and computes a $(1-\varepsilon)$-approximation to the optimal revenue for a single unit-demand bidder.
\vspace{-.15in}

\paragraph{Structured Distributions beyond Item-Independence.} When the type distributions can be represented as other structured distributions such as Bayesian networks, Markov Random Fields, or Topic Models, recent results show how to utilize the structure to improve the learnability, approximability, and communication complexity of multi-item auctions~\cite{BrustleCD20,CaiO21,cai2021recommender}.  We believe that tools developed in this work would be useful to obtain similar improvement in terms of the computational complexity for computing approximately optimal mechanisms for structured distributions beyond item-independence. 

\vspace{-.15in}

\section{Preliminaries}\label{sec:prelim}


We focus on revenue maximization in the combinatorial auction with $n$ independent bidders and $m$ heterogeneous items. We denote bidder $i$'s type $t_i$ as $\{t_{ij}\}_{j\in [m]}$, where $t_{ij}$ is bidder $i$'s private information about item $j$. For each $i$, $j$, we assume $t_{ij}$ is drawn independently from the distribution $D_{ij}$. 
Let $D_i=\times_{j=1}^m D_{ij}$ be the distribution of bidder $i$'s type and $D=\times_{i=1}^n D_i$ be the distribution of the type profile.  
We only consider discrete distributions in this paper. We use $\cT_{ij}$ (or $\cT_i, \cT$) and $f_{ij}$ (or $f_i, f$) to denote the support and the probability mass function of $D_{ij}$ (or $D_i, D$). For notational convenience, we let  $t_{-i}$ to be the types of all bidders except $i$ and $t_{<i}$ (or $t_{\leq i})$ to be the types of the first $i-1$ (or $i$) bidders. Similarly, we define $D_{-i}$, $\cT_{-i}$
  and $f_{-i}$ for the corresponding distribution, support of the distribution, and probability mass function. 

\vspace{-.15in}
\paragraph{Valuation Functions.} For every bidder $i$, denote her valuation function as $v_i(\cdot,\cdot):\cT_i\times 2^{[m]}\to \mathbb{R}_+$. For every $t_i\in \cT_i,S\subseteq 2^{[m]}$, $v_i(t_i,S)$ is bidder $i$'s value for receiving a set $S$ of items, when her type is $t_i$. In the paper, we are interested in constrained-additive and XOS valuations. For every $i\in [n]$, bidder $i$'s valuation $v_i(\cdot,\cdot)$ is \emph{constrained-additive} if the bidder can receive a set of items subject to some downward-closed feasibility constraint $\cF_i$. Formally, $v_i(t_i,S)=\max_{R \in 2^S\cap \cF}\sum_{j\in R} t_{ij}$ for every type $t_i$ and set $S$. It contains classic valuations such as additive ($\cF_i=2^{[m]}$) and unit-demand ($\cF_i=\cup_{j\in[m]}\{j\}$). For constrained-additive valuations, we use $t_{ij}$ to denote bidder $i$'s value for item $j$. For every $i\in [n]$, bidder $i$'s valuation $v_i(\cdot,\cdot)$ is \emph{XOS (or fractionally-subadditive)} if each $t_{ij}$ represents a set of $K$ non-negative numbers $\{\alpha_{ij}^{(k)}(t_{ij})\}_{k\in[K]}$, for some integer $K$, and $v_i(t_i,S) = \max_{k\in[K]} \sum_{j\in S}\alpha_{ij}^{(k)}(t_{ij})$, for every type $t_i$ and set $S$. We denote by $V_{ij}(t_i)=v_i(t_i,\{j\})$ the value for a single item $j$. Since the value of the bidder for item $j$ only depends on $t_{ij}$, we denote $V_{ij}(t_{ij})$ as~the~singleton~value.




\notshow{
\begin{definition}
For every bidder $i$, we consider 
\begin{itemize}
\item \textbf{Constrained Additive:} Given a set of downward closed feasibility constraints $\cF$,
$v_i(t_i,S)=\max_{R\subseteq S,R \in \cF}\sum_{j\in R} t_{ij}$,
where we abuse the notation and denote by $v_i(t_{ij},\{j\})=t_{ij}$.
When $\cF=\cup_{j\in[m]}\{j\}$, that is a bidder is interested in buying at most one item, the valuation is called unit demand. When $\cF=2^{[m]}$, the valuation is called additive.
\item \textbf{XOS:} For a finite number $K$, each type represents a set of $K$ non-negative numbers $t_{ij}=\{a_{ij}^{(k)}(t_{ij})\}_{k\in[K]}$.
Moreover $v_i(t,S) = \max_{k\in[K]} \sum_{j\in S}a_{ij}^{(k)}(t_{ij})$ and we denote by $V_{ij}(t)=v(t,\{j\})$. Since the value of the agent for the $j$-th item only depends on $t_{ij}$, we overuse notation and also denote $V_{ij}(t_{ij})=v(t,\{j\})$.
\end{itemize}
\end{definition}
}

\vspace{-.15in}
\paragraph{Mechanisms.} A mechanism $\cM$ can be described as a tuple $(\sigma,p)$, where $\sigma$ is the \emph{interim allocation} rule of $\cM$ and $p$ stands for the payment rule. Formally, for every bidder $i$, type $t_i$ and set $S$, $\sigma_{iS}(t_i)$ is the interim probability that bidder $i$ with type $t_i$ receives exact bundle $S$.  
We use standard concepts of BIC, DSIC and IR for mechanisms. See~\Cref{appx:prelim} for the formal definitions. For any BIC and IR mechanism $\cM$, denote $\rev(\cM)$ the revenue of $\cM$. Denote $\opt$ the optimal revenue among all BIC and IR mechanisms.
Throughout this paper, the two classes of simple mechanisms we focus on are \emph{rationed posted price} (RPP) mechanisms and \emph{two-part tariff} (TPT) mechanisms, which are both described in \Cref{sec:intro}. We denote $\prev$ the optimum revenue achievable among all RPP mechanisms.  
\notshow{

Relevant to our work is to measure the bit complexity of the problems that we are facing.
In the following definitions we show how we measure the bit complexity of the problem when the bidder's type are Constraint Additive or XOS.
\begin{definition}[Bit Complexity of Instance]
Given a product distribution $D=\prod_{i\in[n]}$ over types for constraint additive or XOS valuation and let $T=\sum_{i\in[n]}\sum_{j\in[m]}\sum_{t_{ij}\in\cT_{ij}}|\cT_{ij}|$.
\begin{itemize}
\item Denote as $b_f$ the bit complexity of elements in $\{f_{ij}(t_{ij})\}_{i\in[n],j\in[m],t_{ij}\in\cT_{ij}}$.
\item If the valuations of the bidders are constraint additive, denote as $b_v$ the bit complexity of elements in $\{t_{ij}\}_{i\in[n],j\in[m],t_{ij}\in\cT_{ij}}$.
\item If the valuations of the bidders are XOS, denote as $b_v$ the bit complexity of elements in $\{\alpha^{(k)}_{ij}\}_{i\in[n],j\in[m],k\in[K],t_{ij}\in\cT_{ij}}$.
\end{itemize}
We refer to the value $\max(b_l,b_f)$ as the bit complexity of the instance.
\end{definition}}
\vspace{-.15in}
\paragraph{Access to the Bidders' Valuations.} We define several ways to access a bidder's valuation. 
\begin{definition}[Value and Demand Oracle]\label{def:value and demand oracle}
A \emph{value oracle} for a valuation $v(\cdot,\cdot)$ takes a type $t$ and a set of items $S\subseteq[m]$ as input, and returns the bidder's value $v(t,S)$ for the bundle $S$. A \emph{demand oracle} for a valuation $v(\cdot,\cdot)$ takes a type $t
$ and a  collection of non-negative prices $\{p_j\}_{j\in[m]}$ as input, and returns a utility-maximizing bundle, i.e. $S^*\in \arg\max_{S\subseteq [m]} \left(v(t,S) - \sum_{j\in S}p_j\right)$. In this paper, we use $\dem_i(\cdot,\cdot)$ to denote the demand oracle for bidder $i$'s valuation  $v_i(\cdot,\cdot)$.
\end{definition}
For constrained-additive valuations, our result only requires query access to a value oracle and a \emph{demand oracle} for every bidder $i$'s valuation $v_i(\cdot,\cdot)$. For XOS valuations, we need a stronger demand oracle that allows ``scaled types'' as input. We refer to the stronger oracle as the \emph{adjustable demand oracle}. 

\begin{definition}[Adjustable Demand Oracle]\label{def:adjustable demand oracle}
An \emph{adjustable demand oracle} for bidder $i$'s XOS valuation $v_i(\cdot,\cdot)$
takes a type $t$, a collection of non-negative coefficients  $\{b_j\}_{j\in[m]}$, and a collection of non-negative prices $\{p_j\}_{j\in[m]}$ as input. For every item $j$, $b_j$ is a scaling factor for $t_{ij}$, meaning that each of the $K$ numbers $\{\alpha_{ij}^{(k)}(t_{ij})\}_{k\in[K]}$, i.e. the contribution of item $j$ under each additive function, is multiplied by $b_j$. The oracle outputs a favorite bundle $S^*$ with respect to the adjusted contributions and the prices $\{p_j\}_{j\in[m]}$, as well as the additive function $\{\alpha_{ij}^{(k^*)}(t_{ij})\}_{j\in [m]}$ for some $k^*\in [K]$ that achieves the highest value on $S^*$. 
Formally, 
$
(S^*,k^*) \in \argmax_{S\subseteq [m],k\in[K]} \left\{\sum_{j\in S}b_j \alpha_{ij}^{(k)}(t_{ij}) - \sum_{j\in S}p_j\right\}.
$
We use  $\adem_i(\cdot,\cdot,\cdot)$ to denote the adjustable demand oracle for bidder $i$'s XOS valuation $v_i(\cdot,\cdot)$.
\end{definition}

\vspace{-.05in}
The adjustable demand oracle can be viewed as a generalization of the demand oracle for XOS valuations. In the above definition, if every coefficient $b_j$ is 1, then the adjustable demand oracle outputs the utility-maximizing bundle $S^*$ (as in the demand oracle) and the additive function that achieves the value for this set. For general $b_j$'s, the adjustable demand oracle scales item $j$'s contribution to bidder $i$'s value by a $b_j$ factor.
The output bundle $S^*$ maximizes the adjusted utility.\footnote{ {Note that for every collection of scaling factors, the query to the adjusted demand oracle is simply a demand query for a different XOS valuation. If all additive functions of $t_i$ are explicitly given, then the adjusted demand oracle can be simulated in~time~$O(mK)$.}}

\vspace{-.05in}
\begin{definition}[Bit Complexity of an Instance]\label{def:bit complexity}
Given any instance of our problem represented as the tuple $(\cT,D,v=\{v_i(\cdot,\cdot)\}_{i\in [n]})$,  
Denote as $b_f$ the bit complexity of elements in $\{f_{ij}(t_{ij})\}_{i\in[n],j\in[m],t_{ij}\in\cT_{ij}}$.
For constrained-additive valuations, denote as $b_v$ the bit complexity of elements in $\{t_{ij}\}_{i\in[n],j\in[m],t_{ij}\in\cT_{ij}}$. For XOS valuations, denote as $b_v$ the bit complexity of elements in $\{\alpha^{(k)}_{ij}(t_{ij})\}_{i\in[n],j\in[m],t_{ij}\in\cT_{ij}, k\in[K]}$.
We define the value $\max(b_v,b_f)$ to be the bit complexity of the instance.
\end{definition}


\notshow{
\begin{definition}[\todo{Is demand oracle folklore?} Demand Oracle for Constraint Additive Valuation $v_i(\cdot,\cdot)$]
Consider a set of non-negative prices $\{p_j\}_{j\in[m]}$ and a type $t_i\in \cT_i$ for bidder $i$,
a demand oracle $\dem(\cdot,\cdot)$ for a constraint additive valuation $v_i(\cdot,\cdot)$,
takes an input the bidder's type, the price vector and returns the utility maximizing bundle.
More formally, $\dem(t_i,\{p_j\}_{j\in[m]})$ outputs an a subset $S\subseteq [m] $ such that
$$
S\in \arg\max_{S\subseteq [m]} v_i(t_i,S) - \sum_{j\in S}p_j
$$
\end{definition}

However,
when we try to compute a simple mechanism for bidders with XOS valuations,
we will need to have access to a demand oracle that can alter the weight that the agent can have for different items.
We call this oracle the adjustable demand oracle and it is defined below.
}

\notshow{

\section{XOS Valuation}

\begin{definition}
Let $\DD_i$ be a product distribution over types of the $i$-th agent over $m$ heterogeneous items.
We denote by $T_i$ the support of $\DD_i$. The valuation of the bidder for the items is XOS.
We consider only anonymous posted price mechanisms with entry fee.

For each bidder,
we assume access to a demand oracle $\dem_i(\cdot,\cdot): T \times \R^m\rightarrow \mathbb{R}$,
that is, for a type $t \in T$ and a set of prices the item $\{p_j\}_{j \in[m]}$,
$\dem_i(t,\{p_j\}_{j\in[m]})$ is the maximum utility of the $i$-th agent with realized type $t$ that she can get under those prices.
Moreover we assume sample access to the agents distribution. \argyrisnote{I think it would be usefull to specify the model we are working on.
I don't think we lose anything by considering sample access to the distribution.}
\yangnote{Yang: We don't have multiplicative approximation if we assume only sample access. I will suggest work with the explicit model first, then in the end have a section discuss how our results can be used to derive results under the sample accesss model.}
\end{definition}

Let $\M(\{\sigma_{i,S}(\cdot)\}_{i\in[n], S \subseteq 2^{[m]}})$ be any BIC mechanism,
where by $\sigma_{i,S}(t_i)$ is the probability that the $i$-th agent with type $t_i \in T_i$ receives exactly bundle $S$.
Let $\epsilon = \sup_{\substack{i\in[n],j\in[m]\\t_{i,j} \in T_{i,j}}}f_i(t_{i,j})$.
Note that given supports $\{T_{i,j}\}_{i\in[n],j\in[m]}$ such that $\epsilon = \sup_{\substack{i\in[n],j\in[m]\\t_{i,j} \in T_{i,j}}}f_i(t_{i,j})$,
we can construct $\{T'_{i,j}\}_{i\in[n],j\in[m]}$ such that $\epsilon' = \sup_{\substack{i\in[n],j\in[m]\\t_{i,j} \in T'_{i,j}}}f_i(t_{i,j})$ and $|T'_{i,j}| \leq \left\lceil\frac{\epsilon}{\epsilon'}\right\rceil|T_{i,j}|$ by subdividing each $t_{i,j}\in T_{i,j}$ to $\{t_{i,j}^{(i)}\}_{i \in \left[\left\lceil\frac{\epsilon}{\epsilon'}\right\rceil\right]}\subseteq T_{i,j}'$. \yangnote{Yang: Could you make the notation consistent? Sometimes it's $T_{i,j}$ and sometimes it's $\cT_{ij}$. Also, what does $t_{i,j}^{(i)}$ mean?}





Let $W_{i}$ be the polytope of feasible welfares normalised to $[0,1]$ for the $i$-th agent.
That is,
there exists an allocation $\pi_i$, such that $w_{i,j}(t_{i,j})V_{i,j}(t_{i,j}) = \sum_{\substack{t_i'\in T_i \\t_{i,j}' = t_{i,j}}} f_i(t_{i,-j})\pi_{i}(t_{i},S)p_i(t_i, S,j)$,
where $p_i(t_i,S,j)$ is the supporting price for item $j$, when the $i$-th agent has type $t_i$ and receives set $S$. \yangnote{Yang: the definition of $W_i$ is different from later sections.}

}
\section{Linear Program Relaxation via Lifting}\label{sec:program}
In this section, we present the linear program relaxation for computing an approximately optimal simple mechanism. The main result of our paper is as follows:

\begin{theorem}\label{thm:main XOS-main body}
Let $T=\sum_{i,j}|\cT_{ij}|$ and $b$ be the bit complexity of the problem instance (\Cref{def:bit complexity}). 
For any $\delta>0$, there exists an algorithm that computes a RPP mechanism or a TPT mechanism, such that the revenue of the mechanism is at least $c\cdot \opt$ for some absolute constant $c>0$ with probability $1-\delta-\frac{2}{nm}$.
For constrained-additive valuations, our algorithm assumes query access to a value oracle and a demand oracle of bidders' valuations. For XOS valuations, our algorithm assumes query access to a value oracle and an adjustable demand oracle. The algorithm has running time $\poly(n,m,T,b,\log (1/\delta))$.
\end{theorem}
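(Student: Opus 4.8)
The plan is to assemble the three ingredients advertised in the technical overview: (i) the duality-based upper bound of Cai--Zhao, (ii) the lifted LP relaxation over joint distributions of allocations and dual parameters whose marginals are product measures, together with the multiplicative separation oracle for the approximating polytope $Q$, and (iii) the ``rounding'' theorem that converts any feasible LP solution into a RPP or TPT mechanism losing only a constant factor. Concretely, I would proceed as follows. First, invoke the Cai--Zhao benchmark: for the revenue-optimal BIC/IR mechanism $\cM^\ast$ with interim allocation $\sigma^\ast$, there is a choice of dual parameters $\theta(\sigma^\ast)$ so that $\opt \le U(\sigma^\ast,\theta(\sigma^\ast))$, the virtual welfare. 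Second, argue that the two-stage program of \Cref{fig:LP-fixed-parameter}, and then the single lifted LP of \Cref{fig:bigLP} written over \emph{marginal} reduced forms $\{w_{ij}(t_{ij})\}$ and product-marginal distributions over $\theta$, is a genuine relaxation: the pair $(\sigma^\ast,\theta(\sigma^\ast))$ (or rather its marginalization) yields a feasible point of the LP with objective at least $c_1\cdot\opt$. Here one must check that restricting to distributions whose $\theta$-marginals are product measures does not lose value — this is where item- and bidder-independence is used, since the canonical dual variables decompose across $(i,j)$.

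Third, solve the LP approximately. Since the true polytope $P$ of feasible marginal reduced forms lacks an efficient separation oracle, I would instead run the ellipsoid method over the polytope $Q$ with $c_2\cdot P\subseteq Q\subseteq P$ furnished by \Cref{thm:multiplicative approx for constraint additive-main body} (for constrained-additive, using the demand oracle; for XOS, using the adjustable demand oracle). Optimizing the LP objective over $Q$ in time $\poly(n,m,T,b,\log(1/\delta))$ — the separation oracle being randomized, which is the source of the $\delta$ and $\tfrac{2}{nm}$ failure probabilities — returns a feasible marginal reduced form $w$ whose objective is at least $c_2$ times the LP optimum over $P$, hence at least $c_1 c_2\cdot\opt$. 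Fourth, feed $w$ (together with the accompanying dual parameters and entry-fee/price data produced by the LP) into \Cref{thm:bounding-lp-simple-mech}, the generalization of Cai--Zhao, to construct in polynomial time a RPP or TPT mechanism whose revenue is at least $c_3$ times the LP objective value at $w$. Composing the constants gives revenue at least $c\cdot\opt$ with $c=c_1c_2c_3>0$, and DSIC/IR holds by construction of RPP and TPT. A union bound over the (polynomially many) randomized separation-oracle calls and the sampling step controls the total failure probability by $\delta+\tfrac{2}{nm}$, and the running time is the sum of the ellipsoid iterations, each making polynomially many oracle queries, plus the polynomial-time rounding, all $\poly(n,m,T,b,\log(1/\delta))$.

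The main obstacle, I expect, is establishing that the lifted LP over product-marginal distributions is actually a relaxation of $U(\sigma^\ast,\theta(\sigma^\ast))$ — i.e., that concavifying by lifting to joint distributions and then truncating to succinctly representable (product-marginal) ones does not destroy the benchmark. This requires showing that the optimal $\theta(\sigma)$ prescribed by the duality framework, when averaged appropriately, stays inside the $\theta$-feasible region of Stage~I and that the induced marginal reduced form lies in $P$; the product structure of $D$ is essential here, and making this rigorous for XOS (where the ``favorite'' additive function introduces extra combinatorial structure in the dual) is delicate. A secondary difficulty is verifying that the multiplicative approximation $c_2\cdot P\subseteq Q\subseteq P$ interacts correctly with the LP objective — in particular that the objective is monotone/homogeneous enough that a point in $Q$ with large objective can be scaled into $P$ without more than a constant loss — and that the randomized separation oracle's error can be absorbed into the stated probability bound. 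Once these structural facts are in place, the remaining steps (ellipsoid, union bound, invoking \Cref{thm:bounding-lp-simple-mech}) are essentially bookkeeping.
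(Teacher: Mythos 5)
Your plan is essentially the paper's proof for constrained-additive valuations: estimate $\prev$ via \Cref{thm:chms10} (this is the source of the $\frac{2}{nm}$ failure probability), construct the multiplicatively sandwiched polytope $\widehat{W}_i$ via \Cref{thm:multiplicative approx for constraint additive-main body} (source of the $\delta$), solve the LP of \Cref{fig:bigLP} with $\widehat{W}_i$ in place of $W_i$, round via \Cref{thm:bounding-lp-simple-mech}, and close the loop with \Cref{lem:bound rev by opt} showing $\opt \le 28\cdot\prev + 4\cdot\optlp$. Two small imprecisions: the LP objective is the single term $\core$, not the full benchmark $U(\sigma,\theta(\sigma))$ (the $\prev$ part is estimated separately and carried along additively), and showing the product-marginal restriction is lossless is trivial here because the feasible point induced by \Cref{lem:caiz17-constrained-additive} has \emph{deterministic} $\theta$, hence a degenerate product distribution — there is no averaging argument to make.

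The genuine gap is in the XOS case. You propose to invoke the same sandwich $c\cdot W_i \subseteq \widehat{W}_i \subseteq W_i$ ``furnished by \Cref{thm:multiplicative approx for constraint additive-main body}, for XOS using the adjustable demand oracle.'' That theorem does not apply: its proof (via \Cref{thm:special case of multiplicative approx}) hinges on $W_i$ being down-monotone, and the XOS single-bidder marginal-reduced-form polytope $W_i$ of \Cref{def:W_i} is \emph{not} down-monotone — by \Cref{obs:W_i-upper-bound} one always has $w_{ij}(t_{ij}) \le \pi_{ij}(t_{ij})$, so for any $(\pi_i,w_i)\in W_i$ with $w_i\neq\mathbf{0}$ the point $(\mathbf{0},w_i)$ lies below it coordinatewise yet is outside $W_i$. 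The paper therefore replaces the containment guarantee by the strictly weaker coordinate-wise ratio bound of \Cref{cor:rfs for xos mult} (each coordinate of a point in $W_i$ is within constant multiplicative factors of the corresponding coordinate of some point in $\widehat{W}_i$, and conversely), which does \emph{not} let you simply scale a feasible point of the approximating polytope into $W_i$. Instead the paper rewrites Constraint~\Wconstraint\ as a new Constraint~\WconstraintNew\ coupling $(\pi_i,w_i)$ to $(\widehat{\pi}_i,\widehat{w}_i)\in\widehat{W}_i$ via scaling factors $\frac{3}{2}$ and $\frac{1}{4}$, yielding a modified LP $(P')$, and then proves $\optlp \le 64\cdot\optlp'$ and that any $(P')$-feasible solution can be transported to a $(P)$-feasible one (\Cref{lem:XOS-P to P'}). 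Your phrase that the XOS verification is ``delicate'' is an understatement: without identifying that down-monotonicity fails and without the $(P')$-construction, the reduction from the approximating polytope back to the true LP does not go through.
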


\vspace{-.05in}
{For any matroid-rank valuation, i.e., the downward-closed feasibility constraint is a matroid, the value and demand oracle can be simulated in polynomial time using greedy algorithms. For more general constraints, it is standard to assume access to the value and demand oracle. We also show that the adjustable demand oracle (rather than a demand oracle) is necessary to obtain our XOS result. In \Cref{lem:example_adjustable_oracle}, we prove that (even an approximation of) $\adem_i$ can not be implemented in polynomial time, given access to the value oracle, demand oracle, and XOS oracle.
}

As most of the technical barriers already exist in the constrained-additive case, for exposition purposes, we focus on constrained-additive valuations in the main body (unless explicitly stated).\footnote{The linear program for XOS valuations can be found in \Cref{fig:XOSLP} in \Cref{sec:program-XOS}.} Before stating our LP, we first provide a brief recap of the existential result by Cai and Zhao~\cite{CaiZ17} summarized in \Cref{lem:caiz17-constrained-additive}.\footnote{The statement is for constrained-additive bidders. See \Cref{appx_cai-zhao} for the statement for XOS bidders.}

\vspace{-.05in}

\begin{definition}\label{def:core-constrained-additive}
For any $i\in [n],j\in [m]$, 
and any feasible\footnote{
For constrained-additive bidders, an interim allocation $\sigma$ is feasible if it can be implemented by a mechanism whose allocation rule always respects all bidders’ feasibility constraints. It is without loss of generality to consider feasible interim allocations.} interim allocation $\sigma$, and non-negative numbers {$\tilde{\vBeta}=\{\tilde{\beta}_{ij}\in \cT_{ij}\}_{i\in[n],j\in[m]}$, $\vC=\{c_i\}_{i\in[n]}$} and $\vr=\{r_{ij}\}_{i\in [n],j\in [m]}\in [0,1]^{nm}$ (referred to as the dual parameters), 
let $\core(\sigma,\tilde{\vBeta},\vC,\vr)$ be the welfare under allocation $\sigma$ truncated at $\tilde{\beta}_{ij}+c_i$ for every $i,j$.
~Formally, $$
\core(\sigma,\tilde{\vBeta},\vC,\vr)=\sum_i\sum_{t_i}f_i(t_i)\cdot \sum_{S\subseteq [m]}\sigma_{iS}(t_i)\sum_{j\in S}t_{ij}\cdot \left(\ind[t_{ij}< \tilde{\beta}_{ij}+c_i]+r_{ij}\cdot \ind[t_{ij}= \tilde{\beta}_{ij}+c_i] \right).$$
\notshow{
if the bidders have constrained-additive valuations, and
$$\core(\sigma,\tilde{\vBeta},\vC,\vr)=\sum_i\sum_{t_i}f_i(t_i)\cdot \sum_{S\subseteq [m]}\sigma_{iS}(t_i)\sum_{j\in S}\gamma_{ij}^S(t_i)\cdot \left(\ind[V_{ij}(t_{ij})< \tilde{\beta}_{ij}+c_i]+r_{ij}\ind[V_{ij}(t_{ij})= \tilde{\beta}_{ij}+c_i] \right)$$ if the bidders have XOS valuations.
Here $\gamma_{ij}^S(t_i)=\alpha_{ij}^{k^*(t_i,S)}(t_{ij})$, where $\displaystyle k^*(t_i,S)=\arg\max_{k\in[K]}\big(\sum_{j\in S}\alpha^k_{ij}(t_{ij})\big)$.
}
\end{definition}

\begin{lemma}\cite{CaiZ17}\label{lem:caiz17-constrained-additive}
Given any BIC and IR mechanism $\cM$ with interim allocation $\sigma$,  
where $\sigma_{iS}(t_i)$ is the interim probability for bidder $i$ to receive exactly bundle $S$ when her type is $t_i$, there exist 
non-negative numbers {$\tilde{\vBeta}^{(\sigma)}=\{\tilde{\beta}_{ij}^{(\sigma)}\in \cT_{ij}\}_{i\in[n],j\in[m]}$, $\vC^{(\sigma)}=\{c_i^{(\sigma)}\}_{i\in[n]}$} 
 and $\vr^{(\sigma)}\in [0,1]^{nm}$ that satisfy\footnote{\cite{CaiZ17} provides an explicit way to calculate $\tilde{\vBeta}^{(\sigma)}, \vC^{(\sigma)}, \vr^{(\sigma)}$. We only include the crucial properties of these parameters here.} 
\begin{enumerate}
    \item $\sum_{i\in[n]} \left(\Pr_{t_{ij}}[t_{ij}>\tilde\beta_{ij}^{(\sigma)}]+r_{ij}^{(\sigma)}\cdot \Pr_{t_{ij}}[t_{ij}=\tilde\beta_{ij}^{(\sigma)}]\right)\leq \frac{1}{2},\forall j$,
    \item $\frac{1}{2}\cdot\sum_{t_i\in \cT_i}f_i(t_i)\cdot \sum_{S:j\in S}\sigma_{iS}(t_i)\leq \Pr_{t_{ij}}[t_{ij}>\tilde\beta_{ij}^{(\sigma)}]+r_{ij}^{(\sigma)}\cdot \Pr_{t_{ij}}[t_{ij}=\tilde\beta_{ij}^{(\sigma)}],\forall i,j$,
    \item $\sum_{i\in [n]} c_i^{(\sigma)}\leq 8\cdot \prev$,
\end{enumerate}
and the corresponding $\core(\sigma,\tilde{\vBeta}^{(\sigma)},\vC^{(\sigma)},\vr^{(\sigma)})$ satisfies the following inequalities:
\begin{enumerate}\addtocounter{enumi}{3}
    \item $\rev(\cM)\leq 28\cdot \prev+4\cdot\core(\sigma,\tilde{\vBeta}^{(\sigma)},\vC^{(\sigma)},\vr^{(\sigma)})$,
    \item $\core(\sigma,\tilde{\vBeta}^{(\sigma)},\vC^{(\sigma)},\vr^{(\sigma)})\leq 64\cdot \prev + 8\cdot \rev(\cM_1^{(\sigma)})$, where $\cM_1^{(\sigma)}$ is some TPT mechanism.
\end{enumerate}
\end{lemma}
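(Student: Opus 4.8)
The statement is the key structural result of Cai and Zhao~\cite{CaiZ17}, and the plan is to reproduce their duality-based decomposition of the revenue. The starting point is the Cai--Devanur--Weinberg duality framework: for the given BIC, IR mechanism $\cM$ with interim rule $\sigma$, picking the ``canonical'' dual flow produces an upper bound of the form $\rev(\cM)\le \E_{t}\big[\sum_i\sum_{S}\sigma_{iS}(t_i)\,\Phi_i(t_i,S)\big]$, where $\Phi_i$ is a virtual-value function that, for constrained-additive valuations, decomposes across items: on coordinates $j$ where $t_{ij}$ is ``large'' it behaves like a Myerson-type ironed virtual value, and on coordinates where $t_{ij}$ is ``small'' it is just $t_{ij}$. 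The only freedom in the construction is which values count as large, encoded by thresholds $\tilde\beta_{ij}\in\cT_{ij}$ together with a tie-break weight $r_{ij}\in[0,1]$ at $t_{ij}=\tilde\beta_{ij}$, and the per-bidder scalars $c_i$.

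First I would pin down the thresholds. Viewing $q_{ij}:=\Pr_{t_{ij}}[t_{ij}>\tilde\beta_{ij}]+r_{ij}\Pr_{t_{ij}}[t_{ij}=\tilde\beta_{ij}]$ as a quantity that moves continuously from $0$ to $1$ as we lower $\tilde\beta_{ij}$ (sweeping $r_{ij}$ across atoms), an intermediate-value argument lets us hit, simultaneously for all $i,j$, both Property~2 ($q_{ij}$ is at least half the interim probability that bidder $i$ gets item $j$ under $\sigma$) and Property~1 ($\sum_i q_{ij}\le\tfrac12$ for every item $j$). Property~1 is exactly what is needed so that the contribution of the ``large'' coordinates to the virtual welfare can be charged to single-item mechanisms: a prophet-inequality/\srev-type argument, using that the total large mass per item is at most $\tfrac12$, bounds that part by $O(\prev)$. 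Property~2 ensures that after removing the large coordinates the remaining ``core'' welfare still carries a constant fraction of the action. Choosing $c_i$ to be a constant times the expected value bidder $i$ gets from her core coordinates under $\sigma$, a Markov argument shows that truncating each coordinate at $\tilde\beta_{ij}+c_i$ (which is precisely the truncation appearing in $\core(\sigma,\tilde{\vBeta},\vC,\vr)$) loses only a constant factor; a direct calculation then bounds $\sum_i c_i$ by the revenue of posting the per-bidder item prices implicit in $\sigma$, giving Property~3. Collecting terms yields Property~4, $\rev(\cM)\le 28\prev+4\,\core(\sigma,\tilde{\vBeta},\vC,\vr)$.

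For Property~5 I would exhibit the TPT mechanism $\cM_1^{(\sigma)}$ that captures the truncated core welfare: post on each item $j$ a price proportional to the threshold value $\tilde\beta_{ij}$ (anonymized appropriately), and charge bidder $i$ an entry fee equal to a constant fraction of her expected utility at those prices restricted to the core event. A core-decomposition argument — using that every core coordinate is bounded by $\tilde\beta_{ij}+c_i$, so the core value concentrates around its mean — splits into the case where most bidders accept the entry fee (then the entry fees alone collect $\Omega(\core)$) and the case where many reject (then the posted item prices alone collect $\Omega(\core)-O(\prev)$), which together give $\core(\sigma,\tilde{\vBeta},\vC,\vr)\le 64\prev+8\,\rev(\cM_1^{(\sigma)})$.

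The main obstacle is the simultaneous choice of the thresholds: they must be small enough for Property~1 (so the tail/single terms are affordable) yet large enough for Property~2 (so the core is not too lossy), and it is not obvious both can be met at once. The resolution is to anchor $\tilde\beta_{ij}$ at roughly the ``median of the allocation'' $\sigma$ for the pair $(i,j)$ and then verify both inequalities from monotonicity of $q_{ij}$ in the threshold, with $r_{ij}$ handling atoms. A secondary difficulty, in Property~5, is tuning the entry fee to be large enough to extract $\Omega(\core)$ while small enough that a constant fraction of bidders accept it; this is exactly what the ``most accept versus many reject'' dichotomy is designed to finesse.
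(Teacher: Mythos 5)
First, note that the paper does not prove this lemma at all: it is imported verbatim from Cai and Zhao~\cite{CaiZ17} (the footnote even says only the ``crucial properties'' are restated), so the comparison is against the proof in that paper. Your outline does follow the Cai--Zhao route in broad strokes — the Cai--Devanur--Weinberg dual benchmark, thresholds $\tilde\beta_{ij}$ (with tie-break weights $r_{ij}$) pinned to allocation-dependent quantiles so that Property~2 holds essentially with equality and Property~1 then falls out of the supply feasibility $\sum_i\sum_{t_i}f_i(t_i)\sum_{S\ni j}\sigma_{iS}(t_i)\le 1$, a posted-price charge for the tail, and a concentration-plus-two-part-tariff argument for the core. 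That skeleton is right.

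However, two of your concrete choices would not deliver the stated properties. First, you set $c_i$ to be ``a constant times the expected value bidder $i$ gets from her core coordinates,'' and claim $\sum_i c_i$ is bounded by a posted-price revenue. That cannot work: $\sum_i(\text{expected core value of }i)$ is exactly the \textsc{Core}, which is \emph{not} $O(\prev)$ in general (Property~5 only bounds it by $O(\prev)+O(\rev(\text{TPT}))$), so Property~3 would fail. In \cite{CaiZ17}, $c_i$ is a quantile, roughly $c_i=\inf\{x\ge 0:\sum_j\Pr_{t_{ij}}[t_{ij}\ge \tilde\beta_{ij}+x]\le \tfrac12\}$, and $\sum_i c_i\le O(\prev)$ is proved by running a rationed posted-price mechanism at prices $\max(\tilde\beta_{ij},\,\cdot+c_i)$ and arguing each bidder purchases at a price at least $c_i$ with constant probability (this is exactly the argument reproduced in this paper's \Cref{lem:prev-useful} and \Cref{lem:bounding tau_i} for the analogous quantity $\tau_i$). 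Second, for Property~5 you post item prices ``proportional to $\tilde\beta_{ij}$, anonymized appropriately.'' The TPT in the lemma is anonymous, and the Cai--Zhao prices are not $\beta$-based: item $j$'s price is (half of) item $j$'s contribution to the \textsc{Core} (the analogue of $Q_j$ in \Cref{def:Q_j}). This choice is what makes the sold/unsold dichotomy close: if item $j$ sells, its price alone recovers its core share; if it survives with constant probability, the entry fees (set at the \emph{median}, not the mean, of the residual utility, with the subadditive-over-independent-items, $c_i$-Lipschitz concentration bridging mean and median up to $O(c_i)$) recover it. With prices tied to $\tilde\beta_{ij}$ the accounting in the ``sold'' branch does not return the core contribution, so the $\core\le 64\prev+8\,\rev(\cM_1^{(\sigma)})$ bound would not follow as sketched.
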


\begin{remark}\label{remark:tie-breaking}
For continuous type distributions, there exists $\tilde{\vBeta}^{(\sigma)}$ that satisfy both Property 1 and 2 of \Cref{lem:caiz17-constrained-additive} with $r_{ij}^{(\sigma)}=1,\forall i,j$ for every $\sigma$. For discrete distributions, such a $\tilde{\vBeta}^{(\sigma)}$ may not exist. This is simply a tie-breaking issue, and the role of $\vr^{(\sigma)}$ is to fix it. Roughly speaking, $r_{ij}^{(\sigma)}$ is the probability that bidder $i$ wins item $j$, when she is indifferent between purchasing or not. Readers can treat $\vr^{(\sigma)}$ as the all-one vector to get the intuition behind our approach.
\end{remark}


By combining Property 4 and 5 of \Cref{lem:caiz17-constrained-additive}, Cai and Zhao~\cite{CaiZ17} proved that the revenue of any BIC, IR mechanism $\cM$ is bounded by a constant number of $\prev$ and the revenue of some TPT mechanism.
Recall that $\prev$ is the optimal revenue achieved by an RPP mechanism, which is exactly the Sequential Posted Price mechanism if we restrict the bidders' valuations to unit-demand. Thus we can compute a set of posted prices that approximates $\prev$ by Chawla et al.~\cite{ChawlaHMS09}.  

\subsection{Tour to Our Relaxation}\label{sec:tour to LP}
To facilitate our discussion about the key components and the intuition behind the relaxation, we present the development of our relaxation and along the way examine several failed attempts. In~\Cref{thm:bounding-lp-simple-mech}, we show that the optimal solution of the relaxed problem can indeed be approximated by simple mechanisms. Due to space limitations, we do not include details on the approximation analysis in this section, but focus on our intuition behind each step of our relaxation. Interested readers can find the proof of~\Cref{thm:bounding-lp-simple-mech} in~\Cref{sec:proof of bounding LP with simple mech}. We also assume $r_{ij}$ to be $1$ for every $i$ and~$j$ to keep the notation light. 

\vspace{-.18in}
\paragraph{Step 0: Replace Revenue with the Duality-Based Benchmark.}
Instead of optimizing the revenue, we optimize the upper bound of revenue. As guaranteed by \Cref{lem:caiz17-constrained-additive}, for any BIC and IR mechanism $\cM=(\sigma,p)$, its revenue is upper bounded by $O(\prev+\core(\sigma,\theta(\sigma)))$, where we use $\theta(\sigma)$ to denote the set of  dual parameters $(\tilde{\vBeta}^{(\sigma)},\vC^{(\sigma)})$ guaranteed to exist by~\Cref{lem:caiz17-constrained-additive}. 
Since we can approximate $\prev$, it suffices to first approximately maximize $\core(\sigma,\theta(\sigma))$ over all feasible interim allocations $\sigma$, then compute the TPT in \Cref{lem:caiz17-constrained-additive} based on the computed $\sigma$. $\core(\sigma,\theta(\sigma))$ is the truncated welfare, but the truncation depends on $\sigma$ in a complex way, causing the function to be highly non-concave in $\sigma$ (\Cref{ex:non-concave interim}).

\vspace{-.15in}
\paragraph{Step 1: Two-Stage Optimization.}
To overcome the barrier mentioned above, we consider a two-stage optimization problem (\Cref{fig:LP-fixed-parameter}) by \emph{switching the order of dependence} between the interim allocation $\sigma$ and dual parameters $\theta=(\vBeta,\vC)$. In Stage I, we optimize some function $H$ over the dual parameters $\theta=(\vBeta,\vC)$, where $H(\vBeta,\vC)$ is the optimum of the Stage II problem for every fixed set of parameters $(\vBeta,\vC)$. Constraint {\BetaConstraintFP} and {\CiConstraintFP} in the Stage I problem are due to Property 1 and 3 of \Cref{lem:caiz17-constrained-additive} respectively. In Stage II, for any fixed set of parameters $\theta=(\vBeta,\vC)$, we optimize $\core(\sigma,\theta)$ over all feasible $\sigma$ such that the tuple $(\sigma,\vBeta,\vC)$ satisfy Property 1, 2, and 3 of \Cref{lem:caiz17-constrained-additive}. We choose  the interim allocation $\sigma$ as the variables, $\core(\sigma,\vBeta,\vC)$ as the objective, and include Constraint {\WBetaConstraintFP}, which corresponds to Property 2 of \Cref{lem:caiz17-constrained-additive}. Why is the two-stage optimization a relaxation? For any interim allocation $\sigma$, (i) the corresponding set of dual parameters $\theta(\sigma)$ is a feasible solution of the first-stage optimization problem, and (ii) $\sigma$ is feasible in the second-stage optimization w.r.t. $\theta(\sigma)$, so $(\theta(\sigma),\sigma)$ is a feasible solution of the two-stage optimization problem.
\vspace{-.1in}

\begin{figure}[H]
\colorbox{MyGray}{
\noindent
\begin{minipage}{.37\textwidth}
\small
\textbf{Stage I:}
$$\textbf{max } H(\vBeta,\vC)$$
\begin{align*}
\textbf{s.t.} &\quad\BetaConstraintFP\quad\sum_{i\in [n]}\Pr_{t_{ij}}[t_{ij}\geq \beta_{ij}]\leq\frac{1}{2}\qquad \forall j\\
&\quad\CiConstraintFP\quad\sum_{i\in[n]} c_i \leq 8\cdot \prev
\end{align*}
\end{minipage}
\hfill\vline\hfill

\begin{minipage}{.60\textwidth}
\small
\textbf{~~Stage II:}
$${H(\vBeta,\vC)=}\textbf{max  } \sum_{i\in [n]}\sum_{t_i\in \cT_i}f_i(t_i)\cdot \sum_{S\subseteq [m]}\sigma_{iS}(t_i)\sum_{j\in S}t_{ij}\cdot \ind[t_{ij}\leq \beta_{ij} + c_i]$$
\vspace{-.3in}
  \begin{align*}
\textbf{s.t.} &\quad\WConstraintFP \quad \text{$\sigma$ is feasible} \\
&\quad\WBetaConstraintFP\quad\frac{1}{2}\sum_{t_i\in \cT_i}f_i(t_i)\cdot \sum_{S:j\in S}\sigma_{iS}(t_i)\leq\Pr_{t_{ij}}[t_{ij}\geq \beta_{ij}] & \forall i,j\\
\end{align*}
\end{minipage}
}
\vspace{-.15in}

\caption{Two-stage Optimization over $\theta=(\vBeta,\vC)$ and the allocation $\sigma$
}~\label{fig:LP-fixed-parameter}
\vspace{-.3in}
\end{figure}


We now focus on the Stage II problem and try to solve it efficiently for a fixed set of parameters $\theta$. The objective is a linear function of the variables $\sigma$, yet the set of variables $\sigma=\{\sigma_{iS}(t_i)\}_{i\in [n], S\subseteq [m], t_i\in \cT_i}$ has exponential size. Luckily, the problem can be expressed more succinctly. For any interim allocation $\sigma$ and dual parameters $\theta=(\vBeta,\vC)$, the objective ($\core(\sigma,\theta)$) can be simplified as follows: 
{
\begin{equation}\label{equ:core-marginal}
\begin{aligned}
\core(\sigma,\theta)&=\sum_{\substack{i\in [n]\\ t_i\in \cT_i}}f_i(t_i) \sum_{S\subseteq [m]}\sigma_{iS}(t_i)\sum_{j\in S}t_{ij}\cdot \ind[t_{ij}\leq \beta_{ij}+c_i]
&=\sum_{\substack{i\in [n], j\in [m]\\t_{ij}\in \cT_{ij}}} \widehat{w}_{ij}(t_{ij})t_{ij}\cdot \ind[t_{ij}\leq \beta_{ij}+c_i],
\end{aligned}
\end{equation}
}
\vspace{-.15in}

\noindent where $\widehat{w}_{ij}(t_{ij})=f_{ij}(t_{ij})\cdot\sum_{t_{i,-j}}f_{i,-j}(t_{i,-j})\cdot \sum_{S:j\in S}\sigma_{iS}(t_{ij},t_{i,-j})$ for every $i\in [n],j\in [m],t_{ij}\in \cT_{ij}$. We refer to $\{\widehat{w}_{ij}(t_{ij})\}_{i\in [n],j\in [m],t_{ij}\in \cT_{ij}}$ as the \textbf{marginal reduced form} of the interim allocation rule $\sigma$. $\widehat{w}_{ij}(t_{ij})$ represents the probability that bidder $i$'s value for item $j$ is $t_{ij}$ and she receives item $j$, and the probability is taken over the randomness of the allocation, other bidders' types, as well as her own values for all the other items. Now for every fixed dual parameters $\theta$, $\core$ is expressed as a linear function of the much more succinct representation $\widehat{w}=\{\widehat{w}_{ij}(t_{ij})\}_{i, j,t_{ij}}$ that has polynomial description size. We rewrite the Stage II problem as an LP using the variables $\widehat{w}$.
Denote $\core(\hat{w},\theta)$ the last term of \Cref{equ:core-marginal}, which is the objective of the problem. By the definition of $\widehat{w}$, Constraint {\WBetaConstraintFP} is equivalent to
\vspace{-.1in}

\begin{equation}\label{equ:constraints_on_w}
\frac{1}{2}\cdot\sum_{t_{ij}\in \cT_{ij}}\widehat{w}_{ij}(t_{ij})\leq
\Pr_{t_{ij}}[t_{ij}\geq \beta_{ij}],\qquad\forall i,j    
\end{equation}
\vspace{-.1in}

\noindent which is a linear constraint on $\widehat{w}$. Let $\cP_1$ be the convex polytope that contains all marginal reduced forms $\widehat{w}$ that can be implemented by some feasible allocation $\sigma$ (corresponds to Constraint {\WConstraintFP}) and $\cP_2$ be the set of all  $\widehat{w}$ that satisfy all constraints in \Cref{equ:constraints_on_w}. The Stage II problem is equivalent to the LP $\max_{\widehat{w}\in \cP_1\cap \cP_2}\core(\widehat{w},\theta)$. Unfortunately, since $\cP_1$ does not have an explicit succinct description or an efficient separation oracle, it is unclear if the problem can be solved efficiently.

\notshow{

\vspace{-.15in}
\paragraph{Step 1: Optimizing over a fixed set of dual parameters $\theta$.}
To optimize $\core(\sigma,\theta(\sigma))$, 
our first insight is to \emph{switch the order of dependence}, and choose a set of dual parameters $\theta$ first then optimize over the allocation. In particular, we aim to maximize $\core(\sigma,\theta)$ for every set of dual parameters $\theta$. We notice that for any interim allocation $\sigma$ and dual parameters $\theta=(\tilde{\vBeta},\vC)$,
$\core(\sigma,\theta)$ can be simplified as follows: 

\vspace{-.1in}
\begin{equation}\label{equ:core-marginal}
\begin{aligned}
\core(\sigma,\theta)&=\sum_i\sum_{t_i}f_i(t_i)\cdot \sum_{S\subseteq [m]}\sigma_{iS}(t_i)\sum_{j\in S}t_{ij}\cdot \ind[t_{ij}\leq \tilde{\beta}_{ij}+c_i] \\
&=\sum_i\sum_j\sum_{t_{ij}} \widehat{w}_{ij}(t_{ij})\cdot t_{ij}\cdot \ind[t_{ij}\leq \tilde{\beta}_{ij}+c_i],
\end{aligned}
\end{equation}
\vspace{-.15in}

\noindent where $\widehat{w}_{ij}(t_{ij})=f_{ij}(t_{ij})\cdot\sum_{t_{i,-j}}f_{i,-j}(t_{i,-j})\cdot \sum_{S:j\in S}\sigma_{iS}(t_{ij},t_{i,-j})$ for every $i\in [n],j\in [m],t_{ij}\in \cT_{ij}$. We refer to $\{\widehat{w}_{ij}(t_{ij})\}_{i\in [n],j\in [m],t_{ij}\in \cT_{ij}}$ as the \textbf{marginal reduced form} of the interim allocation rule $\sigma$. $\widehat{w}_{ij}(t_{ij})$ represents the probability that bidder $i$'s value for item $j$ is $t_{ij}$ and she receives item $j$. The probability is taken over the randomness of the allocation, other bidders' types, as well as her own values for all the other items.

Now for every fixed dual parameters $\theta$, $\core$ can still be written as a linear function of the much more succinct representation $\widehat{w}=\{\widehat{w}_{ij}(t_{ij})\}_{i, j,t_{ij}}$ (compared to the interim allocation $\sigma$). We denote $\core(\hat{w},\theta)$ the last term of \Cref{equ:core-marginal}. For a given $\theta$, the only constraint on the allocation, i.e., Property 2 of \Cref{lem:caiz17-constrained-additive}, can also be expressed as a linear constraint on $\widehat{w}$:
\vspace{-.1in}

\begin{equation}\label{equ:constraints_on_w}
\frac{1}{2}\cdot\sum_{t_{ij}\in \cT_{ij}}\widehat{w}_{ij}(t_{ij})\leq
\Pr_{t_{ij}}[t_{ij}\geq \tilde\beta_{ij}],\qquad\forall i,j    
\end{equation}
\vspace{-.15in}

\noindent Let $\cP_1$ be the convex polytope that contains all marginal reduced forms $\widehat{w}$ that can be implemented by some feasible allocation $\sigma$, and $\cP_2$ be the set of all  $\widehat{w}$ that satisfy all constraints in \Cref{equ:constraints_on_w}. To maximize $\core(\sigma,\theta)$, it suffices to solve $\max_{\widehat{w}\in \cP_1\cap \cP_2}\core(\widehat{w},\theta)$. Unfortunately, $\cP_1$ does not have an explicit succinct description or an efficient separation oracle. 




}

\vspace{-.15in}
\paragraph{Step 2: Marginal Reduced Form Relaxation.} To overcome this barrier, we consider a relaxation of $\cP_1$, where the feasibility constraint is only enforced on each bidder separately. 
We refer to this step as the \emph{marginal reduced~form~relaxation}. We use $\widehat{w}_i=\{\widehat{w}_{ij}(t_{ij})\}_{j\in [m],t_{ij}\in \cT_{ij}}$ to denote a feasible \emph{single-bidder marginal reduced form} for bidder $i$.  
Formally, we define the feasible region $W_i$ of $\widehat{w}_i$ in~Definition~\ref{def:W_i-constrained-add}.  




\begin{definition}[Constrained-additive valuations: single-bidder marginal reduced form polytope]\label{def:W_i-constrained-add}
For every $i\in [n]$, suppose bidder $i$ has a constrained-additive valuation with feasibility constraint $\cF_i$. Bidder $i$'s single-bidder marginal reduced form polytope $W_i\subseteq [0,1]^{\sum_{j\in[m]}|\cT_{ij}|}$ is defined as follows: $\widehat{w}_i\in W_i$ if and only if there exists an allocation rule $\{\sigma_S(t_i)\}_{t_i\in \cT_i, S\in \cF_i}$, i.e., $\sigma_S(t_i)$ is the probability that $i$ receives set $S$ when her type is $t_i$, such that
\textbf{(i)} $\sum_{S\in \cF_i}\sigma_S(t_i)\leq 1$, $\forall t_i\in \cT_i$, and \textbf{(ii)}
$\widehat{w}_{ij}(t_{ij})=f_{ij}(t_{ij})\cdot\sum_{t_{i,-j}}f_{i,-j}(t_{i,-j})\cdot \sum_{S:j\in S}\sigma_S(t_i)$, for all $j\in[m]$ and $t_{ij}\in \cT_{ij}$.
\end{definition}

\noindent Throughout this section, we assume access to a separation oracle of $W_i$ for every bidder $i$. In \Cref{thm:multiplicative approx for constraint additive-main body}, we present an efficient separation oracle for another polytope $\widehat{W}_i$ that is a multiplicative approximation to $W_i$, i.e., $\widehat{W}_i$ is sandwiched between $c\cdot W_i$ and $W_i$ for some absolute constant $c\in(0,1)$, using only queries to bidder $i$'s demand oracle. We will argue later that we can efficiently approximate our problem with the separation oracle for $\widehat{W}_i$. 

Here is our relaxation {to the (rewritten) Stage II problem}: Instead of forcing $\widehat{w}$ to be \emph{implementable jointly} ($\widehat{w}\in \cP_1$), we consider the relaxed region $\cP'\supseteq \cP_1$: $\widehat{w}\in \cP'$ if and only if: (i) $\widehat{w}_i
\in W_i$, for all bidder $i\in [n]$, and (ii) $\sum_{i}\sum_{t_{ij}}\widehat{w}_{ij}(t_{ij})\leq 1,\forall j\in [m]$. 
In other words, $\cP'$ guarantees that, for every bidder $i$, $\widehat{w}_i$ is a feasible single-bidder marginal reduced form for $i$, and the supply constraint is met in terms of marginal reduced forms (rather than ex-post allocations).

\begin{figure}[ht!]
\colorbox{MyGray}{
\noindent
\begin{minipage}{.97\textwidth}
\small
\textbf{~~Relaxed Stage II:}
$${H(\vBeta,\vC)=}\textbf{max  } \sum_{i\in[n]} \sum_{j\in[m]} \sum_{t_{ij}\in \cT_{ij}} \widehat{w}_{ij}(t_{ij})\cdot t_{ij}\cdot \ind[t_{ij}\leq \beta_{ij} + c_i]$$
\vspace{-.3in}
  \begin{align*}
\textbf{s.t.} &\quad\WConstraintFP \quad \widehat{w}_i \in W_i & \forall i \\
&\quad\WBetaConstraintFP\quad\frac{1}{2}\sum_{t_{ij}}\widehat{w}_{ij}(t_{ij})\leq\Pr_{t_{ij}}[t_{ij}\geq \beta_{ij}] & \forall i,j\\
& \qquad\quad~\widehat{w}_{ij}(t_{ij})\geq 0 & \forall i,j,t_{ij}
\end{align*}
\end{minipage}
}
\caption{The Relaxed Stage II Problem over the Marginal Reduced Forms
}~\label{fig:LP-relaxed-stageii}
\vspace{-.25in}
\end{figure}

\noindent The main benefit of this relaxation is computational. Without the relaxation, we need a multiplicative approximation of $\cP_1$. \Cref{thm:multiplicative approx for constraint additive-main body} provides such an approximation if we can exactly maximizes the social welfare -- a computational task that is substantially harder than answering demand queries. Indeed, we are not aware of any efficient algorithm that exactly maximizes the social welfare with only access to demand oracles of every bidder. 
The relaxed problem $\max_{\widehat{w}\in \cP'\cap \cP_2}\core(\widehat{w},\theta)$ is captured by the LP in \Cref{fig:LP-relaxed-stageii}.\footnote{We omit the supply constraint $\sum_{i}\sum_{t_{ij}}\widehat{w}_{ij}(t_{ij})\leq 1$ as it is implied by Constraint~\BetaConstraintFP~in the Stage I problem and Constraint~\WBetaConstraintFP.}

Consider the two-stage optimization with the relaxed Stage II problem. 
For every fixed parameters $\theta$, the relaxed Stage II problem can be solved efficiently (assuming a separation oracle of $W_i$ for every $i$). Unfortunately, we do not know how to solve the two-stage optimization problem efficiently, as the number of different dual parameters is exponential in $n$ and $m$, and enumerating through all possible choices of dual parameters is not an option. To overcome this obstacle, we need ideas explained in the following step.

\notshow{
\vspace{-.15in}
\paragraph{Step 3: Two-Stage Optimization.}
In this step, we obtain a two-stage optimization problem that is a relaxation of the original problem. For a fixed set of dual parameters $\theta=(\vBeta,\vC)$, we now know that the relaxed problem $\max_{\widehat{w}\in \cP'\cap \cP_2}\core(\widehat{w},\theta)$ can be captured by a linear program shown on the RHS of \Cref{fig:LP-fixed-parameter}. This is the second-stage optimization of our problem, and we denote the optimal solution of this LP by $H(\theta)$. In the first-stage optimization, we find the optimal dual parameter $\theta$ that maximizes $H(\theta)$. To make sure this optimum can still be approximated by simple mechanisms, we need to pose certain constraints on $\theta$, that is, they need to respect Constraints \BetaConstraintFP~ and~\CiConstraintFP~corresponding to property 1 and 4 of \Cref{lem:caiz17-constrained-additive}.~\footnote{We ignore the supply constraint $\sum_{i}\sum_{t_{ij}}\widehat{w}_{ij}(t_{ij})\leq 1$ in the LP as it is implied by Constraints~\WBetaConstraintFP~and~\BetaConstraintFP.} 



\begin{figure}[ht!]
\colorbox{MyGray}{
\noindent
\begin{minipage}{.37\textwidth}
\small
\textbf{Stage I:}
$$\textbf{max } H(\vBeta,\vC)$$
\begin{align*}
s.t. &\quad\BetaConstraintFP\quad\sum_{i\in [n]}\Pr_{t_{ij}}[t_{ij}\geq \beta_{ij}]\leq\frac{1}{2}\qquad \forall j\\
&\quad\CiConstraintFP\quad\sum_{i\in[n]} c_i \leq 8\cdot \prev
\end{align*}
\end{minipage}
\hfill\vline\hfill

\begin{minipage}{.59\textwidth}
\small
\textbf{~~Stage II:}
$${H(\vBeta,\vC)=}\textbf{max  } \sum_{i\in[n]} \sum_{j\in[m]} \sum_{t_{ij}\in \cT_{ij}} \widehat{w}_{ij}(t_{ij})\cdot t_{ij}\cdot \ind[t_{ij}\leq \beta_{ij} + c_i]$$
\vspace{-.3in}
  \begin{align*}
s.t. &\quad\WConstraintFP \quad \widehat{w}_i \in W_i & \forall i \\
&\quad\WBetaConstraintFP\quad\frac{1}{2}\sum_{t_{ij}}\widehat{w}_{ij}(t_{ij})\leq\Pr_{t_{ij}}[t_{ij}\geq \beta_{ij}] & \forall i,j\\
& \qquad\quad~\widehat{w}_{ij}(t_{ij})\geq 0 & \forall i,j,t_{ij}
\end{align*}
\end{minipage}
}
\caption{Two-stage Optimization over $\theta=(\vBeta,\vC)$ and the Marginal Reduced Forms
}~\label{fig:LP-fixed-parameter}
\vspace{-.3in}
\end{figure}

\noindent Why is the two-stage optimization a relaxation? First, for any interim allocation $\sigma$, the corresponding set of dual parameters $\theta(\sigma)$ is a feasible solution of the first-stage optimization problem. Second, for any interim allocation $\sigma$, its corresponding marginal reduced form is feasible in the second-stage optimization w.r.t. $\theta(\sigma)$. Unfortunately, we do not know how to solve the two-stage optimization problem efficiently, \argyrisnote{Argyris: Maybe writing it as "as the set of feasible dual parameter has size exponential in $n$ and $m$"} as the number of dual parameters is exponential in $n$ and $m$, enumerating through all possible dual parameters is not an option. To overcome this obstacle, we need ideas explained in the following step.

}



\vspace{-.15in}
\paragraph{Step 3: Lifting the problem to a higher dimensional space.}

Instead of enumerating all possible dual parameters $\theta$, we optimize over \emph{distributions of the parameters}. To guarantee that the number of decision variables in our program is polynomial, we focus on \emph{product distributions} over the parameters. Formally, for every $i,j$, let $\cC_{ij}$ be a distribution over $\cV_{ij}\times \Delta$, where $\cV_{ij}$ and $\Delta$ are the set of possible values of $\beta_{ij}$ and $c_i$ accordingly, after discretization 
(See Footnote~\ref{footnote:discretization in lp} in  Figure~\ref{fig:bigLP} for a formal definition). All $\cC_{ij}$'s are independent. In our program, we use decision variables $\{\hat{\lambda}_{ij}(\beta_{ij},\delta_{ij})\}_{i\in[n],j\in[m],\beta_{ij}\in \cV_{ij},\delta_{ij}\in \Delta}$ to represent the distribution $\cC_{ij}$, i.e., $\hat\lambda_{ij}(a,b)=\Pr_{(\beta_{ij},\delta_{ij}) \sim \cC_{ij}}[\beta_{ij} = a \land \delta_{ij} = b]$. {Notice that if the parameters are drawn from a product distribution, 
the parameter ``$c_i$'' may be different for each item $j$. To distinguish them, we use $\delta_{ij}$ to replace the original parameter $c_i$ in our program.  } 

Now we maximize the expected value of the $\core$ function over all product distributions $\times_{i,j}\cC_{ij}$ (represented by decision variables $\hat{\lambda}$) and the allocations (represented by the marginal reduced form $\widehat{w}$). {If the parameters $\theta$ and allocation $\widehat{w}$ are generated independently,} the expected $\core$ is not a linear objective, since the contributed truncated welfare in $\core$ is $\widehat{w}_{ij}(t_{ij})\cdot \hat{\lambda}_{ij}(\beta_{ij},\delta_{ij})\cdot t_{ij}\cdot \ind[t_{ij}\leq \beta_{ij}+\delta_{ij}]$ for every $t_{ij},\beta_{ij},\delta_{ij}$. \emph{To linearize the objective, we lift the problem to a higher dimensional space and consider joint distributions over the parameters and allocations.} 
We do not consider arbitrary joint distributions, and only focus on the ones that correspond to the following generative process: first draw $(\beta,\delta)$ from a product distribution (according to $\hat\lambda$), then choose a feasible allocation $\widehat{w}^{(\beta,\delta)}=\{\widehat{w}^{(\beta,\delta)}_{ij}(t_{ij})\}_{i,j,t_{ij}}$  conditioned on $(\beta,\delta)$. Since there are too many parameters $(\beta,\delta)$, we certainly cannot afford to store all $\widehat{w}^{(\beta,\delta)}$'s explicitly.
\notshow{Even though we do not directly use the $\widehat{w}^{(\beta,\delta)}$'s as decision variables, they are indeed crucial for understanding many of the constraints in our LP relaxation in \Cref{fig:bigLP}. See \Cref{subsec:result} for more discussion.}
Instead, for each bidder $i$ and item $j$ we introduce a new set of decision variables $\{\lambda_{ij}(t_{ij},\beta_{ij},\delta_{ij})\}_{t_{ij}\in\cT_{ij},\beta_{ij}\in \cV_{ij},\delta_{ij}\in \Delta}$, where $\lambda_{ij}(t_{ij},\beta_{ij},\delta_{ij})$ is the marginal probability for the following three events to happen simultaneously in our generative process: \textbf{(a)} $(\beta_{ij},\delta_{ij})$ are the parameters for $i$ and $j$. \textbf{(b)} Bidder $i$ receives item $j$. \textbf{(c)} Bidder $i$'s value for item $j$ is $t_{ij}$. Formally,
\vspace{-.12in}
\begin{equation}\label{equ:explain-lambda}
\lambda_{ij}(t_{ij},\beta_{ij},\delta_{ij})=\hat{\lambda}_{ij}(\beta_{ij},\delta_{ij})\cdot\sum_{\{(\beta_{i'j'},\delta_{i'j'})\}_{(i',j')\not=(i,j)}}\left({\widehat{w}_{ij}^{(\beta,\delta)}(t_{ij})}/{f_{ij}(t_{ij})}\right)\cdot \prod_{(i',j')\not=(i,j)}\hat{\lambda}_{i'j'}(\beta_{i'j'},\delta_{i'j'})
\end{equation}
\vspace{-.15in}

\noindent With the new variables $\lambda_{ij}(t_{ij},\beta_{ij},\delta_{ij})$'s, we can express the objective as an linear function: 
\vspace{-.12in}
$$ \sum_{i\in[n]} \sum_{j\in[m]} \sum_{t_{ij}\in \cT_{ij}} 
 f_{ij}(t_{ij})\cdot t_{ij}\cdot \sum_{\beta_{ij}\in \cV_{ij},\delta_{ij} \in \Delta} \lambda_{ij}(t_{ij},\beta_{ij}, \delta_{ij})\cdot \ind[t_{ij}\leq \beta_{ij} + \delta_{ij}].$$
\vspace{-.15in}


\notshow{
There is a set of decision variables $\lambda$ in our program representing the marginal expected value of $\frac{\widehat{w}_{ij}^{(\beta,\delta)}(t_{ij})}{f_{ij}(t_{ij})}$ on coordinate $(\beta_{ij},\delta_{ij})$. Formally, for every $t_{ij}\in \cT_{ij},\beta_{ij}\in \cV_{ij},\delta_{ij}\in \Delta$, there is a decision variable $\lambda_{ij}(t_{ij},\beta_{ij},\delta_{ij})$ in our program that satisfies
\begin{equation}\label{equ:explain-lambda}
\lambda_{ij}(t_{ij},\beta_{ij},\delta_{ij})=\hat{\lambda}_{ij}(\beta_{ij},\delta_{ij})\cdot\sum_{\{(\beta_{i'j'},\delta_{i'j'})\}_{(i',j')\not=(i,j)}}\frac{\widehat{w}_{ij}^{(\beta,\delta)}(t_{ij})}{f_{ij}(t_{ij})}\cdot \prod_{(i',j')\not=(i,j)}\hat{\lambda}_{i'j'}(\beta_{i'j'},\delta_{i'j'})
\end{equation}
}
\noindent Our program can be viewed as an ``expected version'' of the the two-stage optimization, when the parameters $\theta=(\beta,\delta)\sim \bigtimes_{i,j}\cC_{ij}$. In other words, we only require the constraints to be satisfied in expectation. We discuss our relaxation in more details in \Cref{subsec:result}.

\notshow{
\begin{equation}\label{equ:explain-w}
w_{ij}(t_{ij})=\sum_{\beta,\delta}\widehat{w}_{ij}^{(\beta,\delta)}(t_{ij})\cdot \prod_{i',j'}\hat{\lambda}_{i'j'}(\beta_{i'j'},\delta_{i'j'})\quad\forall i,j,t_{ij}\in \cT_{ij}
\end{equation}
}

\vspace{-.1in}
\subsection{Our LP and a Sketch of the Proof of~\Cref{thm:main XOS-main body}}\label{subsec:result}

We present a sketch of the proof of \Cref{thm:main XOS-main body} for  constrained-additive bidders and our main linear program (\Cref{fig:bigLP}). Although the LP has many constraints and may seem intimidating at first, all constraints follow quite naturally from our derivation in \Cref{sec:tour to LP}. See \Cref{sec:explain LP constraints} for more~details.

The first step of our proof is to estimate $\prev$ using \Cref{thm:chms10} from~\cite{ChawlaHMS09}. 

\vspace{-.1in}
\begin{lemma}[Theorem~14 and Appendix~F in \cite{ChawlaHMS09}]\label{thm:chms10}

There exists an algorithm that with probability at least 
$1-\frac{2}{nm}$, computes a Rationed Posted Price mechanism $\cM$ such that 
$\rev(\cM)\geq \frac{1}{6.75}(1-\frac{1}{nm})\cdot \prev$. The algorithm runs in time $poly(n,m,\sum_{i,j}|\cT_{ij}|)$.
\end{lemma}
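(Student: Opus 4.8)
The plan is to reduce $\prev$ to the single-parameter ``copies'' environment of Chawla, Hartline, Malec, and Sivan and then invoke their order-oblivious sequential-posted-pricing guarantee. First observe that an RPP mechanism is exactly a sequential posted price mechanism in which each bidder buys at most one item: bidder $i$ is shown item prices $\{p_{ij}\}_{j}$ and, arriving in some order, buys an available item maximizing $V_{ij}(t_{ij})-p_{ij}$ whenever this quantity is nonnegative. Since the singleton value $V_{ij}(t_{ij})$ depends only on $t_{ij}$ and the $t_{ij}$ are mutually independent, bidder $i$ acts as a unit-demand buyer with independent item values $V_{i1}(t_{i1}),\dots,V_{im}(t_{im})$. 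Hence $\prev$ is the optimal revenue of a sequential posted price mechanism for $n$ unit-demand bidders subject to each item being sold at most once; in particular it is at most the optimal unit-demand revenue in this environment.

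Next, pass to the copies relaxation: for each pair $(i,j)$ introduce an independent single-parameter agent with value $V_{ij}(t_{ij})$, $t_{ij}\sim D_{ij}$, and call a set of served agents feasible iff it forms a bipartite matching (each bidder served at most once --- the rationing constraint --- and each item sold at most once --- the supply constraint). By the copies upper bound of~\cite{ChawlaHMS09}, the optimal unit-demand revenue, and therefore $\prev$, is at most the optimal single-parameter revenue $\opt^{\copies}$ of this environment; the optimal single-parameter mechanism maximizes ironed virtual welfare over bipartite matchings, which is computable in time $\poly(n,m,\sum_{i,j}|\cT_{ij}|)$ by an assignment LP over the explicitly given supports $\cT_{ij}$.

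Now invoke Theorem~14 of~\cite{ChawlaHMS09}: the copies environment admits thresholds $\{\hat p_{ij}\}$, determined by one-dimensional statistics of the $D_{ij}$'s (monopoly reserves and tail probabilities $\Pr_{t_{ij}}[V_{ij}(t_{ij})\geq p]$), such that the mechanism which processes bidders in \emph{any} fixed order and lets each bidder buy her favorite affordable item still available at prices $\{\hat p_{ij}\}_j$ collects expected revenue at least $\frac{1}{6.75}\cdot\opt^{\copies}$; here $6.75$ is the loss of order-oblivious posted pricing against the optimum for the intersection of the two partition matroids that define the matching constraint. But this mechanism is precisely the RPP mechanism $\cM$ with prices $p_{ij}:=\hat p_{ij}$, so $\rev(\cM)\geq \frac{1}{6.75}\cdot\opt^{\copies}\geq \frac{1}{6.75}\cdot\prev$.

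It remains to handle the computation of the thresholds and the probabilistic claim. The quantities defining $\hat p_{ij}$ are one-dimensional functionals of the $D_{ij}$'s; following Appendix~F of~\cite{ChawlaHMS09}, we estimate each of the $O(nm)$ of them by Monte Carlo sampling with $\poly(n,m,\sum_{i,j}|\cT_{ij}|)$ samples and union-bound over the estimates, which is the source of the $\frac{2}{nm}$ failure probability and of the multiplicative $(1-\frac{1}{nm})$ degradation in the guarantee. The running time is dominated by this sampling and the assignment LP, both $\poly(n,m,\sum_{i,j}|\cT_{ij}|)$. The only new bookkeeping relative to~\cite{ChawlaHMS09} is checking that RPP coincides with their sequential posted pricing once one replaces the raw type $t_{ij}$ by the singleton value $V_{ij}(t_{ij})$, so that the reduction applies uniformly to constrained-additive and XOS valuations. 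The heart of the statement --- the $6.75$ order-oblivious, prophet-type inequality for bipartite matchings --- is exactly their theorem, and reproving it would be the main work in a self-contained argument; this is the part I would regard as the main obstacle.
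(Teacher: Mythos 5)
Your proposal follows essentially the same route as the paper, which proves nothing itself and simply cites Theorem~14 and Appendix~F of~\cite{ChawlaHMS09}: you reduce RPP to order-oblivious sequential posted pricing for unit-demand bidders via the singleton values $V_{ij}(t_{ij})$, invoke the copies/matching bound for the $6.75$ factor, and attribute the sampling-based computation, the $(1-\frac{1}{nm})$ degradation, and the $\frac{2}{nm}$ failure probability to Appendix~F, exactly as the paper intends. Since you correctly identify the $6.75$ order-oblivious guarantee as the black-box core rather than re-deriving it, there is no substantive difference from the paper's treatment.
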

\vspace{-.1in}

 Denote $\cE$ the event that an RPP in \Cref{thm:chms10} is computed successfully. For simplicity, we will condition on the event $\cE$ for the rest of this section. Let $\estprev$ be the revenue of the RPP mechanism found in \Cref{thm:chms10}.
 
 Next, we argue that the LP in~\Cref{fig:bigLP} (or \Cref{fig:XOSLP} when the valuations are XOS) can be solved efficiently. Note that there are $\poly(n,m,\sum_{i,j}|\cT_{ij}|)$ constraints except for Constraint $\Wconstraint$, where we need to enforce the feasibility of single-bidder marginal reduced forms. It suffices to construct an efficient separation oracle for $W_i$ for every $i$. However, to the best of our knowledge, $W_i$ does not have a succinct explicit description or an efficient separation oracle. For constrained-additive valuations, we construct another polytope $\widehat{W}_i$ such that: (i) $\widehat{W}_i$ is a multiplicative approximation of $W_i$, i.e., $c\cdot W_i\subseteq \widehat{W}_i\subseteq W_i$ for some absolute constant $c\in (0,1)$, and (ii) There exists an efficient separation oracle for $\widehat{W}_i$ given access to the demand oracle. 

\begin{theorem}\label{thm:multiplicative approx for constraint additive-main body}
Let $T=\sum_{i,j}|\cT_{ij}|$ and $b$ be the bit complexity of the problem instance (\Cref{def:bit complexity}). For any $i\in [n]$ and $\delta\in (0,1)$, there is an algorithm  that constructs a convex polytope $\widehat{W}_i\in [0,1]^{\sum_{j\in [m]}|\cT_{ij}|}$ using $\poly(n,m,T,\log(1/\delta))$ samples from $D_i$, such that with probability at least $1-\delta$,

\begin{enumerate}
\vspace{-.1in}
  \item $\frac{1}{12}\cdot W_i\subseteq \widehat{W}_i\subseteq W_i$, and the vertex-complexity (\Cref{def:vertex-complexity}) of $\widehat{W}_i$ is $\poly(n,m,T,b,\log(1/\delta))$.
\vspace{-.05in}
\item There exists a separation oracle $SO$ for $\widehat{W}_i$, given access to the demand oracle for bidder $i$'s valuation. The running time of $SO$ on any input with bit complexity $b'$ is $\poly(n,m,T,b,b',\log(1/\delta))$ and makes $\poly(n,m,T,b,b',\log(1/\delta))$ queries to the demand oracle.
  \end{enumerate}
  \vspace{-.1in}
The algorithm constructs the polytope and the separation oracle $SO$ in time 
$\poly(n,m,T,b,\log(1/\delta))$.
\end{theorem}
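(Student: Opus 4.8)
The plan is to obtain the multiplicative-approximation polytope $\widehat{W}_i$ by combining two ideas: (1) an explicit description of $W_i$ (up to a constant factor) as the convex hull of a small number of ``threshold-type'' allocation rules, and (2) a sampling step to replace the exact probabilities $f_{i,-j}(t_{i,-j})$ appearing in Definition \ref{def:W_i-constrained-add} by empirical estimates, while being careful that the errors are \emph{relative} rather than additive. First I would fix a bidder $i$ and work only with that bidder's product distribution $D_i$; the other bidders play no role. The key structural observation is that $W_i$ is the image of the single-bidder feasibility polytope under a linear map (integrating out $t_{i,-j}$ and multiplying by $f_{ij}(t_{ij})$), so membership $\widehat w_i \in W_i$ is itself a (large) LP feasibility question. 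Rather than solving that LP, I would characterize the support function $h_{W_i}(y)=\max_{\widehat w_i \in W_i}\langle y,\widehat w_i\rangle$: because the map is linear and the inner feasibility constraint at each type $t_i$ is ``pick at most one feasible set $S\in\cF_i$,'' this maximization decomposes over types $t_i$, and for each $t_i$ the optimal choice is exactly a demand query to $\dem_i$ with prices derived from $y$. Concretely, $h_{W_i}(y)=\sum_{t_i} f_i(t_i)\cdot \max_{S\in\cF_i}\sum_{j\in S} y_{ij}(t_{ij})$, which is a single demand query per type. Thus an exact separation oracle for $W_i$ exists given $\dem_i$ — the catch is that the sum over $t_i$ has exponentially many terms, so we cannot evaluate $h_{W_i}$ exactly.

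This is where sampling enters, and it is the main technical obstacle. Replacing $\sum_{t_i} f_i(t_i)(\cdots)$ by an average over $\poly(n,m,T,\log(1/\delta))$ i.i.d.\ samples from $D_i$ gives a concentration bound that is \emph{additive} in the normalization — exactly the weakness of \cite{CaiDW12b} that the theorem must avoid. The fix is to run the sampling separately for each ``scale''; more precisely, I would define $\widehat{W}_i$ not through the noisy support function directly but through a noisy version of the \emph{generating allocation rules}. The idea: enumerate (implicitly) a polynomial-size family of canonical allocation rules $\sigma^{(k)}$ — each obtained by querying $\dem_i$ on each sampled type $t_i$ at a suitably chosen price vector — compute for each $k$ an empirical marginal reduced form $\widehat w_i^{(k)}$ by the sample average, and let $\widehat{W}_i = \mathrm{conv}\{\widehat w_i^{(k)}\}_k$. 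For the multiplicative sandwich $\frac{1}{12} W_i \subseteq \widehat W_i \subseteq W_i$ one argues: the upper inclusion holds (up to rescaling each empirical rule so it is genuinely feasible — shave off a constant factor to absorb sampling error, this is where the $1/12$ comes from) because each $\widehat w_i^{(k)}$ is, after scaling, realizable; the lower inclusion holds because any $\widehat w_i \in W_i$ is a convex combination of vertices of $W_i$, each vertex corresponds to a deterministic feasible rule, and such a rule is $\varepsilon$-close (in the relative sense, coordinate by coordinate, using a multiplicative Chernoff bound since each coordinate $\widehat w_{ij}(t_{ij})/f_{ij}(t_{ij})$ is a probability) to one of the empirical rules provided the sample size is polynomial. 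The separation oracle for $\widehat{W}_i$ is then the standard ellipsoid/LP over the explicitly listed (polynomially many) vertices, each of which is itself produced by polynomially many demand queries.

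I would then handle the bookkeeping: the \emph{vertex-complexity} bound (Definition \ref{def:vertex-complexity}) follows because each $\widehat w_i^{(k)}$ has rational coordinates with denominators bounded by (sample size)$\times$(denominators of $f_{ij}$'s), so bit complexity $\poly(n,m,T,b,\log(1/\delta))$; the running time of $SO$ on an input of bit complexity $b'$ is the cost of an LP over $\poly(\cdot)$ explicit points plus the demand queries needed to generate them, giving the stated $\poly(n,m,T,b,b',\log(1/\delta))$ bound; and the failure probability $\delta$ is a union bound over the polynomially many multiplicative-Chernoff events. The part I expect to require the most care is making the approximation genuinely \emph{multiplicative}: one must choose the price vectors fed to $\dem_i$ so that the resulting canonical rules form an $\varepsilon$-net of the vertices of $W_i$ \emph{in relative distance}, and verify that the constant-factor down-scaling needed to restore exact feasibility (handling the event that an empirical coordinate slightly exceeds its true value) still leaves $\widehat{W}_i \supseteq \frac{1}{12} W_i$. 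The precise constant and the exact argument that rescaling preserves feasibility is the crux, and I would expect to invoke a deferred lemma (the general ``multiplicative approximation of Mechanism-Design polytopes'' result, \Cref{thm:special case of multiplicative approx}) to package it cleanly.
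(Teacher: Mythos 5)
There is a genuine gap at exactly the point you flag as the crux. Your mechanism for turning sampling error into a multiplicative guarantee is (a) a per-coordinate multiplicative Chernoff bound for the empirical marginal reduced forms, and (b) an $\varepsilon$-net, in relative distance, of the vertices of $W_i$ by polynomially many ``canonical'' demand-query rules, with $\widehat W_i$ the convex hull of the resulting empirical points. Neither step survives scrutiny. The coordinates $\widehat w_{ij}(t_{ij})$ of a feasible marginal reduced form can be exponentially small (the conditional probability of winning item $j$ at value $t_{ij}$ is not bounded below by anything polynomial), so relative concentration of the sample average for such a coordinate would require a number of samples inversely proportional to that probability, which is not $\poly(n,m,T,\log(1/\delta))$. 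Moreover $W_i$ has, in general, exponentially many vertices, and there is no argument that a polynomial family of price vectors produces points whose convex hull contains a constant fraction of $W_i$; declaring this an $\varepsilon$-net ``in relative distance'' is precisely the statement that needs proof, and it is false coordinate-wise for the reason above.

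The paper's route, which is the content of the deferred \Cref{thm:special case of multiplicative approx} you gesture at, is different in the decisive step: it never asks for per-coordinate relative accuracy. Instead, it writes $W_i$ as a mixture of the per-type polytopes $W_{t_i}$ over $D_i$ (\Cref{obs:mix constraint additive}), verifies that each $W_{t_i}$ is down-monotone and that linear optimization over it is a single demand query (\Cref{lem:monotone-closed polytope W}), and verifies the box containment $\frac12 W_i^{box(1/2T)}\subseteq W_i$ (\Cref{lem:box polytope constraint additive}, using $l_{t_{ij}}(W_i)=f_{ij}(t_{ij})$ from \Cref{obs:constraint additive length}). Then it truncates away coordinates of width below $\eps$, takes the \emph{Minkowski average} of the truncated per-type polytopes over the sampled types (an additive $\eps$-approximation to the truncated mixture, via \Cref{thm:approx mixture}), and adds a scaled copy of the $\eps$-box polytope; down-monotonicity plus $\cP\subseteq \ptruncated+\pbox$ (\Cref{lem:boxable polytope contained}) let the box term absorb the additive error on the surviving coordinates, yielding the polytope-level sandwich $\frac{c}{6}\cP\subseteq\widehat\cP\subseteq\cP$ and hence the $1/12$. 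So the additive-to-multiplicative conversion comes from the geometry (truncation, box absorption, down-monotonicity), not from relative concentration, and the separation oracle comes from optimizing over a Minkowski sum of sampled per-type polytopes (one demand query each) rather than an LP over an explicit vertex list. Your support-function decomposition over types and the final optimization-to-separation step match the paper, but the hypotheses that make the general theorem applicable (down-monotonicity of the $W_{t_i}$'s and the box containment) are never established in your sketch, and the argument you substitute for the theorem's proof would not go through.
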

  \vspace{-.02in}

\noindent Indeed, we prove a more general result regarding polytopes that can be expressed as a ``mixture of polytopes'' (\Cref{thm:special case of multiplicative approx}), which can be viewed as a generalization of the technique developed in~\cite{CaiDW12bfull} for approximating the polytope of all feasible reduced forms. 
We postpone the proof of \Cref{thm:multiplicative approx for constraint additive-main body} to \Cref{sec:mrf for constraint additive}. 

To solve the LP relaxation, we replace $W_i$ by $\widehat{W}_i$ in the LP in~\Cref{fig:bigLP} for every $i\in [n]$, and solve the LP in polynomial time using the ellipsoid method. Clearly, this solution is also feasible for the original LP in~\Cref{fig:bigLP}. Moreover, since $\widehat{W}_i$ contains $c\cdot W_i$, we can show that \textbf{the objective value of the solution we computed is at least $c\cdot \optlp$}, where $\optlp$ the optimum of the LP in Figure~\ref{fig:bigLP}. Our proof of \Cref{thm:multiplicative approx for constraint additive-main body} heavily relies on the fact that $W_i$ is a down-monotone polytope,\footnote{A polytope $\cP\subseteq [0,1]^d$ is down-monotone if and only if for every $\bx\in \cP$ and $\textbf{0}\leq \bx'\leq \bx$, we have $\bx'\in \cP$.}  which does not hold in the XOS case.  For XOS valuations, we construct the polytope $\widehat{W}_i$ with a weaker guarantee:
For every vector $x$ in $W_i$, there exists another vector $x'$ in $\widehat{W}_i$ such that for every coordinate $j$, $x_j/x_j'\in [a,b]$ for some absolute constant $0<a<b$, and vice versa. See \Cref{sec:mrf for xos} for a complete proof of \Cref{thm:main XOS-main body} (including the XOS case).

 


Next, we argue that the LP optimum can be approximated by simple mechanisms. \cite{CaiZ17} shows that for any BIC and IR mechanism $\cM$, $\core(\sigma,\tilde{\vBeta}^{(\sigma)},\vC^{(\sigma)},\vr^{(\sigma)})$ (as stated in Lemma~\ref{lem:caiz17-constrained-additive}) can be bounded by 
a constant number of $\prev$ and the revenue of a TPT (see Property 5 of \Cref{lem:caiz17-constrained-additive}). 
We generalize their result by proving that for any feasible solution of the LP, its objective can be bounded by (a constant times) the revenue of a RPP or TPT mechanism, and the mechanism can be computed efficiently given the feasible~solution.


\begin{definition}\label{def:Q_j}
Let $(w,\lambda,\hat\lambda, \bd=(d_i)_{i\in [n]})$ be any feasible solution of the LP in \Cref{fig:bigLP}. 
For every $j\in [m]$, define 
$Q_j = \frac{1}{2}\cdot\sum_{i\in[n]}\sum_{t_{ij}\in \cT_{ij}}  f_{ij}(t_{ij})\cdot t_{ij}\cdot
    \sum_{{\beta_{ij}\in \cV_{ij}, \delta_{ij} \in \Delta}}\lambda_{ij}(t_{ij},\beta_{ij},\delta_{ij})\cdot\ind[t_{ij}\leq \beta_{ij}+\delta_{ij}].\footnote{\text{Recall that $\lambda_{ij}(t_{ij},\beta_{ij},\delta_{ij})$ is introduced in Step 3 of~\Cref{sec:tour to LP}. See~\Cref{fig:bigLP} for the formal definition.}}
$





\end{definition}



Clearly, for any feasible solution of the LP, the objective function is $2\cdot \sum_{j\in [m]}Q_j$. We prove in \Cref{thm:bounding-lp-simple-mech} that $2\cdot \sum_{j\in [m]}Q_j$ can be bounded by the revenue of $\Mtpt$ (Mechanism~\ref{def:constructed-SPEM}) and the RPP $\Mpp$ (\Cref{thm:chms10}).  As we can efficiently compute a feasible solution whose objective is $\Omega(\optlp)$, \Cref{thm:bounding-lp-simple-mech} implies that we can compute in polynomial time a simple mechanism whose revenue is at least $\Omega(\optlp+\prev)$.

\begin{theorem}\label{thm:bounding-lp-simple-mech}
Let $(w,\lambda,\hat\lambda, \bd)$ 
be any feasible solution of the LP in \Cref{fig:bigLP}.
Let $\Mpp$ be the rationed posted price mechanism computed in \Cref{thm:chms10}. Let $\Mtpt$ be the two-part tariff mechanism shown in Mechanism~\ref{def:constructed-SPEM} {with prices $\{Q_j\}_{j\in[m]}$}. Then the objective function of the solution $2\cdot \sum_{j\in [m]}Q_j$ is bounded by $c_1\cdot \rev(\Mpp)+c_2\cdot \rev(\Mtpt)$, for some absolute constant $c_1,c_2>0$. 
Moreover, both $\Mpp$ and $\Mtpt$ can be computed in time $\poly(n,m,\sum_{i,j}|\cT_{ij}|)$ with access to the demand oracle for the bidders'~valuations. 
\end{theorem}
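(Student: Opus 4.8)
The plan is to mimic the structure of the Cai-Zhao argument (Property 5 of Lemma~\ref{lem:caiz17-constrained-additive}), but carry it out over a feasible solution $(w,\lambda,\hat\lambda,\bd)$ of the LP rather than over the interim allocation of an actual mechanism. The key conceptual move is to think of the LP solution as inducing, for each realization of the dual parameters $(\beta,\delta)$ drawn from the product distribution $\bigtimes_{i,j}\cC_{ij}$, a ``virtual'' allocation $\widehat{w}^{(\beta,\delta)}$, so that $2\sum_j Q_j = \E_{(\beta,\delta)}[\core(\widehat w^{(\beta,\delta)},(\beta,\delta))]$. First I would split the truncated welfare $Q_j$ into the contribution of types below the ``entry-fee threshold'' and the contribution near the threshold, exactly as in the core analysis: decompose each $Q_j$ according to whether, in the relevant generative process, bidder $i$'s value $t_{ij}$ is large or small relative to $\beta_{ij}$ (using Constraint~\BetaConstraintFP\ and Constraint~\PiConstraint/\CompareMarginalConstraint-type inequalities of the LP, which are the LP encodings of Properties~1 and~2 of Lemma~\ref{lem:caiz17-constrained-additive}). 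The ``large value'' part should be charged to $\rev(\Mpp)$ via Lemma~\ref{thm:chms10} together with a prophet-inequality / Chawla et al.-style argument: Constraint~\BetaConstraintFP\ guarantees $\sum_i \Pr[t_{ij}\ge \beta_{ij}]\le 1/2$ in expectation over $\hat\lambda$, which is exactly the condition needed to argue that a posted price at (a suitable quantile corresponding to) $\beta_{ij}$ sells item $j$ with constant probability and collects revenue comparable to this part of the core.

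Second, for the ``small value'' part — values below $\beta_{ij}+\delta_{ij}$ — I would show this is captured by the two-part tariff $\Mtpt$ with item prices $\{Q_j\}_j$ and appropriately chosen entry fees. Here the role of $\delta_{ij}$ (the LP's per-item stand-in for $c_i$) is crucial: Constraint~\CiConstraintFP\ bounds $\sum_i \delta_i$ (or $\E[\delta_{ij}]$ summed over $i$) by $O(\prev)$, so the $\delta$-portion of the truncated welfare is already $O(\prev)=O(\rev(\Mpp))$ by Lemma~\ref{thm:chms10}, and the $\beta$-portion is the part that must be recovered by the entry fee. The standard argument is: set the entry fee for bidder $i$ (given the set of still-available items) to be a constant fraction of her expected utility from buying at prices $\{Q_j\}$; a Markov-type argument shows she accepts with probability $\ge 1/2$, and when she accepts the sum of entry fee plus item revenue dominates the welfare she would have generated. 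One subtlety is that $Q_j$ itself is defined as half the relevant truncated welfare, so the supply is only ``half-used'' in expectation, which is precisely what makes the posted-price part of the decomposition and the entry-fee part not double-count; this mirrors the $1/2$ factors already present in Lemma~\ref{lem:caiz17-constrained-additive}. I would also need to verify that these entry fees and the induced mechanism can be computed in $\poly(n,m,\sum_{i,j}|\cT_{ij}|)$ time using only the demand oracle — the quantities $Q_j$ are explicit linear functions of the LP solution, and the entry fees are expectations over each bidder's own (product) type distribution of the utility of a demand query at prices $\{Q_j\}$, each of which is a single demand query, so the running time is immediate once the LP solution is in hand.

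The main obstacle I expect is handling the mismatch between the LP's relaxed feasibility (each $\widehat w_i\in W_i$ separately, plus a supply constraint only in expectation) and the argument in~\cite{CaiZ17}, which assumes a genuine jointly-feasible allocation $\sigma$. Concretely, in the original proof one can point to the actual interim allocation and use its consistency across items and bidders; here $\widehat w^{(\beta,\delta)}$ need not be the marginal reduced form of any single feasible mechanism. The fix should be that all the inequalities we actually use — Properties 1–3 of Lemma~\ref{lem:caiz17-constrained-additive} in their LP-encoded form — are exactly the constraints of the LP, so the $\core$-bounding argument never needs joint feasibility; it only needs (a) the per-bidder marginal reduced form to be realizable by \emph{some} single-bidder mechanism (true by $\widehat w_i\in W_i$, which lets us define the entry-fee mechanism bidder-by-bidder), and (b) the threshold/supply inequalities in expectation. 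Making this precise — showing that the "charging scheme" that maps core contributions to revenue of $\Mpp$ and $\Mtpt$ goes through verbatim under these weaker hypotheses, and carefully tracking the constants $c_1,c_2$ through the nested $1/2$-factors and the $6.75$ loss from Lemma~\ref{thm:chms10} — is where the real work lies, and it is presumably the content deferred to \Cref{sec:proof of bounding LP with simple mech}.
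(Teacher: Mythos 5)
There is a genuine gap, and it is concentrated in one missing idea: the \emph{shifted threshold} $\tau_i$. Your plan splits each $Q_j$ at $\beta_{ij}$, charges the above-$\beta$ mass to $\Mpp$ via a posted-price argument driven only by Constraint~\ReduceDemandConstaint, dismisses the $\delta$-excess as ``already $O(\prev)$'' via Constraints~\BoundMeanDeltaConstraint--\BoundSumDeltaConstraint, and extracts the below-$\beta$ mass by a Markov-type entry fee. Each of these steps fails as stated. (i) Any posted-price charging (the paper's \Cref{lem:prev-useful}, a randomized-price generalization of Cai--Zhao) needs \emph{both} a per-item condition ($\sum_i\Pr[V_{ij}(t_{ij})\ge p_{ij}]\le a$) and a per-bidder condition ($\sum_j\Pr[V_{ij}(t_{ij})\ge p_{ij}]\le b$); Constraint~\ReduceDemandConstaint~gives only the former, and with prices at $\beta_{ij}$ a single bidder may be able to afford many items, so you cannot argue she buys item $j$ with constant probability. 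The paper manufactures the per-bidder condition by defining $\tau_i$ as the least $x$ with $\sum_j\Pr_{t_{ij},\beta_{ij}}[V_{ij}(t_{ij})\ge\max(\beta_{ij},Q_j+x)]\le 1/2$ and charging only the mass above $Q_j+\tau_i$ to posted prices at $\max(\beta_{ij},Q_j+\tau_i)$. (ii) Constraints~\BoundMeanDeltaConstraint--\BoundSumDeltaConstraint~bound $\E[\delta_{ij}]\le d_i$ \emph{for each item separately}, so the excess $(t_{ij}-\beta_{ij})^+\ind[t_{ij}\le\beta_{ij}+\delta_{ij}]$ summed over $j$ is a priori $m\sum_i d_i$, not $O(\prev)$; the paper controls it by comparing the sum over $j$ to the max over $j$ (possible precisely because the $\tau_i$-shift caps the per-bidder probability of such events at $1/2$), and then bounding $\E[\max_j(\cdot)]$ through the subadditive, $d_i$-Lipschitz function $\eta_i$, the concentration lemma (\Cref{lemma:concentrate}), and an auxiliary rationed randomized two-part tariff whose revenue is $O(\prev)$. (iii) A ``constant fraction of expected surplus'' entry fee does not guarantee constant acceptance probability (Markov goes the wrong way); the paper uses a sampled-type proxy for the \emph{median} surplus and relates median to mean via \Cref{lemma:concentrate}, which requires the surplus function $\mu_i$ (surplus at prices $Q_j$ restricted to items with $V_{ij}(t_{ij})\le Q_j+\tau_i$) to be subadditive over independent items and $\tau_i$-Lipschitz, with $\sum_i\tau_i\le 8\prev$ (\Cref{lem:bounding tau_i}).

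So the correct architecture is: first the ``shifted core'' step $\sum_jQ_j\le\sum_j\hat Q_j+O(\prev)$, where $\hat Q_j$ truncates at $\min\{\beta_{ij}+\delta_{ij},Q_j+\tau_i\}$ (\Cref{lem:Q and Q-hat_main body}); then the shifted part is extracted by $\Mtpt$ as item prices on sold items plus entry fees lower-bounded by $\E[\mu_i]$ up to the concentration loss (\Cref{lem:lower bounding mu,lem:mu_i lipschitz}). Your observations that only per-bidder feasibility ($w_i\in W_i$) plus the expectation-form constraints are needed, and that $Q_j$ and the entry fees are computable with polynomially many demand queries, are correct and match the paper; but without $\tau_i$ and the randomized-price generalizations of the Cai--Zhao lemmas, the three charging steps at the heart of your decomposition do not go through, and Constraint~\MarginalToGlobalConstraint~(which you never invoke, and which is exactly what feeds the posted-price bound for the above-threshold $\beta$ mass) would remain unused.
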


The proof of \Cref{thm:bounding-lp-simple-mech} combines the ``shifted \core'' technique by Cai and Zhao~\cite{CaiZ17} with several novel ideas to handle the new challenges due to the relaxation. We postpone it to \Cref{sec:proof of bounding LP with simple mech}.




\begin{figure}[ht!]
\colorbox{MyGray}{
\begin{minipage}{.98\textwidth}
\small
$$\quad\textbf{max  } \sum_{i\in[n]} \sum_{j\in[m]} \sum_{t_{ij}\in \cT_{ij}} 
 f_{ij}(t_{ij})\cdot t_{ij}\cdot \sum_{\substack{\beta_{ij}\in \cV_{ij},\delta_{ij} \in \Delta}} \lambda_{ij}(t_{ij},\beta_{ij}, \delta_{ij})\cdot \ind[t_{ij}\leq \beta_{ij} + \delta_{ij}]$$
\vspace{-.3in}
  \begin{align*}
\textbf{s.t.}\\
 &\quad\textbf{Allocation Feasibility Constraints:}\\
 &\quad\Wconstraint \quad w_i \in W_i & \forall i \\
&\quad\PiConstraint \quad \sum_i\sum_{t_{ij}\in \cT_{ij}}w_{ij}(t_{ij})\leq 1 & \forall j\\  
 &\quad\textbf{Natural Feasibility Constraints:}\\
    &\quad\LambdaMarginalConstraint\quad f_{ij}(t_{ij})\cdot\sum_{\beta_{ij}\in \cV_{ij}}\sum_{\delta_{ij}\in \Delta} \lambda_{ij}(t_{ij},\beta_{ij},\delta_{ij}) = w_{ij}(t_{ij}) & \forall i,j,t_{ij}\in \cT_{ij}\\
    &\quad\CompareMarginalConstraint\quad\lambda_{ij}(t_{ij},\beta_{ij},\delta_{ij})\leq \hat\lambda_{ij}(\beta_{ij}, \delta_{ij}) & \forall i,j, t_{ij},\beta_{ij}\in \cV_{ij},\delta_{ij}\\
    &\quad\HatLambdaDistributionConstraint\quad \sum_{\substack{\beta_{ij}\in \cV_{ij},\delta_{ij} \in \Delta}} \hat\lambda_{ij}(\beta_{ij},\delta_{ij}) = 1 & \forall i,j\\
    &\quad\textbf{Problem Specific Constraints:}\\
    &\quad\ReduceDemandConstaint \quad \sum_{i\in[n]}\sum_{\beta_{ij} \in \cV_{ij}}\sum_{\delta_{ij}\in \Delta} \hat\lambda_{ij}(\beta_{ij},\delta_{ij}) \cdot \Pr_{t_{ij}\sim D_{ij}}[t_{ij}\geq \beta_{ij}] \leq \frac{1}{2}& \forall j\\
    & \quad\MarginalToGlobalConstraint\quad {\frac{1}{2}\sum_{t_{ij} \in \cT_{ij}} f_{ij}(t_{ij}) \left(\lambda_{ij}(t_{ij},\beta_{ij},\delta_{ij})+ \lambda_{ij}(t_{ij},\beta_{ij}^+,\delta_{ij})\right) \leq} \\
    &\qquad\qquad\qquad\hat\lambda_{ij}(\beta_{ij},\delta_{ij}) \cdot \Pr_{t_{ij}}[t_{ij}\geq \beta_{ij}]+\hat\lambda_{ij}(\beta_{ij}^+,\delta_{ij}) \cdot \Pr_{t_{ij}}[t_{ij}\geq\beta_{ij}^+] & \forall i,j,\beta_{ij}\in \cV_{ij}^0,\delta_{ij}\in \Delta\\
    &\quad\BoundMeanDeltaConstraint\quad \sum_{\substack{\beta_{ij}\in \cV_{ij}, \delta_{ij} \in \Delta}} \delta_{ij}\cdot \hat\lambda_{ij}(\beta_{ij},\delta_{ij}) \leq d_i  & \forall i,j\\
    &\quad\BoundSumDeltaConstraint\quad\sum_{i\in[n]} d_i \leq 
   111\cdot \estprev\\
     &\qquad\quad \lambda_{ij}(t_{ij},\beta_{ij},\delta_{ij})\geq 0,\hat\lambda_{ij}(\beta_{ij},\delta_{ij})\geq 0,w_{ij}(t_{ij})\geq 0,d_i\geq 0 & \forall i,j,t_{ij},\beta_{ij}\in \cV_{ij},\delta_{ij}
\end{align*}

\textbf{Variables:}~\footnote{For every $i,j$, let $\cV_{ij}^0 = \cT_{ij}$ be the set of all possible values of $t_{ij}$. To address the tie-breaking issue in \Cref{remark:tie-breaking}, let $\eps_r>0$ be an arbitrarily small number,
 and define $\cV_{ij}^+ = \{t_{ij} + \eps_r: t_{ij} \in \cT_{ij}\}$ and $\cV_{ij} = \cV^0_{ij} \cup \cV^+_{ij}$. Let $\Delta$ be a geometric discretization of range $[\estprev/n,55\cdot\estprev]$. Formally, $\delta\in \Delta$ if and only if $\delta=\frac{2^x}{n}\cdot \estprev$ for some integer $x$ such that $0\leq x\leq\lceil\log(55n)\rceil$. Finally, 
for each $\beta \in \cV_{ij}^0$, let $\beta^+=\beta+\eps_r\in \cV_{ij}^+$. Note that the LP  (or the LP in~\Cref{fig:XOSLP}) do not depend on the choice of $\varepsilon_r$, so we can choose $\varepsilon_r$ to be sufficiently small. In fact, let $b$ be an upper bound of the bit complexity of the problem instance, and the bit complexity of any feasible solution of our LP. Our proof works as long as $\eps_r<\min\{\frac{1}{2^{\poly(b)}},\frac{\prev}{\sum_{i,j}|\cT_{ij}|}\}$.\label{footnote:discretization in lp}}

\vspace{.1in}
- $\lambda_{ij}(t_{ij},\beta_{ij},\delta_{ij})$, for all $i,j$ and $t_{ij}\in \cT_{ij}$, $\beta_{ij}\in \cV_{ij}, \delta_{ij}\in \Delta$. See Step 3 of \Cref{sec:tour to LP} for an explanation of this variable.

\vspace{.1in}
- $\hat\lambda_{ij}(\beta_{ij},\delta_{ij})$, for all $i$, $j$, $\beta_{ij}\in \cV_{ij}, \delta_{ij}\in \Delta$, denoting the distribution $\cC_{ij}$ over $\cV_{ij}\times \Delta$.

\vspace{.1in}
- $w_{ij}(t_{ij})$, for all $i\in [n],j\in [m],t_{ij}\in \cT_{ij}$, denoting the expected marginal reduced form. We denote $w_i=\{w_{ij}(t_{ij})\}_{j\in[m],t_{ij}\in\cT_{ij}}$ the vector of all variables associated with bidder $i$.

\vspace{.1in}
- $d_i$, for all $i\in [n]$, denoting an upper bound of the expectation of $\delta_{ij}$ over distribution $\cC_{ij}$ for all $j$.
\end{minipage}
}
\caption{LP Relaxation for Constrained-Additive Bidders}~\label{fig:bigLP}

\end{figure}

\begin{algorithm}[h!]
\floatname{algorithm}{Mechanism}
\begin{algorithmic}[1]
\setcounter{ALG@line}{-1}
\State Before the mechanism starts, the seller computes the \emph{price} $Q_j$ (\Cref{def:Q_j}) for every item $j$.  
\State 
Bidders arrive sequentially in the lexicographical order.
\State When bidder $i$ arrive, the seller shows her the set of available items $S_i(t_{<i})\subseteq [m]$, as well as their prices. Note that $S_i(t_{<i})$ is the set of items that are not purchased by the first $i-1$ bidders, which depends on $t_{<i}$. We use $S_1(t_{<1})$ to denote $[m]$.
\State Bidder $i$ is asked to pay an \emph{entry fee}. The seller samples a type $t_i'\sim D_i$, and sets the entry fee as: $\xi_i(S_i(t_{<i}),t_i') = \max_{S'\subseteq S_i(t_{<i})}\left(v_i(t_i',S')-\sum_{j\in S'}Q_j\right)$. The entry fee is bidder $i$'s utility for her favorite set under prices $Q_j$'s if her type is $t'_i$.
\State If bidder $i$ (with type $t_i$) agrees to pay the entry fee 
$\xi_i(S_i(t_{<i}),t_i') $, then she can enter the mechanism and take her favorite set $S^*\in\argmax_{S'\subseteq S_i(t_{<i})}\left(v_i(t_i,S')-\sum_{j\in S'}Q_j\right)$, by paying $\sum_{j\in S^*}Q_j$. Otherwise, the bidder gets nothing and pays 0. 
\end{algorithmic}
\caption{{\sf \quad Two-part Tariff Mechanism $\Mtpt$}}\label{def:constructed-SPEM}
\end{algorithm}

\begin{remark} \label{remark:lexi-order}
\Cref{thm:bounding-lp-simple-mech} indeed holds even if the bidders arrive in an arbitrary order in $\Mtpt$. We choose the lexicographical order only to keep the notation light. 
\end{remark}

 We complete the last step of our proof by showing $\opt= O(\optlp+\prev)$ in~\Cref{lem:bound rev by opt}. More specifically, we show that for any mechanism $\cM=(\sigma,p)$, the tuple $(\sigma,\tilde{\vBeta}^{(\sigma)},\vC^{(\sigma)},r^{(\sigma)})$ stated in \Cref{lem:caiz17-constrained-additive} corresponds to a feasible solution of the LP in~\Cref{fig:bigLP} whose objective is at least $\core(\sigma,\tilde{\vBeta}^{(\sigma)},\vC^{(\sigma)},r^{(\sigma)})$. Hence, the revenue of $\cM$ is upper bounded by $\prev$ and $\optlp$. The proof is postponed to \Cref{subsec:proof_lem:bound rev by opt}. Indeed, for both \Cref{thm:bounding-lp-simple-mech} and \Cref{lem:bound rev by opt}, we prove a general statement that applies to XOS valuations, which requires a generalized LP and definitions. See \Cref{thm:bounding-lp-simple-mech-XOS} and \Cref{lem:bound rev by opt-XOS} in \Cref{sec:appx_program} for details. 

\begin{lemma}\label{lem:bound rev by opt}
For any BIC and IR mechanism $\cM$, $\rev(\cM)\leq 28\cdot \prev+4\cdot\optlp$.
\end{lemma}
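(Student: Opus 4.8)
\textbf{Proof proposal for Lemma~\ref{lem:bound rev by opt}.}

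The plan is to bound the revenue of an arbitrary BIC, IR mechanism $\cM=(\sigma,p)$ by exhibiting a feasible point of the LP in \Cref{fig:bigLP} whose objective is at least $\core(\sigma,\tilde{\vBeta}^{(\sigma)},\vC^{(\sigma)},\vr^{(\sigma)})$, and then invoking Property~4 of \Cref{lem:caiz17-constrained-additive}. Concretely, take the dual parameters $(\tilde{\vBeta}^{(\sigma)},\vC^{(\sigma)},\vr^{(\sigma)})$ guaranteed by \Cref{lem:caiz17-constrained-additive}. I would build a point-mass solution: for each $i,j$ set $\hat\lambda_{ij}$ to put probability $1$ on the single pair $(\beta_{ij},\delta_{ij})=(\tilde\beta_{ij}^{(\sigma)}, \delta_i)$, where $\delta_i$ is the element of the geometric grid $\Delta$ that sits just above $c_i^{(\sigma)}$ (rounding up; this is where the constants $55$, $111$ versus $8$ in Constraints \BoundSumDeltaConstraint{} and \CiConstraintFP{} come from — rounding up on a geometric grid at most doubles, and one must also account for the range truncation $[\estprev/n,55\estprev]$ using $\estprev\ge \frac{1}{6.75}(1-\frac{1}{nm})\prev$ from \Cref{thm:chms10}). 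Then set $w_{ij}(t_{ij})=\widehat w_{ij}(t_{ij})$, the marginal reduced form of $\sigma$ (which lies in $W_i$ since $\sigma$ respects feasibility constraints, giving Constraint \Wconstraint; Constraint \PiConstraint{} follows from the supply constraint on $\sigma$), and set $\lambda_{ij}(t_{ij},\tilde\beta_{ij}^{(\sigma)},\delta_i) = w_{ij}(t_{ij})/f_{ij}(t_{ij})$ with all other $\lambda$-entries zero, and $d_i=\delta_i$.

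The bulk of the work is then a constraint-by-constraint verification. Constraint \LambdaMarginalConstraint{} holds by construction; Constraint \CompareMarginalConstraint{} reduces to $w_{ij}(t_{ij})/f_{ij}(t_{ij}) = \E[\cdot]\le 1$, a probability; Constraint \HatLambdaDistributionConstraint{} is immediate. Constraint \ReduceDemandConstaint{} is exactly Property~1 of \Cref{lem:caiz17-constrained-additive} (using $r_{ij}^{(\sigma)}=1$ in the light-notation regime, or handling the tie-break via the $\cV^+$ coordinates in general — this is the one place the $\beta^+$ machinery and the $\eps_r$ footnote matter). Constraint \MarginalToGlobalConstraint{} is the LP-encoding of Property~2: with the point mass, the left side collapses to $\frac12\sum_{t_{ij}}\widehat w_{ij}(t_{ij})$ (the $\beta^+$ term vanishes for generic tie-breaking, or one splits mass between $\beta$ and $\beta^+$ proportionally to $r_{ij}^{(\sigma)}$), and the right side to $\Pr[t_{ij}\ge\tilde\beta_{ij}^{(\sigma)}]$, which is Property~2 after noting $\sum_{t_i}f_i(t_i)\sum_{S:j\in S}\sigma_{iS}(t_i)=\sum_{t_{ij}}\widehat w_{ij}(t_{ij})/ \,$(wait: it equals $\sum_{t_{ij}}\widehat w_{ij}(t_{ij})$ directly by the definition of $\widehat w$). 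Constraint \BoundMeanDeltaConstraint{} holds since the only $\delta_{ij}$ with positive mass is $\delta_i\le d_i$; Constraint \BoundSumDeltaConstraint{} follows from Property~3, $\sum_i c_i^{(\sigma)}\le 8\prev$, together with the rounding-up loss and the conversion $\prev\le 6.75\cdot\frac{nm}{nm-1}\estprev$, which I expect to give $\sum_i d_i\le 111\estprev$ with room to spare. Finally, the objective of this solution equals $\sum_{i,j,t_{ij}} f_{ij}(t_{ij})\,t_{ij}\,\lambda_{ij}(t_{ij},\tilde\beta_{ij}^{(\sigma)},\delta_i)\,\ind[t_{ij}\le\tilde\beta_{ij}^{(\sigma)}+\delta_i] = \sum_{i,j,t_{ij}} \widehat w_{ij}(t_{ij})\,t_{ij}\,\ind[t_{ij}\le\tilde\beta_{ij}^{(\sigma)}+\delta_i] \ge \core(\sigma,\tilde{\vBeta}^{(\sigma)},\vC^{(\sigma)},\vr^{(\sigma)})$, where the last inequality uses $\delta_i\ge c_i^{(\sigma)}$ so the indicator $\ind[t_{ij}\le\tilde\beta_{ij}^{(\sigma)}+\delta_i]$ dominates $\ind[t_{ij}\le\tilde\beta_{ij}^{(\sigma)}+c_i^{(\sigma)}]$ (and in the tie-break-sensitive version one checks the $r_{ij}^{(\sigma)}$-weighted boundary term is also captured).

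Putting it together: $\optlp \ge \core(\sigma,\tilde{\vBeta}^{(\sigma)},\vC^{(\sigma)},\vr^{(\sigma)})$, and Property~4 of \Cref{lem:caiz17-constrained-additive} gives $\rev(\cM)\le 28\prev+4\core(\sigma,\tilde{\vBeta}^{(\sigma)},\vC^{(\sigma)},\vr^{(\sigma)})\le 28\prev+4\optlp$. (The statement in the excerpt writes $28$ and $4$ matching the constants in Property~4; any slack from the $\delta$-rounding only helps since a larger feasible objective still obeys $\optlp\ge\core$.)

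The main obstacle I anticipate is not conceptual but bookkeeping: getting the discretization of $c_i^{(\sigma)}$ into the grid $\Delta$ to respect \emph{both} the range $[\estprev/n,\,55\estprev]$ and the budget $\sum_i d_i\le 111\estprev$ simultaneously, while the only handle on $\estprev$ versus $\prev$ is the one-sided \Cref{thm:chms10} bound — and, orthogonally, carrying the tie-breaking vector $\vr^{(\sigma)}$ correctly through Constraints \ReduceDemandConstaint, \MarginalToGlobalConstraint{} and the objective via the paired coordinates $\beta_{ij},\beta_{ij}^+$, which is exactly the subtlety \Cref{remark:tie-breaking} flags. Everything else is a direct translation of the five properties of \Cref{lem:caiz17-constrained-additive} into the LP constraints, which were designed with precisely this feasibility argument in mind.
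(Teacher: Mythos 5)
Your proposal is essentially the paper's proof. The construction (feed the Cai--Zhao parameters through the LP as a ``certificate'' of feasibility, with $w$ the marginal reduced form of $\sigma$, $\hat\lambda$ a distribution concentrated at $(\tilde\beta_{ij}^{(\sigma)},\hat c_i)$, $\lambda_{ij}$ proportional to $w_{ij}/f_{ij}$, and $d_i=\hat c_i$), the rounding of $c_i^{(\sigma)}$ up into $\Delta$, and the final chaining through Property~4 of \Cref{lem:caiz17-constrained-additive} are all exactly what the paper does.

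One correction on emphasis: the ``point-mass'' version you lead with is not a special case one can start from and later patch. The split
\[
\hat\lambda_{ij}(\tilde\beta_{ij}^{(\sigma)},\hat c_i)=r_{ij}^{(\sigma)},\qquad
\hat\lambda_{ij}(\tilde\beta_{ij}^{(\sigma),+},\hat c_i)=1-r_{ij}^{(\sigma)},
\]
with the matching split for $\lambda_{ij}(t_{ij},\cdot,\hat c_i)$, is the construction, not an optional refinement: with a point mass at $\tilde\beta_{ij}^{(\sigma)}$ alone, Constraint~\ReduceDemandConstaint{} becomes $\sum_i\Pr[t_{ij}\ge\tilde\beta_{ij}^{(\sigma)}]\le\frac12$, which Property~1 does \emph{not} give you once $r_{ij}^{(\sigma)}<1$. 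The split makes the LHS of Constraint~\ReduceDemandConstaint{} exactly $\sum_i\bigl(\Pr[t_{ij}>\tilde\beta_{ij}^{(\sigma)}]+r_{ij}^{(\sigma)}\Pr[t_{ij}=\tilde\beta_{ij}^{(\sigma)}]\bigr)$, and it also makes Constraint~\MarginalToGlobalConstraint{} and the objective identity line up with Properties~2 and 4, with the $\eps_r$ footnote ensuring the indicators for $\tilde\beta_{ij}$ and $\tilde\beta_{ij}^+$ agree on the support. You clearly know this (you flag it three times), but it should be the primary construction, not a parenthetical. Your other worry, the interaction of the grid range $[\estprev/n,55\estprev]$ with the budget $111\estprev$, resolves exactly as you hoped: conditioning on $\cE$, $c_i^{(\sigma)}\le 8\prev\le 55\estprev$ when $nm\ge 110$, rounding up gives $\hat c_i\le\max\{\estprev/n,2c_i^{(\sigma)}\}$, and summing yields $\sum_i\hat c_i\le\estprev+16\prev\le 111\estprev$.
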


\subsection{Interpretation of Our LP in~\Cref{fig:bigLP}.}~\label{sec:explain LP constraints} We explain our LP in this section. The objective is the expected \core, as explained in Step 3 in \Cref{sec:tour to LP}. 
According to our definition of $\lamij$ and Constraint~\LambdaMarginalConstraint, $\{w_{ij}(t_{ij})\}_{i,j,t_{ij}}$ corresponds to the expected marginal reduced form, that is, $w_{ij}(t_{ij})$ is the expected probability for bidder $i$ to receive item $j$ and her value for item $j$ is $t_{ij}$.
Constraints~\Wconstraint~and~\PiConstraint~simply sets the feasible region of the expected marginal reduced form $w$. They follow directly from the fact that every realized $\widehat{w}^{(\beta,\delta)}$ is feasible (see Step 3 in \Cref{sec:tour to LP}).  
Constraint~\CompareMarginalConstraint~follows from \Cref{equ:explain-lambda} and the fact that every ${\widehat{w}_{ij}^{(\beta,\delta)}(t_{ij})}/{f_{ij}(t_{ij})}$ is in $[0,1]$. Constraint~\HatLambdaDistributionConstraint~implies that $\{\hat\lambda_{ij}(\beta_{ij},\delta_{ij})\}_{\beta_{ij},\delta_{ij}}$ correspond to a distribution $\cC_{ij}$.

Constraints~\ReduceDemandConstaint~-~\BoundSumDeltaConstraint~are specialized for our problem, which guarantees that the LP optimum can be bounded by simple mechanisms. Constraint~\ReduceDemandConstaint~follows from taking expectation on both sides of Constraint~\BetaConstraintFP~in \Cref{fig:LP-fixed-parameter}, over the randomness of $\beta_{ij}$. Constraints~\BoundMeanDeltaConstraint~and~\BoundSumDeltaConstraint~correspond to Constraint~\CiConstraintFP~in \Cref{fig:LP-fixed-parameter}. Here we bound the expectation of $\delta_{ij}$ by a unified upper bound $d_i$ for every $j$.\footnote{This corresponds to the fact that in \Cref{lem:caiz17-constrained-additive} (and \Cref{fig:LP-fixed-parameter}), there is a single $c_i$ that represents $\delta_{ij}$.} It is worth emphasizing Constraint~\MarginalToGlobalConstraint, which corresponds to Constraint \WBetaConstraintFP~in \Cref{fig:LP-fixed-parameter} (and \Cref{fig:LP-relaxed-stageii}). Instead of taking expectations over the dual parameters, we force the constraint to hold for every $\beta_{ij}$ and $\delta_{ij}$. 
This is an important property that is crucial in our analysis (see \Cref{fn:constraint 7} in \Cref{lem:Q and Q-hat}). Readers may notice that Constraint~\PiConstraint~is implied by Constraints {\LambdaMarginalConstraint}, {\ReduceDemandConstaint} and {\MarginalToGlobalConstraint}. We keep~Constraint~\PiConstraint~so that it is clear that the supply constraint is enforced over the (expected) marginal reduced~form. 
The Problem Specific Constraints~\ReduceDemandConstaint-\BoundSumDeltaConstraint~ in the LP in Figure~\ref{fig:bigLP} are expected versions of the constraints in the two-stage optimization problem in Figure~\ref{fig:LP-fixed-parameter}, which are directly inspired by the Properties~1, 2, and 3 in Lemma~\ref{lem:caiz17-constrained-additive}.
They are crucial to guarantee that optimal value of the LP in Figure~\ref{fig:bigLP} is still approximable by simple mechanisms.


\notshow{
\section{Backup section (only for checking)}

\begin{align*}
    (P) = 
    \max& \frac{1}{2} \sum_{i\in[n]} \sum_{j\in[m]} \sum_{t_{i,j}\in T_{i,j}}  f_{i,j}(t_{i,j})\cdot V_{i,j}(t_{i,j})\cdot
    \\
    &\quad \quad\sum_{\substack{\beta\in V_{i,j}(T_{i,j}),\delta \in \Delta V_{i}}} \lambda_{i,j}(t_{i,j},\beta, \delta)\cdot \mathds{1}[V_{i,j}(t_{i,j}) \leq \beta + \delta]\\
    s.t.\quad & \pi_i \in \Pi_i  & \forall i \in[n] \\
    & \sum_{\beta}\lambda_{i,j}(t_{i,j},\beta) = \pi_{i,j}(t_{i,j}) & \forall i\in[n],j\in[m],t_{i,j}\in T_{i,j}\\
    & \sum_{i\in[n]}\sum_{t_{i,j}\in T_{i,j}} f_{i,j}(t_{i,j})\cdot \pi_{i,j}(t_{i,j}) \leq 1, & \forall j \in[m] \\
    & \lambda_{i,j}(t_{i,j},\beta, \delta) \leq \lambda_{i,j}'(\beta, \delta) & \forall i\in[n], j \in[m], t_{i,j} \in T_{i,j}\\ 
    &\sum_{t_{i,j}} f_{i,j}(t_{i,j}) \lambda_{i,j}(t_{i,j},\beta) \leq \frac{1}{2}\cdot \lambda'_{i,j}(\beta) \Pr_{t_{i,j}\sim D_{i,j}}[t_{i,j} \geq \beta] & \forall \beta \in V_{i,j}(T_{i,j})\\
    & \lambda_{i}'(\delta) = \sum_{\beta \in V_{i,j}(T_{i,j})} \lambda_{i,j}'(\beta, \delta) \text{(Maybe not needed)} &\forall i\in[n], j \in[m] \\
    & \sum_{\delta \in \Delta V_i} \delta \lambda_{i}'(\delta) \leq \delta_i  & \forall i \in [n]\\
    & \sum_{i\in[n]} \delta_i \leq O(1) \cdot \srev \\
     & \forall i \in[n],j\in [m],t\in T_{i,j},\beta,\gamma \in V_{i,j}(t_{i,j})\\
\end{align*}
}

\notshow{

\section{Old Main Program (for reference only)}

The variables in the LP are as follows:
\begin{enumerate}
    \item $\pi_{ij}(t_{ij})$, for all $i\in [n],j\in [m],t_{ij}\in \cT_{ij}$. We denote $\pi_i=(\pi_{ij}(t_{ij}))_{j,t_{ij}}$ the vector of all variables associated with bidder $i$.
    \item $w_{ij}(t_{ij})$, for all $i\in [n],j\in [m],t_{ij}\in \cT_{ij}$. We denote $w_i=(w_{ij}(t_{ij}))_{j,t_{ij}}$ the vector of all variables associated with bidder $i$.
    \item $\lambda_{ij}^{(r)}(t_{ij},\beta_{ij},\delta_{ij}),\lambda_{ij}^{(n)}(t_{ij},\beta_{ij},\delta_{ij})$, for all $i,j$ and $t_{ij}\in \cT_{ij}$, $\beta_{ij}\in \cV_{ij}, \delta_{ij}\in \Delta_i$.
    \item $\hat\lambda_{ij}^{(r)}(\beta_{ij},\delta_{ij}), \hat\lambda_{ij}^{(n)}(\beta_{ij},\delta_{ij})$, for all $\beta_{ij}\in \cV_{ij}, \delta_{ij}\in \Delta_i$, 
    \item $d_i$, for all $i\in [n]$.
\end{enumerate} \todo{To add some explanation of the variables.} In the LP, we will abuse the notations and let $\lambda_{ij}(t_{ij},\beta_{ij},\delta_{ij})=\lambda_{ij}^{(r)}(t_{ij},\beta_{ij},\delta_{ij})+\lambda_{ij}^{(n)}(t_{ij},\beta_{ij},\delta_{ij})$, $\lambda_{ij}(t_{ij},\beta_{ij})=\sum_{\delta_{ij}\in\Delta_i}\lambda_{ij}(t_{ij},\beta_{ij},\delta_{ij})$. Similarly, let $\hat\lambda_{ij}(\beta_{ij},\delta_{ij})=\hat\lambda_{ij}^{(r)}(\beta_{ij},\delta_{ij})+\hat\lambda_{ij}^{(n)}(\beta_{ij},\delta_{ij})$, $\hat\lambda_{ij}^{(r)}(\beta_{ij})=\sum_{\delta_{ij}\in\Delta_i}\hat\lambda_{ij}^{(r)}(\beta_{ij},\delta_{ij})$, $\hat\lambda_{ij}(\beta_{ij})=\sum_{\delta_{ij}\in\Delta_i}\hat\lambda_{ij}(\beta_{ij},\delta_{ij})$. 


\begin{align*}
    (P)
    &\quad\text{maximize  } \sum_{i\in[n]} \sum_{j\in[m]} \sum_{t_{ij}\in \cT_{ij}}  f_{ij}(t_{ij})\cdot V_{ij}(t_{ij})\cdot
    \\
    &\sum_{\substack{\beta_{ij}\in \cV_{ij},\delta_{ij} \in \Delta_i}} \left(\lambda_{ij}(t_{ij},\beta_{ij}, \delta_{ij})\cdot \ind[V_{ij}(t_{ij})<\beta_{ij} + \delta_{ij}]+\lambda_{ij}^{(r)}(t_{ij},\beta_{ij}, \delta_{ij})\cdot \ind[V_{ij}(t_{ij})=\beta_{ij} + \delta_{ij}]\right)\\
    s.t.\quad & (1)\quad (\pi_i,w_i) \in W_i & \forall i \\
    & (2)\quad f_{ij}(t_{ij})\cdot\sum_{\beta_{ij}\in \cV_{ij}} \lambda_{ij}(t_{ij},\beta_{ij}) = w_{ij}(t_{ij}) & \forall i,j,t_{ij}\in \cT_{ij}\\
    &(3) \quad \sum_i\sum_{t_{ij}\in \cT_{ij}}\pi_{ij}(t_{ij})\leq 1 & \forall j\\
    &(4) \quad \sum_{i\in[n]}\sum_{\beta_{ij} \in \cV_{ij}} \hat\lambda_{ij}(\beta_{ij}) \cdot \Pr_{t_{ij}\sim D_{ij}}[V_{ij}(t_{ij})> \beta_{ij}]+\hat\lambda_{ij}^{(r)}(\beta_{ij}) \cdot \Pr_{t_{ij}\sim D_{ij}}[V_{ij}(t_{ij})=\beta_{ij}] \leq \frac{1}{2} & \forall j\\
    & (5)\quad \frac{1}{2}\sum_{t_{ij} \in \cT_{ij}} f_{ij}(t_{ij}) \lambda_{ij}(t_{ij},\beta_{ij}) \leq  \hat\lambda_{ij}(\beta_{ij}) \cdot \Pr_{t_{ij}}[V_{ij}(t_{ij})> \beta_{ij}]+\hat\lambda_{ij}^{(r)}(\beta_{ij}) \cdot \Pr_{t_{ij}}[V_{ij}(t_{ij})=\beta_{ij}] & \forall i,j,\beta_{ij}\in \cV_{ij}\\
    & (6)\quad\lambda_{ij}^{(r)}(t_{ij},\beta_{ij},\delta_{ij})\leq \hat\lambda_{ij}^{(r)}(\beta_{ij}, \delta_{ij}),\quad \lambda_{ij}^{(n)}(t_{ij},\beta_{ij},\delta_{ij})\leq \hat\lambda_{ij}^{(n)}(\beta_{ij}, \delta_{ij}) & \forall i,j, t_{ij},\beta_{ij},\delta_{ij}\\
    & (7)\quad \sum_{\substack{\beta_{ij}\in \cV_{ij}, \delta_{ij} \in \Delta_i}} \delta_{ij}\cdot \hat\lambda_{ij}(\beta_{ij},\delta_{ij}) \leq d_i  & \forall i,j\\
    & (8)\quad\sum_{i\in[n]} d_i \leq 8\cdot \prev \text{\todo{Change the constant if needed.}}\\
    &(9)\quad \sum_{\substack{\beta_{ij}\in \cV_{ij},\delta_{ij} \in \Delta_i}} \hat\lambda_{ij}(\beta_{ij},\delta_{ij}) = 1 & \forall i,j\\
     & w_{ij}(t_{ij})\geq 0,\lambda_{ij}^{(r)}(t_{ij},\beta_{ij},\delta_{ij})\geq 0,\lambda_{ij}^{(n)}(t_{ij},\beta_{ij},\delta_{ij})\geq 0,\hat\lambda_{ij}^{(r)}(\beta_{ij},\delta_{ij})\geq 0,\hat\lambda_{ij}^{(n)}(\beta_{ij},\delta_{ij})\geq 0 & \forall i,j,t_{ij},\beta_{ij},\delta_{ij}
\end{align*}

}
\section{Sample Access to the Distribution}\label{sec:sample access}


In this section, we focus on the case where we have only access to the bidders' distribution. 
Our goal is again to compute an approximately optimal mechanism. Our plan is as follows: (i) for each $i\in[n]$ and $j\in[m]$, take $O(\log(1/\delta)/\varepsilon^2)$ samples from $D_{ij}$, and let $\widehat{D}$ be the uniform distribution over the samples. By the DKW inequality~\cite{DvoretzkyKW56}, $\widehat{D}_{ij}$ and $D_{ij}$ have Kolmogorov distance (\Cref{def:kolmogorov}) no more than $\varepsilon$ with probability at least $1-\delta$. (ii) We then apply our algorithm in \Cref{thm:main XOS-main body}
to compute an RPP or TPT that is approximately optimal w.r.t. distributions $\{\widehat{D}_{ij}\}_{i\in[n],j\in[m]}$. We show that the computed simple mechanism is approximately optimal for the true distributions as well. The proof of Theorem~\ref{thm:sample access} is postponed to \Cref{appx:sample access}.

\notshow{
Unfortunately we cannot apply Theorem~\ref{thm:bounding-lp-simple-mech} to $\cD$ to compute an approximately optimal TPT mechanism,
since we only have sample access to $\cD$. 
In order to deal with this issue,
for each agent and each item,
we construct an empirical distribution over values that the agent has.
We measure the distance the distance between our empirical and real distribution in Kolmogorov distance as described in Definition~\ref{def:kolmogorov}.
}

\begin{theorem}\label{thm:sample access}
Suppose all bidders' valuations are constrained additive. If for each $i\in[n]$ and $j\in[m]$,  $D_{ij}$ is supported on {numbers in $[0,1]$ with bit complexity no more than $b$}, 
then for any $\eps>0$ and $\delta>0$, with probability $1-\delta$, we can compute in time {$\poly(n,m,1/\varepsilon,\log (1/\delta),b)$} a rationed posted price mechanism or a two-part tariff mechanism, whose revenue is  at least $c\cdot \opt-O(nm^2\eps)$ for some absolute constant $c$. The algorithm takes $O\left(\frac{\log(nm/\delta)}{\varepsilon^2}\right)$ samples from each $D_{ij}$ and assumes query access to a demand oracle for each bidder.
\end{theorem}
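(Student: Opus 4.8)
\textbf{Proof proposal for \Cref{thm:sample access}.}

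The plan is to reduce to \Cref{thm:main XOS-main body} applied to the empirical distributions $\widehat{D}=\bigtimes_{i,j}\widehat{D}_{ij}$, and then to argue that a simple mechanism (RPP or TPT) that is approximately optimal with respect to $\widehat{D}$ remains approximately optimal with respect to the true distribution $D$, up to an additive loss that is polynomial in $n,m$ and linear in $\varepsilon$. First I would set the sample size: for each $i\in[n]$, $j\in[m]$ draw $O(\log(nm/\delta)/\varepsilon^2)$ i.i.d.\ samples from $D_{ij}$ and let $\widehat{D}_{ij}$ be the empirical (uniform over samples) distribution. By the DKW inequality and a union bound over all $nm$ pairs, with probability at least $1-\delta$ every $\widehat{D}_{ij}$ is within Kolmogorov distance $\varepsilon$ of $D_{ij}$ (\Cref{def:kolmogorov}). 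Condition on this event. Note $\widehat{D}_{ij}$ is supported on at most $O(\log(nm/\delta)/\varepsilon^2)$ points, each of bit complexity at most $b$, so $\sum_{i,j}|\widehat{\cT}_{ij}|=\poly(n,m,1/\varepsilon,\log(1/\delta))$ and the bit complexity of the empirical instance is $O(b+\log(1/\varepsilon)+\log\log(1/\delta))$. Hence running the algorithm of \Cref{thm:main XOS-main body} on $\widehat{D}$ takes time $\poly(n,m,1/\varepsilon,\log(1/\delta),b)$ and outputs, with probability $1-\delta'-\frac{2}{nm}$, an RPP or TPT mechanism $\widehat{\cM}$ with $\rev_{\widehat{D}}(\widehat{\cM})\geq c\cdot\opt_{\widehat{D}}$; choosing $\delta'$ appropriately (e.g.\ $\delta'=\delta$) and folding constants keeps the failure probability $O(\delta)$, which we absorb into the statement's $\delta$ by rescaling.

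The heart of the argument is a two-sided coupling/Lipschitz bound between $D$ and $\widehat{D}$. Since each bidder's value for each item lies in $[0,1]$ and valuations are constrained-additive (hence $1$-Lipschitz in each coordinate and bounded by $m$ on any bundle), a change of $\varepsilon$ in Kolmogorov distance per coordinate can be converted, via the standard quantile coupling, into a coupling $(t,\widehat t)$ of $D$ and $\widehat{D}$ with $|t_{ij}-\widehat t_{ij}|$ small except on an event of probability $\varepsilon$ per coordinate. I would then establish two facts: (a) \emph{Transfer of benchmark:} $\opt_{\widehat{D}}\geq \opt_D - O(nm^2\varepsilon)$. This follows because any BIC, IR mechanism for $D$ can be simulated on $\widehat{D}$ (interpreting a bidder's empirical type as the nearest true type under the coupling), losing at most $O(m\varepsilon)$ per bidder in expected revenue from the $\varepsilon$-probability ``bad'' coordinates (each worth at most $m$) plus a $O(m\varepsilon)$-type distortion from shifting values; summing over $n$ bidders and being generous with the $m$-dependence gives $O(nm^2\varepsilon)$. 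The cleanest route is to go through the revenue upper-bound benchmark: by \Cref{lem:bound rev by opt} applied to $D$, $\opt_D\leq 28\prev_D+4\optlp_D$, and both $\prev$ and $\optlp$ are Kolmogorov-robust — $\prev$ because posted-price revenue changes by $O(nm\varepsilon)$ under $\varepsilon$-shifts of each marginal (each of the $nm$ price-threshold events moves by $\varepsilon$, each worth $\leq 1$), and $\optlp$ because the LP in \Cref{fig:bigLP} has coefficients ($f_{ij},t_{ij},\Pr[t_{ij}\geq\beta_{ij}]$) that are all $O(\varepsilon)$-stable under Kolmogorov perturbation of the marginals, so the optimal LP value changes by at most $O(nm^2\varepsilon)$. (b) \emph{Transfer of mechanism performance:} for the \emph{specific} simple mechanism $\widehat{\cM}$ (an RPP or TPT with explicitly computed prices and entry fees), $\rev_D(\widehat{\cM})\geq \rev_{\widehat{D}}(\widehat{\cM})-O(nm^2\varepsilon)$. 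Here I use that the revenue of an RPP/TPT with fixed prices $\{p_{ij}\}$ (resp.\ $\{Q_j\}$ and entry-fee rule) is, for each bidder, an expectation of a bounded ($\leq m$) function whose dependence on that bidder's marginals is through finitely many threshold/demand events, each shifting by $O(\varepsilon)$ in probability under the coupling; careful bookkeeping over the sequential arrival (later bidders' available sets $S_i(t_{<i})$ depend on earlier types, but the per-bidder bound $O(m^2\varepsilon)$ absorbing the cascading dependence through a hybrid argument over bidders) yields the claimed $O(nm^2\varepsilon)$ aggregate loss.

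Combining: $\rev_D(\widehat{\cM})\geq \rev_{\widehat{D}}(\widehat{\cM})-O(nm^2\varepsilon)\geq c\cdot\opt_{\widehat{D}}-O(nm^2\varepsilon)\geq c\cdot\opt_D-O(nm^2\varepsilon)$, which is exactly the claimed guarantee (with the constant in the $O(nm^2\varepsilon)$ term absorbed). The running time and sample complexity are as computed above. I expect the main obstacle to be part (b) — proving the robustness of the \emph{constructed} TPT mechanism $\Mtpt$ of Mechanism~\ref{def:constructed-SPEM}, because its entry fees $\xi_i(S_i(t_{<i}),t_i')$ are themselves computed from fresh samples $t_i'\sim D_i$ and depend nonlinearly (through a max over subsets) on the marginals, and the sequential structure means an $\varepsilon$-error in an early bidder's behavior propagates to the item-availability sets faced by all later bidders. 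The fix is a hybrid argument: swap the true distribution to the empirical one one coordinate (or one bidder) at a time, bounding the revenue change at each swap by $O(m\varepsilon)$ (value boundedness $\times$ Kolmogorov shift), and noting there are only $O(nm)$ coordinates; the cascading dependence only costs an extra factor of $m$ in the per-bidder bound, which is why the final additive error is $O(nm^2\varepsilon)$ rather than $O(nm\varepsilon)$. This matches the bound stated in the theorem. A secondary subtlety worth flagging is that \Cref{thm:main XOS-main body} is stated for general XOS with an adjustable demand oracle, but for constrained-additive valuations only a demand oracle is needed, consistent with the theorem's oracle assumption; and the value of $\prev$ used internally must be the one for $\widehat D$, with \Cref{thm:chms10}'s estimate $\estprev$ also Kolmogorov-robust, which follows from the same threshold-stability reasoning.
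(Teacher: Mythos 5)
Your overall skeleton matches the paper's proof in \Cref{appx:sample access}: form empirical marginals and apply DKW, run the algorithm of \Cref{thm:main XOS-main body} on $\widehat D$, then transfer both the computed mechanism's revenue and the benchmark back to the true $D$ with an $O(nm^2\varepsilon)$ additive loss before chaining the inequalities. The difference is in how the two transfer steps are discharged. The paper proves no robustness from scratch: it observes that both RPP and the constructed TPT are sequential posted price with entry fee mechanisms (RPP by viewing the bidders as unit-demand), and invokes Theorem~3 of Cai--Daskalakis~\cite{CaiD17} (restated as \Cref{thm:robust mechanism}), which states exactly that the revenue of any fixed such mechanism changes by at most $4nm^2\varepsilon$ under coordinatewise Kolmogorov perturbation; the benchmark transfer (\Cref{lem:revenue close kolmogorov}) then follows by combining this with the simple-vs-optimal guarantee of \Cref{lem:caiz17-constrained-additive} ($\opt$ is within a constant factor of the best RPP/TPT revenue), rather than by perturbing any benchmark directly. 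Your part (b) --- the hybrid/coupling argument over bidders handling entry fees and the cascading availability sets --- is essentially a re-derivation of that cited theorem; the strategy and the $O(nm^2\varepsilon)$ form are right, but it is the main technical content of the whole argument and you leave it as a sketch, whereas the paper gets it for free by citation. One concrete caution on your part (a): the sub-route claiming that the coefficients of the LP in \Cref{fig:bigLP} (the $f_{ij}(t_{ij})$, $t_{ij}$, $\Pr[t_{ij}\ge\beta_{ij}]$) are $O(\varepsilon)$-stable under Kolmogorov perturbation is not sound as stated --- Kolmogorov distance controls CDFs, not point masses, and $\widehat D_{ij}$ and $D_{ij}$ generally have different (even disjoint) supports, so the two LPs are not coefficientwise comparable. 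Use your coupling/simulation route instead, or better, the paper's route through \Cref{lem:caiz17-constrained-additive} plus \Cref{thm:robust mechanism}, which sidesteps the LP entirely.
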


\notshow{
We need another theorem from Brustle et al.~\cite{BrustleCD20} that states that the optimal revenue is Lipschitz with respect to the Kolmogorov distance between the type distributions.

\begin{theorem}[Lipschitz Continuity of the Optimal Revenue, adapted from Theorem 6 in~\cite{BrustleCD20}]\label{thm:continuous of OPT under Prokhorov}
For any distributions $\cD=\bigtimes_{i\in[n], j\in[m]} D_{ij}$ and $\widehat{\cD}=\bigtimes_{i\in[n], j\in[m]} \widehat{D}_{ij}$, where $D_{ij}$ and $D_{ij}$ are supported on $[0,1]$ for every $i\in[n]$ and $j\in [m]$

if $d_K(D_{ij},\widehat{D}_{ij})\leq \varepsilon$ for all $i\in[n]$ and $j\in [m]$, then $$\left \lvert\opt(\DD)-\opt(\hDD)\right\rvert\leq O\left(n\kappa+n\sqrt{m\LL H\kappa}\right),$$
		where $\kappa=O\left(n m \LL H \varepsilon+  m \LL \sqrt{nH\varepsilon}\right)$.
	\end{theorem}
}

\notshow{For this section,
we are going to consider general TPT mechanisms,
whose format is described in Mechanism~\ref{def:general SPEM}.
Notice that Mechanism~\ref{def:general SPEM} is a generalization of Mechanism~\ref{def:constructed-SPEM},
since now the seller can decide the entry fee to by any arbitrary (but deterministic) function that depends on the set of items left.
We prove formally that claim in Observation~\ref{obs:restricted to general tpt}.
We prove in Theorem~\ref{thm:sample tpt} that we can calculate a TPT mechanism $\widehat{\Mtpt}$ and a RPSM $\widehat{\Mpp}$,
whose revenue is within a constant multiplicative and additive factor of the optimal revenue induced by any TPT mechanism.

\argyrisnote{Maybe we should emphasize in Theorem~\ref{thm:bounding-lp-simple-mech} that the types samples to set the entry fee are collected before the mechanism starts and do not change.}
\begin{observation}\label{obs:restricted to general tpt}
For any TPT mechanism (as described in Mechanism~\ref{def:constructed-SPEM}),
When the $i$-th agent arrives at the mechanism,
then the entry fee that the buyer has to pay depends only on the set of items that remain and is chosen deterministic.
That is,
there exists a function $\xi_i:2^{[m]}\rightarrow\mathbb{R}_+$ such that the entry fee imposed to the $i$-th agent when the the set of items that remains is $S$, is $\xi_i(S)$.
\end{observation}

\begin{proof}
Consider an instance of Mechanism~\ref{def:constructed-SPEM} $\Mtpt$,
where $\{p_j\}_{j\in[m]}$ are the posted prices and $t_i$ is the sampled type used to determine the entry fee for the $i$-th agent.
Consider the function $\xi_i:2^{[m]}\rightarrow\mathbb{R}_+$ and define $\xi_i(S)=\max_{S'\subseteq S}\left(v_i(t_i',S')-\sum_{j\in S'}Q_j\right)$.
Then observe that the entry fee imposed at the $i$-th agent when the set of items that remain is exactly $S$ is equal to $\xi_i(S)$.
\end{proof}
}


\notshow{

\begin{theorem}\label{thm:sample tpt}
Assume that the agent's valuation is constraint additive and her type is sampled from the distribution $\cD$,
where we only have sample access to it.
Let $\Mtpt$ be the optimal general two part tariff mechanism (as described in Mechanism~\ref{def:general SPEM}).
We can compute in time $\poly\left(n,m,\frac{1}{\eps}\right)$ a RPSM mechanism $\widehat{\Mpp}$ and a general two part tariff mechanism $\widehat{\Mtpt}$  such that there exists constants $c_1,c_2,c_3>0$:
$$
\rev(\Mtpt,\cD) \leq c_1 \rev(\widehat{\Mpp},\cD) + c_2  \rev(\widehat{\Mtpt},\cD) + c_3 n m^2 H \eps
$$

\end{theorem}
\begin{proof}
For each agent $i\in[n]$ and item $j\in[m]$,
we denote by $\widehat{\cD}_{ij}$ the empirical distribution induced by \argyrisnote{$k(\eps)$ samples so that Kolmogorov distance is $\eps$} from $\cD_{ij}$. 
With probability at least $\argyrisnote{smth}$,
for each $i$ and $j$ we have that $||D_{ij}-\widehat{D}_{ij}||_K \leq \eps$.
For the rest of the proof we condition on the event that for each $i\in[n]$ and $j\in[m]$,
$||D_{ij}-\widehat{D}_{ij}||_K \leq \eps$.
For each agent $i\in[n]$ we denote by $\widehat{\cD}_i=\times_{j\in[m]}\widehat{\cD}_{ij}$ and $\widehat{\cD}=\prod_{i\in[n]}\widehat{\cD}_i$.

First we want to relate the expected revenue of $\Mopt$ when we sample the agents type from $\widehat{\cD}$.
We make use of a theorem proved in \cite{CaiD17}.

A corollary of Theorem~\ref{thm:robust mechanism} is the following

$${\Mtpt}({\cD}) \leq \Mtpt(\widehat{\cD}) + 4nm^2H\eps$$

\argyrisnote{I think we should slightly modify the statement of theorem~\ref{thm:bounding-lp-simple-mech}.}
Applying Theorem~\ref{thm:bounding-lp-simple-mech} on the constructed empirical distribution $\widehat{\cD}$ and we get we can compute in time $\poly(n,m,k(\eps))$ a RSPM mechanism $\Mpp$ and a TPT mechanism $\Mtpt$ such that there exists constants $c_1,c_2 \geq 0$ and \argyrisnote{with probability smth} 
\begin{align}
\Mtpt(\widehat{\cD}) \leq c_1 \widehat{\Mpp}(\widehat{\cD}) + c_2 \widehat{\Mtpt}(\widehat{\cD})\label{eq:real tpt}
\end{align}

Moreover we have that
\begin{align}
\widehat{\Mtpt}(\widehat{\cD}) \leq& \widehat{\Mtpt}({\cD}) +  4nm^2H\eps\label{eq:tpt}\\
\widehat{\Mpp}(\widehat{\cD}) \leq& \widehat{\Mpp}({\cD})+ 4nm^2H\eps \label{eq:pp}
\end{align}
Observe that the first inequality is a direct application of Theorem~\ref{thm:robust mechanism} on $\Mtpt$.

In order to prove the second inequality,
it is enough to observe that if we assume that the agents are unit demand, then the set of RPSM is the same as the set of TPT mechanisms without entry fee.
Thus we can directly apply Theorem~\ref{thm:robust mechanism} and assume that the agents valuation are Unit-demand.

By combining Equation~\ref{eq:real tpt}, Equation~\ref{eq:tpt} and Equation~\ref{eq:pp} we prove the statement.

\end{proof}

We conclude with the main theorem of this section.

\begin{theorem}\label{thm:sample opt}
Assume that the agent's valuation is constraint additive and her type is sampled from the distribution $\cD$,
where we only have sample access to it.
Let $\Mopt$ be the optimal BIC mechanism.
We can compute in time $\poly\left(n,m,\frac{1}{\eps}\right)$ a RPSM mechanism $\widehat{\Mpp}$ and a general two part tariff mechanism $\widehat{\Mtpt}$ such that $\argyrisnote{with probability smth}$ there exists constants $c_1,c_2,c_3>0$:
$$
\rev(\Mopt,\cD) \leq c_1 \rev(\widehat{\Mpp},\cD) + c_2  \rev(\widehat{\Mtpt},\cD) + c_3 n m^2 H \eps
$$
\end{theorem}

\begin{proof}
The proof follows by combining Lemma~\ref{lem:caiz17}, Theorem~\ref{thm:chms10} and Theorem~\ref{thm:sample tpt}.
\end{proof}

}
\newpage

\appendix
\section{Additional Preliminaries}\label{appx:prelim}
\begin{definition}~\cite{RubinsteinW15}\label{def:subadditive independent}
Let $\cD_i$ be the type distribution of bidder $i$ and denote by $\cV_i$ her distribution over valuations $v_i(t_i,\cdot)$ where $t_i\sim \cD_i$. We say that $\cV_i$ is subadditiver over independent items if
\begin{itemize}
\item $v_i(\cdot,\cdot)$ has no externalities, that is for any $S\subseteq[m]$, $t_i,t_i' \in \cT_i$ such that $t_{ij}=t_{ij}'$ for $j\in S$,
then $v_i(t_i,S)=v_i(t_i',S)$.
\item $v_i(\cdot,\cdot)$ is monotone, that is for each $t_i\in \cT_i$ and $S\subseteq T \subseteq[m]$, $v_i(t_i,S)\leq v_i(t_i,T)$
\item $v_i(\cdot,\cdot)$ is subadditive function, that is for all $t_i\in \cT_i$ and $S_1,S_2\subseteq [m]$, $v_i(t_i,S_1 \cup S_2)\leq v_i(t_i,S_1)+ v_i(t_i,S_2)$ 
\end{itemize}
\end{definition}

\paragraph{Mechanisms:} A mechanism $M$ in multi-item auctions can be described as a tuple $(x,p)$. For every type profile $t$, buyer $i$ and bundle $S\subseteq [m]$, $x_{iS}(t)$ is the probability of buyer $i$ receiving the exact bundle $S$ at profile $t$, $p_i(t)$ is the payment for buyer $i$ at the same type profile. To ease notations, for every buyer $i$ and types $t_i$, we use $p_i(t_i)=\E_{t_{-i}}[p_i(t_i,t_{-i})]$ as the interim price paid by buyer $i$ and $\sigma_{iS}(t_i)=\E_{t_{-i}}[x_{iS}(t_i,t_{-i})]$ as the interim probability of receiving the exact bundle $S$.

\paragraph{IC and IR constraints:} A mechanism $M=(x,p)$ is BIC if:
$$\sum_{S\subseteq [m]}\sigma_{iS}(t_i)\cdot v_i(t_i,S)-p_i(t_i)\geq \sum_{S\subseteq [m]}\sigma_{iS}(t_i')\cdot v_i(t_i,S)-p_i(t_i'), \forall i,t_i,t_i'\in \cT_i.$$

The mechanism is DSIC if:
$$\sum_{S\subseteq [m]}x_{iS}(t_i,t_{-i})\cdot v_i(t_i,S)-p_i(t_i,t_{-i})\geq \sum_{S\subseteq [m]}x_{iS}(t_i',t_{-i})\cdot v_i(t_i,S)-p_i(t_i',t_{-i}), \forall i,t_i,t_i'\in \cT_i,t_{-i}\in \cT_{-i}.$$

The mechanism is (interim) IR if:
$$\sum_{S\subseteq [m]}\sigma_{iS}(t_i)\cdot v_i(t_i,S)-p_i(t_i)\geq 0, \forall i,t_i\in \cT_i.$$

The mechanism is ex-post IR if:
$$\sum_{S\subseteq [m]}x_{iS}(t_i,t_{-i})\cdot v_i(t_i,S)-p_i(t_i,t_{-i})\geq 0, \forall i,t_i\in \cT_i,t_{-i}\in \cT_{-i}.$$

\begin{definition}[Separation Oracle for Convex Polytope $\cP$]
A Separation Oracle $SO$ for a convex polytope $\cP\subseteq \mathbb{R}^d$,
takes as input a point $\bx \in \mathbb{R}^d$ and if $\bx \in \cP$,
then the oracle says that the point is in the polytope.
If $\bx \notin \cP$,
then the oracle output a separating hyperplane,
that is it outputs a vector $\textbf{y}\in \mathbb{R}^d$ and $c\in \mathbb{R}$ such that $\textbf{y}^T x \leq c$,
but for $\textbf{z} \in \cP$, $\textbf{y}^T \textbf{z} > c$.
\end{definition}

\notshow{
\begin{definition}[Scalable Demand Oracle for XOS Valuations]\label{def:scalable demand oracle}
For every buyer $i$, a scalable demand oracle with input $\dem_i(t_i,\{b_j\}_{j\in[m]},\{p_j\}_{j\in[m]} )$ for an XOS valuation $v_i(\cdot,\cdot)$,
takes an input the bidder's type $t_i$, the non-negative coefficient vector $\{b_j\}_{j\in[m]}$ and the non-negative price vector $\{p_j\}_{j\in[m]}$, and outputs a subset of items $S^*\subseteq[m] $ and $k^*\in[K]$ such that
$$
(S^*,k^*) \in \arg\max_{S\subseteq [m],k\in[K]} \sum_{j\in S}b_j \alpha_{ij}^{(k)}(t_{ij}) - \sum_{j\in S}p_j
$$
\end{definition}


\begin{definition}[Value Oracle for XOS Valuation $v_i(\cdot,\cdot)$]
Consider a type $t_i\in \cT_i$ for bidder $i$ and a $k\in[K]$.
A value oracle $\val(\cdot,\cdot)$ for XOS valuation $v_i(\cdot,\cdot)$,
takes an input the bidder's type, and a $k\in[K]$ and outputs $\{a_{ij}^{(k)}(t_{ij})\}_{j\in[m]}$.
\end{definition}
}

\begin{definition}[Polytopes and Facet-Complexiy]\label{def:facet-complexity} We say $P$ has \emph{facet-complexity} at most $b$ if it can be written as $P:= \{\vec{x}\mid \vec{x} \cdot \vec{w}^{(i)} \leq c_i,\ \forall i \in \mathcal{I}\}$, where each $\vec{w}^{(i)}$ and $c_i$ has bit complexity at most $b$ for all $i \in \mathcal{I}$. We use the term \emph{convex polytope} to refer to a set of points that is closed, convex, bounded,\footnote{$P\subseteq \mathbb{R}^d$ is bounded if it is contained in $[-x,x]^d$ for some $x \in \mathbb{R}$.} and has finite facet-complexity.
\end{definition}

\begin{definition}[Vertex-Complexity]\label{def:vertex-complexity}
We use the term \emph{corner} to refer to non-degenerate extreme points of a convex polytope. In other words, $\vec{y}$ is a corner of the $d$-dimensional convex polytope $P$ if $\vec{y} \in P$ and there exist $d$ linearly independent directions $\vec{w}^{(1)},\ldots,\vec{w}^{(d)}$ such that $\vec{x} \cdot \vec{w}^{(i)} \leq \vec{y} \cdot \vec{w}^{(i)}$ for all $\vec{x} \in P, 1 \leq i \leq d$. We use \textbf{$\Cor(P)$} to denote the set of corners of a convex polytope $P$. We say $P$ has vertex-complexity at most $b$ if all vectors in $\Cor(P)$ have bit complexity no more than $b$.
\end{definition}

The following fact states that the vertex-complexity and facet-complexity of a polytope in $\mathbb{R}^d$ are off by at most a $d^2$ multiplicative factor.

\begin{fact}[Lemma 6.2.4 of~\cite{grotschel2012geometric}]\label{fact:corners2}
Let $P$ be a convex polytope in $\mathbb{R}^d$. If $P$ has facet-complexity at most $b$, its vertex-complexity is at most $O(b\cdot d^2)$. Similarly, if $P$ has vertex-complexity at most $\ell$, its facet-complexity is at most $O(\ell\cdot d^2)$.
\end{fact}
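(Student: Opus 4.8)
The plan is to prove both directions by the standard linear-algebra argument from \cite{grotschel2012geometric}, whose engine is Cramer's rule together with Hadamard's inequality; I sketch the two directions and the one place that needs care.

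\emph{First direction (facet-complexity $b$ $\Rightarrow$ vertex-complexity $O(bd^2)$).} Write $P = \{\vec x \in \R^d : \vec a^{(i)}\cdot \vec x \le \gamma_i,\ i \in \mathcal{I}\}$ with each $(\vec a^{(i)},\gamma_i)$ of bit-complexity at most $b$. I would take an arbitrary $\vec y \in \Cor(P)$; by the ``$d$ linearly independent directions'' clause of Definition~\ref{def:vertex-complexity}, $\vec y$ is the unique point of $P$ at which some nonsingular $d\times d$ subsystem $A\vec y = \vec c$ drawn from the defining inequalities is tight. Cramer's rule then writes each coordinate $y_j$ as $\det(A_j)/\det(A)$, a ratio of determinants of $d\times d$ matrices whose rows have bit-complexity at most $b$. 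The remaining step is the routine bit-complexity bound: clearing denominators row by row multiplies a determinant by an integer of magnitude at most $2^{bd}$ and leaves an integer matrix whose determinant is at most $d^{d/2}2^{O(bd)}$ by Hadamard's inequality; hence each of $\det(A), \det(A_1),\dots,\det(A_d)$ is a rational of bit-complexity $O(bd)$, and placing $\vec y$ over a common denominator shows $\vec y$ has bit-complexity $O(d)\cdot O(bd) = O(bd^2)$. Maximizing over $\Cor(P)$ gives the claim.

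\emph{Second direction (vertex-complexity $\ell$ $\Rightarrow$ facet-complexity $O(\ell d^2)$).} This is dual to the first. Since $P$ is bounded (Definition~\ref{def:facet-complexity}), $P$ is the convex hull of $\Cor(P)$ in the full-dimensional case (the lower-dimensional case is handled by passing to the affine hull, where it is vacuous since then $\Cor(P)=\emptyset$). Every facet of $P$, being a $(d-1)$-dimensional polytope, contains at least $d$ affinely independent corners; the supporting hyperplane of that facet is the unique solution, up to scaling, of the corresponding $d\times d$ (homogeneous-plus-normalization) linear system whose coefficients have bit-complexity at most $\ell$. Applying the same Cramer's-rule-plus-Hadamard estimate to this system, and then orienting the inequality so that $P$ lies on the correct side, produces a facet-defining inequality of bit-complexity $O(\ell d^2)$; collecting over all (finitely many) facets yields a description of $P$ of facet-complexity $O(\ell d^2)$.

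The main thing to be careful about is the bit-complexity bookkeeping through Cramer's rule — in particular separating the $2^{bd}$ blow-up from clearing the row denominators from the Hadamard bound on the resulting integer determinant, and then observing that writing an entire $d$-vector over a common denominator costs only an extra factor $d$ in encoding length, which is exactly what produces the $d^2$ (rather than a larger power). A second, minor point is to ensure in each direction that the chosen subsystem is genuinely nonsingular: this is guaranteed in one direction by the ``$d$ linearly independent directions'' clause in Definition~\ref{def:vertex-complexity}, and in the other by the elementary fact that a $(d-1)$-dimensional polytope has at least $d$ affinely independent vertices.
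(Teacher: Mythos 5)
This statement is imported verbatim from Gr\"otschel--Lov\'asz--Schrijver (Lemma 6.2.4 of~\cite{grotschel2012geometric}); the paper gives no proof of its own, so there is nothing paper-internal to compare against. Your sketch is the standard argument from that book — identify a corner as the solution of a nonsingular $d\times d$ tight subsystem, apply Cramer's rule, and control the encoding length of the determinants via Hadamard-type bounds, with the dual argument for facets — and the bookkeeping you describe (determinants of bit-size $O(bd)$, an extra factor $d$ for writing the whole vector, hence $O(bd^2)$) is the right accounting. One implicit step in the first direction is fine but worth stating: the $d$ linearly independent maximizing directions in \Cref{def:vertex-complexity} lie in the normal cone at $\vec y$, which by Farkas/LP duality is generated by the constraint normals tight at $\vec y$, so $d$ linearly independent tight constraints from the given description indeed exist.

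The one genuine slip is your treatment of the lower-dimensional case in the second direction. It is not true that $\Cor(P)=\emptyset$ when $P$ is not full-dimensional: a vertex of a lower-dimensional polytope still has a full-dimensional normal cone (the orthogonal complement of the affine hull contributes $d-\dim P$ dimensions of lineality, and the relative normal cone at a vertex contributes $\dim P$ more), so it satisfies the corner definition — e.g.\ $(0,0)$ is a corner of the segment $\{(x,0):0\le x\le 1\}\subseteq\mathbb{R}^2$ via the directions $(-1,0)$ and $(-1,1)$. Consequently the lower-dimensional case is not vacuous and must actually be handled: $P$ is still the convex hull of its corners, but an inequality description must include equations cutting out $\mathrm{aff}(P)$ together with relative-facet inequalities. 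Both are recovered by the same mechanism — the affine hull is spanned by $\dim P+1$ affinely independent corners and its defining equations, like the relative-facet hyperplanes, have coefficients given by determinants in the corner coordinates, hence bit-complexity $O(\ell d^2)$ — so the bound survives, but the case needs this argument rather than a dismissal.
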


\begin{theorem}\label{thm:ellipsoid}[Ellipsoid Algorithm for Linear Programming~\cite{grotschel2012geometric}\footnote{Properties 1 and 2 follow from Theorem 6.4.9 of~\cite{grotschel2012geometric} and Property 3 follows from Remark 6.5.2 of~\cite{grotschel2012geometric}.}]
Let $P$ be a convex polytope in $\mathbb{R}^d$ specified via a separation oracle $SO$, and $\vec{c}$ is any fixed vector in $\mathbb{R}^d$. 
Assume that $P$'s facet-complexity and the bit complexity of $\vec{c}$ are no more than $b$. Then we can run the ellipsoid algorithm to optimize $\vec{c}\cdot\vec{x}$ over $P$, maintaining the following properties:
\begin{enumerate}
\item The algorithm will only query $SO$ on rational points with bit complexity $\poly(d,b)$.
\item The algorithm will solve the linear program in time $\poly(d, b,\rt_{SO}(\poly(d,b)))$, where $\rt_{SO}(x)$ is the running time of the SO on any input of bit complexity $x$.
\item The output optimal solution is a corner of $P$.
\end{enumerate}
\end{theorem}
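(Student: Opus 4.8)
The plan is to assemble the theorem from the standard central-cut ellipsoid method, tracking bit complexity at every step; since this is essentially Theorem~6.4.9 together with Remark~6.5.2 of~\cite{grotschel2012geometric}, I would mainly verify that the stated polynomial bounds come out. First I would use \Cref{fact:corners2}: since $P$ has facet-complexity at most $b$ in $\mathbb{R}^d$, its vertex-complexity is at most $O(bd^2)$, so every point of $\Cor(P)$ has coordinates of bit complexity $O(bd^2)$, and hence $P$ is contained in the ball $E_0$ of radius $R=2^{O(bd^2)}$ centered at the origin. This furnishes an explicit starting ellipsoid whose data has bit complexity $\poly(d,b)$, and it also means any full-dimensional sub-polytope of $P$ has volume at least $2^{-\poly(d,b)}$ (again because its vertices have bit complexity $O(bd^2)$).

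Second, I would run the central-cut ellipsoid method from $E_0$. At iteration $k$, with ellipsoid $E_k$ and center $z_k$, query $SO(z_k)$: if $z_k\in P$, record $z_k$ as a feasible point and cut with the objective halfspace $\{x:\vec c\cdot x\ge \vec c\cdot z_k\}$; if $z_k\notin P$, cut with the separating halfspace returned by $SO$. In either case replace $E_k$ by the minimum-volume ellipsoid containing $E_k$ intersected with the chosen halfspace, rounding its matrix and center to $\poly(d,b)$ bits as in the standard rounded-ellipsoid analysis so that volume still shrinks by a factor $e^{-\Omega(1/d)}$ per step. After $N=\poly(d,b)$ iterations the volume of the current ellipsoid drops below $2^{-\poly(d,b)}$; for the non-full-dimensional case one applies the perturbation / simultaneous-Diophantine-approximation argument of~\cite{grotschel2012geometric} to detect the affine hull of $P$ and recurse inside it. The best feasible point found then determines the optimum value up to error $2^{-\poly(d,b)}$.

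Third, for the running time and Properties~1--2: each rounded ellipsoid update is $O(1)$ matrix--vector products and rank-one updates on $d\times d$ matrices with entries of bit complexity $\poly(d,b)$, costing $\poly(d,b)$ arithmetic, plus one call $SO(z_k)$ on a point of bit complexity $\poly(d,b)$, i.e.\ $\rt_{SO}(\poly(d,b))$ time. Since $z_k$ always has bit complexity $\poly(d,b)$, $SO$ is only ever queried on such points (Property~1), and multiplying the per-iteration cost by $N=\poly(d,b)$ gives total time $\poly(d,b,\rt_{SO}(\poly(d,b)))$ (Property~2). Finally, for Property~3, I would convert the near-optimal feasible point into an exact corner by the vertex-recovery procedure behind Remark~6.5.2 of~\cite{grotschel2012geometric}: the optimal value $\gamma^\star=\vec c\cdot\vec x$ is attained at a corner, hence has bit complexity $O(bd^2)$, so round the computed value to $\gamma^\star$ by continued fractions, add the equality $\vec c\cdot\vec x=\gamma^\star$, and recurse on the optimal face — repeatedly promoting a tight facet-defining inequality to an equality — until the feasible region is a single point, which is then a corner of $P$. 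Each rounding is exact because all elements of $\Cor(P)$ have bit complexity $O(bd^2)$, and the cleanup adds only another $\poly(d,b,\rt_{SO}(\poly(d,b)))$ to the running time.

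The main obstacle is the combination of the non-full-dimensional case with the requirement of an \emph{exact} corner: the bare ellipsoid method only certifies approximate feasibility and returns an approximately optimal point, so producing an exact element of $\Cor(P)$ requires the Diophantine-rounding machinery of~\cite{grotschel2012geometric}, and one must check that all of the relevant volume thresholds and rounding tolerances are $2^{-\poly(d,b)}$ — which is precisely where the $O(bd^2)$ vertex-complexity bound from \Cref{fact:corners2} is needed.
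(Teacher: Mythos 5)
Your sketch is correct and is essentially the standard Grötschel–Lovász–Schrijver argument (central-cut ellipsoid with bounded-bit rounding, vertex-complexity bounds via \Cref{fact:corners2}, and Diophantine rounding to recover an exact optimal vertex), which is exactly the machinery the paper invokes: the paper gives no proof of its own but cites Theorem~6.4.9 and Remark~6.5.2 of~\cite{grotschel2012geometric} for Properties 1--3. So your proposal matches the paper's route, just unpacked into an explicit proof sketch.
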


\section{Some Examples}\label{sec:example}
\subsection{Non-Concavity of~\core}\label{sec:core non-concavity}

In this section, we show that the~$\core(\sigma,\theta(\sigma))$ function is non-concave in the interim allocation rule $\sigma$. We first provide the formal definition of $\theta(\sigma)$ for a single-bidder two-item instance, and we use $\icore(\sigma)$ to denote $\core(\sigma,\theta(\sigma))$.
\begin{definition}[Core for a single additive bidder over two items with continuous distributions - \cite{CaiZ17}]\label{def:core-single interim}
Consider a single bidder interested in two items, whose value is sampled from continuous distribution $D$ with support $T=\supp(D)$ and density function $f(t)$ for $t\in T$.
Consider a feasible interim allocation $\sigma=\{\sigma_1(t),\sigma_2(t)\}_{t\in \supp(D)}$,
that is $\sigma_1(t)$ ($\sigma_2(t)$ resp.) is the probability that the allocation rule awards item $1$ (item $2$ resp.) to a bidder with type $t$.
Define 
$$\beta_1(\sigma) = \argmin_{\beta\geq 0}\left[ \Pr_{t_1\sim D_1} \left[t_1\geq \beta\right]=\E_{t\sim D}\left[\sigma_1(t) \right]\right] \quad \quad \quad \beta_2(\sigma) = \argmin_{\beta\geq 0}\left[ \Pr_{t_2\sim D_2} \left[t_2\geq \beta\right]=\E_{t\sim D}\left[\sigma_2(t) \right] \right] $$
and
$$c(\sigma)= \argmin_{a\geq 0}\left\{\Pr_{t\sim D}\left[t_1\leq \beta_1(\sigma)+a\right]+\Pr_{t\sim D}\left[t_2\leq \beta_2(\sigma)+a\right]\geq\frac{1}{2}\right\}$$
The term $\icore$ for interim allocation $\sigma$ is defined as follows:
%
$$\icore(\sigma)=
\E_{t\sim D}\left[\sigma_{1}(t)t_{1}\cdot \ind[t_{1}\leq \beta_{1}(\sigma)+c(\sigma)]\right]
+
\E_{t\sim D}\left[\sigma_{2}(t)t_{2}\cdot \ind[t_{2}\leq \beta_{2}(\sigma)+c(\sigma)]\right]$$

\end{definition}

In Example~\ref{ex:non-concave interim} we show that 
$\icore(\sigma)$ is a non-concave function even in the setting with a single bidder and two items. The reason for the $\icore$ being non-concave lies in the fact that the interval which we truncate depends on the interim allocation $\sigma$.
Computing the concave hull of $\icore(\sigma)$ in the worst case requires exponential time in the dimension of the space, which is $m$ is our case. 

\begin{example}\label{ex:non-concave interim}
Consider a single additive bidder interested in two items whose values are both drawn from the uniform distribution $U[0,1]$. 
Consider two interim allocation rules $\sigma$ and $\sigma'$:
\begin{itemize}
    \item $\sigma$: Award the first item to the buyer if her value for it lies in the interval $[0,1/2]$ and never award the second item to the buyer.
    \item $\sigma'$: Always award the first item to the buyer and never award the second item to the buyer.
\end{itemize}

\noindent According to Definition~\ref{def:core-single},
for allocation rule $\sigma$, the dual parameters are $\beta_1(\sigma)=1/2$, $\beta_2(\sigma)=1$ and $c(\sigma)=0$,
which implies $\icore(\sigma)=1/8$.
Similarly for allocation rule $\sigma'$ we have $\beta_1(\sigma')=0$, $\beta_2(\sigma')=1$ and $c(\sigma')=0$,
which implies that $\icore(\sigma')=0$.

Consider the interim allocation $\sigma''$ that uses  allocation rule $\sigma$ with probability $50\%$ and $\sigma'$ with $50\%$.
Note that $\sigma''$ is in the convex combination of $\sigma$ and $\sigma'$ 
and more specifically $\sigma''=\frac{\sigma+\sigma'}{2}$.
For interim allocation $\sigma''$ we have that $\beta_1(\sigma'')=1/4$, $\beta_2(\sigma'')=1$ and $c(\sigma'')=0$,
which implies that $\icore(\sigma'')=\frac{1}{32}$.
We notice that the second item contributes nothing to the $\icore$,
but it ensures that $c=0$ regardless of the allocation of the first item.
Thus $\icore\left(\sigma''\right) < \frac{1}{2}(\icore(\sigma)+\icore(\sigma'))$, which implies that $\icore(\cdot)$ is not a concave function.
\end{example}

\subsection{Why can't we use the Ex-Ante Relaxation?}\label{sec:exante}

An influential framework known as the ex-ante relaxation has been widely used in Mechanism Design, but is insufficient for our problem. Informally speaking,
ex-ante relaxation reduces a multi-bidder objective to the sum of  single-bidder objectives subject to some global supply constraints over ex-ante allocation probabilities. 
To solve the ex-ante relaxation program efficiently, the single-bidder objective has to be concave and efficiently computable given the ex-ante probabilities~\cite{Alaei11}. 

In revenue maximization, the single-bidder objective -- the optimal revenue subject to ex-ante probabilities -- is indeed a concave function. However, we do not have a polynomial time algorithm to even compute the single-bidder objective given a set of fixed ex-ante probabilities.\footnote{The closest thing we know is a QPTAS for a unit-demand bidder. See \Cref{sec:related work}.} To fix this issue, one can try to find a concave function that is always a good approximation to the single-bidder objective for any ex-ante probabilities. To the best of our knowledge, such a concave function only exists for unit-demand bidders via the copies setting technique~\cite{ChawlaHMS10}. Alternatively, one can replace the global objective -- optimal revenue by the upper bound of revenue proposed in~\cite{CaiZ17}. Yet the corresponding single-bidder objective for one term $\core$ in the upper bound is highly non-concave, which makes the ex-ante relaxation not applicable.

Although the term $\core$ was originally defined for interim allocation rules (as in \Cref{def:core-single interim}), it can also be defined for ex-ante probabilities. We only define it for the single-bidder two-item case. Let $q=\{q_1,q_2\}\in[0,1]^2$,
and $\maxcore=\max_{\sigma\in \Sigma(q)} \icore(\sigma)$,
where $\Sigma(q)$ is the set of feasible interim allocations that awards the first item with probability at most $q_1$ and the second item with probability at most $q_2$.
Example~\ref{ex:non convex core} also shows that $\maxcore(\cdot)$ is a non-concave function by observing that $\sigma \in \Sigma(1/2,0)$,$\sigma' \in \Sigma(1,0)$ and $\sigma'' \in \Sigma(3/4,0)$.

\begin{definition}[Core for a single additive bidder over two items - \cite{CaiZ17}]\label{def:core-single}
Consider a single bidder interested in two items, whose value is sampled from $D_1\times D_2$.
Consider a supply constraints $q_1,q_2 \in [0,1]$. Note that $q_1$ (or $q_2$) is the probability that a mechanism awards the first item (or the second item) to the bidder.
Define 
$$\beta_1 = \argmin_{\beta\geq 0}\left[ \Pr_{t_1\sim D_1} \left[t_1\geq \beta\right]=q_1 \right] \quad \quad \quad \beta_2 = \argmin_{\beta\geq 0}\left[ \Pr_{t_2\sim D_2} \left[t_2\geq \beta\right]=q_2 \right] $$
and

$$c= \argmin_{a\geq 0}\left\{\Pr_{t_1\sim D_1}\left[t_1\leq \beta_1+a\right]+\Pr_{t_2\sim D_2}\left[t_2\leq \beta_2+a\right]\geq\frac{1}{2}\right\}$$
The term $\maxcore$ is defined as follows:
$$
\maxcore\left(q\right) = \max_{\substack{x_1:\cT_1 \rightarrow [0,1]\\ \sum_{t_1\in \cT_1}f_1(t_1) x_1(t_1)= q_1}}\sum_{\substack{t_1\in \cT_1 \\ t_1 \leq \beta_1 + c}} f_1(t_1)\cdot t_1 \cdot x_1(t_1) + 
\max_{\substack{x_2:\cT_2 \rightarrow [0,1]\\ \sum_{t_2\in \cT_2}f_2(t_2) x_2(t_2)=q_2}}
\sum_{\substack{t_2\in \cT_2 \\ t_2 \leq \beta_2 + c}} f_2(t_2)\cdot t_2 \cdot x_2(t_2)
$$
\end{definition}
\notshow{
\begin{definition}[Core for an additive bidder - \cite{ChawlaM16}]
Consider a single additive bidder sampled from $D$ with supply constraints $\{q_{j}\}_{j\in[m]}\in [0,1]^{m}$ (Note that $q_{j}$ is the probability that a mechanism awards item $j$ to bidder $i$).
For item $j\in[m]$, define 
$$\beta_{j} = \argmin_{\beta\geq 0}\left[ \Pr_{t_j\sim D_j} \left[t_j\geq \beta_j\right]=q_j \right] $$
$$c= \argmin_{c\geq 0}\left[ \Pr_{t\sim D}\left[\forall_{j\in[m]}t_j\leq \beta_j+c \right]\geq\frac{1}{2}\right]$$
The term $\core$ is defined as follows:
$$
\core\left(\{q_j\}_{j\in[m]}\right) =
$$
\end{definition}

}

In Example~\ref{ex:non convex core} we show that 
$\maxcore(q)$ is a non-concave function even in the setting with a single bidder and two items. The reason for the $\maxcore$ being non-concave lies in the fact that the interval which we truncate depends on the supply constraints $q$.
Computing the concave hull of $\maxcore(q)$ in the worst case requires exponential time in the dimension of the space, which is $m$ is our case. These facts make the ex-ante relaxation approach not applicable to solve our problem.

\begin{example}\label{ex:non convex core}
Consider a single additive bidder interested in two items whose values are both drawn from the uniform distribution $U[0,1]$. 
Consider the values $q=(1/2,0)$ and $q'=(1,0)$.
According to Definition~\ref{def:core-single},
for $q$ we have 
that $\beta^{(q)}_1 = 1/2,\beta^{(q)}_2 = 1$ and $c^{(q)}=0$ 
and for $q'$ we have $\beta^{(q')}_1= 0,\beta^{(q')}_2=1$ and $c^{(q')}=0$.
We notice that the second item contributes nothing to the $\maxcore$,
but it ensures that $c=0$ regardless of the supply demand for the first item.
Observe that $\maxcore(q) = 1/8 $ and
$\maxcore(q') =  0$. Let $q'' = (q+q')/2=(3/4,0)$.
For $q''$, observe that $\beta_1^{(q'')} = 1/4,\beta_2^{(q'')}=1$ and $c^{(q'')}=0$. We have $\maxcore(q'') = 1/32$. Thus $\maxcore\left(q''\right) < \frac{1}{2}(\maxcore(q)+\maxcore(q'))$, which implies that $\maxcore(\cdot)$ is not a concave function.
\end{example}


\section{Missing Details from Section~\ref{sec:program}}\label{sec:appx_program}


In this section, we provide a proof of \Cref{thm:bounding-lp-simple-mech}. Indeed, we prove a generalization that works for XOS buyers (\Cref{thm:bounding-lp-simple-mech-XOS}), with the generalized of the single-bidder marginal reduced form polytope~\Cref{def:W_i} and a generalized LP (\Cref{fig:XOSLP}). 

\begin{theorem}\label{thm:bounding-lp-simple-mech-XOS}
Let $(w,\lambda,\hat\lambda, \bd)$ (or  $(\pi,w,\lambda,\hat\lambda, \bd)$) be any feasible solution of the LP in \Cref{fig:bigLP} (or \Cref{fig:XOSLP}).
Let $\Mpp$ be the rationed posted price mechanism computed in \Cref{thm:chms10}. Let $\Mtpt$ be the two-part tariff mechanism shown in Mechanism~\ref{def:constructed-SPEM} with prices $\{Q_j\}_{j\in[m]}$ (\Cref{def:Q_j-XOS}). Then the objective function of the solution $2\cdot \sum_{j\in [m]}Q_j$ is bounded by $c_1\cdot \rev(\Mpp)+c_2\cdot \rev(\Mtpt)$, for some constant $c_1,c_2>0$. 
Moreover, both $\Mpp$ and $\Mtpt$ can be computed in time $\poly(n,m,\sum_{i,j}|\cT_{ij}|)$, 
with access to the demand oracle for the buyers' valuations. 
\end{theorem}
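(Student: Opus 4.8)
The plan is to generalize the ``shifted \core'' analysis of Cai and Zhao~\cite{CaiZ17} so that it applies to an arbitrary feasible solution of the LP in~\Cref{fig:bigLP}, not just to the dual parameters $(\tilde{\vBeta}^{(\sigma)},\vC^{(\sigma)},\vr^{(\sigma)})$ attached to a single mechanism. Fix a feasible solution $(w,\lambda,\hat\lambda,\bd)$; by \Cref{def:Q_j} its objective value is exactly $2\sum_j Q_j$, and we will use the item prices $\{Q_j\}_{j\in[m]}$ both to define $\Mtpt$ (Mechanism~\ref{def:constructed-SPEM}) and as the ``target'' prices in the charging argument. First I would split, for every $i,j$ and every pair $(\beta_{ij},\delta_{ij})$, the truncated value $t_{ij}\cdot\ind[t_{ij}\le\beta_{ij}+\delta_{ij}]$ as $\min(t_{ij},\beta_{ij})\cdot\ind[t_{ij}\le\beta_{ij}+\delta_{ij}]+(t_{ij}-\beta_{ij})^+\cdot\ind[t_{ij}\le\beta_{ij}+\delta_{ij}]$, inducing a decomposition $Q_j=\widehat{Q}_j+\widetilde{Q}_j$ into a ``$\beta$-layer'' and a ``$\delta$-layer''. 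The $\delta$-layer is easy: on the relevant window $(t_{ij}-\beta_{ij})^+\le\delta_{ij}$, so $\widetilde{Q}_j\le\tfrac12\sum_i\sum_{\beta_{ij},\delta_{ij}}\delta_{ij}\sum_{t_{ij}}f_{ij}(t_{ij})\lambda_{ij}(t_{ij},\beta_{ij},\delta_{ij})$; Constraint~\MarginalToGlobalConstraint{} replaces the inner sum by $O(\hat\lambda_{ij}(\beta_{ij},\delta_{ij}))$, and then Constraints~\BoundMeanDeltaConstraint{} and~\BoundSumDeltaConstraint{} give $\sum_i\sum_{\beta_{ij},\delta_{ij}}\delta_{ij}\hat\lambda_{ij}(\beta_{ij},\delta_{ij})\le\sum_i d_i\le 111\cdot\estprev$ for every $j$, so $\sum_j\widetilde{Q}_j=O(\estprev)=O(\rev(\Mpp))$ by \Cref{thm:chms10}.

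The bulk of the work is bounding the $\beta$-layer $2\sum_j\widehat{Q}_j$, which plays the role of the $\tilde{\vBeta}$-truncated \core\ of~\cite{CaiZ17}. I would run $\Mtpt$ with prices $\{Q_j\}$ and track its two revenue streams — the posted-price payments $\sum_j Q_j\cdot\Pr[\text{item }j\text{ sold}]$ and the entry fees — via a prophet-inequality-style argument. If at some point items whose total $Q$-price exceeds $\tfrac12\sum_j Q_j$ have already been sold, then the posted-price revenue alone is $\Omega(\sum_j Q_j)$ and we are done; otherwise every arriving bidder $i$ still sees a set $S_i$ of available items with $\sum_{j\in S_i}Q_j\ge\tfrac12\sum_j Q_j$, so her utility at prices $\{Q_j\}$ on $S_i$, namely $\max_{S'\subseteq S_i}(v_i(t_i,S')-\sum_{j\in S'}Q_j)$, is at least her ``\core-relevant surplus'' restricted to $S_i$. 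Since the entry fee of $\Mtpt$ is a fresh independent sample $\xi_i(S_i,t_i')$ of this same random utility, a symmetrization/quantile argument gives $\E[\text{entry-fee revenue}\mid S_i]=\Omega(\median_{t_i'}\xi_i(S_i,t_i'))$, and the $\beta$-truncation structure (values below $\beta_{ij}$ lie below the relevant threshold with probability bounded below, by Constraints~\ReduceDemandConstaint{} and~\MarginalToGlobalConstraint) lets me lower-bound this median by a constant times bidder $i$'s contribution to $\sum_j\widehat{Q}_j$, up to a term again absorbed by $\rev(\Mpp)$. Summing over bidders and using that $w_i\in W_i$ (Constraint~\Wconstraint) makes each bidder's expected allocation individually realizable, while Constraint~\PiConstraint{} supplies the supply slack for the prophet comparison, yields $2\sum_j\widehat{Q}_j=O(\rev(\Mtpt)+\rev(\Mpp))$. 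Combined with the $\delta$-layer bound this gives $2\sum_j Q_j\le c_1\rev(\Mpp)+c_2\rev(\Mtpt)$.

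The hard part is the discrepancy with~\cite{CaiZ17} caused by the relaxation. In~\cite{CaiZ17} the dual parameters $\tilde\beta_{ij}^{(\sigma)},c_i^{(\sigma)}$ are \emph{fixed} numbers, whereas in our LP they are \emph{randomized} through the product distributions $\hat\lambda_{ij}$; thus ``bidder $i$'s threshold for item $j$'' is itself a random variable, and the copies/prophet estimates must be carried out conditionally on each realization $(\beta_{ij},\delta_{ij})$ and only then averaged. This is exactly why Constraint~\MarginalToGlobalConstraint{} is imposed for every $(\beta_{ij},\delta_{ij})$ rather than merely in expectation (cf.~\Cref{lem:Q and Q-hat}); making the conditional arguments compose correctly — in particular aligning the entry-fee/median estimate with the correct conditional threshold — is the crux. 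A secondary subtlety is that $w$ need not be implementable by any single joint allocation across bidders, only each $w_i$ is; here the prophet inequality is precisely the tool that bridges the gap between a per-bidder-feasible expected allocation and the online allocation realized by $\Mtpt$.

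Finally, I would handle the XOS case and the efficiency claim. For XOS valuations (\Cref{thm:bounding-lp-simple-mech-XOS} with the LP of \Cref{fig:XOSLP}) the contribution of item $j$ to a bundle's value depends on which additive clause is selected, which is tracked by the auxiliary variables $\pi$ and the polytope $W_i$ of \Cref{def:W_i}; the \core-relevant surplus is then defined through the clause $k^*(t_i,S)$, and the argument above goes through verbatim with the demand queries replaced by adjustable demand queries. For computation: $\{Q_j\}$ is obtained from the LP solution by arithmetic; $\Mpp$ is the mechanism of \Cref{thm:chms10}; and each step of $\Mtpt$ — computing the entry fee $\max_{S'\subseteq S_i}(v_i(t_i',S')-\sum_{j\in S'}Q_j)$ for the sampled type and the winning bidder's favorite set — is a single demand (resp.\ adjustable demand) query, so $\Mtpt$ runs in $\poly(n,m,\sum_{i,j}|\cT_{ij}|)$ time with the stated oracle access.
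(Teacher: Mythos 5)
Your plan starts from the right high‑level template (charge $Q_j$ to the posted‑price and entry‑fee revenue streams of $\Mtpt$, absorbing the slack into $\rev(\Mpp)$), but the specific decomposition you propose has a gap that a factor of $m$ falls through, and it misses the structural step that the paper uses to plug exactly that hole.

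Concretely, your ``$\delta$‑layer'' bound is wrong. Writing $\widetilde{Q}_j \le \tfrac12\sum_i\sum_{\beta_{ij},\delta_{ij}}\delta_{ij}\sum_{t_{ij}}f_{ij}(t_{ij})\lambda_{ij}(t_{ij},\beta_{ij},\delta_{ij})$ and then using Constraints \CompareMarginalConstraint{}, \BoundMeanDeltaConstraint{}, \BoundSumDeltaConstraint{} gives $\widetilde{Q}_j \le \tfrac12\sum_i d_i = O(\estprev)$ \emph{for each fixed} $j$; summing over the $m$ items yields only $\sum_j\widetilde{Q}_j = O(m\cdot\estprev)$, not $O(\estprev)$. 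The constraints genuinely do not support a sum‑over‑items bound without the extra $m$. The paper's proof of \Cref{lem:Q and Q-hat} avoids this by converting the \emph{sum} of $(V_{ij}(t_{ij})-\beta_{ij})^+$ contributions over $j$ into an \emph{expectation of the maximum over $j$} (the step $\sum_j \E[\cdots]\le 2\E[\max_j \cdots]$ uses the small per‑item activation probabilities from Constraints \ReduceDemandConstaint{} and the definition of $\tau_i$), defines the subadditive‑over‑independent‑items and $d_i$‑Lipschitz function $\eta_i(t_i,\cdot)$, and applies the concentration lemma (\Cref{lemma:concentrate}) together with the auxiliary rationed TPT (Mechanism~\ref{mech:variant-tpt}) to bound $\E[\max_j\cdots]$ by $O(\prev)$. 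Your additive $\min(t,\beta)+(t-\beta)^+$ split never introduces this max structure, so there is no way to avoid the $m$ factor from your stated constraints.

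A second, related omission is the per‑bidder shift $\tau_i$. The paper does not decompose $Q_j$ additively; it truncates at $\min\{\beta_{ij}+\delta_{ij},\,Q_j+\tau_i\}$ to get $\hat Q_j$, where $\tau_i$ is precisely chosen (\Cref{def:hatQ_j}) so that the probabilities $\sum_j\Pr[V_{ij}\ge\max(\beta_{ij},Q_j+\tau_i)]\le\tfrac12$ balance, giving $\sum_i\tau_i=O(\prev)$ (\Cref{lem:bounding tau_i}) and making both $\mu_i$ and $\eta_i$ $\tau_i$‑ resp.\ $d_i$‑Lipschitz so that the median‑to‑expectation comparison works. Your entry‑fee ``symmetrization/quantile'' step asserts a lower bound on $\median_{t_i'}\xi_i(S_i,t_i')$ in terms of bidder $i$'s contribution to $\widehat Q_j$, but this is exactly the part that needs the Lipschitz/subadditive machinery — it is not something you get for free from ``the $\beta$‑truncation structure.'' Likewise, your ``if half the $Q$‑mass is sold we win; otherwise each $S_i$ is heavy'' dichotomy is a conditional argument that does not cleanly compose; the paper's \Cref{lem:lower bounding mu} instead bounds entry‑fee revenue by $\tfrac1{16}\sum_j\Pr[j\notin\sold]\cdot Q_j-O(\prev)$ unconditionally and then adds the posted‑price term $\sum_j\Pr[j\in\sold]\cdot Q_j$, which avoids the conditioning subtlety entirely. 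In short: the overall charging outline is right, but to make the $\delta$‑dependent slack come out as $O(\prev)$ rather than $O(m\cdot\prev)$, and to actually lower‑bound the median entry fee, you need to (i) introduce the $\tau_i$ thresholds, (ii) pass from per‑item sums to a max over items, and (iii) invoke the subadditive‑concentration lemma — all three are essential and none appear in your sketch.
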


\subsection{Result by Cai and Zhao~\cite{CaiZ17} for XOS Valuations}\label{appx_cai-zhao}

The result by Cai and Zhao~\cite{CaiZ17} applies also to XOS valuations. Here we state their result for this general case. Note that this is a generalized definition and lemma for \Cref{def:core-constrained-additive} and \Cref{lem:caiz17-constrained-additive}.

\begin{definition}\label{def:core}
For any $i\in [n],j\in [m]$, let $\cV_{ij}^0 = \{ V_{ij}(t_{ij}): t_{ij} \in \cT_{ij}\}$. For any feasible interim allocation $\sigma$, and non-negative numbers {$\tilde{\vBeta}=\{\tilde{\beta}_{ij}\in \cV_{ij}^0\}_{i\in[n],j\in[m]}$, $\vC=\{c_i\}_{i\in[n]}$} and $\vr=\{r_{ij}\}_{i\in [n],j\in [m]}\in [0,1]^{nm}$ (which we refer to as the dual parameters), 
define $\core(\sigma,\tilde{\vBeta},\vC,\vr)$ as the welfare under allocation $\sigma$ truncated at $\tilde{\beta}_{ij}+c_i$ for every $i,j$.
Formally, $$\core(\sigma,\tilde{\vBeta},\vC,\vr)=\sum_i\sum_{t_i}f_i(t_i)\cdot \sum_{S\subseteq [m]}\sigma_{iS}(t_i)\sum_{j\in S}t_{ij}\cdot \left(\ind[t_{ij}< \tilde{\beta}_{ij}+c_i]+r_{ij}\cdot \ind[t_{ij}= \tilde{\beta}_{ij}+c_i] \right)$$
if the buyers have constrained-additive valuations, and
$$\core(\sigma,\tilde{\vBeta},\vC,\vr)=\sum_i\sum_{t_i}f_i(t_i)\cdot \sum_{S\subseteq [m]}\sigma_{iS}(t_i)\sum_{j\in S}\gamma_{ij}^S(t_i)\cdot \left(\ind[V_{ij}(t_{ij})< \tilde{\beta}_{ij}+c_i]+r_{ij}\ind[V_{ij}(t_{ij})= \tilde{\beta}_{ij}+c_i] \right)$$ if the buyers have XOS valuations.
Here $\gamma_{ij}^S(t_i)=\alpha_{ij}^{k^*(t_i,S)}(t_{ij})$, where $\displaystyle k^*(t_i,S)=\arg\max_{k\in[K]}\big(\sum_{j\in S}\alpha^k_{ij}(t_{ij})\big)$.
\end{definition}

\begin{lemma}\cite{CaiZ17}\label{lem:caiz17}
Given any BIC and IR mechanism $\cM$, there exists \textbf{(i)} a feasible interim allocation $\sigma$,\footnote{Note that when buyers have constrained-additive valuations, it suffice to take $\sigma$ to be the interim allocation rule of $\cM$. For XOS valuations, $\sigma$ will be the interim allocation of a modified version of $\cM$. See Section~5 in~\cite{CaiZ16a} for details.} where $\sigma_{iS}(t_i)$ is the interim probability for buyer $i$ to receive exactly bundle $S$ when her type is $t_i$, \textbf{(ii)} non-negative numbers {$\tilde{\vBeta}^{(\sigma)}=\{\tilde{\beta}_{ij}^{(\sigma)}\in \cV_{ij}^0\}_{i\in[n],j\in[m]}$, $\vC^{(\sigma)}=\{c_i^{(\sigma)}\}_{i\in[n]}$} 
 and $\vr^{(\sigma)}\in [0,1]^{nm}$ that depend on $\sigma$, and \textbf{(iii)} a two-part tariff mechanism $\cM_1^{(\sigma)}$ such that 
\begin{enumerate}
    \item $\sum_{i\in[n]} \left(\Pr_{t_{ij}}[V_{ij}(t_{ij})>\tilde\beta_{ij}^{(\sigma)}]+r_{ij}^{(\sigma)}\cdot \Pr_{t_{ij}}[V_{ij}(t_{ij})=\tilde\beta_{ij}^{(\sigma)}]\right)\leq \frac{1}{2}$.
    \item $\frac{1}{2}\cdot\sum_{t_i\in \cT_i}f_i(t_i)\cdot \sum_{S:j\in S}\sigma_{iS}(t_i)\leq \Pr_{t_{ij}}[V_{ij}(t_{ij})>\tilde\beta_{ij}^{(\sigma)}]+r_{ij}^{(\sigma)}\cdot \Pr_{t_{ij}}[V_{ij}(t_{ij})=\tilde\beta_{ij}^{(\sigma)}],\forall i,j$.
    \item $\rev(\cM)\leq 28\cdot \prev+4\cdot\core(\sigma,\tilde{\vBeta}^{(\sigma)},\vC^{(\sigma)},\vr^{(\sigma)})$.
    \item $\sum_{i\in [n]} c_i^{(\sigma)}\leq 8\cdot \prev$.
    \item $\core(\sigma,\tilde{\vBeta}^{(\sigma)},\vC^{(\sigma)},\vr^{(\sigma)})\leq 64\cdot \prev + 8\cdot \rev(\cM_1^{(\sigma)})$.
\end{enumerate}
\end{lemma}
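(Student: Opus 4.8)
The statement is (a restatement for XOS of) a result of Cai and Zhao~\cite{CaiZ17}, and the plan is to follow their duality-based proof: produce a virtual-welfare upper bound on $\rev(\cM)$ via the Cai--Devanur--Weinberg framework~\cite{CaiDW16} for a carefully chosen dual, then post-process it. First I would fix the interim allocation $\sigma$. For constrained-additive bidders, $\sigma$ is simply the interim allocation rule of $\cM$. For XOS bidders I would first replace $\cM$ by the mechanism that, whenever a bidder of type $t_i$ is assigned a bundle $S$, credits her only through the additive clause $\{\alpha_{ij}^{k^*(t_i,S)}(t_{ij})\}_{j}$ that is her favorite on $S$; this can be implemented with adjustable demand queries and loses only a constant factor in every quantity below, and $\sigma$ is the resulting interim allocation, so $\gamma_{ij}^S(t_i)=\alpha_{ij}^{k^*(t_i,S)}(t_{ij})$ as in \Cref{def:core} (see Section~5 of~\cite{CaiZ16a}).

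Next I would apply duality with the \emph{canonical} flow that routes probability mass ``downward'' one coordinate at a time. For this flow the Lagrangian virtual value of receiving $S$ at $t_i$ splits across items, and on coordinate $j$ it equals the true contribution $\gamma_{ij}^S(t_i)$ when $V_{ij}(t_{ij})\le\tilde\beta_{ij}$ and (ironed) Myerson virtual value when $V_{ij}(t_{ij})>\tilde\beta_{ij}$, for \emph{any} threshold $\tilde\beta_{ij}\in\cV_{ij}^0$ chosen in advance; the discrete tie at $V_{ij}(t_{ij})=\tilde\beta_{ij}$ is interpolated by $r_{ij}\in[0,1]$. I would pick $\tilde\beta_{ij}$ and $r_{ij}$ so that $\Pr_{t_{ij}}[V_{ij}(t_{ij})>\tilde\beta_{ij}]+r_{ij}\Pr_{t_{ij}}[V_{ij}(t_{ij})=\tilde\beta_{ij}]=\tfrac12\sum_{t_i}f_i(t_i)\sum_{S\ni j}\sigma_{iS}(t_i)$: this is Property~2 by construction, and Property~1 follows because summing the right-hand side over $i$ uses only $\sum_i\sum_{t_i}f_i(t_i)\sum_{S\ni j}\sigma_{iS}(t_i)\le 1$, the feasibility of $\sigma$.

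Then I would split the virtual welfare into its ``above threshold'' (Myerson) part and its ``below threshold'' part. For each item $j$ the above-threshold interim allocation mass is, summed over bidders, at most $1$ by Property~1, so the expected Myerson virtual welfare of this part, together with the \single~and \nf~contributions that duality produces, is $O(\prev)$ via the posted-price analysis of~\cite{ChawlaHMS09,CaiZ17}. The below-threshold part is dominated by $\core(\sigma,\tilde{\vBeta},\vC,\vr)$ for \emph{any} nonnegative $\vC$, since enlarging the truncation window from $\tilde\beta_{ij}$ to $\tilde\beta_{ij}+c_i$ only adds terms (with the ties handled consistently, cf.\ \Cref{remark:tie-breaking}). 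Tracking constants gives Property~3. I would choose each $c_i$ minimal with $\sum_j\Pr_{t_{ij}}[V_{ij}(t_{ij})\le\tilde\beta_{ij}+c_i]\ge m-\tfrac12$; a further argument of~\cite{CaiZ17} bounds $\sum_i c_i$ by a constant multiple of $\prev$, giving Property~4. Finally, for Property~5 I would take $\cM_1^{(\sigma)}$ to be the two-part tariff that posts price $\tilde\beta_{ij}$ on item $j$ and charges bidder $i$ an entry fee equal to the median of her utility for her favorite available bundle at these prices: the defining property of $c_i$ makes the relevant truncated utility concentrate, so the entry fee is accepted with constant probability and, net of the revenue already collected through the posted prices, recovers a constant fraction of $\core$, the doubly-counted residual being absorbed into another $O(\prev)$, which yields $\core\le 64\cdot\prev+8\cdot\rev(\cM_1^{(\sigma)})$.

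The main obstacle is Property~5 — the core-to-simple direction. The earlier steps are a fairly mechanical instantiation of duality once the flow and thresholds are fixed, but showing that an \emph{anonymous} posted-price-with-entry-fee mechanism recaptures a constant fraction of the truncated welfare requires the delicate ``shifted core'' concentration argument, and in the XOS case one must additionally verify that the favorite-additive-clause reduction of the first step costs only a constant factor in every quantity entering Properties~1--5.
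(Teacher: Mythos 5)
The paper does not prove \Cref{lem:caiz17}: it cites it directly from Cai and Zhao~\cite{CaiZ17}, deferring the XOS-specific interim-allocation modification to Section~5 of~\cite{CaiZ16a}. There is thus no in-paper argument to compare against, only the cited source.

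Your outline is broadly faithful to the Cai--Zhao proof of Properties~1--4: the canonical-flow duality, the interim-probability-matched choice of $(\tilde\beta_{ij},r_{ij})$ that gives Property~2 with equality (hence as an inequality) and yields Property~1 by summing over $i$ and using feasibility of $\sigma$, and the minimal $c_i$ that halves the per-bidder over-threshold mass are all the right ingredients, and you correctly flag that the XOS reduction and the $\sum_i c_i = O(\prev)$ bound are nontrivial. For Property~5, however, the mechanism you propose is wrong in a way that matters. A two-part tariff uses \emph{anonymous} item prices, whereas $\tilde\beta_{ij}$ is bidder-specific; more substantively, posting $\tilde\beta_{ij}$ has the wrong magnitude even for a single bidder. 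The quantity $\core$ is exactly the welfare from types whose singleton value is \emph{at most} $\tilde\beta_{ij}+c_i$; if you post $\tilde\beta_{ij}$ as the price, types below $\tilde\beta_{ij}$ do not purchase and types in $[\tilde\beta_{ij},\tilde\beta_{ij}+c_i]$ have surplus at most $c_i$, so the posted-price revenue plus any entry fee is bounded by roughly $\sum_i c_i = O(\prev)$ and does not recover $\core$. The Cai--Zhao two-part tariff instead posts the anonymous per-item prices $Q_j$ --- essentially half the item-$j$ contribution to the expected truncated welfare (compare $\Mtpt$ and \Cref{def:Q_j-XOS} in this paper) --- which are small enough that a bidder's favorite-bundle surplus at these prices, truncated at $Q_j+\tau_i$, is a $\tau_i$-Lipschitz subadditive-over-independent-items function whose median concentrates to a constant fraction of $\core$, with the gap between the shifted and unshifted core separately absorbed into $O(\prev)$. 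Your intuition that the concentration of a truncated surplus drives Property~5 is correct, but the concrete prices you wrote down would not extract the core.
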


\subsection{Single-Bidder Marginal Reduced Form Polytope for XOS Valuations}\label{sec:mrf-XOS}
In Definition~\ref{def:W_i} we define the single-bidder marginal reduced form polytope $W_i$ for XOS buyers, which differs from the single-bidder marginal reduced form polytope for constrained-additive valuation is several ways.
In Definition~\ref{def:W_i},
we define a distribution $\sigma_S^k$ over all possible subset of items $S\subseteq [m]$ and over the finite number $k\in[K]$ over additive functions that can be chosen when we evaluate the value that the buyer has for a set of items. In Definition~\ref{def:W_i-constrained-add}, the distribution $\sigma_S$ was only over sets in the set of feasible allocations. 

Similar to Definition~\ref{def:W_i-constrained-add},
$\pi_{ij}(t_{ij})$ is equal to $f_{ij}(t_{ij})$ times the probability that the $i$-th buyer receives the $j$-th item.
In contrast to Definition~\ref{def:W_i-constrained-add},
in Definition~\ref{def:W_i},
the value of $w_{ij}(t_{ij})$ is $\frac{f_{ij}(t_{ij})}{V_{ij}(t_{ij})}$ times the expected value that the buyer has for the item when we are allowed to choose which additive functions in $k\in[K]$ we count the value of the buyer, or we are even allowed to allocate an item to the buyer but count zero value for it (that is equivalent to just throwing away the item).

\begin{definition}[XOS valuations: single-bidder marginal reduced form polytope]\label{def:W_i}
For every $i\in [n]$, the single-bidder marginal reduced form polytope $W_i\subseteq [0,1]^{2\cdot \sum_j|\cT_{ij}|}$ is defined as follows. Let $\pi_i=(\pi_{ij}(t_{ij}))_{j,t_{ij}\in \cT_{ij}}$ and $w_i=(w_{ij}(t_{ij}))_{j,t_{ij}\in \cT_{ij}}$. Then $(\pi_i,w_i)\in W_i$ if and only if there exist a number $\sigma_S^{(k)}(t_i)\in [0,1]$ for every $t_i\in \cT_i, S\subseteq [m],k\in [K]$, such that
\begin{enumerate}
    \item $\sum_{S,k}\sigma_S^{(k)}(t_i)\leq 1 $, $\forall t_i\in \cT_i$.
    \item $
    {\pi_{ij}(t_{ij})=}f_{ij}(t_{ij})\cdot\sum_{t_{i,-j}}f_{i,-j}(t_{i,-j})\cdot \sum_{S:j\in S}\sum_{k\in [K]}\sigma_S^{(k)}(t_{ij},t_{i,-j})$, for all $i,j,t_{ij}\in \cT_{ij}$.
    \item ${w_{ij}(t_{ij})\leq} f_{ij}(t_{ij})\cdot\sum_{t_{i,-j}}f_{i,-j}(t_{i,-j})\cdot \sum_{S:j\in S}\sum_{k\in [K]}\sigma_S^{(k)}(t_{ij},t_{i,-j})\cdot \frac{\alpha_{ij}^{(k)}(t_{ij})}{V_{ij}(t_{ij})}$, for all $i,j,t_{ij}\in \cT_{ij}$.
\end{enumerate}
\end{definition}

\subsection{The Linear Program for XOS valuations}\label{sec:program-XOS}


\begin{figure}[H]
\colorbox{MyGray}{
\begin{minipage}{.98\textwidth}
\small
$$\quad\textbf{max  } \sum_{i\in[n]} \sum_{j\in[m]} \sum_{t_{ij}\in \cT_{ij}} 
 f_{ij}(t_{ij})\cdot V_{ij}(t_{ij})\cdot \sum_{\substack{\beta_{ij}\in \cV_{ij}\\\delta_{ij} \in \Delta}} \lambda_{ij}(t_{ij},\beta_{ij}, \delta_{ij})\cdot \ind[V_{ij}(t_{ij})\leq \beta_{ij} + \delta_{ij}]$$
\vspace{-.3in}
  \begin{align*}
 & \textbf{s.t.}\\
 &\quad\textbf{Allocation Feasibility Constraints:}\\
 &\quad\Wconstraint \quad (\pi_i,w_i) \in W_i & \forall i \\
&\quad\PiConstraint \quad \sum_i\sum_{t_{ij}\in \cT_{ij}}\pi_{ij}(t_{ij})\leq 1 & \forall j\\  
 &\quad\textbf{Natural Feasibility Constraints:}\\
    &\quad\LambdaMarginalConstraint\quad f_{ij}(t_{ij})\cdot\sum_{\beta_{ij}\in \cV_{ij}}\sum_{\delta_{ij}\in \Delta} \lambda_{ij}(t_{ij},\beta_{ij},\delta_{ij}) = w_{ij}(t_{ij}) & \forall i,j,t_{ij}\in \cT_{ij}\\
    &\quad\CompareMarginalConstraint\quad\lambda_{ij}(t_{ij},\beta_{ij},\delta_{ij})\leq \hat\lambda_{ij}(\beta_{ij}, \delta_{ij}) & \forall i,j, t_{ij},\beta_{ij}\in \cV_{ij},\delta_{ij}\\
    &\quad\HatLambdaDistributionConstraint\quad \sum_{\substack{\beta_{ij}\in \cV_{ij}\\\delta_{ij} \in \Delta}} \hat\lambda_{ij}(\beta_{ij},\delta_{ij}) = 1 & \forall i,j\\
    &\quad\textbf{Problem Specific Constraints:}\\
    &\quad\ReduceDemandConstaint \quad \sum_{i\in[n]}\sum_{\beta_{ij} \in \cV_{ij}}\sum_{\delta_{ij}\in \Delta} \hat\lambda_{ij}(\beta_{ij},\delta_{ij}) \cdot \Pr_{t_{ij}\sim D_{ij}}[V_{ij}(t_{ij})\geq \beta_{ij}] \leq \frac{1}{2}& \forall j\\
    & \quad\MarginalToGlobalConstraint\quad \frac{1}{2}\sum_{t_{ij} \in \cT_{ij}} f_{ij}(t_{ij}) \left(\lambda_{ij}(t_{ij},\beta_{ij},\delta_{ij})+ \lambda_{ij}(t_{ij},\beta_{ij}^+,\delta_{ij})\right) \leq \\
    &\qquad\qquad\qquad\hat\lambda_{ij}(\beta_{ij},\delta_{ij}) \cdot \Pr_{t_{ij}}[V_{ij}(t_{ij})\geq \beta_{ij}]+\hat\lambda_{ij}(\beta_{ij}^+,\delta_{ij}) \cdot \Pr_{t_{ij}}[V_{ij}(t_{ij})\geq\beta_{ij}^+] & \forall i,j,\beta_{ij}\in \cV_{ij}^0,\delta_{ij}\in \Delta\\
    &\quad\BoundMeanDeltaConstraint\quad \sum_{\substack{\beta_{ij}\in \cV_{ij}\\ \delta_{ij} \in \Delta}} \delta_{ij}\cdot \hat\lambda_{ij}(\beta_{ij},\delta_{ij}) \leq d_i  & \forall i,j\\
    &\quad\BoundSumDeltaConstraint\quad\sum_{i\in[n]} d_i \leq 
    {111\cdot \estprev}\\
     & \lambda_{ij}(t_{ij},\beta_{ij},\delta_{ij})\geq 0,\hat\lambda_{ij}(\beta_{ij},\delta_{ij})\geq 0,\pi_{ij}(t_{ij})\geq 0, w_{ij}(t_{ij})\geq 0, d_i\geq 0 & \forall i,j,t_{ij},\beta_{ij}\in \cV_{ij},\delta_{ij}
\end{align*}
\end{minipage}}
\caption{LP for XOS Valuations }~\label{fig:XOSLP}
\end{figure}

The LP for XOS valuations can be found in \Cref{fig:XOSLP}. Here $V_{ij}^0=\{V_{ij}(t_{ij}):t_{ij}\in \cT_{ij}\}$, $\cV_{ij}^+ = \{V_{ij}(t_{ij}) + \eps_r: t_{ij} \in \cT_{ij}\}$ and $\cV_{ij} = \cV^0_{ij} \cup \cV^+_{ij}$. We notice that this is consistent with our LP for constrained-additive buyers (\Cref{fig:bigLP}), as $V_{ij}(t_{ij})=t_{ij}$ for constrained-additive buyers.


Denote $\optlp$ the optimum objective of the LP in \Cref{fig:XOSLP}. Similar to the constrained-additive case, we have the following lemma.

\begin{lemma}\label{lem:bound rev by opt-XOS}
When buyers have XOS valuations, for any BIC and IR mechanism $\cM$, $\rev(\cM)\leq 28\cdot \prev+4\cdot\optlp$.
\end{lemma}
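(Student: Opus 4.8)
The plan is to mirror the constrained-additive argument behind \Cref{lem:bound rev by opt}: given any BIC and IR mechanism $\cM$, I exhibit a feasible point of the LP in \Cref{fig:XOSLP} whose objective value is at least $\core(\sigma,\tilde{\vBeta}^{(\sigma)},\vC^{(\sigma)},\vr^{(\sigma)})$, and then conclude by Property~3 of \Cref{lem:caiz17}. Concretely, first apply \Cref{lem:caiz17} to $\cM$ to obtain the feasible interim allocation $\sigma$ (of the modified mechanism), the dual parameters $\tilde{\vBeta}^{(\sigma)},\vC^{(\sigma)},\vr^{(\sigma)}$, and the auxiliary TPT $\cM_1^{(\sigma)}$, with Properties~1--5. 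Throughout we work on the event $\cE$ that the RPP of \Cref{thm:chms10} is computed successfully, so that $\estprev\ge\frac{1}{6.75}(1-\frac{1}{nm})\prev$.

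I would then build the LP solution as follows. Set $\sigma^{(k)}_S(t_i):=\sigma_{iS}(t_i)\cdot\ind[k=k^*(t_i,S)]$ and let $(\pi_i,w_i)$ be the induced single-bidder marginal reduced form, i.e. take conditions~2 and 3 of \Cref{def:W_i} with equality; then $(\pi_i,w_i)\in W_i$, and moreover $w_{ij}(t_{ij})\le f_{ij}(t_{ij})$ because $\alpha^{(k^*(t_i,S))}_{ij}(t_{ij})\le V_{ij}(t_{ij})$ and $\sum_{S\ni j}\sigma_{iS}(t_i)\le 1$. For each bidder $i$ let $\delta^*_i$ be the smallest element of the geometric grid $\Delta$ with $\delta^*_i\ge c^{(\sigma)}_i$ (this is well defined since $c^{(\sigma)}_i\le 8\prev$, which for $nm$ above an absolute constant lies in $[\estprev/n,55\,\estprev]$ by the estimate on $\estprev$), and put $d_i:=\delta^*_i$. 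Finally, concentrate the dual distributions on $\delta_{ij}=\delta^*_i$ and split the mass between $\beta_{ij}=\tilde\beta^{(\sigma)}_{ij}$ and $\beta_{ij}=\tilde\beta^{(\sigma)}_{ij}+\eps_r\in\cV^+_{ij}$ according to the tie-breaking parameter: $\hat\lambda_{ij}(\tilde\beta^{(\sigma)}_{ij},\delta^*_i):=r^{(\sigma)}_{ij}$, $\hat\lambda_{ij}(\tilde\beta^{(\sigma)}_{ij}+\eps_r,\delta^*_i):=1-r^{(\sigma)}_{ij}$, and $\lambda_{ij}(t_{ij},\beta_{ij},\delta^*_i):=\hat\lambda_{ij}(\beta_{ij},\delta^*_i)\cdot w_{ij}(t_{ij})/f_{ij}(t_{ij})$ for these two values of $\beta_{ij}$, with all other coordinates set to zero.

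Next I would verify feasibility constraint by constraint. \Wconstraint\ and \PiConstraint\ hold because $\sigma$ is a feasible interim allocation; \LambdaMarginalConstraint, \CompareMarginalConstraint\ (using $w_{ij}\le f_{ij}$), and \HatLambdaDistributionConstraint\ are immediate from the construction. Since $\eps_r$ is small relative to the bit complexity, $\Pr[V_{ij}(t_{ij})\ge\tilde\beta^{(\sigma)}_{ij}+\eps_r]=\Pr[V_{ij}(t_{ij})>\tilde\beta^{(\sigma)}_{ij}]$, so \ReduceDemandConstaint\ collapses to $\sum_i\big(\Pr[V_{ij}(t_{ij})>\tilde\beta^{(\sigma)}_{ij}]+r^{(\sigma)}_{ij}\Pr[V_{ij}(t_{ij})=\tilde\beta^{(\sigma)}_{ij}]\big)\le\frac12$, which is Property~1; the only non-trivial instance of \MarginalToGlobalConstraint\ is $\beta_{ij}=\tilde\beta^{(\sigma)}_{ij},\delta_{ij}=\delta^*_i$, where the left side is $\frac12\sum_{t_{ij}}w_{ij}(t_{ij})\le\frac12\sum_{t_i}f_i(t_i)\sum_{S\ni j}\sigma_{iS}(t_i)$ and the right side equals $\Pr[V_{ij}(t_{ij})>\tilde\beta^{(\sigma)}_{ij}]+r^{(\sigma)}_{ij}\Pr[V_{ij}(t_{ij})=\tilde\beta^{(\sigma)}_{ij}]$, so Property~2 closes it; \BoundMeanDeltaConstraint\ holds with equality for every $j$ (a single $d_i$ suffices, matching the single $c_i$ in \Cref{lem:caiz17}); and \BoundSumDeltaConstraint\ follows from $\sum_i\delta^*_i\le 2\sum_i c^{(\sigma)}_i+\estprev\le 16\prev+\estprev\le 111\,\estprev$ using Property~4 and the estimate on $\estprev$. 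For the objective, the contribution of $(i,j,t_{ij})$ is $V_{ij}(t_{ij})\,w_{ij}(t_{ij})\big(r^{(\sigma)}_{ij}\ind[V_{ij}(t_{ij})\le\tilde\beta^{(\sigma)}_{ij}+\delta^*_i]+(1-r^{(\sigma)}_{ij})\ind[V_{ij}(t_{ij})\le\tilde\beta^{(\sigma)}_{ij}+\eps_r+\delta^*_i]\big)$, where $V_{ij}(t_{ij})\,w_{ij}(t_{ij})$ equals the per-$(i,j,t_{ij})$ weight appearing in $\core$; since $\delta^*_i\ge c^{(\sigma)}_i$, a short case check on the indicators shows this dominates $\ind[V_{ij}(t_{ij})<\tilde\beta^{(\sigma)}_{ij}+c^{(\sigma)}_i]+r^{(\sigma)}_{ij}\ind[V_{ij}(t_{ij})=\tilde\beta^{(\sigma)}_{ij}+c^{(\sigma)}_i]$, so $\optlp\ge\core(\sigma,\tilde{\vBeta}^{(\sigma)},\vC^{(\sigma)},\vr^{(\sigma)})$. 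Combining with Property~3 gives $\rev(\cM)\le 28\,\prev+4\,\core(\sigma,\tilde{\vBeta}^{(\sigma)},\vC^{(\sigma)},\vr^{(\sigma)})\le 28\,\prev+4\,\optlp$.

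The main obstacle is arranging the discretized construction to be simultaneously feasible and objective-preserving: one must round $c^{(\sigma)}_i$ \emph{up} into the grid $\Delta$ so that the truncation in the objective only loosens (guaranteeing the objective stays $\ge\core$), while still landing inside $[\estprev/n,55\,\estprev]$ and keeping $\sum_i d_i\le 111\,\estprev$ --- this is precisely where the estimate $\estprev\ge\frac{1}{6.75}(1-\frac1{nm})\prev$ of \Cref{thm:chms10} and a bit of constant bookkeeping enter (small $nm$ can be handled trivially by brute force). The $\eps_r$-splitting needed to reproduce the tie-breaking vector $\vr^{(\sigma)}$ with the $\le$ (rather than $<$) indicators of the LP is fiddly but routine once $\eps_r$ is taken small relative to the bit complexity, exactly as in the proof of \Cref{lem:bound rev by opt}.
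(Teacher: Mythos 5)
Your proposal is correct and is essentially the same argument as the paper's: apply \Cref{lem:caiz17} to $\cM$, plug the induced $(\pi,w)$ (via the mapping $\sigma_{iS}^{(k)}(t_i)=\hat\sigma_{iS}(t_i)\cdot\ind[k=k^*(t_i,S)]$) into the LP, round $c_i^{(\sigma)}$ up into $\Delta$, split the $\hat\lambda$-mass between $\tilde\beta_{ij}$ and $\tilde\beta_{ij}^+$ according to $r_{ij}$, verify each constraint against Properties 1--4, and bound the objective below by $\core$ before invoking Property 3. The only cosmetic difference is that the paper first passes from $\core(\cdot,\vC,\cdot)$ to $\core(\cdot,\hat\vC,\cdot)$ and then matches the $\le$-indicator of the LP, whereas you do the indicator comparison in one shot; both are the same case analysis. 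If anything you are a bit more careful than the paper on two points it glosses over: you explicitly condition on the event $\cE$ and explicitly flag that the bookkeeping $\sum_i d_i\le 111\,\estprev$ and $c_i^{(\sigma)}\le 55\,\estprev$ require $nm$ above an absolute constant (the paper states ``when $nm\geq 110$'' mid-proof but the lemma itself is stated unconditionally).
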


\subsection{Proof of \Cref{lem:bound rev by opt} and \Cref{lem:bound rev by opt-XOS}}\label{subsec:proof_lem:bound rev by opt}

\begin{proof}
{The proof is stated for XOS buyers, whose LP contains a new set of variables $\pi$ compared to the LP for constrained-additive buyers. When the buyers have constrained-additive valuations, we can simply treat $\pi$ to be the same as $w$. Also, note that $V_{ij}(t_{ij})=t_{ij}$ for constrained-additive valuations.} 

Let tuple $(\hat{\sigma},\tilde{\vBeta},\vC,\vr)$ be the one stated in Lemma~\ref{lem:caiz17} for $\cM$. Consider the following choice of variables of the LP in \Cref{fig:bigLP} (or \Cref{fig:XOSLP}). For every $i,j,t_{ij}$, let 

$$w_{ij}(t_{ij})=f_{ij}(t_{ij})\cdot\sum_{t_{i,-j}\in \cT_{i,-j}} f_{i,-j}(t_{i,-j})\sum_{S:j\in S}\hat\sigma_{iS}(t_{ij},t_{i,-j})$$
if the buyers' have constrained-additive valuations. Let
$$\pi_{ij}(t_{ij})=f_{ij}(t_{ij})\cdot\sum_{t_{i,-j}\in \cT_{i,-j}} f_{i,-j}(t_{i,-j})\sum_{S:j\in S}\hat\sigma_{iS}(t_{ij},t_{i,-j})$$
$$w_{ij}(t_{ij})=f_{ij}(t_{ij})\cdot\sum_{t_{i,-j}\in \cT_{i,-j}} f_{i,-j}(t_{i,-j})\sum_{S:j\in S}\hat\sigma_{iS}(t_{ij},t_{i,-j})\cdot\frac{\gamma_{ij}^{S}(t_i)}{V_{ij}(t_{ij})}.$$ 
if the buyers' have XOS valuation.
For each $c_i$, we notice that by \Cref{lem:caiz17} and \Cref{thm:chms10}, $0\leq c_i\leq 8\cdot \prev\leq 55\cdot\estprev$ when $n\cdot m\geq 110$. We round it up to the closest number in $\Delta$, and we denote it using $\hat{c}_i$. Clearly, $$\core(\sigma,\tilde{\vBeta},\vC,\vr)\leq \core(\sigma,\tilde{\vBeta},\hat{\vC},\vr).$$

 $\lambda_{ij}(t_{ij},\beta_{ij},\delta_{ij})$ and $\hat\lambda_{ij}(\beta_{ij},\delta_{ij})$ can be set to non-zero only if $\beta_{ij}\in\{\tilde{\beta}_{ij}, \tilde{\beta}_{ij}^+\}$ and $\delta_{ij}=\hat{c}_i$.
More specifically, we choose the variables as follows.  \begin{itemize}
    \item $\lambda_{ij}(t_{ij},\tilde{\beta}_{ij},c_i):= r_{ij}\cdot w_{ij}(t_{ij})/f_{ij}(t_{ij})$, 
    \item $\lambda_{ij}(t_{ij},\tilde{\beta}_{ij}^+,c_i):= (1-r_{ij})\cdot w_{ij}(t_{ij})/f_{ij}(t_{ij})$,
    \item $\hat\lambda_{ij}(\tilde{\beta}_{ij},c_i)=r_{ij}$,
    \item $\hat\lambda_{ij}(\tilde{\beta}_{ij}^+,c_i)=1-r_{ij}$;
    \item $d_i=\hat{c}_i$.
\end{itemize}


We show that this is indeed a feasible solution of the LP in \Cref{fig:bigLP} by verifying each constraint. 
We first prove that $(\pi_i,w_i)\in W_i$ for every $i$. This is clear for constrained-additive valuations. For XOS valuations,  consider the mapping $\sigma_{iS}^{(k)}(t_i)=\hat\sigma_{iS}(t_i)\cdot \ind[k=\argmax_{k'\in [K]}\sum_{j\in S}\alpha_{ij}^{(k')}(t_{ij})]$ for every $t_i$ (we break ties arbitrarily). Thus by the definition of $\gamma_{ij}^S$, we have $\hat\sigma_{iS}(t_i)\cdot \gamma_{ij}^S(t_i)=\sum_{k}\sigma_{iS}^{(k)}(t_i)\cdot \alpha_{ij}^{(k)}(t_i)$. Then clearly $(\pi_i,w_i)$ satisfies all of the conditions in~\Cref{def:W_i}. 

For Constraint~\LambdaMarginalConstraint, LHS equals to $f_{ij}(t_{ij})\cdot \left(\lambda_{ij}(t_{ij},\tilde{\beta}_{ij},c_i)+\lambda_{ij}(t_{ij},\tilde{\beta}^+_{ij},c_i)\right)=w_{ij}(t_{ij})$. Constraint~\PiConstraint~follows from the fact that $\hat{\sigma}$ is a feasible interim allocation rule, and each item $j$ can be allocated to at most one buyer for every type profile. Thus
$$\sum_i\sum_{t_{ij}}\pi_{ij}(t_{ij})=\sum_i\sum_{t_i}f_i(t_i)\sum_{S:j\in S}\hat\sigma_{iS}(t_i)\leq 1$$
By property 1 of \Cref{lem:caiz17} and the choice of $\eps_r$, we have that for every $j$,
\begin{align*}
&\sum_{i\in[n]}\sum_{\beta_{ij} \in \cV_{ij}} \hat\lambda_{ij}(\beta_{ij}) \cdot \Pr_{t_{ij}\sim D_{ij}}[V_{ij}(t_{ij})\geq \beta_{ij}] \\
=&\sum_{i\in[n]}\left(r_{ij}\cdot\Pr_{t_{ij}}[V_{ij}(t_{ij})\geq\tilde\beta_{ij}] + (1-r_{ij})\cdot \Pr_{t_{ij}}[V_{ij}(t_{ij})\geq\tilde\beta_{ij}^+] \right)\\
=&\sum_{i\in[n]} \left(\Pr_{t_{ij}}[V_{ij}(t_{ij})>\tilde\beta_{ij}]+r_{ij}\cdot \Pr_{t_{ij}}[V_{ij}(t_{ij})=\tilde\beta_{ij}]\right)\leq \frac{1}{2}
\end{align*}
Thus, Constraint~\ReduceDemandConstaint~is satisfied.
For Constraint~\MarginalToGlobalConstraint, we only need to verify the constraint for $\beta_{ij}=\tilde{\beta}_{ij}\in \cV_{ij}^0$.
LHS equals to $\frac{1}{2}\sum_{t_{ij}}w_{ij}(t_{ij})$. 
We notice that $\gamma_{ij}^S(t_i)\leq V_{ij}(t_{ij})$ for all $i,j,S$ (for XOS valuations). Thus  
$$\sum_{t_{ij}\in \cT_{ij}}w_{ij}(t_{ij})\leq \sum_{t_{ij}\in \cT_{ij}}f_{ij}(t_{ij})\cdot\left( \sum_{t_{i,-j}\in \cT_{i,-j}} f_{i,-j}(t_{i,-j})\sum_{S:j\in S}\hat\sigma_{iS}(t_{ij},t_{i,-j})\right)=\sum_{t_i\in \cT_i}f_i(t_i)\cdot \sum_{S:j\in S}\hat\sigma_{iS}(t_i)$$

Since $r_{ij}\in [0,1]$ for all $i,j$, by property 2 of \Cref{lem:caiz17} and the choice of $\varepsilon_r$, we have

\begin{align*}
\text{LHS of Constraint~\MarginalToGlobalConstraint} \leq& \frac{1}{2}\sum_{t_i\in \cT_i}f_i(t_i)\cdot \sum_{S:j\in S}\hat\sigma_{iS}(t_i)=\Pr_{t_{ij}}[V_{ij}(t_{ij})>\tilde{\beta}_{ij}]+r_{ij}\cdot \Pr_{t_{ij}}[V_{ij}(t_{ij})=\tilde{\beta}_{ij}] \\
=&\hat\lambda_{ij}(\beta_{ij}^+) \cdot \Pr_{t_{ij}}[V_{ij}(t_{ij})\geq \beta_{ij}^+] +\hat\lambda_{ij}(\beta_{ij}) \cdot \Pr_{t_{ij}}[V_{ij}(t_{ij})\geq\beta_{ij}]
\end{align*}

For Constraint~\CompareMarginalConstraint, since $\gamma_{ij}^S(t_i)\leq V_{ij}(t_{ij})$ for all $i,j,S$ and $\sum_{S\in [m]}\hat\sigma_{iS}(t_i)\leq 1$, we have $w_{ij}(t_{ij})\leq f_{ij}(t_{ij})$. Thus 
$$\lambda_{ij}(t_{ij},\tilde{\beta}_{ij},c_i)=r_{ij}\cdot \frac{w_{ij}(t_{ij})}{f_{ij}(t_{ij})}\leq r_{ij}= \hat\lambda_{ij}(\tilde{\beta}_{ij},c_i).$$ 
and
$$\lambda_{ij}(t_{ij},\tilde{\beta}_{ij}^+,c_i)=(1-r_{ij})\cdot \frac{w_{ij}(t_{ij})}{f_{ij}(t_{ij})}\leq 1-r_{ij}= \hat\lambda_{ij}(\tilde{\beta}_{ij}^+,c_i).$$ 
Constraint~\BoundMeanDeltaConstraint~and ~\HatLambdaDistributionConstraint~are straightforward since $\hat\lambda_{ij}(\beta_{ij},\delta_i)=r_{ij}\cdot \ind[\beta_{ij}=\tilde{\beta}_{ij}\wedge \delta_{ij}=\hat{c}_i]$,  $\hat\lambda_{ij}(\beta_{ij},\delta_i)=(1-r_{ij})\cdot \ind[\beta_{ij}=\tilde{\beta}^+_{ij}\wedge \delta_{ij}=\hat{c}_i]$, and $d_i=\hat{c}_i$. 

Lastly, for Constraint~\BoundSumDeltaConstraint, we notice that for every $i\in [n]$, $\hat{c_i}\leq \max\{\frac{\estprev}{n},2c_i\}$. Thus by property 4 of \Cref{lem:caiz17}, when $n\cdot m\geq 110$,
$$\sum_{i\in [n]}c_i\leq 2\cdot \sum_{i\in [n]}c_i+\estprev\leq 16\cdot\prev+\estprev\leq 111\cdot\estprev$$

Thus the solution is feasible. We are left to show that the objective of the above solution is at least $\core(\hat{\sigma},\tilde{\vBeta},\hat{\vC},\vr)$. In fact, by the definition of $w_{ij}(t_{ij})$,   

{
\begin{align*}
\core(\hat{\sigma},\tilde{\vBeta},\hat{\vC},\vr)=&\sum_{i\in[n]}\sum_{j\in[m]}\sum_{t_{ij}\in\cT_{ij}}
w_{ij}(t_{ij})\cdot V_{ij}(t_{ij})\cdot \left(\ind[V_{ij}(t_{ij})< \tilde{\beta}_{ij}+\hat{c}_i]+r_{ij}\cdot \ind[V_{ij}(t_{ij})= \tilde{\beta}_{ij}+\hat{c}_i] \right)\\
\leq &\sum_i\sum_j\sum_{t_{ij}}
w_{ij}(t_{ij})\cdot V_{ij}(t_{ij})\cdot \ind[V_{ij}(t_{ij})\leq \tilde{\beta}_{ij}+\hat{c}_i].
\end{align*}

This is exactly the objective of the LP in \Cref{fig:bigLP} 
according to the choice of $\eps_r$, since \begin{align*}
    \lambda_{ij}(t_{ij},\tilde{\beta}_{ij},\delta_{ij})\cdot\ind[V_{ij}(t_{ij})\leq \tilde{\beta}_{ij}+\hat{c}_i]+\lambda_{ij}(t_{ij},\tilde{\beta}^+_{ij},\delta_{ij})\cdot&\ind[V_{ij}(t_{ij})\leq \tilde{\beta}^+_{ij}+\hat{c}_i]\\=&\frac{w_{ij}(t_{ij})}{f_{ij}(t_{ij})}\cdot \ind[V_{ij}(t_{ij})\leq \tilde{\beta}_{ij}+\hat{c}_i]
\end{align*}
}
The proof is complete by invoking property 3 of \Cref{lem:caiz17}. 
\end{proof}

\subsection{Bounding the Difference between the Shifted~\core~and the Original~\core}\label{subsec:proof_lem:Q and Q-hat}

We first give the following definition as a generalization of \Cref{def:Q_j}.

\begin{definition}\label{def:Q_j-XOS}
Let $(\pi, w,\lambda,\hat\lambda, \bd=(d_i)_{i\in [n]})$ be any feasible solution of the LP in \Cref{fig:XOSLP}. 
For every $j\in [m]$, define 





$$Q_j = \frac{1}{2}\cdot\sum_{i\in[n]}\sum_{t_{ij}\in \cT_{ij}}  f_{ij}(t_{ij})\cdot V_{ij}(t_{ij})\cdot
    \sum_{\substack{\beta_{ij}\in \cV_{ij}\\\delta_{ij} \in \Delta}}\lambda_{ij}(t_{ij},\beta_{ij},\delta_{ij})\cdot\ind[V_{ij}(t_{ij})\leq \beta_{ij}+\delta_{ij}].
$$
\end{definition}

Recall that by Constraint \HatLambdaDistributionConstraint, for every $i,j$, $\hat\lambda_{ij}(\cdot,\cdot)$ can be viewed as a joint distribution $\cC_{ij}$ over $\cV_{ij}\times \Delta$, i.e. $\Pr_{(\beta_{ij},\delta_{ij}) \sim \cC_{ij}}[\beta_{ij} = a \land \delta_{ij} = b]= \hat\lambda_{ij}(\beta_{ij}=a, \delta_{ij} = b)$. Denote $\cB_{ij}$ the marginal distribution of $\beta_{ij}$ with respect to $\cC_{ij}$. Inspired by the ``shifted core'' technique by Cai and Zhao~\cite{CaiZ17}, we need further definitions to describe the welfare contribution by each item under a smaller threshold.  

\begin{definition}\label{def:hatQ_j}
For every $i\in [n]$, define\footnote{\label{fn:tau_i}If all $D_{ij}$s are continuous, then for every $i$ there exists $\tau_i$ that satisfies the following property:\\ $\sum_{j\in[m]} \Pr_{t_{ij}\sim D_{ij},\beta_{ij} \sim \cB_{ij}}[V_{ij}(t_{ij}) \geq \max(\beta_{ij}, Q_j + x)]\leq \frac{1}{2}$ and the inequality is achieved as equality for all $\tau_i>0$. However, this property might not be satisfied for discrete distributions. This is again a tie-breaking issue addressed in \Cref{remark:tie-breaking}. We refer the readers to Lemma 5 of \cite{CaiZ17} for a fix. For simplicity, in our proof we will assume that all $\tau_i$s satisfy the property mentioned above.  
} 
$$\tau_i = \inf_{x\geq 0}\left\{ \sum_{j\in[m]} \Pr_{t_{ij}\sim D_{ij},\beta_{ij} \sim \cB_{ij}}[V_{ij}(t_{ij}) \geq \max(\beta_{ij}, Q_j + x)]\leq \frac{1}{2}\right\}$$


Then for every $j\in [m]$, define
$$\hat Q_j = \frac{1}{2}\cdot\sum_{i\in[n]}\sum_{t_{ij}\in \cT_{ij}}  f_{ij}(t_{ij})\cdot V_{ij}(t_{ij})\cdot
    \sum_{\substack{\beta_{ij}\in \cV_{ij}\\\delta_{ij} \in \Delta}}\lambda_{ij}(t_{ij},\beta_{ij},\delta_{ij})\cdot \ind[V_{ij}(t_{ij})\leq \min\{\beta_{ij}+\delta_{ij},Q_j+\tau_i\}]$$
\end{definition}

We prove in \Cref{lem:Q and Q-hat_main body} that the difference between $\sum_{j\in[m]} Q_j$ and $\sum_{j\in[m]} \hat{Q}_j$ can be bounded using $\prev$. 

\begin{lemma}\label{lem:Q and Q-hat_main body}
For every $j\in [m]$, $Q_j\geq \hat{Q}_j$. Moreover, there exists some absolute constant $c>0$ such that
$$\sum_{j\in [m]}Q_j\leq \sum_{j\in [m]}\hat{Q}_j+c\cdot \prev$$
\end{lemma}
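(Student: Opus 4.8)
The plan is to follow the ``shifted core'' strategy of Cai--Zhao~\cite{CaiZ17}, adapting it to our setting where the threshold is $\min\{\beta_{ij}+\delta_{ij}, Q_j+\tau_i\}$ rather than $\beta_{ij}+c_i$. First, the inequality $Q_j\ge \hat Q_j$ for every $j$ is immediate: term by term, $\ind[V_{ij}(t_{ij})\le\min\{\beta_{ij}+\delta_{ij},Q_j+\tau_i\}]\le \ind[V_{ij}(t_{ij})\le\beta_{ij}+\delta_{ij}]$, and all other factors are nonnegative. So the content is the reverse direction: $\sum_j (Q_j-\hat Q_j)\le c\cdot\prev$.

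The difference $Q_j-\hat Q_j$ is, up to the factor $\tfrac12$, the expected truncated-welfare contribution of item $j$ coming from types $t_{ij}$ with $V_{ij}(t_{ij})\le\beta_{ij}+\delta_{ij}$ \emph{but} $V_{ij}(t_{ij})>Q_j+\tau_i$. I would bound the quantity $V_{ij}(t_{ij})$ on this range by $(V_{ij}(t_{ij})-Q_j-\tau_i)^+ + Q_j+\tau_i$ and split the sum accordingly, obtaining two pieces: (i) an ``overflow'' piece $\sum_{i,j}\E[\,(V_{ij}-Q_j-\tau_i)^+\cdot\lambda_{ij}(\cdots)\,]$, and (ii) a ``base'' piece $\sum_{i,j}(Q_j+\tau_i)\cdot\Pr[\,V_{ij}>Q_j+\tau_i,\ V_{ij}\le\beta_{ij}+\delta_{ij}\,]$-type term. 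For piece (ii), the event forces $V_{ij}(t_{ij})\ge Q_j+\tau_i$ and simultaneously (via $\lambda_{ij}\le\hat\lambda_{ij}$, Constraint~\CompareMarginalConstraint) the $\beta_{ij}\sim\cB_{ij}$ marginal, so the defining inequality of $\tau_i$ in \Cref{def:hatQ_j} gives $\sum_j \Pr_{t_{ij},\beta_{ij}}[V_{ij}(t_{ij})\ge\max(\beta_{ij},Q_j+\tau_i)]\le\tfrac12$ for each $i$; summing $(Q_j+\tau_i)$ against these probabilities is then dominated by $\sum_j Q_j\cdot(\text{small mass}) + \tau_i\cdot\tfrac12$. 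Here $\sum_j Q_j$ is (half) the LP objective; I expect this to be re-absorbed on the left side with a constant strictly less than $1$, or, following~\cite{CaiZ17}, the term $\sum_j Q_j$ restricted to the shifted region is itself bounded by $O(\prev)$ plus a fraction of $\sum_j Q_j$. The $\tau_i$ terms contribute $\sum_i \tau_i\cdot\tfrac12$, and one shows $\sum_i\tau_i=O(\prev)$ by the same argument Cai--Zhao use: $\tau_i$ is an ``entry-fee''-like quantity whose sum over bidders is controlled by the revenue of a posted-price mechanism with prices $Q_j$, hence by $\prev$ (up to constants).

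For piece (i), the overflow $\sum_{i,j}\E[(V_{ij}(t_{ij})-Q_j-\tau_i)^+]$ — after noting $\lambda_{ij}(t_{ij},\beta_{ij},\delta_{ij})\le\hat\lambda_{ij}(\beta_{ij},\delta_{ij})$ and summing — is bounded by $\sum_i\sum_j \E_{t_{ij}}[(V_{ij}(t_{ij})-Q_j-\tau_i)^+]$. By the definition of $\tau_i$ (which caps the total tail mass above $Q_j+\tau_i$ at $\tfrac12$ per bidder) and a standard argument à la Chawla--Hartline--Malec--Sivan / Cai--Zhao, this ``truncated expected surplus above the thresholds'' is at most a constant times the revenue of the rationed posted-price mechanism with prices $\{Q_j\}$, hence $O(\prev)$. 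I would invoke exactly the lemma (Lemma 5 and the surrounding analysis of~\cite{CaiZ17}, referenced in \Cref{fn:tau_i}) that handles this term, rather than reprove it.

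The main obstacle is piece (ii): controlling $\sum_j Q_j$ appearing on the right-hand side. Naively it is as large as the LP objective itself, so the bound would be vacuous. The resolution — and the delicate point borrowed from~\cite{CaiZ17} — is that in piece (ii) the factor $Q_j$ is multiplied by the \emph{tail} probability $\Pr[V_{ij}\ge Q_j+\tau_i,\ V_{ij}\le\beta_{ij}+\delta_{ij}]$, and Constraint~\MarginalToGlobalConstraint\ together with Constraint~\ReduceDemandConstaint\ forces the \emph{total} such mass $\sum_i \Pr[\dots]$ to be at most $\tfrac12$ (that is precisely why Constraint~\MarginalToGlobalConstraint\ is imposed pointwise in $\beta_{ij},\delta_{ij}$ rather than only in expectation, cf.\ \Cref{fn:constraint 7}). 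Hence $\sum_j Q_j\cdot\Pr[\dots]$ telescopes back into at most $\tfrac12\sum_j Q_j$ — wait, that still leaves a constant-fraction self-reference, which is fine: it gets absorbed, yielding $\tfrac12\sum_j(Q_j-\hat Q_j)\le \tfrac12 O(\prev)$ after rearrangement, and then $\sum_j(Q_j-\hat Q_j)\le O(\prev)$. I would carry this bookkeeping out carefully, being careful with the $\beta_{ij}^+$ tie-breaking copies in $\cV_{ij}$ (Constraint~\MarginalToGlobalConstraint\ already bundles $\beta_{ij}$ and $\beta_{ij}^+$, so the discrete tie-breaking is handled), and finish by collecting all the $O(\prev)$ contributions into the single constant $c$.
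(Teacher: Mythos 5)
Your first half ($Q_j\ge \hat Q_j$ pointwise) is fine, and you correctly identify that $2\sum_j(Q_j-\hat Q_j)$ is the contribution from types with $Q_j+\tau_i< V_{ij}(t_{ij})\le\beta_{ij}+\delta_{ij}$. But the decomposition you choose for $V_{ij}(t_{ij})$ on this range — $V\le (Q_j+\tau_i)+(V-Q_j-\tau_i)^+$ — is the wrong split, and both of your resulting pieces fail. The paper instead uses $V\le \beta_{ij}+(V-\beta_{ij})^+$, and that choice is not cosmetic.

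\textbf{Piece (i) is not $O(\prev)$ by the cited arguments.} Your ``overflow'' integrand $(V-Q_j-\tau_i)^+\cdot\ind[V\le\beta_{ij}+\delta_{ij}]$ is only capped at $\beta_{ij}+\delta_{ij}-Q_j-\tau_i$, which can be arbitrarily large when $\beta_{ij}\gg Q_j+\tau_i$; it is \emph{not} capped at $\delta_{ij}$. The ``standard CHMS/Cai--Zhao argument'' you invoke crucially requires the latter: in the paper's decomposition the surplus term is $(V-\beta_{ij})^+\cdot\ind[V\le\beta_{ij}+\delta_{ij}]\le\delta_{ij}$, which makes the per-bidder function $\eta_i$ (see \Cref{lem:eta-lipschitz}) $d_i$-Lipschitz, and that Lipschitz bound drives the concentration argument (\Cref{lemma:concentrate}) and the entry-fee bound in \Cref{mech:variant-tpt}. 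Without a $\delta_{ij}$-cap, the quantity $\sum_{i,j}\E_{t_{ij}}[(V_{ij}-Q_j-\tau_i)^+]$ is a raw surplus above posted prices, and surplus above posted prices is \emph{not} bounded by revenue in general (e.g., a single item with value $U[0,1]$ and a vanishing price has surplus $\approx 1/2$ but revenue $\to 0$). The $\tau_i$-definition bounds a tail \emph{probability}, not a tail \emph{expectation}, so it does not rescue this term. (Also, your reference in \Cref{fn:tau_i} is to Lemma~5 of \cite{CaiZ17}, which handles the discrete tie-breaking for $\tau_i$, not the surplus bound.)

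\textbf{Piece (ii) does not ``absorb'' to the stated bound.} Even granting Piece (i), your self-referential step gives at best $2\sum_j(Q_j-\hat Q_j)\le \tfrac12\sum_j Q_j + O(\prev)$, and rearranging yields something like $\sum_j Q_j\le \tfrac43\sum_j\hat Q_j+O(\prev)$ — a \emph{multiplicative} loss, not the \emph{additive} inequality $\sum_jQ_j\le\sum_j\hat Q_j+c\prev$ the lemma asserts. That additive form is what \Cref{lem:lower bounding mu} uses: it passes to $\sum_j\Pr_t[j\notin\sold(t)](2\hat Q_j-Q_j)\ge\sum_j\Pr_t[j\notin\sold(t)]Q_j - 2\sum_j(Q_j-\hat Q_j)$ and needs $\sum_j(Q_j-\hat Q_j)=O(\prev)$ outright; a multiplicative bound in $\sum_j\hat Q_j$ breaks that chain. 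The paper avoids the self-reference entirely because, after splitting $V\le\beta+(V-\beta)^+$, \emph{both} resulting pieces are directly bounded by $O(\prev)$ — the $\beta$-piece via Constraints~\CompareMarginalConstraint\ and \MarginalToGlobalConstraint\ (exactly the pointwise-in-$\beta$ constraint you flagged) combined with \Cref{lem:prev-useful}, and the $(V-\beta)^+$-piece via the Lipschitz concentration route and \Cref{lem:bounding tau_i}. Your intuition that Constraint~\MarginalToGlobalConstraint's pointwise nature is essential is right, but it is used in Inequality~\eqref{inequ:Q and Q_hat-term1} to compare $\lambda$ against $\hat\lambda\cdot\Pr[V\ge\beta]$, not to generate the self-reference you describe.

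To repair the proof, switch the split to $V\le\beta_{ij}+(V-\beta_{ij})^+$ (over the region $V\ge Q_j+\tau_i$), bound the first piece by posted-price revenue using \Cref{lem:prev-useful} with the randomized prices $\max(\beta_{ij},Q_j+\tau_i)\sim\cB_{ij}$, and bound the second piece by the revenue of the rationed two-part tariff with randomized prices (\Cref{mech:variant-tpt}), using that the per-item excess is at most $\delta_{ij}$ and hence $\eta_i$ is $d_i$-Lipschitz with $\sum_i d_i=O(\prev)$ by Constraints~\BoundMeanDeltaConstraint--\BoundSumDeltaConstraint.
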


To prove \Cref{lem:Q and Q-hat_main body}, we consider the following variant of the RPP mechanism, where the posted prices are allowed to be randomized. 

\paragraph{Rationed Randomized Posted Price Mechanism (RRPP).} Before the auction starts, the seller first draws a posted price $p_{ij}$ from some distribution $\cG_{ij}$, for every buyer $i$ and item $j$. All $\cG_{ij}$s are independent.
The buyers then arrive in some arbitrary order, and each buyer $i$ can purchase \emph{at most one} item among the available ones at the realized posted price $p_{ij}$ for every item $j$.

Clearly any RRPP mechanism is also DSIC and IR. We notice that any RRPP mechanism can be viewed as a distribution of RPP mechanisms, as the seller can draw all the posted prices before the auction starts, and use the realized (and deterministic) set of posted prices to sell the items. Thus the highest revenue achievable among all RRPP mechanisms is the same as the optimum revenue among all RPP mechanisms, which is $\prev$. 

Before giving the proof of \Cref{lem:Q and Q-hat_main body}, we first prove a useful lemma that analyzes the revenue of RRPP. It's a generalization of Lemma 17 of \cite{CaiZ17}, which allows randomized posted prices.  

\begin{lemma}\label{lem:prev-useful}\cite{CaiZ17}
For every $i,j$, let $\cG_{ij}$ be any distribution over $\mathbb{R}_+$. All $\cG_{ij}$'s are independent from each other. Suppose both of the following inequalities hold, for some constant $a,b\in (0,1)$:
\begin{enumerate}
\item $\sum_{i\in [n]} \Pr_{t_{ij}\sim \cD_{ij},x_{ij}\sim \cG_{ij}}\left[V_i(t_{ij})\geq x_{ij}\right]\leq a,\forall j\in [m]$.
\item
$\sum_{j\in [m]} \Pr_{t_{ij}\sim \cD_{ij},x_{ij}\sim \cG_{ij}}\left[V_i(t_{ij})\geq x_{ij}\right]\leq b,\forall i\in [n]$.
\end{enumerate}
Then
\begin{equation*}
\sum_{i\in [n]}\sum_{j\in[m]} \E_{x_{ij}\sim \cG_{ij}}\left[x_{ij}\cdot \Pr_{t_{ij}\sim \cD_{ij}}\left[V_i(t_{ij})\geq x_{ij}\right]\right]\leq \frac{1}{(1-a)(1-b)}\cdot \prev.
\end{equation*}
\end{lemma}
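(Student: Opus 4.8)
The plan is to construct the natural Rationed Randomized Posted Price (RRPP) mechanism induced by the given price distributions $\cG_{ij}$ and to lower bound its revenue by $(1-a)(1-b)\sum_{i,j}\E_{x_{ij}\sim\cG_{ij}}\!\left[x_{ij}\cdot\Pr_{t_{ij}\sim D_{ij}}[V_i(t_{ij})\geq x_{ij}]\right]$. Since (as observed right before the lemma) the supremum of the revenue over all RRPP mechanisms equals $\prev$, this lower bound immediately yields the claimed inequality. Concretely, in the RRPP the seller first draws $p_{ij}\sim\cG_{ij}$ independently for every $i,j$; buyers then arrive in the order $1,\dots,n$, and upon arrival buyer $i$ is shown the currently available items at her realized prices and takes a favorite available item with non-negative utility (at most one), breaking ties toward buying. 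Write $q_{ij}=\Pr_{t_{ij}\sim D_{ij},x_{ij}\sim\cG_{ij}}[V_i(t_{ij})\geq x_{ij}]$, so that the two hypotheses read $\sum_i q_{ij}\leq a$ for every $j$ and $\sum_j q_{ij}\leq b$ for every $i$, and note the revenue of the RRPP equals $\sum_{i,j}\E\!\left[x_{ij}\cdot\ind[\text{buyer }i\text{ buys item }j]\right]$ (for a fixed $i$ at most one of these indicators is $1$).

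The first step is an \emph{availability bound}: for any buyer $i$ and item $j$, the probability that item $j$ has already been purchased by some buyer $i'<i$ is at most $\sum_{i'<i}\Pr[\text{buyer }i'\text{ buys }j]\leq\sum_{i'<i}q_{i'j}\leq a$, because a buyer purchases item $j$ only if her value for it is at least its realized price. Hence item $j$ is in the available set $S_i$ when buyer $i$ arrives with probability at least $1-a$; crucially, the event $\{j\in S_i\}$ is a function of the types and realized prices of buyers $1,\dots,i-1$ only, hence (by independence of the bidders and of the price draws) is independent of everything concerning buyer $i$, namely $(t_{ij'})_{j'}$ and $(x_{ij'})_{j'}$.

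The second step is to lower bound the purchase probability. Condition on $S_i=S$ with $j\in S$ and on $x_{ij}=x$. If $V_i(t_{ij})\geq x$ and, simultaneously, $V_i(t_{ij'})<x_{ij'}$ for every other $j'\in S\setminus\{j\}$, then item $j$ is the unique item of $S$ offering buyer $i$ non-negative utility, so she buys it; there is no tie to worry about here. By item-independence of $D_i=\bigtimes_j D_{ij}$, the no-externality property of $V_i$, and the independence of the price draws $\cG_{ij'}$, the events $\{V_i(t_{ij'})<x_{ij'}\}_{j'\neq j}$ are mutually independent and independent of both $x_{ij}$ and of $\{j\in S_i\}$, so the conditional probability of the above is at least $\Pr_{t_{ij}}[V_i(t_{ij})\geq x]\cdot\prod_{j'\in S\setminus\{j\}}(1-q_{ij'})\geq\Pr_{t_{ij}}[V_i(t_{ij})\geq x]\cdot(1-b)$, using $\sum_{j'\neq j}q_{ij'}\leq b$ and the Weierstrass product inequality. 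Combining this with the availability bound (which contributes the independent factor $\Pr[j\in S_i]\geq 1-a$) and integrating over $x_{ij}\sim\cG_{ij}$ gives $\E[x_{ij}\cdot\ind[\text{buyer }i\text{ buys }j]]\geq(1-a)(1-b)\,\E_{x_{ij}}\!\left[x_{ij}\Pr_{t_{ij}}[V_i(t_{ij})\geq x_{ij}]\right]$; summing over all $i,j$ and invoking $\prev\geq\mathrm{Rev}(\text{RRPP})$ finishes the proof.

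The main obstacle is the conditioning/independence bookkeeping: one must carefully justify that $\{j\in S_i\}$ is measurable with respect to (and only with respect to) the randomness of buyers $1,\dots,i-1$, hence independent of buyer $i$'s type and of her whole price vector, and that within buyer $i$'s decision the events $\{V_i(t_{ij'})<x_{ij'}\}_{j'\neq j}$ are independent of one another, of $x_{ij}$, and of the availability event — each of these rests on the item-independence of $D_i$ together with the independence of the price distributions, which is exactly the extra ingredient over Lemma~17 of~\cite{CaiZ17}. A secondary, minor point is the tie-breaking convention: a buyer indifferent between an item at its price and buying nothing is taken to buy, which is consistent with the ``$\geq$'' in the hypotheses and only ever used here to \emph{certify} a purchase (never to forbid one), so it remains harmless for discrete distributions.
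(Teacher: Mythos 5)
Your proposal is correct and follows essentially the same route as the paper's proof: construct the RRPP induced by the $\cG_{ij}$'s, lower-bound the probability that buyer $i$ purchases item $j$ by the product of the availability event ($\geq 1-a$ via a union bound over other buyers), the affordability event, and the event that no other item is affordable ($\geq 1-b$), using independence across items and bidders, and then conclude via $\prev \geq \rev(\text{RRPP})$. The only cosmetic differences (summing over $i'<i$ rather than all $k\neq i$, and using the product bound $\prod_{j'}(1-q_{ij'})\geq 1-\sum_{j'}q_{ij'}$ in place of a direct union bound) do not change the argument.
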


\begin{proof}
Consider an RRPP that sells item $j$ to buyer $i$ at price $x_{ij}\sim \cG_{ij}$. The mechanism
visits the buyers in some arbitrary order. For every $i,j$ and every realized $x_{ij}$, we will bound the probability of buyer $i$ purchasing item $j$, over the randomness of $\{\cD_{i'j'}\}_{i'\in[n],j'\in[m]}$ and $\{\cG_{i'j'}\}_{(i',j')\neq (i,j)}$. 
Notice that when it is buyer $i$'s turn, she purchases exactly item $j$ and pays $x_{ij}$ if all of the following three conditions hold: (i) $j$ is still available, (ii) $V_i(t_{ij})\geq x_{ij}$ and (iii) $\forall k\neq j, V_i(t_{ik})< x_{ik}$. The second condition means buyer $i$ can afford item $j$. The third condition means she cannot afford any other item $k\neq j$. Therefore, buyer $i$ purchases exactly item $j$.

Now let us compute the probability that all three conditions hold, when $t_{ij}\sim \cD_{ij}$ and $x_{ij}\sim \cG_{ij}$ for all $i,j$. Since every buyer's valuation is subadditive over the items, item $j$ is purchased by someone else only if there exists a buyer $k\neq i$ who has $V_k(t_{kj})\geq x_{kj}$. 
By the union bound, the event described above happens with probability at most $\sum_{k\neq i} \Pr_{t_{kj},x_{kj}}\left[V_k(t_{kj})\geq x_{kj}\right]$, which is less than $a$ by Inequality 1 of the statement. Therefore, condition (i) holds with probability at least $1-a$. Clearly, condition (ii) holds with probability $\Pr_{t_{ij}}\left[V_i(t_{ij})\geq x_{ij}\right]$. Finally, condition (iii) holds with at least probability $1-b$, because the probability that there exists any item $k\neq j$ such that $V_i(t_{ik})\geq x_{ik}$ is no more than $\sum_{k\neq j}\Pr_{t_{ik},x_{ik}}[V_i(t_{ik})\geq x_{ik}]\leq b$ (Inequality 2 of the statement). Since the three conditions are independent, buyer $i$ purchases exactly item $j$ with probability at least $(1-a)(1-b)\cdot \Pr_{t_{ij}}\left[V_i(t_{ij})\geq x_{ij}\right]$. So the expected revenue of this mechanism is at least $(1-a)(1-b)\cdot \E_{x_{ij}\sim \cG_{ij}}\left[x_{ij}\cdot \Pr_{t_{ij}\sim \cD_{ij}}\left[V_i(t_{ij})\geq x_{ij}\right]\right]$.
\end{proof}

A direct corollary of \Cref{lem:prev-useful} is that $\sum_i \tau_i$ can be bounded using $\prev$.

\begin{lemma}\label{lem:bounding tau_i}
$\sum_{i\in[n]} \tau_i\leq 8\cdot \prev$.
\end{lemma}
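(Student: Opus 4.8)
The plan is to instantiate \Cref{lem:prev-useful} with the family of (randomized) prices that appear implicitly in the definition of $\tau_i$, and then read off the bound on $\sum_{i}\tau_i$ from its conclusion. Fix the feasible solution $(\pi,w,\lambda,\hat\lambda,\bd)$ of the LP, with the induced quantities $\{Q_j\}_j$ and $\{\tau_i\}_i$ and the marginal distributions $\cB_{ij}$ of $\beta_{ij}$. For every $i\in[n]$, $j\in[m]$, I would let $\cG_{ij}$ be the law of $x_{ij}:=\max(\beta_{ij},\,Q_j+\tau_i)$ with $\beta_{ij}\sim\cB_{ij}$. Since the $\cC_{ij}$ (hence the $\cB_{ij}$) are independent across $(i,j)$ and $Q_j,\tau_i$ are fixed numbers, the $\cG_{ij}$ are independent, as required by \Cref{lem:prev-useful}, and they are supported on $\mathbb{R}_+$.

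The next step is to verify the two hypotheses of \Cref{lem:prev-useful} with $a=b=\tfrac12$. For the per-item condition, $x_{ij}\ge\beta_{ij}$ pointwise gives $\Pr_{t_{ij},\,x_{ij}\sim\cG_{ij}}[V_{ij}(t_{ij})\ge x_{ij}]\le\Pr_{t_{ij},\,\beta_{ij}\sim\cB_{ij}}[V_{ij}(t_{ij})\ge\beta_{ij}]$; summing over $i$ and using that $\Pr_{\cB_{ij}}[\beta_{ij}=a]=\sum_{\delta_{ij}\in\Delta}\hat\lambda_{ij}(a,\delta_{ij})$, so that $\Pr_{t_{ij},\,\beta_{ij}\sim\cB_{ij}}[V_{ij}(t_{ij})\ge\beta_{ij}]=\sum_{\beta_{ij}\in\cV_{ij}}\sum_{\delta_{ij}\in\Delta}\hat\lambda_{ij}(\beta_{ij},\delta_{ij})\cdot\Pr_{t_{ij}}[V_{ij}(t_{ij})\ge\beta_{ij}]$, Constraint~\ReduceDemandConstaint of the LP yields $\sum_{i}\Pr_{t_{ij},\,x_{ij}\sim\cG_{ij}}[V_{ij}(t_{ij})\ge x_{ij}]\le\tfrac12$ for every $j$. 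For the per-bidder condition, $x_{ij}=\max(\beta_{ij},Q_j+\tau_i)$, so $\sum_{j}\Pr_{t_{ij},\,x_{ij}\sim\cG_{ij}}[V_{ij}(t_{ij})\ge x_{ij}]=\sum_{j}\Pr_{t_{ij},\,\beta_{ij}\sim\cB_{ij}}[V_{ij}(t_{ij})\ge\max(\beta_{ij},Q_j+\tau_i)]\le\tfrac12$, which is exactly the defining inequality of $\tau_i$ (using the tie-breaking convention of the footnote in \Cref{def:hatQ_j}).

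Applying \Cref{lem:prev-useful} then gives $\sum_{i\in[n]}\sum_{j\in[m]}\E_{x_{ij}\sim\cG_{ij}}\big[x_{ij}\cdot\Pr_{t_{ij}}[V_{ij}(t_{ij})\ge x_{ij}]\big]\le\frac{1}{(1-1/2)(1-1/2)}\cdot\prev=4\prev$. To finish, I would lower-bound the left-hand side: since every $Q_j\ge 0$ (it is a sum of nonnegative terms), we have $x_{ij}\ge\tau_i$ pointwise, so the left-hand side is at least $\sum_{i}\tau_i\cdot\sum_{j}\Pr_{t_{ij},\,x_{ij}\sim\cG_{ij}}[V_{ij}(t_{ij})\ge x_{ij}]$. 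For each $i$ with $\tau_i>0$ the inner sum equals $\tfrac12$ (here the \emph{equality} built into the definition of $\tau_i$ through the footnote in \Cref{def:hatQ_j} is what is used), and for $\tau_i=0$ that summand vanishes; hence the left-hand side is at least $\tfrac12\sum_i\tau_i$. Combining the two bounds gives $\tfrac12\sum_i\tau_i\le 4\prev$, i.e.\ $\sum_i\tau_i\le 8\prev$, as claimed. The one point requiring care is the discreteness/tie-breaking around $\tau_i$ (the convention that both the inequality $\sum_j\Pr[V_{ij}(t_{ij})\ge\max(\beta_{ij},Q_j+\tau_i)]\le\tfrac12$ holds and equality is attained whenever $\tau_i>0$); everything else is a direct substitution into \Cref{lem:prev-useful} together with Constraint~\ReduceDemandConstaint.
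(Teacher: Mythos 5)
Your proposal is correct and follows essentially the same route as the paper: instantiate Lemma~\ref{lem:prev-useful} with the randomized prices $\max(\beta_{ij},Q_j+\tau_i)$ with $a=b=\tfrac12$, verify the per-item condition via Constraint~\ReduceDemandConstaint~and the per-bidder condition via the definition of $\tau_i$, lower-bound the conclusion by $\tfrac12\sum_i\tau_i$ using the equality built into the definition of $\tau_i$ (footnote tie-breaking), and conclude $\sum_i\tau_i\le 8\cdot\prev$. Your handling of the tie-breaking point and the reduction of Constraint~\ReduceDemandConstaint~to the marginals $\cB_{ij}$ matches the paper's argument.
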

\begin{proof}
By Constraint \ReduceDemandConstaint~of the LP in \Cref{fig:bigLP} (or \Cref{fig:XOSLP}), for every item $j$,
\begin{align*}
    \sum_{i\in [n]}\Pr_{\substack{t_{ij}\sim D_{ij}\\\beta_{ij}\sim \cB_{ij}}}[V_{ij}(t_{ij}) \geq \max(\beta_{ij}, Q_j+\tau_i)] &\leq \sum_{i\in [n]}\Pr_{\substack{t_{ij}\sim D_{ij}\\\beta_{ij}\sim \cB_{ij}}}[V_{ij}(t_{ij})\geq \beta_{ij}]\\
    &=\sum_{i\in [n]}\sum_{\beta_{ij}\in \cV_{ij}}\hat\lambda_{ij}(\beta_{ij})\Pr_{t_{ij}}[V_{ij}(t_{ij})\geq \beta_{ij}]\leq \frac{1}{2}
\end{align*}

By the definition of $\tau_i$, for every $i$ we have
$$\sum_{j\in [m]}\Pr_{t_{ij}\sim D_{ij},\beta_{ij}\sim \cB_{ij}}[V_{ij}(t_{ij}) \geq \max(\beta_{ij}, Q_j+\tau_i)]]\leq\frac{1}{2}.$$

Thus by \Cref{lem:prev-useful}, 
\begin{align*}
4\cdot \prev\geq& \sum_{i\in[n],j\in[m]} 
\E_{\beta_{ij}\sim\cB_{ij}}\left[\max(\beta_{ij}, Q_j+\tau_i)\cdot
\Pr_{t_{ij}\sim D_{ij}}[V_{ij}(t_{ij}) \geq \max(\beta_{ij}, Q_j+\tau_i)]\right]\\
\geq & \sum_{i\in[n]}\tau_i\sum_{j\in[m]}\E_{\beta_{ij}\sim\cB_{ij}}\left[\Pr_{t_{ij}\sim D_{ij}}[V_{ij}(t_{ij}) \geq \max(\beta_{ij}, Q_j+\tau_i)]\right]\\
=&\frac{1}{2}\sum_{i\in[n]} \tau_i,
\end{align*}
{The last equality comes from the fact that by the definition of $\tau_i$, $\sum_{j}\Pr_{t_{ij}\sim D_{ij},\beta_{ij}\sim \cB_{ij}}[V_{ij}(t_{ij}) \geq \max(\beta_{ij}, Q_j+\tau_i)]]=\frac{1}{2}$ for all $i$ such that $\tau_i>0$ (see \Cref{fn:tau_i}). We finish the proof.} 
\end{proof}

\begin{lemma}\label{lem:Q and Q-hat}[(Restatement of \Cref{lem:Q and Q-hat_main body})]
For every $j\in [m]$, $Q_j\geq \hat{Q}_j$. 
Moreover,
$$\sum_{j\in [m]}Q_j\leq \sum_{j\in [m]}\hat{Q}_j+236.5\cdot \prev$$ 
\end{lemma}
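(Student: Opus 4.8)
We first dispose of the easy inequality: for every $i,j,t_{ij},\beta_{ij},\delta_{ij}$ we have $\ind[V_{ij}(t_{ij})\le\min\{\beta_{ij}+\delta_{ij},Q_j+\tau_i\}]\le\ind[V_{ij}(t_{ij})\le\beta_{ij}+\delta_{ij}]$, and all remaining factors in the definitions of $Q_j$ and $\hat{Q}_j$ are nonnegative, so $\hat{Q}_j\le Q_j$ term by term. Subtracting, the difference of the two indicators is $\ind[Q_j+\tau_i<V_{ij}(t_{ij})\le\beta_{ij}+\delta_{ij}]$, so
$$\sum_{j}(Q_j-\hat{Q}_j)=\frac12\sum_{i,j}\sum_{t_{ij}\in\cT_{ij}}f_{ij}(t_{ij})\,V_{ij}(t_{ij})\!\!\sum_{\beta_{ij}\in\cV_{ij},\,\delta_{ij}\in\Delta}\!\!\lambda_{ij}(t_{ij},\beta_{ij},\delta_{ij})\cdot\ind[Q_j+\tau_i<V_{ij}(t_{ij})\le\beta_{ij}+\delta_{ij}].$$
The plan is to charge the right-hand side to the revenue of a rationed randomized posted price mechanism (hence to $\prev$) plus the entry-fee budget $\sum_i d_i\le 111\estprev\le 111\prev$ (Constraints~\BoundMeanDeltaConstraint~and~\BoundSumDeltaConstraint).

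The engine of the argument is a two-part split of the value on the tail event. Whenever $Q_j+\tau_i<V_{ij}(t_{ij})\le\beta_{ij}+\delta_{ij}$, write $V_{ij}(t_{ij})=P+E$ with $P:=\min\{V_{ij}(t_{ij}),\max(\beta_{ij},Q_j+\tau_i)\}$ and $E:=(V_{ij}(t_{ij})-\max(\beta_{ij},Q_j+\tau_i))^+$. Then: (i) $P\le\max(\beta_{ij},Q_j+\tau_i)$; (ii) since $\max(\beta_{ij},Q_j+\tau_i)\ge\beta_{ij}$ and $V_{ij}(t_{ij})\le\beta_{ij}+\delta_{ij}$, we get $E\le\delta_{ij}$; (iii) $E>0$ forces $V_{ij}(t_{ij})\ge\max(\beta_{ij},Q_j+\tau_i)$, in particular $V_{ij}(t_{ij})\ge\beta_{ij}$. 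Accordingly I split $\sum_j(Q_j-\hat{Q}_j)$ into a \emph{priced part} (the $P$-terms) and an \emph{excess part} (the $E$-terms) and bound each by $O(\prev)$.

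For the priced part, I bound $\lambda_{ij}(t_{ij},\beta_{ij},\delta_{ij})\le\hat\lambda_{ij}(\beta_{ij},\delta_{ij})$; when $V_{ij}(t_{ij})<\beta_{ij}$ I instead bound the total winning mass carried at parameter $\beta_{ij}$ using Constraint~\MarginalToGlobalConstraint~(pairing $\beta_{ij}$ with $\beta_{ij}^+$, the $\varepsilon_r$-shift of~\Cref{remark:tie-breaking} costing only a negligible correction). Crucially, on the tail event $V_{ij}(t_{ij})<\beta_{ij}$ forces $\beta_{ij}>Q_j+\tau_i$, hence $\beta_{ij}=\max(\beta_{ij},Q_j+\tau_i)$ and $\Pr_{t_{ij}}[V_{ij}(t_{ij})\ge\beta_{ij}]=\Pr_{t_{ij}}[V_{ij}(t_{ij})\ge\max(\beta_{ij},Q_j+\tau_i)]$. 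After this reduction every sub-case of the priced part is dominated by $\sum_{i,j}\E_{\beta_{ij}\sim\cB_{ij}}\!\big[\max(\beta_{ij},Q_j+\tau_i)\cdot\Pr_{t_{ij}}[V_{ij}(t_{ij})\ge\max(\beta_{ij},Q_j+\tau_i)]\big]$, which is exactly the quantity bounded by $4\prev$ in the proof of~\Cref{lem:bounding tau_i} via~\Cref{lem:prev-useful} with $\cG_{ij}$ the law of $\max(\beta_{ij},Q_j+\tau_i)$, $\beta_{ij}\sim\cB_{ij}$ (its hypotheses holding with $a=\tfrac12$ by Constraint~\ReduceDemandConstaint~and with $b=\tfrac12$ by the definition of $\tau_i$). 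Summing the $O(1)$ many sub-cases, the priced part is $O(\prev)$.

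For the excess part I use $E\le\delta_{ij}$ together with observation (iii): only types with $V_{ij}(t_{ij})\ge\max(\beta_{ij},Q_j+\tau_i)$ contribute, so after $\lambda_{ij}(t_{ij},\beta_{ij},\delta_{ij})\le\hat\lambda_{ij}(\beta_{ij},\delta_{ij})$ the excess part is at most $\tfrac12\sum_{i,j}\sum_{\beta_{ij},\delta_{ij}}\delta_{ij}\,\hat\lambda_{ij}(\beta_{ij},\delta_{ij})\,\Pr_{t_{ij}}[V_{ij}(t_{ij})\ge\max(\beta_{ij},Q_j+\tau_i)]$. The delicate point — and the main obstacle of the proof — is to show this is $O(\estprev)$ rather than $O(m\cdot\estprev)$: naively bounding the probability by $1$ loses a factor $m$, and bounding $\delta_{ij}$ by $55\estprev$ loses a factor $n$. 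One must couple the per-item entry-fee budget $\sum_{\beta_{ij},\delta_{ij}}\delta_{ij}\hat\lambda_{ij}(\beta_{ij},\delta_{ij})\le d_i$ (Constraint~\BoundMeanDeltaConstraint) with the per-bidder tail-probability budget $\sum_{j}\E_{\beta_{ij}\sim\cB_{ij}}\big[\Pr_{t_{ij}}[V_{ij}(t_{ij})\ge\max(\beta_{ij},Q_j+\tau_i)]\big]\le\tfrac12$ from the definition of $\tau_i$ (and Constraint~\LambdaMarginalConstraint, Constraint~\PiConstraint), and then sum over bidders with Constraint~\BoundSumDeltaConstraint. Combining the priced and excess estimates, using $\estprev\le\prev$ and~\Cref{lem:bounding tau_i}, and tracking the constants through the finitely many sub-cases (the $\varepsilon_r$-shifts and the geometric discretization of $\Delta$ contributing only lower-order corrections), yields $\sum_j Q_j\le\sum_j\hat{Q}_j+236.5\,\prev$.
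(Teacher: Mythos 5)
Your easy direction and your ``priced part'' are fine and parallel the paper: the paper also splits $V_{ij}(t_{ij})$ into $\beta_{ij}+(V_{ij}(t_{ij})-\beta_{ij})^+$ on the tail event and bounds the $\beta$-part via Constraints~\CompareMarginalConstraint~and~\MarginalToGlobalConstraint~together with \Cref{lem:prev-useful} (getting $9\prev$ after the $\eps_r$ correction). The gap is in your ``excess part.'' After replacing $(V_{ij}(t_{ij})-\max(\beta_{ij},Q_j+\tau_i))^+$ by $\delta_{ij}$ and applying $\lambda\le\hat\lambda$, you are left with $\tfrac12\sum_{i,j}\E_{(\beta_{ij},\delta_{ij})\sim\cC_{ij}}\bigl[\delta_{ij}\cdot\Pr_{t_{ij}}[V_{ij}(t_{ij})\ge\max(\beta_{ij},Q_j+\tau_i)]\bigr]$, and you assert that ``coupling'' the per-item budget $\E[\delta_{ij}]\le d_i$ (Constraint~\BoundMeanDeltaConstraint) with the per-bidder tail budget $\sum_j\E_{\beta_{ij}}[\Pr[V_{ij}\ge\max(\beta_{ij},Q_j+\tau_i)]]\le\tfrac12$ and summing via Constraint~\BoundSumDeltaConstraint~gives $O(\prev)$. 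That coupling is never carried out, and it cannot be: $\delta_{ij}$ and the tail event are correlated through the joint law $\cC_{ij}$, so the only generic moves are $\E[\delta\cdot q]\le d_i\cdot\max q$ (losing a factor $m$) or $\E[\delta\cdot q]\le\delta_{\max}\cdot\E[q]$ with $\delta_{\max}=55\estprev$ (losing a factor $n$) --- exactly the two losses you name. One can arrange symmetric feasible-looking configurations (each $\cC_{ij}$ putting mass $p$ on a small $\beta$ with constant tail probability and $\delta\approx d_i/p$, with $np$ a small constant so that Constraint~\ReduceDemandConstaint~and the $\tau_i$ budget both hold) in which this aggregated quantity is $\Theta(m\cdot\estprev)$, so it is simply not bounded by $O(\prev)$ from the constraints you invoke. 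In other words, the step where you discard the per-type factor $(V_{ij}(t_{ij})-\beta_{ij})^+$ in favor of $\delta_{ij}$ throws away the structure that the bound actually needs.

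The paper's proof keeps that structure: it defines $u_{ij}(t_{ij},\beta_{ij},\delta_{ij})=(V_{ij}(t_{ij})-\beta_{ij})\cdot\ind[V_{ij}(t_{ij})\le\beta_{ij}+\delta_{ij}\wedge V_{ij}(t_{ij})\ge\max(\beta_{ij},Q_j+\tau_i)]$, uses the independence of the $\cC_{ij}$'s and the union bound (via the $\tau_i$ budget, Inequality~\eqref{inequ:Q and Q_hat-1}) to show $\sum_i\E\bigl[\sum_{t_i}f_i(t_i)\max_j u_{ij}\bigr]\ge\tfrac12\sum_{i,j}\E\bigl[\sum_{t_{ij}}f_{ij}(t_{ij})u_{ij}\bigr]$, and then bounds the expected maximum: $\eta_i(t_i,\cdot)$ is subadditive over independent items and $d_i$-Lipschitz (this is where Constraint~\BoundMeanDeltaConstraint~enters, as a Lipschitz constant, not as a direct bound on the excess), so \Cref{lemma:concentrate} lets one charge $\sum_i\E_{t_i}[\eta_i(t_i,[m])]$ to the revenue of the rationed two-part tariff with randomized posted prices (Mechanism~\ref{mech:variant-tpt}) plus $O(\sum_i d_i)$, and that auxiliary mechanism is effectively unit-demand, so its revenue is at most $24\prev$. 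This max-over-items/concentration/auxiliary-mechanism machinery is precisely what avoids the factor-$m$ or factor-$n$ loss, and it is where the constants ($89\prev+16\cdot24\prev=473\prev$, hence $236.5\prev$) come from; your proposal is missing this idea.
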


\begin{proof}

For every $j$, it's clear that $\hat{Q}_j\leq Q_j$ by the definition of $\hat{Q}_j$. It remains to bound $\sum_j(Q_j-\hat{Q}_j)$. In the proof, we abuse the notation and let $\hat\lambda_{ij}(\beta_{ij})=\sum_{\delta_{ij}\in\Delta}\hat\lambda_{ij}(\beta_{ij},\delta_{ij})$. Also for every $i,j,t_{ij},\beta_{ij}\in \cV_{ij}$, let $\lambda_{ij}(t_{ij},\beta_{ij})=\sum_{\delta_{ij}\in \Delta}\lambda_{ij}(t_{ij},\beta_{ij},\delta_{ij})$.

\begin{align*}
    &2\sum_{j\in [m]} \left(Q_j - \hat{Q}_j \right) \\
    \leq & \sum_{j} \sum_{i} \sum_{t_{ij}:V_{ij}(t_{ij}) \geq Q_j + \tau_i} f_{ij}(t_{ij}) V_{ij}(t_{ij}) \sum_{\substack{\beta_{ij}\in \cV_{ij}\\
    \delta_{ij} \in \Delta}} \lambda_{ij}(t_{ij},\beta_{ij}, \delta_{ij})\cdot \mathds{1}[V_{ij}(t_{ij}) \leq \beta_{ij}+ \delta_{ij}] \\
    \leq & \sum_{j} \sum_{i} \sum_{t_{ij}:V_{ij}(t_{ij}) \geq Q_j + \tau_i} f_{ij}(t_{ij}) \sum_{\substack{\beta_{ij}\in \cV_{ij}\\
    \delta_{ij} \in \Delta}} \left(\beta_{ij}+(V_{ij}(t_{ij})-\beta_{ij})^+\right)\cdot\lambda_{ij}(t_{ij},\beta_{ij}, \delta_{ij})\cdot \mathds{1}[V_{ij}(t_{ij}) \leq \beta_{ij}+ \delta_{ij}]\\
    \leq & \sum_{j} \sum_{i} \sum_{t_{ij}:V_{ij}(t_{ij}) \geq Q_j + \tau_i} f_{ij}(t_{ij}) \sum_{\beta_{ij}\in \cV_{ij}} \beta_{ij}\cdot\lambda_{ij}(t_{ij},\beta_{ij})\\
    &+\sum_{j} \sum_{i} \sum_{t_{ij}:V_{ij}(t_{ij}) \geq Q_j + \tau_i} f_{ij}(t_{ij}) \sum_{\substack{\beta_{ij}\in \cV_{ij}\\
    \delta_{ij} \in \Delta}} (V_{ij}(t_{ij})-\beta_{ij})^+\cdot\lambda_{ij}(t_{ij},\beta_{ij}, \delta_{ij})\cdot \mathds{1}[V_{ij}(t_{ij}) \leq \beta_{ij}+ \delta_{ij}]\\
\end{align*}

Here the first inequality uses the fact that $\lambda_{ij}(t_{ij},\beta_{ij},\delta_{ij})\cdot \ind[V_{ij}(t_{ij})\leq \beta_{ij}+\delta_{ij}]$ and\\ $\lambda_{ij}(t_{ij},\beta_{ij},\delta_{ij})\ind[V_{ij}(t_{ij})\leq \min\{\beta_{ij}+\delta_{ij},Q_j+\tau_i\}]$ can differ only when $V_{ij}(t_{ij})>Q_j+\tau_i \wedge V_{ij}(t_{ij})\leq \beta_{ij}+\delta_{ij}$. In the last inequality, we drop the indicator $\ind[V_{ij}(t_{ij})\leq \beta_{ij}+\delta_{ij}]$ for the first term. 

We bound the first term:
\notshow{
\begin{equation}\label{inequ:Q and Q_hat-term1}
\begin{aligned}
&\sum_{j} \sum_{i}  \sum_{t_{ij}:V_{ij}(t_{ij}) \geq Q_j+\tau_i} f_{ij}(t_{ij})\cdot \sum_{\beta_{ij}\in \cV_{ij}} \beta_{ij} \cdot \lambda_{ij}(t_{ij},\beta_{ij})\\
= & \sum_{i,j} \sum_{\beta_{ij}\leq Q_j+\tau_i} \beta_{ij} \sum_{t_{ij}:V_{ij}(t_{ij}) \geq Q_j+\tau_i} f_{ij}(t_{ij})\cdot  \lambda_{ij}(t_{ij},\beta_{ij})
+
\sum_{i,j}\sum_{\beta_{ij}> Q_j+\tau_i} \beta_{ij} \sum_{t_{ij}:V_{ij}(t_{ij}) \geq Q_j+\tau_i} f_{ij}(t_{ij})\cdot  \lambda_{ij}(t_{ij},\beta_{ij})\\
\leq &\sum_{i,j}
\sum_{\beta_{ij}\leq Q_j + \tau_i} \hat\lambda_{ij}(\beta_{ij}) \cdot \beta_{ij}
\sum_{t_{ij}:V_{ij}(t_{ij}) \geq Q_j+\tau_i} f_{ij}(t_{ij})
+ 
\sum_{i,j}
\sum_{\beta_{ij} > Q_j + \tau_i} 
2\hat\lambda_{ij}(\beta_{ij})\cdot\beta_{ij}\cdot\Pr_{t_{ij}\sim D_{ij}}[V_{ij}(t_{ij}) \geq \beta_{ij}]\\
\leq & 2\sum_{i,j} 
\sum_{\beta_{ij}\in \cV_{ij}} \hat\lambda_{ij}(\beta_{ij}) \cdot \beta_{ij}\cdot
\Pr_{t_{ij}\sim D_{ij}}[V_{ij}(t_{ij}) \geq \max(\beta_{ij}, Q_j+\tau_i)]\\
= & 2\sum_{i,j} 
\E_{\beta_{ij}\sim \cB_{ij}}\left[  \beta_{ij}\cdot
\Pr_{t_{ij}\sim D_{ij}}[V_{ij}(t_{ij}) \geq \max(\beta_{ij}, Q_j+\tau_i)]\right]\\
\leq & 8\cdot\prev
\end{aligned}
\end{equation}
}

\begin{equation}\label{inequ:Q and Q_hat-term1}
\begin{aligned}
&\sum_{j} \sum_{i}  \sum_{t_{ij}:V_{ij}(t_{ij}) \geq Q_j+\tau_i} f_{ij}(t_{ij})\cdot \sum_{\beta_{ij}\in \cV_{ij}} \beta_{ij} \cdot \lambda_{ij}(t_{ij},\beta_{ij})\\
= & \sum_{i,j} \sum_{\substack{\beta_{ij}\in \cV_{ij}\\\beta_{ij} < Q_j+\tau_i}} \beta_{ij} \sum_{t_{ij}:V_{ij}(t_{ij}) \geq Q_j+\tau_i} f_{ij}(t_{ij})\cdot  \lambda_{ij}(t_{ij},\beta_{ij})
+
\sum_{i,j}\sum_{\substack{\beta_{ij}\in \cV_{ij}\\\beta_{ij}\geq Q_j+\tau_i}} \beta_{ij} \sum_{t_{ij}:V_{ij}(t_{ij}) \geq Q_j+\tau_i} f_{ij}(t_{ij})\cdot  \lambda_{ij}(t_{ij},\beta_{ij})\\
\leq & \sum_{i,j} \sum_{\substack{\beta_{ij}\in \cV_{ij}\\\beta_{ij} < Q_j+\tau_i}} \beta_{ij} \sum_{t_{ij}:V_{ij}(t_{ij}) \geq Q_j+\tau_i} f_{ij}(t_{ij})\cdot  \lambda_{ij}(t_{ij},\beta_{ij})
\\
& \qquad\qquad\qquad\qquad + 
\sum_{i,j}\sum_{\substack{\beta_{ij}\in \cV_{ij}^0\\\beta_{ij}\geq Q_j+\tau_i}} \beta_{ij}^+ \sum_{t_{ij}:V_{ij}(t_{ij}) \geq Q_j+\tau_i} f_{ij}(t_{ij})\cdot  \left(\lambda_{ij}(t_{ij},\beta_{ij})+\lambda_{ij}(t_{ij},\beta_{ij}^+)\right)\\
\leq &\sum_{i,j}
\sum_{\substack{\beta_{ij}\in \cV_{ij}\\\beta_{ij} < Q_j + \tau_i} } \hat\lambda_{ij}(\beta_{ij}) \cdot \beta_{ij}
\sum_{t_{ij}:V_{ij}(t_{ij}) \geq Q_j+\tau_i} f_{ij}(t_{ij})
+ 
\sum_{i,j}
\sum_{\substack{\beta_{ij}\in \cV_{ij}^0\\\beta_{ij} \geq Q_j + \tau_i} }
2\hat\lambda_{ij}(\beta_{ij})\cdot\beta_{ij}^+\cdot\Pr_{t_{ij}\sim D_{ij}}[V_{ij}(t_{ij}) \geq \beta_{ij}]\\
& \qquad\qquad\qquad\qquad\qquad\qquad\qquad\qquad\qquad\qquad + 
\sum_{i,j}
\sum_{\substack{\beta_{ij}\in \cV_{ij}^+\\\beta_{ij} \geq Q_j + \tau_i} }
2\hat\lambda_{ij}(\beta_{ij})\cdot\beta_{ij}\cdot\Pr_{t_{ij}\sim D_{ij}}[V_{ij}(t_{ij}) \geq \beta_{ij}] \\
\leq &\sum_{i,j}
\sum_{\substack{\beta_{ij}\in \cV_{ij}\\\beta_{ij} < Q_j + \tau_i} } \hat\lambda_{ij}(\beta_{ij}) \cdot \beta_{ij}
\sum_{t_{ij}:V_{ij}(t_{ij}) \geq Q_j+\tau_i} f_{ij}(t_{ij})
+ 
\sum_{i,j}\left(
\sum_{\substack{\beta_{ij}\in \cV_{ij}\\\beta_{ij} \geq Q_j + \tau_i} }
2\hat\lambda_{ij}(\beta_{ij})\cdot\beta_{ij}\cdot\Pr_{t_{ij}\sim D_{ij}}[V_{ij}(t_{ij}) \geq \beta_{ij}] + \varepsilon_r |\cV_{ij}^0|\right)\\
\leq & 2\sum_{i,j} 
\sum_{\beta_{ij}\in \cV_{ij}} \hat\lambda_{ij}(\beta_{ij}) \cdot \beta_{ij}\cdot
\Pr_{t_{ij}\sim D_{ij}}[V_{ij}(t_{ij}) \geq \max(\beta_{ij}, Q_j+\tau_i)] + \varepsilon_r\cdot \sum_{i,j}|\cV_{ij}^0|\\
= & 2\sum_{i,j} 
\E_{\beta_{ij}\sim \cB_{ij}}\left[  \beta_{ij}\cdot
\Pr_{t_{ij}\sim D_{ij}}[V_{ij}(t_{ij}) \geq \max(\beta_{ij}, Q_j+\tau_i)]\right]+ \varepsilon_r\cdot \sum_{i,j}|\cV_{ij}^0|\\
\leq & 8\cdot\prev+ \varepsilon_r\cdot \sum_{i,j}|\cV_{ij}^0|\\
\leq & 9\cdot \prev
\end{aligned}
\end{equation}

The first inequality comes from the fact that for $\beta_{ij}\in \cV_{ij}^0$,
then $\beta_{ij} < \beta_{ij}^+$ and from the fact that for sufficiently small $\varepsilon_r>0$,
then $Q_j + \tau_i\leq\beta_{ij}$ iff $Q_j + \tau_i\leq\beta_{ij}^+$. The second inequality comes from Constraint~\CompareMarginalConstraint~and  \MarginalToGlobalConstraint~of the LP in \Cref{fig:bigLP} (or \Cref{fig:XOSLP}).\footnote{\label{fn:constraint 7}Note that this is the only place that Constraint~\MarginalToGlobalConstraint~is used in our proof.}
For the second last inequality, notice that by Constraint \ReduceDemandConstaint~of the LP in \Cref{fig:bigLP} (or \Cref{fig:XOSLP}), for every item $j$, 
\begin{align*}
    \sum_{i}\Pr_{\substack{t_{ij}\sim D_{ij}\\\beta_{ij}\sim \cB_{ij}}}[V_{ij}(t_{ij}) \geq \max(\beta_{ij}, Q_j+\tau_i)]]&\leq \sum_{i}\Pr_{\substack{t_{ij}\sim D_{ij}\\\beta_{ij}\sim \cB_{ij}}}[V_{ij}(t_{ij})\geq \beta_{ij}]\\
    &=\sum_{i}\sum_{\beta_{ij}\in\cV_{ij}}\hat\lambda_{ij}(\beta_{ij})\Pr_{t_{ij}}[V_{ij}(t_{ij})\geq \beta_{ij}]\leq \frac{1}{2}
\end{align*}

By the definition of $\tau_i$, for every buyer $i$ we have
$$\sum_{j}\Pr_{t_{ij}\sim D_{ij},\beta_{ij}\sim \cB_{ij}}[V_{ij}(t_{ij}) \geq \max(\beta_{ij}, Q_j+\tau_i)]]\leq\frac{1}{2}$$

The second last inequality then follows from \Cref{lem:prev-useful}. The last inequality follows from the fact that $\eps_r\leq \frac{\prev}{\sum_{i,j}|\cV_{ij}^0|}$.
For the second term, we have

\begin{align*}
&\sum_{j} \sum_{i} \sum_{t_{ij}:V_{ij}(t_{ij}) \geq Q_j + \tau_i} f_{ij}(t_{ij}) \sum_{\substack{\beta_{ij}\in \cV_{ij}\\\delta_{ij} \in \Delta_i}} (V_{ij}(t_{ij})-\beta_{ij})^+\cdot\lambda_{ij}(t_{ij},\beta_{ij}, \delta_{ij})\cdot \mathds{1}[V_{ij}(t_{ij}) \leq \beta_{ij}+ \delta_{ij}]\\
\leq & \sum_{j} \sum_{i} \sum_{\substack{\beta_{ij}\in \cV_{ij}\\\delta_{ij} \in \Delta_i}} \hat\lambda_{ij}(\beta_{ij}, \delta_{ij})\cdot\sum_{t_{ij}} f_{ij}(t_{ij})  (V_{ij}(t_{ij})-\beta_{ij})\cdot\mathds{1}[V_{ij}(t_{ij}) \leq \beta_{ij}+ \delta_{ij} \wedge V_{ij}(t_{ij})\geq \max(\beta_{ij}, Q_j + \tau_i)]\\
= & \sum_{i,j}\E_{(\beta_{ij},\delta_{ij})\sim \cC_{ij}}\left[\sum_{t_{ij}}f_{ij}(t_{ij})u_{ij}(t_{ij},\beta_{ij},\delta_{ij})\right],
\end{align*}

where
$$u_{ij}(t_{ij},\beta_{ij},\delta_{ij})=(V_{ij}(t_{ij})-\beta_{ij})\cdot\mathds{1}[V_{ij}(t_{ij}) \leq \beta_{ij}+ \delta_{ij} \wedge V_{ij}(t_{ij})\geq \max(\beta_{ij}, Q_j + \tau_i)]$$

We notice that by the definition of $\tau_i$, the following inequality holds for every buyer $i$.

\begin{equation}\label{inequ:Q and Q_hat-1} 
\sum_j\Pr_{t_{ij}\sim D_{ij},(\beta_{ij},\delta_{ij})\sim \cC_{ij}}[u_{ij}(t_{ij},\beta_{ij},\delta_{ij})>0]\leq \sum_{j}\Pr_{t_{ij}\sim D_{ij},\beta_{ij}\sim \cB_{ij}}[V_{ij}(t_{ij}) \geq \max(\beta_{ij}, Q_j+\tau_i)]]\leq\frac{1}{2}    
\end{equation}

Denote $\cC_i=\times_{j=1}^m \cC_{ij}$ and $\beta_{i}=(\beta_{ij})_{j\in [m]}, \delta_{i}=(\delta_{ij})_{j\in [m]}$, we have  
\begin{align*}
    &\sum_{i}\E_{(\beta_{i},\delta_{i})\sim \cC_{i}}\left[\sum_{t_i}f_i(t_i)\cdot\max_j u_{ij}(t_{ij},\beta_{ij},\delta_{ij})\right]\\
\geq & \sum_{i}\E_{(\beta_{i},\delta_{i})\sim \cC_{i}}\left[\sum_j\sum_{t_{ij}}f_{ij}(t_{ij})\cdot u_{ij}(t_{ij},\beta_{ij},\delta_{ij})\cdot \prod_{k\not=j}\Pr_{t_{ik}\sim D_{ik}}[u_{ik}(t_{ik},\beta_{ik},\delta_{ik})=0]\right]\\
=&\sum_{i}\sum_j\E_{(\beta_{ij},\delta_{ij})\sim \cC_{ij}}\left[\sum_{t_{ij}}f_{ij}(t_{ij})\cdot u_{ij}(t_{ij},\beta_{ij},\delta_{ij})\cdot \prod_{k\not=j}\Pr_{t_{ik}\sim D_{ik},(\beta_{ik},\delta_{ik})\sim \cC_{ik}}[u_{ik}(t_{ik},\beta_{ik},\delta_{ik})=0]\right]\\
\geq & \frac{1}{2}\cdot \sum_{i,j}\E_{(\beta_{ij},\delta_{ij})\sim \cC_{ij}}\left[\sum_{t_{ij}}f_{ij}(t_{ij})u_{ij}(t_{ij},\beta_{ij},\delta_{ij})\right]
\end{align*}

Here the equality uses the fact that all $\cC_{ij}$'s are independent. The last inequality comes from Inequality~\eqref{inequ:Q and Q_hat-1} and the union bound. Now the second term is bounded by \begin{align*}
& 2\cdot \sum_{i}\E_{(\beta_{i},\delta_{i})\sim \cC_{i}}\left[\sum_{t_i}f_i(t_i)\cdot\max_j u_{ij}(t_{ij},\beta_{ij},\delta_{ij})\right]\\
\leq & 2\cdot \sum_{i}\E_{(\beta_{i},\delta_{i})\sim \cC_{i}}\left[\sum_{t_i}f_i(t_i)\cdot\max_{j\in [m]}\left\{ (V_{ij}(t_{ij})-\beta_{ij})^+\cdot\mathds{1}[V_{ij}(t_{ij}) \leq \beta_{ij}+ \delta_{ij}]\right\}\right]
\end{align*}

\begin{definition}
For every $i,j,t_i\in \cT_i, S\subseteq [m]$, let 
$$\eta_i(t_i,S)=\E_{(\beta_{i},\delta_{i})\sim \cC_{i}}\left[\max_{j\in S}\left\{ (V_{ij}(t_{ij})-\beta_{ij})^+\cdot\mathds{1}[V_{ij}(t_{ij}) \leq \beta_{ij}+ \delta_{ij}]\right\}\right]$$

\notshow{For every $i,j,t_i\in \cT_i,\beta_i\in \times_j\cV_{ij}, S\subseteq [m]$, let 
$$\eta_i(t_i,\beta_i,S)=\E_{\delta_i}\left[\max_{j\in S}\left\{ (V_{ij}(t_{ij})-\beta_{ij})^+\cdot\mathds{1}[V_{ij}(t_{ij}) \leq \beta_{ij}+ \delta_{ij}]\right\}\right]$$
}

\end{definition}

\noindent Therefore, we can rewrite $2\cdot \sum_{i}\E_{(\beta_{i},\delta_{i})\sim \cC_{i}}\left[\sum_{t_i}f_i(t_i)\cdot\max_{j\in [m]}\left\{ (V_{ij}(t_{ij})-\beta_{ij})^+\cdot\mathds{1}[V_{ij}(t_{ij}) \leq \beta_{ij}+ \delta_{ij}]\right\}\right]$ as $\sum_i\sum_{t_i} f_i(t_i)\cdot \eta(t_i,[m])$.

\begin{definition}\label{def:Lipschitz}
A function $v(\cdot,\cdot)$ is \textbf{$a$-Lipschitz} if for any type $t,t'\in T$, and set $X,Y\subseteq [m]$,
$$\left|v(t,X)-v(t',Y)\right|\leq a\cdot \left(\left|X\Delta Y\right|+\left|\{j\in X\cap Y:t_j\not=t_j'\}\right|\right),$$ where $X\Delta Y=\left(X\setminus Y\right)\cup \left(Y\setminus X\right)$ is the symmetric difference between $X$ and $Y$.
\end{definition}

\begin{lemma}\label{lem:eta-lipschitz}
For every $i$, $\eta_i(\cdot, \cdot)$ is subadditive over independent items and $d_i$-Lipschitz. Note that $d_i$ is the variable in the LP in \Cref{fig:bigLP}.
\end{lemma}

\begin{proof}

For every $i,j,t_{ij},\beta_{ij},\delta_{ij}$, denote $h_{ij}(t_{ij},\beta_{ij},\delta_{ij})=(V_{ij}(t_{ij})-\beta_{ij})^+\cdot\mathds{1}[V_{ij}(t_{ij}) \leq \beta_{ij}+ \delta_{ij}]$.

We will first verify that for each $t_i\in T_i$, $\eta_i(t_i,\cdot)$ is monotone, subadditive and has no externalities.

Monotonicity: Let $S_1\subseteq S_2 \subseteq [m]$.
Then:
\begin{align*}
\eta_i(t_i,S_1)=\E_{(\beta_{i},\delta_{i})\sim \cC_{i}}\left[\max_{j\in S_1}\left\{ h_{ij}(t_{ij},\beta_{ij},\delta_{ij})\right\}\right] 
\leq  \E_{(\beta_{i},\delta_{i})\sim \cC_{i}}\left[\max_{j\in S_2}\left\{ h_{ij}(t_{ij},\beta_{ij},\delta_{ij})\right\}\right]
=  \eta_i(t_i,S_2)
\end{align*}

Subadditivity: For any set $S_1,S_2,S_3\subseteq [m]$ such that $S_1 \cup S_2 = S_3$,
it holds that:
\begin{align*}
\eta_i(t_i,S_3)&=\E_{(\beta_{i},\delta_{i})\sim \cC_{i}}\left[\max_{j\in S_3}\left\{ h_{ij}(t_{ij},\beta_{ij},\delta_{ij})\right\}\right] \\
&\leq\E_{(\beta_{i},\delta_{i})\sim \cC_{i}}\bigg[\max_{j\in S_1}\left\{ h_{ij}(t_{ij},\beta_{ij},\delta_{ij})\right\} + \max_{j\in S_2}\left\{ h_{ij}(t_{ij},\beta_{ij},\delta_{ij})\right\}\bigg] \\
&=\eta_i(t_i,S_1) + \eta_i(t_i,S_2)
\end{align*}
The first inequality follows because $S_1\cup S_2=S_3$.

No externalities: for every $S\subseteq [m]$ and $t_i,t_i'\in \cT_i$ such that $t_{ij}=t_{ij}'$ for every $j\in S$, we have
$$\eta_i(t_i,S)=\E_{(\beta_{i},\delta_{i})\sim \cC_{i}}\left[\max_{j\in S}\left\{ h_{ij}(t_{ij},\beta_{ij},\delta_{ij})\right\}\right]=\E_{(\beta_{i},\delta_{i})\sim \cC_{i}}\left[\max_{j\in S}\left\{ h_{ij}(t_{ij}',\beta_{ij},\delta_{ij})\right\}\right]=\eta_i(t_i',S)$$

Now we are going to prove that $\eta_i(\cdot,\cdot)$ is $d_i$-Lipschitz.
For $t_i,t_i'\in \cT_i$ and $X,Y \subseteq[m]$,
let $Z = \{j \in X\cap Y \land t_{ij}=t_{ij}'\}$.
It is enough to show that:
\begin{align*}
\eta_i(t_i,X) - \eta_i(t_i',Y) \leq {\left(|X\setminus Y|+(|X\cap Y|-|Z|)\right)\cdot d_i}= {(|X|-|Z|)\cdot d_i}\\ 
\eta_i(t'_i,Y) - \eta_i(t_i,X) \leq {\left(|Y\setminus X|+(|X\cap Y|-|Z|)\right)\cdot d_i}={(|Y|-|Z|)\cdot d_i}
\end{align*}

We are only going to show $\eta_i(t_i,X) - \eta_i(t_i',Y) \leq  {(|X|-|Z|)\cdot d_i} 
$,
since the other case is similar.
Because $\eta_i(t_i,\cdot)$ is monotone, it suffices to prove that:
$$
\eta_i(t_i,X) - \eta_i(t_i',Y) \leq \eta_i(t_i,X) - \eta_i(t_i',Z) \leq  {(|X|-|Z|)\cdot d_i } 
$$
For each $j \in Z$, $t_{ij} =t_{ij}'$, which implies that $\eta_i(t_i',Z) = \eta_i(t_i,Z)$.
Note that:
\begin{align*}
&\eta_i(t_i,X) - \eta_i(t_i',Z)\\ 
= & \eta_i(t_i,X) - \eta_i(t_i,Z)\\
= & \E_{(\beta_{i},\delta_{i})\sim \cC_{i}}\big[\max_{j\in X}\left\{ h_{ij}(t_{ij},\beta_{ij},\delta_{ij})\right\} - \max_{j\in Z}\left\{ h_{ij}(t_{ij},\beta_{ij},\delta_{ij})\right\} \big] \\
\leq & \E_{(\beta_{i},\delta_{i})\sim \cC_{i}}\big[\max_{j\in Z}\left\{ h_{ij}(t_{ij},\beta_{ij},\delta_{ij})\right\} +\sum_{j\in X \backslash Z} h_{ij}(t_{ij},\beta_{ij},\delta_{ij}) - \max_{j\in Z}\left\{ h_{ij}(t_{ij},\beta_{ij},\delta_{ij})\right\} \big] \\
= &  \E_{(\beta_{i},\delta_{i})\sim \cC_{i}} \left[\sum_{j\in X \backslash Z} h_{ij}(t_{ij},\beta_{ij},\delta_{ij}) \right]\\
= &  \sum_{j\in X \backslash Z} \E_{(\beta_{ij},\delta_{ij})\sim \cC_{ij}} \left[ h_{ij}(t_{ij},\beta_{ij},\delta_{ij})\right]\\
\leq &  \sum_{j\in X \backslash Z} \E_{(\beta_{ij},\delta_{ij})\sim \cC_{ij}} \left[ \delta_{ij} \right]\\
\leq & (|X|-|Z|)d_i 
\end{align*}
Here the second last inequality follows by Constraint \BoundMeanDeltaConstraint~of the LP in \Cref{fig:bigLP} (or \Cref{fig:XOSLP}).
\end{proof}

\begin{lemma}\label{lemma:concentrate}\cite{CaiZ17}
Let $g(t,\cdot)$ with $t\sim D=\prod_j D_j$ be a function drawn from a distribution that is  subadditive over independent items of ground set $I$. If $g(\cdot,\cdot)$ is $c$-Lipschitz, then if we let $a$ be the median of the value of the grand bundle $g(t,I)$, i.e. $a=\inf\left\{x\geq 0: \Pr_t[g(t,I)\leq x]\geq \frac{1}{2}\right\}$,
$$\mathds{E}_t[g(t,I)]\leq 2a+\frac{5c}{2}.$$
\end{lemma}


To finish the proof of \Cref{lem:Q and Q-hat}, we will bound $\sum_i\sum_{t_i}f_i(t_i)\cdot \eta_i(t_i,[m])$ using a modified two-part tariff mechanism. 
Consider the following variant of two-part tariff $\cM$, with a randomized posted price $\beta_{ij}\sim \cB_{ij}$ for buyer $i$ and item $j$, and restricting the buyer to purchase at most one item. The procedure of the mechanism is shown in Mechanism~\ref{mech:variant-tpt}~\footnote{The result holds for any buyers' order, we choose the lexicographical order to keep the notation light.}.

\begin{algorithm}[ht]
\floatname{algorithm}{Mechanism}
\begin{algorithmic}[1]
\setcounter{ALG@line}{-1}
\State Before the mechanism starts, the seller determines a \textbf{distribution} of posted price $\cB_{ij}$ for every buyer $i$ and item $j$. Recall that $\cB_{ij}$ is the marginal distribution of $\beta_{ij}$. 
\State Bidders arrive sequentially in the lexicographical order.
\State When every buyer $i$ arrives, the seller shows her the set of available items $S_i(t_{<i},\beta_{<i})\subseteq [m]$ (see the remark below), as well as the \emph{distribution} of the posted prices $\{\cB_{ij}\}_{j\in [m]}$. 
\State Buyer $i$ is asked to pay an \emph{entry fee} defined as follows:

\centerline{$\xi_i(S_i(t_{<i},\beta_{<i}))=\Med_{t_i\sim D_i}\{\eta_i(t_i,S_i(t_{<i},\beta_{<i}))\}$.} \noindent Here $\Med_x[h(x)]$ denotes the median of a non-negative function $h(x)$ on random variable $x$, i.e.
$\Med_x[h(x)]=\inf\{a\geq 0:\Pr_{x}[h(x)\leq a]\geq\frac{1}{2}\}$.
\State If buyer $i$ (with type $t_i$) agrees to pay the entry fee, then the seller will sample a realized posted price $\beta_{ij}\sim \cB_{ij}$ for every available item $j\in S_i(t_{<i},\beta_{<i})$. The buyer is restricted to purchase \textbf{at most one} item. The buyer then either chooses her favorite item $j^*=\argmax_{j\in S_i(t_{<i},\beta_{<i})}\left(V_{ij}(t_{ij})-\beta_{ij}\right)$, and pays $\beta_{ij^*}$, or leaves with nothing if $V_{ij}(t_{ij})<\beta_{ij},\forall j\in S_i(t_{<i},\beta_{<i})$. If the buyer refuses to pay the entry fee, she gets nothing and pays 0. 
\end{algorithmic}
\caption{{\sf \quad The Rationed Two-part Tariff with Randomized Posted Price $\cM$}}\label{mech:variant-tpt}
\end{algorithm}

\begin{remark}
We notice that in Mechanism~\ref{mech:variant-tpt}, the set of available items when each buyer $i$ comes to the auction, depends on both $t_{<i}$ and the realized prices for the first $i-1$ buyers (denoted by $\beta_{<i}$). Let $S_i(t_{<i},\beta_{<i})$ be the random set of available items when buyer $i$ comes to the auction. Let $S_1(t_{<1},\beta_{<1})=[m]$.
\end{remark} 

It's not hard to see that $\cM$ stated in Mechanism~\ref{mech:variant-tpt} is BIC and IR: When every buyer sees the set of available items and the distribution of posted prices, she can calculate her expected surplus of this set, over the randomness of the posted prices. Then she will accept the entry fee if and only if the expected surplus is at least the entry fee.

In Mechanism~\ref{mech:variant-tpt}, by union bound, for every item $j$,

$$\Pr[j\in S_i(t_{<i},\beta_{<i})]\geq 1-\sum_{k<i}\Pr_{t_{ij}\sim D_{ij},\beta_{ij}\sim \cB_{ij}}[V_{ij}(t_{ij})\geq \beta_{ij}]\geq \frac{1}{2}\quad{\text{ (Due to Constraint~\ReduceDemandConstaint)}}$$

We notice that for every realization of $t_{<i},\beta_{<i}$, after seeing the remaining item set $S_i(t_{<i},\beta_{<i})$, buyer $i$'s expected surplus if she enters the mechanism is:
$$\E_{(\beta_{i},\delta_{i})\sim \cC_{i}}\left[\max_{j\in S_i(t_{<i},\beta_{<i})}\left\{ (V_{ij}(t_{ij})-\beta_{ij})^+\right\}\right]\geq \eta_i(t_i,S_i(t_{<i},\beta_{<i})).$$

Thus the buyer will accepts the entry fee with probability at least $1/2$. Hence

\begin{equation}\label{inequ:Q and Q_hat-term2}
\begin{aligned}
\rev(\cM) & \geq  \frac{1}{2}\sum_i\E_{t_{<i},\beta_{<i}}[\xi_i(S_i(t_{<i},\beta_{<i}))]\geq \frac{1}{2}\cdot \sum_i(\frac{1}{2}\E_{t_i,t_{<i},\beta_{<i}}[\eta_i(t_i,S_i(t_{<i},\beta_{<i}))]-\frac{5}{4}d_i)\\
 & \geq \frac{1}{2}\cdot \sum_i(\frac{1}{4}\E_{t_i}[\eta_i(t_i,[m])]-\frac{5}{4}d_i) \geq\frac{1}{8}\cdot \sum_i \E_{t_i}[\eta_i(t_i,[m])]-5\cdot \prev 
\end{aligned}
\end{equation}



Here the second inequality is obtained by applying \Cref{lemma:concentrate} to function $\eta_i(t_i,\cdot)$ and ground set $I=S_i(t_{<i},\beta_{<i})$ for every $t_{<i},\beta_{<i}$. The last inequality comes from constraint \BoundSumDeltaConstraint~of the LP in \Cref{fig:bigLP}. The third inequality comes from the following:  


Fix any $t_i,\beta_i,\delta_i$. Let $j^*=\argmax_{j\in [m]}(V_{ij}(t_{ij})-\beta_{ij})^+\cdot\mathds{1}[V_{ij}(t_{ij}) \leq \beta_{ij}+ \delta_{ij}]$. Then 

\begin{align*}
&\E_{t_{<i},\beta_{<i}}\left[\max_{j\in S_i(t_{<i},\beta_{<i})}\left\{ (V_{ij}(t_{ij})-\beta_{ij})^+\cdot\mathds{1}[V_{ij}(t_{ij}) \leq \beta_{ij}+ \delta_{ij}]\right\}\right]\\
\geq & \E_{t_{<i},\beta_{<i}}\left[\max_{j\in [m]}\left\{ (V_{ij}(t_{ij})-\beta_{ij})^+\cdot\mathds{1}[V_{ij}(t_{ij}) \leq \beta_{ij}+ \delta_{ij}]\right\}\cdot \ind[j^*\in S_i(t_{<i},\beta_{<i})]\right]\\
= & \max_{j\in [m]}\left\{ (V_{ij}(t_{ij})-\beta_{ij})^+\cdot\mathds{1}[V_{ij}(t_{ij}) \leq \beta_{ij}+ \delta_{ij}]\right\}\cdot \Pr_{t_{<i},\beta_{<i}}[j^*\in S_i(t_{<i},\beta_{<i})]\\
\geq & \frac{1}{2}\cdot \max_{j\in [m]}\left\{ (V_{ij}(t_{ij})-\beta_{ij})^+\cdot\mathds{1}[V_{ij}(t_{ij}) \leq \beta_{ij}+ \delta_{ij}]\right\}
\end{align*}

Taking expectation over $t_i$ and $(\beta_i,\delta_i)\sim \cC_i$ on both sides, we have 
$$\E_{t_i,t_{<i},\beta_{<i}}[\eta_i(t_i,S_i(t_{<i},\beta_{<i}))]\geq \frac{1}{2}\E_{t_i}[\eta_i(t_i,[m])],$$


 
which is exactly the third inequality of \eqref{inequ:Q and Q_hat-term2}. By combining Inequalities \eqref{inequ:Q and Q_hat-term1} and \eqref{inequ:Q and Q_hat-term2}, we have
$$2\sum_j(Q_j - \hat{Q}_j) \leq 89\cdot\prev + 16\cdot\rev(\cM)$$

Finally, since in $\cM$, each buyer is restricted to purchase at most one item, it can be treated as a BIC and IR mechanism in the unit-demand setting. By \cite{ChawlaHMS10,KleinbergW12,CaiDW16}, $\rev(\cM)\leq 24\cdot \prev$. We complete our proof for Lemma~\ref{lem:Q and Q-hat}.


\notshow{
\begin{align*}
& \sum_{j\in[m]}\sum_{i\in[n]} \sum_{\substack{t_{i,j}\in T_{i,j}\\V_{i,j}(t_{i,j}) \geq Q_j^* + \tau_i}} f_{i,j}(t_{i,j})  \sum_{\beta\in V_{i,j}(T_{i,j})}
\sum_{\delta \in \Delta V_{i}} (V_{i,j}(t_{i,j})-\beta)\cdot
\lambda_{i,j}(t_{i,j},\beta, \delta)
\mathds{1}[\delta + \beta \geq V_{i,j}(t_{i,j}) \geq \beta ] \\
= & \sum_{j\in[m]}\sum_{i\in[n]} \sum_{t_{i,j}\in T_{i,j}} f_{i,j}(t_{i,j})  \sum_{\beta\in V_{i,j}(T_{i,j})}
\sum_{\delta \in \Delta V_{i}} (V_{i,j}(t_{i,j})-\beta)\cdot
\lambda_{i,j}(t_{i,j},\beta, \delta)
\mathds{1}[\delta + \beta \geq V_{i,j}(t_{i,j}) \geq \max( \beta, Q_j^* + \tau_i) ] \\
\leq & \sum_{j\in[m]} \sum_{i\in[n]} \sum_{t_{i,j}\in T_{i,j}} f_{i,j}(t_{i,j})  \sum_{\beta\in V_{i,j}(T_{i,j})}
 (V_{i,j}(t_{i,j})-\beta) \sum_{\delta \in \Delta V_{i}}
\lambda_{i,j}'(\beta, \delta)
\mathds{1}[\delta + \beta \geq V_{i,j}(t_{i,j}) \geq \max( \beta, Q_j^* + \tau_i) ] 
\end{align*}

Observe that the quantity above is the sum of the truncated utilities if we pretend that the buyers are additive,
when the price is set to 
$\beta_{i,j}$
of the $i$-th agent for the $j$-th item,
where $\beta_{i,j}$ and the truncation $\delta_{i,j}$ is sampled (independently from the other items and agents, but the distribution between $\beta_{i,j}$ and $\delta_{i,j}$ is coupled) from $B_{i,j} \times \Delta_{i,j}$.
By the constraint of the program,
the expected truncated utility of the $i$-th agent is at most $\delta_i$.
Let

$$u_{ADD} = \sum_{i\in[n]} \sum_{t_{i,j}\in T_{i,j}} f_{i,j}(t_{i,j})  \sum_{\beta\in V_{i,j}(T_{i,j})}
 (V_{i,j}(t_{i,j})-\beta) \sum_{\delta \in \Delta V_{i}}
\lambda_{i,j}'(\beta, \delta)
\mathds{1}[\delta + \beta \geq V_{i,j}(t_{i,j}) \geq \max( \beta, Q_j^* + \tau_i) ] 
$$

Let $u_{UD}$ be the expected truncated utility if the buyer was unit-demand,
notice that a lower bound of $u_{UD}$ is:
\begin{align*}
u_{UD} \geq & \sum_{i\in[n]} \sum_{t_{i,j}\in T_{i,j}} f_{i,j}(t_{i,j})  \sum_{\beta\in V_{i,j}(T_{i,j})}
 (V_{i,j}(t_{i,j})-\beta) \sum_{\delta \in \Delta V_{i}}
\lambda_{i,j}'(\beta, \delta)
\mathds{1}[\delta + \beta \geq V_{i,j}(t_{i,j}) > \max( \beta, Q_j^* + \tau_i) ] \\
& \quad \cdot \prod_{k\neq i}\Pr_{\substack{\beta \sim B_{i,k}\\t_{i,k}\sim D_{i,k}}}[V_{i,k}(t_{i,k}) \leq \max( \beta, Q_k^* + \tau_i) ] \\
\geq & \sum_{i\in[n]} \sum_{t_{i,j}\in T_{i,j}} f_{i,j}(t_{i,j})  \sum_{\beta\in V_{i,j}(T_{i,j})}
 (V_{i,j}(t_{i,j})-\beta) \sum_{\delta \in \Delta V_{i}}
\lambda_{i,j}'(\beta, \delta)
\mathds{1}[\delta + \beta \geq V_{i,j}(t_{i,j}) > \max( \beta, Q_j^* + \tau_i) ] \frac{1}{2} \\
= & \frac{u_{ADD}}{2}
\end{align*}

Now we are going to show that there exists a simple mechanism that generates revenue close to $u_{UD}$.
Let $B\times \Delta$ be the coupled distribution of $B_i \times \Delta_i$.
Let 
$$u^T_i(t, X) = \E_{(\beta, \delta) \sim B \times \Delta} [ \max_{j \in X} (V_{i,j}(t_{i,j})- \beta_j)\mathds{1}[\delta_{i,j} + \beta_j \geq V_{i,j}(t_{i,j}) \geq \beta_j ]$$

Note that $\sum_{i\in[n]}\E_{t_i \sim \DD_i}[u^T_i(t, [m])] \geq u_{UD}$.

\begin{lemma}
The function $u^T_i(\cdot, \cdot)$ is $\delta_i$-Lipschitz and subadditive over the items.
\end{lemma}

}


\notshow{

\todo{Let $t_i \in T_i$.
We denote by $W_{t_i}$ the the polytope of feasible welfares normalised to $[0,1]$  when the $i$-th agent has type $t_i$ (note that this can be achieved by dividing the supporting price that the agent has for the allocation by $V_{i,j}(t_{i,j})$).
Let $W_i$ be a mixture of polytopes $\{W_{t}\}_{ t \in T}$ with distribution $\{f(t)\}_{t \in T}$.
The bit complexity of the polytope depends on the supporting prices, which (I believe) its reasonable to assume that are bounded.

Note that for $t_i \in T_i$,
we can optimize any objective over $W_{t_i}$ by using a demand oracle. 
By sampling we can get a sampling an empirical polytope close to the polytope of $W_i$. Thus we can calculate the value of the LP in \Cref{fig:bigLP} with respect to the sampled polytope.

Observe that the way $W_i$ is defined here,
is the scaled by $f_i(t_i)$ version of the $W_i$ that was defined in the program

}

}

\end{proof}

\subsection{Analyzing the Revenue of $\Mtpt$ and the Proof of Theorem~\ref{thm:bounding-lp-simple-mech-XOS}}\label{sec:proof of bounding LP with simple mech}

In this section, we will show that $\sum_{j\in[m]} Q_j$ can be bounded using the revenue of the two-part tariff $\Mtpt$ defined in Mechanism~\ref{def:constructed-SPEM}. To analyze the revenue of $\Mtpt$, we require the following definition.





\begin{definition}\label{def:mu}


For any buyer $i$ and type $t_i$, let $C_i(t_i)=\{j\in [m]\mid V_{ij}(t_{ij})\leq Q_j+\tau_i\}$. For any $t_i\in \cT_i$ and set $S\subseteq [m]$, let

$$\mu_i(t_i,S)=\max_{S'\subseteq S}\left(v_i(t_i,S'\cap C_i(t_i))-\sum_{j\in S'}Q_j\right)$$ 
\end{definition}

\begin{lemma}\label{lem:mu_i lipschitz}
For every $i$, {if $v_i(\cdot,\cdot)$ is subadditive over  independent itmes}, then $\mu_i(\cdot, \cdot)$ is subadditive over independent items and $\tau_i$-Lipschitz.
\end{lemma}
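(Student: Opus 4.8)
\textbf{Proof plan for Lemma~\ref{lem:mu_i lipschitz}.}

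The plan is to verify the two claimed properties of $\mu_i(\cdot,\cdot)$ directly from the definition, reducing each to the corresponding property of $v_i(\cdot,\cdot)$ together with simple manipulations of the $\max_{S'\subseteq S}$ operator and the fixed threshold set $C_i(t_i)$. Throughout, fix bidder $i$ and recall that $C_i(t_i)=\{j\in[m]\mid V_{ij}(t_{ij})\le Q_j+\tau_i\}$ depends only on the coordinates $t_{ij}$ (since each $V_{ij}$ depends only on $t_{ij}$), which is exactly the no-externalities structure we will need.

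First I would check \emph{subadditivity over independent items}, which breaks into three sub-claims: no externalities, monotonicity, and subadditivity of the set function $\mu_i(t_i,\cdot)$ for each fixed $t_i$. For no externalities, note that if $t_i,t_i'$ agree on coordinates in $S$, then $C_i(t_i)\cap S = C_i(t_i')\cap S$, and for any $S'\subseteq S$ we have $v_i(t_i,S'\cap C_i(t_i)) = v_i(t_i',S'\cap C_i(t_i'))$ since $v_i$ has no externalities and $S'\cap C_i(t_i)\subseteq S$; hence $\mu_i(t_i,S)=\mu_i(t_i',S)$. Monotonicity in $S$ is immediate since enlarging $S$ only enlarges the feasible set of $S'$ in the maximization. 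For subadditivity, given $S = S_1\cup S_2$ and the optimal $S'\subseteq S$, write $S_1' = S'\cap S_1$ and $S_2' = S'\setminus S_1 \subseteq S_2$; then using $v_i(t_i,(S_1'\cup S_2')\cap C_i(t_i)) \le v_i(t_i,S_1'\cap C_i(t_i)) + v_i(t_i,S_2'\cap C_i(t_i))$ (subadditivity of $v_i$) and splitting the price term $\sum_{j\in S'}Q_j = \sum_{j\in S_1'}Q_j + \sum_{j\in S_2'}Q_j$, we get $\mu_i(t_i,S) \le \mu_i(t_i,S_1) + \mu_i(t_i,S_2)$.

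Next I would establish the \emph{$\tau_i$-Lipschitz} property (\Cref{def:Lipschitz}): for any $t_i,t_i'\in\cT_i$ and $X,Y\subseteq[m]$, $|\mu_i(t_i,X)-\mu_i(t_i',Y)| \le \tau_i\cdot(|X\Delta Y| + |\{j\in X\cap Y: t_{ij}\ne t_{ij}'\}|)$. By symmetry and since $\mu_i(t_i,\cdot)$ is monotone, it suffices (as in the proof of \Cref{lem:eta-lipschitz}) to bound $\mu_i(t_i,X) - \mu_i(t_i',Z)$ where $Z = \{j\in X\cap Y: t_{ij}=t_{ij}'\}$, and by no externalities $\mu_i(t_i',Z)=\mu_i(t_i,Z)$, so I need $\mu_i(t_i,X)-\mu_i(t_i,Z)\le (|X|-|Z|)\tau_i$. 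Take $S'\subseteq X$ achieving $\mu_i(t_i,X)$; then
\begin{align*}
\mu_i(t_i,X) &= v_i(t_i,S'\cap C_i(t_i)) - \sum_{j\in S'}Q_j\\
&\le v_i(t_i,(S'\cap Z)\cap C_i(t_i)) + v_i(t_i,(S'\setminus Z)\cap C_i(t_i)) - \sum_{j\in S'\cap Z}Q_j - \sum_{j\in S'\setminus Z}Q_j\\
&\le \mu_i(t_i,Z) + \left(v_i(t_i,(S'\setminus Z)\cap C_i(t_i)) - \sum_{j\in S'\setminus Z}Q_j\right),
\end{align*}
where the second inequality uses that $S'\cap Z\subseteq Z$ is a feasible choice in the definition of $\mu_i(t_i,Z)$. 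It remains to bound the parenthesized term by $(|X|-|Z|)\tau_i \ge |S'\setminus Z|\,\tau_i$. For each $j\in (S'\setminus Z)\cap C_i(t_i)$ we have, by subadditivity of $v_i$ and the single-item definition $V_{ij}$, that $v_i(t_i,(S'\setminus Z)\cap C_i(t_i)) \le \sum_{j\in (S'\setminus Z)\cap C_i(t_i)} V_{ij}(t_{ij}) \le \sum_{j\in (S'\setminus Z)\cap C_i(t_i)} (Q_j+\tau_i)$ by the definition of $C_i(t_i)$; subtracting $\sum_{j\in S'\setminus Z}Q_j \ge \sum_{j\in (S'\setminus Z)\cap C_i(t_i)}Q_j$ leaves at most $\sum_{j\in (S'\setminus Z)\cap C_i(t_i)}\tau_i \le |S'\setminus Z|\,\tau_i \le (|X|-|Z|)\tau_i$. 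The symmetric bound on $\mu_i(t_i',Y)-\mu_i(t_i,X)$ is analogous.

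The main obstacle I anticipate is the Lipschitz bound: one must be careful that intersecting with $C_i(t_i)$ (which itself shifts as the type changes) does not break the argument, and that the per-item slack is genuinely $\tau_i$ rather than something larger — this is exactly where the definition $C_i(t_i)=\{j: V_{ij}(t_{ij})\le Q_j+\tau_i\}$ is used crucially, since it is precisely the threshold that makes the ``overshoot'' $V_{ij}(t_{ij})-Q_j$ bounded by $\tau_i$ on $C_i(t_i)$. Everything else is routine manipulation of the $\max$ over subsets, mirroring the structure of the proof of \Cref{lem:eta-lipschitz}.
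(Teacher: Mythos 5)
Your proposal is correct and follows essentially the same route as the paper's proof: verify monotonicity, subadditivity, and no externalities of $\mu_i(t_i,\cdot)$ directly from those of $v_i$, then prove the Lipschitz bound by reducing to $\mu_i(t_i,X)-\mu_i(t_i,Z)$ via no externalities and monotonicity and charging each item of $S'\setminus Z$ at most $\tau_i$ through the definition of $C_i(t_i)$ and the nonnegativity of the $Q_j$'s. The only cosmetic difference is that you bound the leftover set by a sum of singleton values $V_{ij}(t_{ij})$ via subadditivity, while the paper bounds the per-item surplus $v_i(t_i,\{j\}\cap C_i(t_i))-Q_j$ by $(V_{ij}(t_{ij})-Q_j)^+\ind[V_{ij}(t_{ij})\le Q_j+\tau_i]$ inside the max; these are the same estimate.
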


\begin{proof}
First we are going to show that $\mu_i(t,\cdot)$ is monotone.
Note that for sets $S_1 \subseteq S_2$ the following holds:
\begin{align*}
    \mu_i(t_i,S_1)=& \max_{S'\subseteq S_1}\left(v_i(t_i,S'\cap C_i(t_i))-\sum_{j\in S'}Q_j\right)\\
    \leq &\max_{S'\subseteq S_2}\left(  v_i(t_i, S'\cap C_i(t_i)) -\sum_{j\in S'}Q_j\right)\\
    = & \mu_i(t_i,S_2)
\end{align*}

Now we are going to show that $\mu_i(t,\cdot)$ is subadditive.
Let $S_1,S_2,S_3\subseteq[m]$ such that $S_1 \cup S_2 = S_3$ and $S_1 \cap S_2 = S_c$.
Let $S_a = S_1 \backslash S_c$ and $S_b=S_2$,
then for any $t_i\in T_i$ we have the following:
\begin{align*}
    \mu_i(t_i,S_3)=& \max_{S'\subseteq S_3}\left(v_i(t_i,S'\cap C_i(t_i))-\sum_{j\in S'}Q_j\right)\\
    \leq &\max_{S'\subseteq S_3}\left(  v_i(t_i,\left(S_a\cap S'\right)\cap C_i(t_i)) + v_i(t_i,\left(S_b\cap S'\right)\cap C_i(t_i))-\sum_{j\in S'}Q_j\right)\\
    = & \max_{S'\subseteq S_3}\left(  \left(v_i(t_i,\left(S_a\cap S'\right)\cap C_i(t_i))- \sum_{j\in S_a\cap S'}Q_j \right)+ \left(v_i(t_i,\left(S_b\cap S'\right)\cap C_i(t_i))-\sum_{j\in S_b\cap S'}Q_j\right)\right)\\
    \leq & \max_{S'\subseteq S_a}\left(v_i(t_i,S'\cap C_i(t_i))-\sum_{j\in S'}Q_j\right) + \max_{S'\subseteq S_b}\left(v_i(t_i,S'\cap C_i(t_i))-\sum_{j\in S'}Q_j\right)\\
    = & \mu_i(t_i,S_a) + \mu_i(t_i,S_b) \\
    \leq & \mu_i(t_i,S_1) + \mu_i(t_i,S_2)
\end{align*}

The first inequality follows because $v_i(t_i,\cdot)$ is a subadditive function and $S_a\cup S_b=S_3$. The second inequality follows because {$\max$ is subadditive.}
The final inequality follows from the fact that $S_b =S_2$, $S_1 \supseteq S_a$ and that $\mu_i(t_i,\cdot)$ is monotone.

We now prove that $\mu_i(\cdot,\cdot)$ has no externalities. Fix any $S\subseteq [m]$ and $t_i,t_i'\in \cT_i$ such that $t_{ij}=t_{ij}'$ {for all $j\in S$}. We notice that by the definition of $C_i$, $S'\cap C_i(t_i)=S'\cap C_i(t_i')$ for all $S'\subseteq S$. Since $v_i(\cdot,\cdot)$ has no externalities, $v_i(t_i,S'\cap C_i(t_i))=v_i(t_i',S'\cap C_i(t_i'))$ for every $S'\subseteq S$. Thus $\mu_i(t_i,S)=\mu_i(t_i',S)$.

Now we are going to show that $\mu_i(\cdot,\cdot)$ is $\tau_i$-Lipschitz.
Let $t_i,t_i' \in \cT_i$ and $X,Y \subseteq[m]$ and $c= |\{j \in [m]: j \in X \Delta Y \text{ or }t_{ij} \neq t_{ij}'\}|$,~\footnote{$\Delta$ stands for the symmetric difference between two sets.} we need to show that:
$$
|\mu_i(t_i,X) - \mu_i(t_i',Y)| \leq c\cdot \tau_i
$$
Let $Z = \{j :j \in X\cap Y \text{ and } t_{ij}=t_{ij}'\}$.
Since $\mu_i(t_i,\cdot)$ is monotone, in order to show that $\mu_i(\cdot,\cdot)$ is $\tau_i$-Lipschitz,
it is enough to show that
\begin{align*}
\mu_i(t_i,X) - \mu_i(t_i',Y) \leq \mu_i(t_i,X) - \mu_i(t_i',Z) \leq c\cdot \tau_i \\
\mu_i(t'_i,Y) - \mu_i(t_i,X) \leq \mu_i(t'_i,Y) - \mu_i(t_i,Z) \leq c\cdot \tau_i
\end{align*}
We are only going to prove that $\mu_i(t_i,X) - \mu_i(t_i',Z) \leq c\cdot \tau_i$ since the other case is similar.
Because for each $j\in Z$, $t_{i,j}'=t_{i,j}$, then $\mu_i(t_i',Z)= \mu_i(t_i,Z)$.
We have
\begin{align*}
    \mu_i(t_i,X)=& \max_{S'\subseteq X}\left(v_i(t_i,S'\cap C_i(t_i))-\sum_{j\in S'}Q_j\right)\\
    \leq &\max_{S'\subseteq X}\left(\sum_{j\in S'\backslash Z}\left(v_i(t_i,\{j\}\cap C_i(t_i)) - Q_j \right) +  \left(v_i(t_i,(Z\cap S')\cap C_i(t_i))-\sum_{j\in Z\cap S'}Q_j\right)\right)\\
    \leq &\max_{S'\subseteq X}\left(\sum_{j\in S'\backslash Z}\left(V_{ij}(t_{ij}) - Q_j \right)^+\ind[V_{ij}(t_{ij}) \leq Q_j + \tau_i] +  \left(v_i(t_i,(Z\cap S')\cap C_i(t_i))-\sum_{j\in Z\cap S'}Q_j\right)\right)\\
    =& \max_{S'\subseteq Z}  \left(v_i(t_i,S'\cap C_i(t_i))-\sum_{j\in  S'}Q_j\right) + \sum_{j\in X\backslash Z}\left(V_{ij}(t_{ij}) - Q_j \right)^+\ind[V_{ij}(t_{ij}) \leq Q_j + \tau_i]\\
    \leq & \max_{S'\subseteq Z}  \left(v_i(t_i,S'\cap C_i(t_i))-\sum_{j\in S'}Q_j\right) + (|X| - |Z|)\tau_i\\
    \leq &\mu_i(t_i,Z) +  c\cdot \tau_i
\end{align*}
The first inequality follows because $v_i(t_i,\cdot)$ is subadditive.
The second inequality follows because $v_i(t_i,\{j\}\cap C_i(t_i)) - Q_j \leq (V_{ij}(t_{ij}) - Q_j)^+\ind[V_{ij}(t_{ij})\leq Q_j + \tau_i]$.
\end{proof}

\begin{lemma}\label{lem:lower bounding mu}
	For every type profile $t\in \cT$, let $\textsc{SOLD}(t)$ be the set of items sold in mechanism $\Mtpt$. Then
\begin{align*}
\E_{t}\left[\sum_{i\in[n]} \mu_i\left(t_i,S_i(t_{< i})\right) \right]&\geq \sum_j \Pr_{t}[j\notin \sold(t)]\cdot(2\hat{Q}_j-Q_j)\\
&\geq \sum_{j} \Pr_t\left[j\notin \sold(t)\right]\cdot Q_j - 473\cdot \prev 
\end{align*}
\end{lemma}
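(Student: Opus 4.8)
The plan is to reduce the second inequality to the first one together with \Cref{lem:Q and Q-hat_main body}, and then to prove the first inequality by (i) bounding $2\hat{Q}_j-Q_j$ from above by a nonnegative quantity $B_j$ that depends only on the LP variables indexed by item $j$, and (ii) showing $\E_{t}[\sum_i \mu_i(t_i,S_i(t_{<i}))]\ge \sum_j \Pr_t[j\notin\sold(t)]\cdot B_j$. For the reduction: since $Q_j\ge \hat{Q}_j$ (hence $Q_j-\hat Q_j\ge 0$) and $\Pr_t[j\notin\sold(t)]\in[0,1]$, we have $\sum_j \Pr_t[j\notin\sold(t)](2\hat Q_j-Q_j)=\sum_j \Pr_t[j\notin\sold(t)]Q_j - 2\sum_j \Pr_t[j\notin\sold(t)](Q_j-\hat Q_j)\ge \sum_j \Pr_t[j\notin\sold(t)]Q_j - 2\sum_j(Q_j-\hat Q_j)\ge \sum_j \Pr_t[j\notin\sold(t)]Q_j - 473\cdot\prev$, using $\sum_j(Q_j-\hat Q_j)\le 236.5\cdot\prev$ from \Cref{lem:Q and Q-hat_main body}. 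So all the content is in the first inequality.

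For step (i), I would write $2\hat Q_j=A_j+Q_j\cdot g_j$, where $g_j$ collects the ``truncated allocation mass'' $\sum_i\sum_{t_{ij}}f_{ij}(t_{ij})\sum_{\beta_{ij},\delta_{ij}}\lambda_{ij}(t_{ij},\beta_{ij},\delta_{ij})\ind[V_{ij}(t_{ij})\le\min\{\beta_{ij}+\delta_{ij},Q_j+\tau_i\}]$ and $A_j$ is the same sum with an extra factor $(V_{ij}(t_{ij})-Q_j)$. Constraints \Wconstraint, \PiConstraint, \LambdaMarginalConstraint~give $g_j\le\sum_i\sum_{t_{ij}}w_{ij}(t_{ij})\le 1$ (for XOS via $w_{ij}\le\pi_{ij}$, since $\alpha_{ij}^{(k)}\le V_{ij}$), so $2\hat Q_j-Q_j\le A_j$. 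Dropping the negative summands of $A_j$ and the $\beta_{ij}+\delta_{ij}$ part of the truncation — note any summand with a nonzero indicator forces $V_{ij}(t_{ij})\le Q_j+\tau_i$, i.e. $j\in C_i(t_i)$ — yields $2\hat Q_j-Q_j\le B_j:=\sum_i\sum_{t_{ij}}f_{ij}(t_{ij})\,(V_{ij}(t_{ij})-Q_j)^+\,\ind[V_{ij}(t_{ij})\le Q_j+\tau_i]\,\rho_{ij}(t_{ij})$, where $\rho_{ij}(t_{ij}):=w_{ij}(t_{ij})/f_{ij}(t_{ij})\le 1$ (using $w_i\in W_i$).

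For step (ii), the idea is that, conditioned on item $j$ never being sold in $\Mtpt$, item $j$ is available to every bidder, so each bidder $i$ can ``mimic'' an allocation $\sigma^*$ witnessing $w_i\in W_i$ (as in \Cref{def:W_i-constrained-add}): for any feasible $S$ in the support of $\sigma^*(t_i)$, downward-closedness of the feasibility constraint lets bidder $i$ buy the sub-bundle $\{j\in S\cap S_i(t_{<i})\cap C_i(t_i):V_{ij}(t_{ij})\ge Q_j\}$, which is feasible and gives surplus at least $\sum_j (V_{ij}(t_{ij})-Q_j)^+\,\ind[j\in S]\,\ind[j\in S_i(t_{<i})]\,\ind[V_{ij}(t_{ij})\le Q_j+\tau_i]$. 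Since $\mu_i(t_i,S_i(t_{<i}))\ge 0$ is the best bidder $i$ can achieve, averaging over $S\sim\sigma^*(t_i)$ gives $\mu_i(t_i,S_i(t_{<i}))\ge\sum_j (V_{ij}(t_{ij})-Q_j)^+\,\ind[V_{ij}(t_{ij})\le Q_j+\tau_i]\,\ind[j\in S_i(t_{<i})]\cdot\tilde\rho_{ij}(t_i)$, where $\tilde\rho_{ij}(t_i):=\Pr_{S\sim\sigma^*(t_i)}[j\in S]$ satisfies $\E_{t_{i,-j}}[\tilde\rho_{ij}(t_i)]=\rho_{ij}(t_{ij})$. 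Taking $\E_t$ and using that $\ind[j\in S_i(t_{<i})]$ depends only on $t_{<i}$ and the mechanism's internal randomness (hence is independent of $t_i$), that $(V_{ij}(t_{ij})-Q_j)^+\ind[V_{ij}(t_{ij})\le Q_j+\tau_i]$ depends only on $t_{ij}$, and that $\{j\notin\sold(t)\}\subseteq\{j\in S_i(t_{<i})\}$, one gets $\E_t[\mu_i(t_i,S_i(t_{<i}))]\ge\sum_j\Pr_t[j\notin\sold(t)]\sum_{t_{ij}}f_{ij}(t_{ij})(V_{ij}(t_{ij})-Q_j)^+\ind[V_{ij}(t_{ij})\le Q_j+\tau_i]\rho_{ij}(t_{ij})$. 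Summing over $i$ and combining with $B_j\ge 2\hat Q_j-Q_j$ and $\Pr_t[j\notin\sold(t)]\ge 0$ completes the proof.

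I expect the main obstacle to be the XOS case in step (ii): what a bidder can actually extract from a bundle is governed by the additive functions $\alpha_{ij}^{(k)}$ (via \Cref{def:W_i}), so the mimicking argument yields only $(\alpha_{ij}^{(k)}(t_{ij})-Q_j)^+$ per item, whereas $\hat Q_j$, $Q_j$, and hence $B_j$ are phrased with the singleton values $V_{ij}$, and after substituting the $W_i$-inequality $w_{ij}(t_{ij})\le f_{ij}(t_{ij})\,\E_{t_{i,-j}}[\sum_{S\ni j,k}\sigma^{*(k)}_S(t_i)\,\alpha_{ij}^{(k)}(t_{ij})/V_{ij}(t_{ij})]$ the bound on $B_j$ carries the factor $\tfrac{\alpha_{ij}^{(k)}(t_{ij})}{V_{ij}(t_{ij})}(V_{ij}(t_{ij})-Q_j)^+$. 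Reconciling these two per-item quantities (rather than just invoking $\tfrac{\alpha}{V}(V-Q)^+\ge(\alpha-Q)^+$, which goes the wrong way) is the delicate part, and would be handled by working with the $\alpha$-decomposition on both sides simultaneously — deriving the $\mu_i$-lower bound by having bidder $i$ buy the entire mimicked bundle and evaluate it with her own best additive function, and matching it against the $W_i$-based upper bound on $B_j$. A secondary point to check carefully is the measure-theoretic bookkeeping around the fresh entry-fee samples used in $\Mtpt$ (drawn before each bidder's turn and independent of the true types), which is what makes the independence claims in step (ii) rigorous.
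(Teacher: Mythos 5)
Your reduction of the second inequality to the first via \Cref{lem:Q and Q-hat_main body} is exactly the paper's step, and for \emph{constrained-additive} valuations your steps (i)--(ii) do work and are essentially the paper's argument with one change of bookkeeping: you keep the price inside a per-item positive part, bounding $2\hat Q_j-Q_j$ by $B_j=\sum_i\sum_{t_{ij}}w_{ij}(t_{ij})\,(V_{ij}(t_{ij})-Q_j)^+\,\ind[V_{ij}(t_{ij})\le Q_j+\tau_i]$ and matching $B_j$ against the mimicking bound, whereas the paper keeps the positive part only until after the substitution $\Pr[j\in S_i(t_{<i})]\ge\Pr_t[j\notin\sold(t)]$, then drops it and charges \emph{all} prices in aggregate via the supply constraint $\sum_i\sum_{t_{ij}}\pi_{ij}(t_{ij})\le 1$, landing on $\sum_j\Pr_t[j\notin\sold(t)]\bigl(\sum_i\sum_{t_{ij}}w_{ij}(t_{ij})V_{ij}(t_{ij})\ind[V_{ij}(t_{ij})\le Q_j+\tau_i]-Q_j\bigr)\ge\sum_j\Pr_t[j\notin\sold(t)]\,(2\hat Q_j-Q_j)$.

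However, this lemma is proved (and is needed, in \Cref{thm:bounding-lp-simple-mech-XOS}) for the XOS LP of \Cref{fig:XOSLP}, and there your route has a genuine gap that you flag but do not close. Your matching step needs, for each $i,j,t_{ij}$ with $V_{ij}(t_{ij})\le Q_j+\tau_i$, an inequality of the form $\E_{t_{i,-j}}\bigl[\sum_{S\ni j}\sum_k\sigma^{(k)}_S(t_i)\,(\alpha^{(k)}_{ij}(t_{ij})-Q_j)^+\bigr]\ \ge\ \frac{w_{ij}(t_{ij})}{f_{ij}(t_{ij})}\,(V_{ij}(t_{ij})-Q_j)^+$, where by \Cref{def:W_i} the right-hand side can saturate $\frac{w_{ij}}{f_{ij}}=\E\bigl[\sum_{S\ni j,k}\sigma^{(k)}_S\,\alpha^{(k)}_{ij}/V_{ij}\bigr]$. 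This fails pointwise: with $\alpha^{(k)}_{ij}=Q_j$ and $V_{ij}=2Q_j$ the left side is $0$ while the right side is $Q_j/2$; that is, $(\alpha-Q)^+\ge\frac{\alpha}{V}(V-Q)^+$ goes in the wrong direction, exactly as you observe. Your sketched repair --- buy the entire mimicked bundle and evaluate it with the bidder's best additive function --- yields only $\sum_{j\in S'}(\alpha^{(k)}_{ij}-Q_j)$ \emph{without} the positive part, and once the positive part is gone you can neither compare against $B_j$ item by item nor multiply the (now possibly negative) terms by $\Pr_t[j\notin\sold(t)]$. The way out is the paper's ordering: retain the per-item positive part while swapping $\Pr[j\in S_i(t_{<i})]$ for $\Pr_t[j\notin\sold(t)]$, only then drop it, lower bound the $\alpha$-sum by $\sum_i\sum_{t_{ij}}w_{ij}(t_{ij})V_{ij}(t_{ij})\ind[V_{ij}(t_{ij})\le Q_j+\tau_i]$ via the $W_i$ inequality, and bound the total price term by $Q_j$ via Constraint~\PiConstraint. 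As written, your proposal establishes the lemma for constrained-additive bidders but leaves the XOS case (the intermediate claim $\E_t[\sum_i\mu_i]\ge\sum_j\Pr_t[j\notin\sold(t)]\,B_j$) unproven.
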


\begin{proof}
By the definition of polytope $W_i$, for every buyer $i$ and $t_i\in \cT_i$, there exists an vector of non-negative numbers $\{\sigma_{iS}^{(k)}(t_i)\}_{S\subseteq [m],k\in [K]}$, such that $\sum_{S,k} \sigma_{iS}^{(k)}(t_i)\leq 1$ and
\begin{equation}
\pi_{ij}(t_{ij})=f_{ij}(t_{ij})\cdot\sum_{t_{i,-j}}f_{i,-j}(t_{i,-j})\cdot \sum_{S:j\in S}\sum_{k\in [K]}\sigma_{iS}^{(k)}(t_{ij},t_{i,-j})
\end{equation}

\begin{equation}\label{equ:w and gamma}
w_{ij}(t_{ij})\cdot V_{ij}(t_{ij}) \leq f_{ij}(t_{ij})\cdot \sum_{t_{i,-j}\in \cT_{i,-j}} f_{i,-j}(t_{i,-j})\sum_{S:j\in S}\sum_k\sigma_{iS}^{(k)}(t_{ij},t_{i,-j})\cdot\alpha_{ij}^{(k)}(t_{ij})
\end{equation}


We have
\begin{align*}
&\E_{t}\left[\sum_i \mu_i\left(t_i,S_i(t_{< i})\right) \right]\\
\geq & \sum_i \E_{t_i, t_{-i}}\left[\sum_{S\subseteq [m]}\sum_k\sigma_{iS}^{(k)}(t_i)\cdot\mu_i\left(t_i,S_i(t_{<i})\cap S\right)\right]\\
\geq & \sum_i \E_{t_i, t_{-i}}\left[\sum_{S,k}\sigma_{iS}^{(k)}(t_i)\cdot\sum_{j\in S} \ind\left[j\in S_i(t_{<i})\right]\cdot\left(\alpha_{ij}^{(k)}(t_{ij})\cdot\ind[V_{ij}(t_{ij})\leq Q_j+\tau_i]-Q_j\right)^+ \right]\\
=& \sum_i \E_{t_i}\left[\sum_{j\in [m]}\Pr_{t_{-i}}[j\in S_i(t_{<i})]\cdot\sum_{S:j\in S}\sum_k\sigma_{iS}^{(k)}(t_i)\cdot\left(\alpha_{ij}^{(k)}(t_{ij})\cdot\ind[V_{ij}(t_{ij})\leq Q_j+\tau_i]-Q_j\right)^+ \right]\\
\geq & \sum_i\sum_j \Pr_{t}[j\notin \sold(t)]\cdot \E_{t_i}\left[\sum_{S: j\in S}\sum_k \sigma_{iS}^{(k)}(t_i)\cdot\left(\alpha_{ij}^{(k)}(t_{ij})\cdot\ind[V_{ij}(t_{ij})\leq Q_j+\tau_i]-Q_j\right)^+\right]\\
\geq & \sum_j \Pr_{t}[j\notin \sold(t)]\cdot \sum_i \sum_{t_i} f_i(t_i)\cdot \sum_{S: j\in S}\sum_k \sigma_{iS}^{(k)}(t_i)\cdot\left(\alpha_{ij}^{(k)}(t_{ij})\cdot\ind[V_{ij}(t_{ij})\leq Q_j+\tau_i]-Q_j\right)\\
= & \sum_j \Pr_{t}[j\notin \sold(t)]\cdot \sum_i \sum_{t_{ij}} f_{ij}(t_{ij})\sum_{t_{i,-j}}f_{i,-j}(t_{i,-j})\cdot\sum_{S:j\in S}\sum_k\sigma_{iS}^{(k)}(t_i)\cdot(\alpha_{ij}^{(k)}(t_{ij})\cdot\ind[V_{ij}(t_{ij})\leq Q_j+\tau_i]-Q_j)\\
\geq & \sum_j \Pr_{t}[j\notin \sold(t)]\cdot \sum_i \sum_{t_{ij}} w_{ij}(t_{ij}) V_{ij}(t_{ij})\cdot \ind[V_{ij}(t_{ij})\leq Q_j+\tau_i]-\sum_j\Pr_t[j\not\in \sold(t)]\cdot Q_j
\end{align*}


The first inequality is because $\mu_i(t_i,S)$ is monotone in set $S$ for any $i,t_i$ and $\sum_{S,k}\sigma_{iS}^{(k)}(t_i)\leq 1$. For any fixed $i,t_i$ and set $S$, if we let $S'$ be the set of items that are in $S\cap S_i(t_{<i})$ and satisfy that $\alpha_{ij}^{(k)}(t_{ij})\cdot\ind[V_{ij}(t_{ij})\leq Q_j+\tau_i]-Q_j\geq 0$. Clearly $S'\subseteq C_i(t_i)$. Then 
\begin{align*}
\mu_i\left(t_i,S_i(t_{<i})\cap S\right)&\geq v_i(t_i,S')-\sum_{j\in S'}Q_j\\
&= \max_{k'\in [K]}\sum_{j\in S'}\alpha_{ij}^{(k')}(t_{ij})-\sum_{j\in S'}Q_j\geq \sum_{j\in S'} \left(\alpha_{ij}^{(k)}(t_{ij})-Q_j\right)\\
&=\sum_{j\in S'} \left(\alpha_{ij}^{(k)}(t_{ij})\cdot\ind[V_{ij}(t_{ij})\leq Q_j+\tau_i]-Q_j\right)~~~~~(S'\subseteq C_i(t_i))
\end{align*} 

This inequality is exactly the second inequality above. The third inequality is because $\Pr_{t_{<i}}[j\in S_i(t_{<i})]\geq \Pr_{t}[j\notin \sold(t)]$ for all $j$ and $i$, as the LHS is the probability that the item is not sold after the seller has visited the first $i-1$ buyers and the RHS is the probability that the item remains unsold till the end of the mechanism $\Mtpt$. 
The last inequality follows from Inequality~\eqref{equ:w and gamma} and Constraint \PiConstraint~of the LP in \Cref{fig:bigLP} (or in \Cref{fig:XOSLP}):

$$\sum_i\sum_{t_{ij}}f_{ij}(t_{ij})\sum_{t_{i,-j}}f_{i,-j}(t_{i,-j})\cdot \sum_{S:j\in S}\sum_k\sigma_{iS}^{(k)}(t_i)=\sum_i\sum_{t_{ij}}\pi_{ij}(t_{ij})\leq 1$$

Notice that by \Cref{def:Q_j} and Constraint \LambdaMarginalConstraint~of the LP in \Cref{fig:bigLP} (or in \Cref{fig:XOSLP}), for every $i,j,t_{ij}$, 

$$
\sum_{\beta_{ij},\delta_{ij}}\lambda_{ij}(t_{ij},\beta_{ij},\delta_{ij})=w_{ij}(t_{ij})/f_{ij}(t_{ij})$$
Thus we have

\begin{align*}
&\sum_j \Pr_{t}[j\notin \sold(t)]\cdot \sum_i \sum_{t_{ij}} w_{ij}(t_{ij}) V_{ij}(t_{ij})\cdot \ind[V_{ij}(t_{ij})\leq Q_j+\tau_i]-\sum_j\Pr_t[j\not\in \sold(t)]\cdot Q_j\\
= & \sum_j \Pr_{t}[j\notin \sold(t)]\cdot \sum_i \sum_{t_{ij}} f_{ij}(t_{ij})V_{ij}(t_{ij})\cdot \ind[V_{ij}(t_{ij})\leq Q_j+\tau_i] \sum_{\beta_{ij},\delta_{ij}}\lambda_{ij}(t_{ij},\beta_{ij},\delta_{ij})\\
&\quad\quad-\sum_j\Pr_t[j\not\in \sold(t)]\cdot Q_j\\
\geq &\sum_j \Pr_{t}[j\notin \sold(t)]\cdot (2\hat{Q}_j-Q_j)~~~~~\text{(\Cref{def:hatQ_j})}\\
= &  \sum_j \Pr_{t}[j\notin \sold(t)]\cdot Q_j-  \sum_j\Pr_{t}[j\notin \sold(t)]\cdot2(Q_j-\hat{Q}_j)\\
\geq & \sum_j \Pr_{t}[j\notin \sold(t)]\cdot Q_j - \sum_j 2(Q_j-\hat{Q}_j )~~~~~{\text{(\Cref{lem:Q and Q-hat}, $Q_j-\hat{Q}_j\geq 0$ for all $j$)}}\\
\geq & \sum_j \Pr_{t}[j\notin \sold(t)]\cdot Q_j- 473\cdot\prev
~~~~~\text{(\Cref{lem:Q and Q-hat})}
\end{align*}
\end{proof}

Now we give the proof of \Cref{thm:bounding-lp-simple-mech-XOS}. Note that this is also the proof of \Cref{thm:bounding-lp-simple-mech}.

\begin{prevproof}{Theorem}{thm:bounding-lp-simple-mech-XOS}

For every $i,t_{<i}$, we apply \Cref{lemma:concentrate} to function $\mu_i(t_i,\cdot)$ and ground set $S_i(t_{<i})$. By \Cref{def:mu}, we have
\begin{equation}\label{equ:bounding-mu_i}
\E_{t_i}[\mu_i(t_i,S_i(t_{<i}))]\leq 2\cdot \Med_{t_i}(\mu_i(t_i,S_i(t_{<i})))+\frac{5}{2}\cdot \tau_i
\end{equation}

We now bound the revenue of $\Mtpt$. For every $i\in [n]$, $t_i\in \cT_i$ and $S\subseteq [m]$, let $\mu'_i(t_i,S)=\max_{S'\subseteq S}(v_i(t_i,S')-\sum_{j\in S'}Q_j)$ which is at least as large as $\mu_i(t_i,S)$. Then the 
surplus of buyer $i$ with true type $\hat{t}_i$, for the set $S_i(t_{<i})$ is $\mu'_i(\hat{t}_i,S_i(t_{<i}))$. By Mechanism~\ref{def:constructed-SPEM}, the entry fee $\xi_i(S_i(t_{<i}),t_i')=\mu'_i(t_i',S_i(t_{<i}))$ for every sampled type $t_i'$. Thus for every $t_{<i}$, we have
$$\Pr_{\hat{t}_i,t_i'\sim D_i}\left[\mu'_i(\hat{t}_i,S_i(t_{<i}))\geq \xi_i(S_i(t_{<i}),t_i')\geq \Med_{t_i}(\mu_i(t_i,S_i(t_{<i})))\right]\geq \frac{1}{8}$$

In other words, for every $t_{<i}$, with probability at least $1/8$, the buyer accepts the entry fee, and the entry fee is at least $\Med_{t_i}(\mu_i(t_i,S_i(t_{<i})))$. Thus the revenue of $\Mtpt$ from the entry fee is at least
\begin{align*}
&\frac{1}{8}\sum_i\E_{t_{<i}}\left[\Med_{t_i}(\mu_i(t_i,S_i(t_{<i})))\right]\\
\geq & \frac{1}{16}\sum_i\E_{t_i,t_{<i}}[\mu_i(t_i,S_i(t_{<i}))]-\frac{5}{32}\cdot \sum_i\tau_i~~~~~~\text{(Inequality~\eqref{equ:bounding-mu_i})}\\
\geq & \frac{1}{16}\sum_{j} \Pr_t\left[j\notin \sold(t)\right]\cdot Q_j - \frac{493}{16}\cdot\prev ~~~~~~\text{(\Cref{lem:lower bounding mu} and \Cref{lem:bounding tau_i})}
\end{align*}
We notice that for $\Mtpt$, the revenue from the posted prices are $\sum_j \Pr[j\in \sold(t)]\cdot Q_j$. Thus
\begin{align*}
\rev(\Mtpt)\geq & \frac{1}{16}\sum_{j} \Pr_t\left[j\notin \sold(t)\right]\cdot Q_j - \frac{493}{16}\cdot \prev +\sum_j \Pr[j\in \sold(t)]\cdot Q_j\\
\geq & \frac{1}{16}\cdot \sum_{j\in [m]}Q_j-\frac{493}{16}\cdot \prev
\end{align*}

Thus

$$2\cdot \sum_jQ_j\leq 986\cdot \prev+32\cdot\rev(\Mtpt)$$

We then show that $\Mtpt$ can be computed in polynomial time: By \Cref{def:Q_j}, the posted price $Q_j$ can be computed in time $\poly(n,m,\sum_{i,j}|\cT_{ij}|)$, given the feasible solution of the LP in \Cref{fig:bigLP} (or in \Cref{fig:XOSLP}). Given the set of available items $S_i(t_{<i})$, for every sampled type $t_i'$, calculating the entry fee requires a single query from the demand oracle. For every buyer $i$ with reported type $t_i$, the mechanism requires a single query from the demand oracle to obtain her favorite bundle among the set of available items, under prices $\{Q_j\}_{j\in [m]}$, and to determine whether the buyer will accept the entry fee.  

Lastly, by \Cref{thm:chms10}, we can compute an RPP $\Mpp$ with the desired running time and query complexity, such that $\Mpp\geq \frac{1}{6.75}(1-\frac{1}{nm})\cdot\prev$. We finish our proof.
\end{prevproof}


\section{Multiplicative Approximation of Down-Monotone and Boxable Polytopes}\label{sec:multi-approx-polytope}

In this section, we provide a proof of \Cref{thm:multiplicative approx for constraint additive-main body} and prove \Cref{thm:main XOS-main body} for constrained-additive valuations using \Cref{thm:multiplicative approx for constraint additive-main body}. We restate the theorem here.

\begin{theorem}\label{thm:main}
(Restatement of \Cref{thm:main XOS-main body} for constrained-additive valuations) Let $T=\sum_{i,j}|\cT_{ij}|$ and $b$ be the bit complexity of the problem instance (\Cref{def:bit complexity}).
For constrained-additive buyers, for any $\delta>0$, there exists an algorithm that computes a rationed posted price mechanism or a two-part tariff mechanism, such that the revenue of the mechanism is at least $c\cdot \opt$ for some absolute constant $c>0$ with probability $1-\delta-\frac{2}{nm}$. Our algorithm assumes query access to a value oracle and a demand oracle of buyers' valuations, and has running time $\poly(n,m,T,b,\log (1/\delta))$.
\end{theorem}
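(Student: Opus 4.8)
The plan is to assemble the pieces already developed in the paper into a single algorithm and track the approximation factor through each step. Concretely, the target statement (\Cref{thm:main}) follows by chaining four results: (i) \Cref{thm:chms10}, which lets us compute an RPP $\Mpp$ with $\rev(\Mpp)\geq \frac{1}{6.75}(1-\frac{1}{nm})\cdot\prev$ in time $\poly(n,m,T)$ and with probability $\geq 1-\frac{2}{nm}$; (ii) \Cref{thm:multiplicative approx for constraint additive-main body}, which builds, for each bidder $i$, a polytope $\widehat W_i$ with $\frac{1}{12}W_i\subseteq \widehat W_i\subseteq W_i$, of polynomial vertex-complexity, together with an efficient separation oracle using only demand queries, all with probability $\geq 1-\delta$ (union-bounded over $i$, so take $\delta/n$ per bidder); (iii) the ellipsoid method (\Cref{thm:ellipsoid}) applied to the LP in \Cref{fig:bigLP} with $W_i$ replaced by $\widehat W_i$, which runs in time $\poly(n,m,T,b,\log(1/\delta))$ and returns a corner solution; and (iv) \Cref{thm:bounding-lp-simple-mech}, which converts any feasible LP solution into the TPT mechanism $\Mtpt$ of Mechanism~\ref{def:constructed-SPEM} with $2\sum_j Q_j\leq c_1\rev(\Mpp)+c_2\rev(\Mtpt)$, again computable with demand queries in time $\poly(n,m,T)$. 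The final mechanism is whichever of $\Mpp,\Mtpt$ yields higher (estimated) revenue.

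The argument for the revenue guarantee runs as follows. First condition on the event $\cE$ (\Cref{thm:chms10} succeeds) and the event that every $\widehat W_i$ satisfies the conclusions of \Cref{thm:multiplicative approx for constraint additive-main body}; by a union bound this happens with probability at least $1-\delta-\frac{2}{nm}$. On this event, any $\widehat w$ feasible for the modified LP (with $\widehat W_i$) is also feasible for the true LP in \Cref{fig:bigLP}, since $\widehat W_i\subseteq W_i$. Conversely, since $\frac{1}{12}W_i\subseteq\widehat W_i$, scaling down by $12$ any optimal solution of the true LP — or more precisely, the optimal solution of the two-stage relaxation, which realizes value at least $\core$ — gives a point feasible for the modified LP whose objective is at least $\frac{1}{12}\optlp$ (here one uses that the LP objective is linear in $w$ and $\lambda$, and that scaling $w_i$ down by $12$ can be matched by scaling the corresponding $\lambda_{ij}$'s down, keeping all other constraints — which are monotone or homogeneous in these variables — satisfied; the constraints involving $\hat\lambda$ and $d_i$ are unaffected). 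So the computed LP solution has objective $\geq \frac{1}{12}\optlp$. By \Cref{thm:bounding-lp-simple-mech}, its objective $2\sum_j Q_j$ is at most $c_1\rev(\Mpp)+c_2\rev(\Mtpt)$, hence $\max(\rev(\Mpp),\rev(\Mtpt))\geq \frac{1}{12(c_1+c_2)}\optlp$. Finally, \Cref{lem:bound rev by opt} gives $\opt\leq 28\prev+4\optlp$, and combining with $\rev(\Mpp)\geq\frac{1}{6.75}(1-\frac{1}{nm})\prev$ yields $\max(\rev(\Mpp),\rev(\Mtpt))\geq c\cdot\opt$ for an absolute constant $c>0$. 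Note that since the output mechanisms are DSIC and IR (as discussed after the RPP/TPT definitions), the guarantee is against the Bayesian optimum $\opt$.

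One subtlety to handle carefully is that we cannot exactly evaluate $\rev(\Mpp)$ and $\rev(\Mtpt)$ in closed form to decide which to output; but this is not actually needed — the bound $\max(\rev(\Mpp),\rev(\Mtpt))\geq c\cdot\opt$ is a statement about two specific mechanisms, and we can simply output, say, $\Mtpt$ if the LP optimum $2\sum_j Q_j$ exceeds $12(c_1+c_2)\cdot \estprev/(6.75)$ (a threshold computable from the solved LP and the estimated $\prev$), and $\Mpp$ otherwise; a short case analysis shows the output always achieves $\Omega(\opt)$. The running-time bookkeeping is routine: the only potentially expensive step, the ellipsoid run, is polynomial because the modified LP has $\poly(n,m,T)$ explicitly-written constraints plus the $n$ separation oracles for $\widehat W_i$, each of polynomial running time and facet-complexity $\poly(n,m,T,b,\log(1/\delta))$ by \Cref{fact:corners2} and \Cref{thm:multiplicative approx for constraint additive-main body}.

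The main obstacle — and the only place real work is hidden — is step (ii), i.e., the proof of \Cref{thm:multiplicative approx for constraint additive-main body}: constructing a polytope $\widehat W_i$ that is a \emph{multiplicative} (not merely additive) approximation of the single-bidder marginal reduced form polytope $W_i$, with an efficient separation oracle driven only by demand queries. This is where the down-monotonicity of $W_i$ is essential and where the sampling-based ``mixture of polytopes'' machinery (generalizing \cite{CaiDW12bfull}, cf.\ \Cref{thm:special case of multiplicative approx}) must be developed; the rest of the proof of \Cref{thm:main} is just careful constant-chasing and a union bound, as sketched above. I would therefore present \Cref{thm:main} as a short corollary-style proof that cites \Cref{thm:chms10}, \Cref{thm:multiplicative approx for constraint additive-main body}, \Cref{thm:ellipsoid}, \Cref{thm:bounding-lp-simple-mech}, and \Cref{lem:bound rev by opt}, deferring the substantive content to the later section proving the multiplicative approximation.
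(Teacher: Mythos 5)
Your proposal is correct and follows essentially the same route as the paper: condition on the success of \Cref{thm:chms10} and \Cref{thm:multiplicative approx for constraint additive-main body}, solve the LP of \Cref{fig:bigLP} with $W_i$ replaced by $\widehat W_i$ via the ellipsoid method, note that the computed solution is feasible for the original LP while scaling $(w,\lambda)$ down by $12$ (keeping $\hat\lambda,\bd$ fixed) shows the modified optimum is at least $\frac{1}{12}\optlp$, and then chain \Cref{thm:bounding-lp-simple-mech} with \Cref{lem:bound rev by opt} and the $\prev$ estimate to get the constant-factor guarantee. Your extra remark on how the algorithm selects which of $\Mpp,\Mtpt$ to output is a harmless refinement the paper leaves implicit.
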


For \Cref{thm:multiplicative approx for constraint additive-main body}, we indeed prove a result for a natural family of polytopes. Throughout this section, we assume that the polytope we consider is  \emph{down-monotone}. Formally, a polytope $\cP\subseteq [0,1]^d$ is down-monotone if and only if for every $\bx\in \cP$ and $\textbf{0}\leq \bx'\leq \bx$, we have $\bx'\in \cP$.
To state our result, we need the following definitions.

\begin{definition}
For any two sets $A,B \subseteq \mathbb{R}^d$,
we denote by $A+B$ the \emph{Minkowski addition} of set $A$ and set $B$ where:
$$A+B = \{a + b : a \in A\text{ and } b \in B\}
$$

Note that if both $A$ and $B$ are convex, then $A+B$ is also convex.
\end{definition}

\begin{definition}
Let $\cP$ be a convex polytope, we define $a\cdot \cP:=\{a\bx: \bx\in\cP\}$ for any $a\geq 0$.
\end{definition}

\begin{definition}\label{def:mixture}
Let $\ell$ be a finite integer. For any set of convex sets $\{\cP_i\}_{i\in [\ell]}$ and a distribution $\cD=\{q_i\}_{i\in[\ell]}$, 
 the set $\cP = \sum_{i\in[\ell]}q_i \cP_i$
is called the mixture of $\{\cP_i\}_{i\in [\ell]}$ over distribution $\cD$.
\end{definition}

\begin{definition}\label{def:truncated polytope}
Let $\cP,\cQ\subseteq [0,1]^d$ be down-monotone polytopes. For each coordinate $j\in[d]$,
we define the width of $\cP$ at coordinate $j$ as $l_j(\cP)=\max_{\bx\in \cP}x_j$.
For any $\eps>0$, we define the {($\eps,\cQ$)-truncated polytope of $\cP$ (denoted as $\cP^{tr(\eps,\cQ)}$)} in the following way: $\bx\in \cP^{tr(\eps,\cQ)}$ if and only if there exists $\bx'\in \cP$ such that $x_j=x_j'\cdot \ind[l_j(\cQ)\geq \eps],\forall j\in [d]$.
We notice that since $\cP$ is down-monotone, $\cP^{tr(\eps,\cQ)}\subseteq \cP$. Moreover, $\cP^{tr(\eps,\cQ)}$ is convex if $\cP$ is convex.
We also use $\cP^{tr(\eps)}$ to denote $\cP^{tr(\eps,\cP)}$.
\end{definition}

\begin{definition}\label{def:box polytope}
Let $\cP\subseteq [0,1]^d$.
For any $\eps>0$, define the $\eps$-box polytope $\pbox$ of $\cP$ as follows: 
{$\pbox=\{\bx\subseteq [0,1]^d: x_j\leq \min\left( \eps,l_j(\cP)\right),\forall j\in [d]\}$.} Clearly, $\pbox$ is convex.
\end{definition}

\Cref{thm:special case of multiplicative approx} is the main theorem of this section. We prove that if $\cP$ is a mixture of a set of down-monotone, convex polytopes $\{\cP_i\}_{i\in[\ell]}$, and $\cP$ contains the polytope $c\cdot\pbox$ for some {$c\leq1$}, then there exists another down-monotone, convex polytope $\widehat{\cP}$ sandwiched between $c/6\cdot \cP$ and $\cP$. {And more importantly, we have an efficient separation oracle for $\widehat{\cP}$, whose running time is \emph{independent of $\ell$}, as long as we can efficiently optimize any linear objective for every $\cP_i$. The key feature of our separation oracle for $\widehat{\cP}$ is that its running time does not depend on $\ell$, as in our applications, $\ell$ is usually exponential in the input size.}

\begin{theorem}\label{thm:special case of multiplicative approx}
Let $\ell$ be a positive integer, and $\cP \subseteq [0,1]^d$ be a mixture of $\{\cP_i\}_{i\in[\ell]}$ over distribution $\cD=\{q_i\}_{i\in [\ell]}$,
where for each $i\in[\ell]$, $\cP_i\subseteq [0,1]^d$ is a convex and down-monotone polytope. 
Suppose for every $i\in [\ell]$, 
{there exists an oracle $\cQ_i(\cdot)$, whose output $\cQ_i(\bm{a})\in \argmax\{\bm{a}\cdot \bx:\bx\in \cP_i\}$ for any input $\bm{a}\in \mathbb{R}^d$. 
Given $\{l_j(\cP)\}_{j\in[d]}$, suppose $c\cdot\pbox\subseteq \cP$ for some $\eps>0$ and $c\in (0,1]$. Let $b$ be an upper bound on the bit complexity of $\cQ_i(\bm{a})$ for all $i\in[\ell]$ and $\bm{a}\in \mathbb{R}^d$, as well as on the bit complexity of $l_j(\cP)$ for all $j\in [d]$.} 
Let the parameter {$k\geq \Omega\left(d^4\left(b + \log\left( \frac{1}{\eps}\right)\right)\right)$.} 
 We can construct a convex and down-monotone  polytope $\widehat{\cP}$ using $N = \left\lceil\frac{8kd}{\eps^2}\right\rceil$ samples from $\cD$ such that with probability at least {$1-2de^{-2dk}$}, the following guarantees hold:
\begin{enumerate}
\item $\frac{c}{6}\cdot\cP\subseteq \widehat{\cP}\subseteq \cP$. 
\item There exists a separation oracle $SO$ for $\widehat{\cP}$,
whose running time on input with bit complexity $b'$, is $\poly\left(b,b',k,d,\frac{1}{\eps}\right)$ and requires $\poly\left(b,b', k,d,\frac{1}{\eps}\right)$ queries to oracles in  $\{\cQ_i\}_{i\in[\ell]}$ with inputs of bit complexity at most $\poly\left(b, b',k,d,\frac{1}{\eps}\right)$.
\end{enumerate}


\end{theorem}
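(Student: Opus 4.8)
\textbf{Proof Plan for \Cref{thm:special case of multiplicative approx}.}

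The plan is to build $\widehat{\cP}$ by sampling polytopes $\cP_{i_1},\dots,\cP_{i_N}$ i.i.d.\ from $\cD$ and defining $\widehat{\cP}$ to be (essentially) a truncated version of the empirical mixture $\frac{1}{N}\sum_{t=1}^N \cP_{i_t}$, where we truncate away all coordinates $j$ whose empirical width is too small. More precisely, I would set $\widehat{\cP} = \big(\frac{1}{N}\sum_{t} \cP_{i_t}\big)^{tr(\eps/2,\cdot)}$, keeping only coordinates whose width in the empirical mixture is at least roughly $\eps/4$, and then intersecting with $\frac{c}{6}\pbox$ restricted to the kept coordinates so that the small-width coordinates are not simply zeroed out but retain a small box of slack (this is where the assumption $c\cdot \pbox \subseteq \cP$ is used). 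The first task is the \emph{sandwiching} claim (1). For the upper inclusion $\widehat{\cP}\subseteq \cP$: a Minkowski average of subsets of the $\cP_i$ lies inside the true mixture $\cP$ as long as the empirical distribution is absolutely continuous w.r.t.\ nothing problematic — here I would use that $\cP$ is convex and that each sampled $\cP_{i_t}$ is one of the $\cP_i$'s, so any convex combination of points, one from each $\cP_{i_t}$, is a convex combination that can be re-weighted to lie in $\sum_i q_i \cP_i = \cP$; the truncation only shrinks things, and the $\frac{c}{6}\pbox$ piece is inside $\cP$ by hypothesis. For the lower inclusion $\frac{c}{6}\cP \subseteq \widehat{\cP}$: the heart is a concentration argument showing that with high probability the empirical mixture $\frac1N\sum_t \cP_{i_t}$ is, coordinate-wise and in a ``support-function'' sense, within a constant factor of the true mixture $\cP$ on every coordinate that matters, i.e.\ every coordinate $j$ with $l_j(\cP)\ge \eps$. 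For coordinates with $l_j(\cP) < \eps$, the $\frac{c}{6}\pbox$ term provides the needed slack directly.

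The key technical step — and the one I expect to be the main obstacle — is the \emph{uniform concentration of the empirical mixture polytope}. I would phrase it via support functions: for a direction $\bm{a}$, $h_{\cP}(\bm{a}) = \max_{\bx\in\cP}\bm{a}\cdot\bx = \sum_i q_i h_{\cP_i}(\bm{a})$ and $h_{\widehat{\cP}_{\text{emp}}}(\bm{a}) = \frac1N\sum_t h_{\cP_{i_t}}(\bm{a})$, so for each \emph{fixed} $\bm{a}$ this is an average of i.i.d.\ bounded random variables (bounded because all polytopes live in $[0,1]^d$, hence $h_{\cP_i}(\bm a)\in[0,\|\bm a\|_1]$), and Hoeffding gives concentration. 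The difficulty is getting a \emph{uniform} statement over all directions $\bm a$ — but since each $h_{\cP_i}$ is piecewise linear with breakpoints only at (the normal cones of) corners of $\cP_i$, and we only ever query $O(d)$ linearly-independent directions inside the separation oracle, I would instead prove concentration only on the finitely many ``relevant'' directions: the facet normals of a candidate point's violated constraint. Concretely, I would argue it suffices to control $l_j$ in the $2d$ axis directions $\pm e_j$ (to decide which coordinates to truncate) plus concentration of $h$ on the polynomially-many directions the ellipsoid method actually generates; each such direction gets a Hoeffding bound with failure probability $e^{-\Omega(N\eps^2/d)}$, and the union bound over $2d$ axis directions (plus a net argument or a corner-counting argument over the facet-complexity-$b$ polytope $\cP$, contributing the $d^4(b+\log(1/\eps))$ in the lower bound on $k$) yields the overall $1-2de^{-2dk}$ guarantee once $N=\lceil 8kd/\eps^2\rceil$.

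For claim (2), the \emph{separation oracle}, the plan is: given a query point $\bx$ with bit complexity $b'$, first check the $\frac{c}{6}\pbox$ constraints (trivial, $\poly(d,b,b')$), then run the ellipsoid/LP machinery to decide membership in $\frac1N\sum_t \cP_{i_t}$ (restricted to kept coordinates). Membership in a Minkowski average $\frac1N\sum_t\cP_{i_t}$ is itself a feasibility LP: $\bx = \frac1N\sum_t \bx^{(t)}$ with $\bx^{(t)}\in\cP_{i_t}$; this LP has $Nd$ variables, and we have a separation oracle for each $\cP_{i_t}$ derived from the linear-optimization oracle $\cQ_{i_t}$ (optimizing a linear objective over a down-monotone polytope gives a separation oracle by the usual polar/support-function computation, via \Cref{thm:ellipsoid}). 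So we call \Cref{thm:ellipsoid} \emph{inside} \Cref{thm:ellipsoid}: the inner run uses only the $N$ oracles $\cQ_{i_t}$ actually sampled — \emph{this is the crucial point that makes the running time independent of $\ell$}. Tracking bit-complexity blowup through the two nested ellipsoid calls (each multiplies the bit complexity by a $\poly(d,\cdot)$ factor, and $N=\poly(k,d,1/\eps)$) gives the claimed $\poly(b,b',k,d,1/\eps)$ bounds on both running time and query count and query bit-complexity. Finally, I would verify $\widehat{\cP}$ is convex and down-monotone: convexity is immediate (Minkowski sums, truncations, and intersections of convex sets), and down-monotonicity follows because each $\cP_{i_t}$ and $\pbox$ are down-monotone and both operations (Minkowski average, coordinate truncation, intersection) preserve this.

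Assembling: claims (1) and (2) together are exactly \Cref{thm:special case of multiplicative approx}; applying it with $\cP = W_i$ written as the mixture over (exponentially many) type-indexed polytopes $\cP_{t_i}$ of feasible single-bidder allocations — each of which admits a linear-optimization oracle via a single demand query — and with $\pbox$ chosen using the per-coordinate widths $l_j(W_i)$ (computable by $2d$ demand-style queries), yields $\widehat{W}_i$ and hence \Cref{thm:multiplicative approx for constraint additive-main body} and \Cref{thm:main}. The one place I would be most careful is ensuring the truncation threshold and the $\frac{c}{6}\pbox$ slack are chosen consistently so that \emph{both} inclusions in (1) hold simultaneously — a slightly-too-aggressive truncation breaks the lower inclusion, a slightly-too-timid one breaks the upper inclusion and the $1/6$ constant.
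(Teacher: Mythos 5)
Your plan has the right ingredients (sample an empirical mixture, truncate small-width coordinates, use $c\cdot\pbox\subseteq\cP$ for the small coordinates, finish with optimization-to-separation), but the way they are combined has genuine gaps. First, the construction: the paper takes $\widehat{\cP}=\frac{c}{3}\bigl(\ptruncatedEmp+\pbox\bigr)$, a scaled \emph{Minkowski sum} in which the box appears on \emph{all} coordinates, and the truncation uses the given widths $l_j(\cP)$, not empirical widths. The box plays two roles: it supplies the needed mass on coordinates with $l_j(\cP)<\eps$, and --- this is the crux --- it absorbs the \emph{additive} $\eps$ error between $\ptruncated$ and $\ptruncatedEmp$ guaranteed by \Cref{thm:approx mixture}, which is what converts an additive $\ell_\infty$ polytope approximation into the multiplicative sandwich $\frac{c}{6}\cP\subseteq\widehat{\cP}\subseteq\cP$ (an $\eps$-box covers the error precisely because every surviving coordinate has width at least $\eps$). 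Your construction --- intersecting the truncated empirical mixture with $\frac{c}{6}\pbox$ on the kept coordinates --- neither restores the dropped coordinates (an intersection cannot add slack where the truncation already forces $0$) nor absorbs the sampling error on the kept ones; and the alternative you lean on, a ``constant factor'' coordinate-wise concentration of the empirical mixture, is not what Hoeffding or \Cref{thm:approx mixture} provides and is not proved in your sketch. Relatedly, your claim that $\widehat{\cP}\subseteq\cP$ holds deterministically by ``re-weighting'' a convex combination is false: when the empirical frequencies deviate from $\cD$, the empirical mixture need not be contained in the true mixture (take $d=1$, $\cP_1=\{0\}$, $\cP_2=[0,1]$, $q_2$ tiny, and all samples landing on $\cP_2$), so \emph{both} inclusions require the concentration event, the box term, and the $\frac{c}{3}$ scaling. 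Finally, concentration cannot be established only ``on the directions the ellipsoid method actually generates'': those directions are adaptive (they depend on the sampled polytopes), while the sandwich in claim (1) is a uniform geometric statement; one needs the uniform, two-sided point-wise guarantee of \Cref{thm:approx mixture}, whose corner-counting proof is where $k=\Omega\bigl(d^4(b+\log(1/\eps))\bigr)$ comes from.

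Second, the separation oracle. Your nested-ellipsoid plan (decide membership in $\frac{1}{N}\sum_t\cP_{i_t}$ via an $Nd$-variable feasibility LP, using a separation oracle for each $\cP_{i_t}$ obtained from $\cQ_{i_t}$) can in principle be made to work, but as written it omits how to extract a separating hyperplane in $\mathbb{R}^d$ when the lifted LP is infeasible, and it is far heavier than needed. The paper's route is: a linear objective over a Minkowski sum decomposes summand by summand; over each $\cP_{i_k}^{tr(\eps,\cP)}$ the optimum is obtained by a \emph{single} query to $\cQ_{i_k}$ with the objective zeroed out on coordinates with $l_j(\cP)<\eps$, and over $\pbox$ the optimum is in closed form; a single application of \Cref{thm:equivalence of opt and sep} then yields $SO$ with the claimed $\poly(b,b',k,d,1/\eps)$ bounds. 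Your observation that only the $N$ sampled oracles are ever used --- hence the running time is independent of $\ell$ --- is correct and is indeed the point of the construction.
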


\notshow{
\begin{theorem}\label{thm:special case of multiplicative approx-main body}
Let $\ell$ be a positive integer. Let $\cP \subseteq [0,1]^d$ be a mixture of $\{\cP_i\}_{i\in[\ell]}$ over distribution $\cD=\{q_i\}_{i\in [\ell]}$,
where for each $i\in[\ell]$, $\cP_i\subseteq [0,1]^d$ is a convex and down-monotone polytope. 
Assume for all $i$,
the bit complexity of each corner of $\cP_i$ is at most $b$. Suppose for every $i\in [\ell]$, there is an algorithm $\mathcal{A}$ that can maximize any linear objective over $\cP_i$, i.e. solve $\max\{\mathbf{a}\cdot \bx:\bx\in \cP_i\}$ for any $\mathbf{a}\in \mathbb{R}^d$ with bit complexity $y$, in time $RT_{\mathcal{A}}(y)$.
Suppose $c\cdot\pbox\subseteq \cP$ for some $\eps>0$ and absolute constant $c\in (0,1]$, then for any $\delta\in (0,1)$ there is an algorithm (with running time $\poly(d,\frac{1}{\eps},\log(1/\delta))$) that with probability at least $1-\delta$, constructs a convex and down-monotone polytope $\widehat{\cP}\in [0,1]^{d}$ using $\poly(d,\frac{1}{\eps},\log(1/\delta))$ samples from $\cD$, such that
\begin{enumerate}
\item $\frac{c}{6}\cdot\cP\subseteq \widehat{\cP}\subseteq \cP$. 
\item There exists a separation oracle $SO$ for $\widehat{\cP}$. The running time of $SO$ on any input with bit complexity $b'$ is $\poly(d,b,b',\frac{1}{\eps},\log(1/\delta),RT_{\mathcal{A}}(\poly(d,b,b',\frac{1}{\eps},\log(1/\delta))))$.
\end{enumerate}
\end{theorem}
}


The complete proof of \Cref{thm:special case of multiplicative approx} is postponed to \Cref{subsec:proof_general_polytope}. Here we give a sketch of the proof. We first prove that if the polytope $\cP$ contains $c$ times the $\eps$-box polytope, then the convex set $\frac{c}{2}(\ptruncated + \pbox)$ is sandwiched between $\frac{c}{2}\cP$ and $\cP$ (\Cref{lem:boxable polytope contained} in \Cref{subsec:proof_general_polytope}). Next, we construct the polytope $\widehat{\cP}$ that is close to $\frac{c}{2}(\ptruncated + \pbox)$. For $\eps>0$ and every $i\in [\ell]$, let $\ptruncatedI$ be the $(\eps,\cP)$-truncated polytope of $\cP_i$. It is clear that $\ptruncated$ is a mixture of $\{\ptruncatedI\}_{i\in [\ell]}$ over distribution $\cD$. We  construct our polytope $\widehat{\cP}$ using $\ptruncatedEmp$, the mixture of $\{\ptruncatedI\}_{i\in [\ell]}$ over an empirical distribution $\widehat{\cD}$ of $\cD$. Cai et al.~\cite{CaiDW12b,CaiDW13a} proved that with polynomially many samples from $\cD$, $\ptruncatedEmp$ and $\ptruncated$ are close within an additive error $\eps$ in the $\ell_\infty$-norm (\Cref{thm:approx mixture} in \Cref{subsec:proof_general_polytope}). By choosing $\widehat{\cP}=\frac{c}{3}(\ptruncatedEmp + \pbox)$, we show that $\widehat{\cP}$ is a multiplicative approximation to $\cP$.

To apply \Cref{thm:special case of multiplicative approx} to the single-bidder marginal reduced form polytope $W_i$, we first show that $W_i$ is a mixture of a set of polytopes $\{W_{i,t_i}\}_{t_i\in \cT_i}$ over $D_i$, where each $W_{i,t_i}$ contains {``all feasible single-bidder marginal reduced forms'' for a specific type $t_i$ (\Cref{def:W_i-t_i} in \Cref{sec:mrf for constraint additive}).}
For every $t_i$, we can maximize any linear objective over $W_{i,t_i}$ via a query to the demand oracle. 
Finally, we prove that $W_i$ contains ($c$ times) the $\eps$-box polytope of itself, for some $c\in (0,1)$ and $\eps>0$.

\subsection{Proof of \Cref{thm:special case of multiplicative approx}}\label{subsec:proof_general_polytope}

In this section we give a proof of \Cref{thm:special case of multiplicative approx}. We first prove the following observation about the Minkowski addition of down-monotone polytopes.  

\begin{observation}
Let $\cP\subseteq [0,1]^d$ be any down-monotone polytope. Then for every $0\leq a\leq b$, $a\cdot \cP\subseteq b\cdot \cP$. Let $\cP_1,\cP_2\subseteq [0,1]^d$ both be down-monotone polytopes. Then for every $0\leq a_1'\leq a_1$ and $0\leq b_1'\leq b_1$, $a_1'\cP_1+b_1'\cP_2\subseteq a_1\cP_1+b_1\cP_2$.  
\end{observation}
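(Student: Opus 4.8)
The statement to prove is an elementary monotonicity fact about scalings and Minkowski sums of down-monotone polytopes. Since it is labeled an \textbf{Observation} and is used as a technical stepping stone, the proof should be short and direct. The plan is to reduce both claims to the defining property of down-monotone sets: if $\bx \in \cP$ and $\bm{0} \le \bx' \le \bx$ coordinatewise, then $\bx' \in \cP$.

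For the first claim, $a \cdot \cP \subseteq b \cdot \cP$ whenever $0 \le a \le b$: first dispose of the trivial case $b = 0$ (then $a = 0$ and both sides equal $\{\bm{0}\}$, using that $\bm{0} \in \cP$ since $\cP$ is down-monotone and nonempty). For $b > 0$, take any $\by \in a \cdot \cP$, so $\by = a \bx$ for some $\bx \in \cP$. Consider $\bx' := (a/b)\bx$. Since $0 \le a/b \le 1$ and $\bx \in \cP \subseteq [0,1]^d$, we have $\bm{0} \le \bx' \le \bx$, hence $\bx' \in \cP$ by down-monotonicity, and therefore $\by = a\bx = b \bx' \in b \cdot \cP$.

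For the second claim, I would apply the first claim twice together with the standard fact that Minkowski addition is monotone under set inclusion: if $A \subseteq A'$ and $B \subseteq B'$ then $A + B \subseteq A' + B'$ (immediate from the definition $A + B = \{a + b : a \in A, b \in B\}$). Concretely, from $0 \le a_1' \le a_1$ we get $a_1' \cP_1 \subseteq a_1 \cP_1$ by the first claim applied to $\cP_1$; from $0 \le b_1' \le b_1$ we get $b_1' \cP_2 \subseteq b_1 \cP_2$ by the first claim applied to $\cP_2$; then monotonicity of Minkowski addition gives $a_1' \cP_1 + b_1' \cP_2 \subseteq a_1 \cP_1 + b_1 \cP_2$.

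There is no real obstacle here — everything follows from unwinding definitions — so the only thing to be careful about is the degenerate scaling-by-zero case and making explicit that $\bm{0} \in \cP$ for a (nonempty) down-monotone polytope. I would state the Minkowski-monotonicity fact inline rather than as a separate lemma, since it is one line. The whole proof should be four or five sentences.
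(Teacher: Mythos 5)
Your proposal is correct and follows essentially the same route as the paper: scale the point by the ratio $a/b$ (the paper equivalently divides by $a$ and then by $b$) and invoke down-monotonicity, then deduce the second claim from monotonicity of Minkowski addition. Your extra care with the degenerate case $b=0$ and the remark that $\bm{0}\in\cP$ is a minor refinement the paper omits, not a different argument.
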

\begin{proof}
For the first half of the statement, for every $\bx\in a\cdot \cP$, $\frac{\bx}{a}\in \cP$. Since $\cP$ is down-monotone, $\frac{\bx}{b}\in \cP$. Thus $\bx\in b\cdot\cP$. As $a'_1\cdot \cP\subseteq a_1\cdot \cP$ and $b'_1\cdot \cP\subseteq b_1\cdot \cP$, the second half of the statement follows from the definition of the Minkowski addition.
\end{proof}


\notshow{
\begin{theorem}\label{thm:special case of multiplicative approx}[(Concrete Version of \Cref{thm:special case of multiplicative approx-main body})]
Let $\ell$ be a positive integer, and $\cP \subseteq [0,1]^d$ be a mixture of $\{\cP_i\}_{i\in[\ell]}$ over distribution $\cD=\{q_i\}_{i\in [\ell]}$,
where for each $i\in[\ell]$, $\cP_i\subseteq [0,1]^d$ is a convex and down-monotone polytope. 
Suppose for every $i\in [\ell]$, 
\yangnote{there exists an oracle $\cQ_i(\cdot)$, whose output $\cQ_i(\mathbf{a})\in \argmax\{\mathbf{a}\cdot \bx:\bx\in \cP_i\}$ for any input $\mathbf{a}\in \mathbb{R}^d$. 
Given $\{l_j(\cP)\}_{j\in[d]}$, suppose $c\cdot\pbox\subseteq \cP$ for some $\eps>0$ and $c\in (0,1]$. Let $b$ be an upper bound on the bit complexity of $\cQ_i(a)$ for all $i\in[\ell]$ and $\mathbf{a}\in \mathbb{R}^d$, as well as on the bit complexity of $l_j(\cP)$ for all $j\in [d]$.} \footnote{\yangnote{Yang: I removed the assumption that the bit complexity of a corner is at most $b$. I think the bit complexity of the output of $\cQ_i$ implies that the corners all have bit complexity at most $b$. We also need the bit complexity of $l_j(\cP)$.}}
Let the parameter \yangnote{$k\geq \Omega\left(d^4\left(b + \log\left( \frac{1}{\eps}\right)\right)\right)$.} 
 We can construct a convex and down-monotone  polytope $\widehat{\cP}$ using $N = \left\lceil\frac{8kd}{\eps^2}\right\rceil$ samples from $\cD$ such that with probability at least {$1-2de^{-2dk}$}, the following guarantees hold:
\begin{enumerate}
\item $\frac{c}{6}\cdot\cP\subseteq \widehat{\cP}\subseteq \cP$. 
\item There exists a separation oracle $SO$ for $\widehat{\cP}$,
whose running time on input with bit complexity $b'$, is $\poly\left(b,b',k,d,\frac{1}{\eps}\right)$ and requires $\poly\left(b,b', k,d,\frac{1}{\eps}\right)$ queries to oracles in  $\{\cQ_i\}_{i\in[\ell]}$ with inputs of bit complexity at most $\poly\left(b, b',\yangnote{k},d,\frac{1}{\eps}\right)$.
\end{enumerate}


\end{theorem}

}



We use the following result from an unpublished manuscript by Cai et al.~\cite{CaiDW21}. A special case of the result appeared as Theorem 4 in~\cite{CaiDW12bfull} (conference version by the same authors). In particular, the result we use here is stated for a mixture of polytopes, while Theorem 4 in~\cite{CaiDW12bfull} is only for the polytope of all feasible reduced forms, but the proof is essentially the same. Interested readers are welcome to email the first author for a proof of~\Cref{thm:approx mixture}.

\begin{theorem}[\cite{CaiDW21}]\label{thm:approx mixture}
Let $\ell$ be a positive integer. Let $\cP$ be a mixture of $\{\cP_i\}_{i\in[\ell]}$ over distribution $\cD=\{q_i\}_{i\in [\ell]}$,
where $\cP_i\subseteq \mathbb{R}^d$ is a convex polytope for every $i\in[\ell]$. Assume for all $i$,
the bit complexity of each corner of $\cP_i$ is at most $b$. For any $\eps > 0$ and integer 
{$k \geq \ \Omega\left(d^4\left(b + \log\left( \frac{1}{\eps}\right)\right)\right)$},
let $\cD'$ be the empirical distribution induced by $\lceil\frac{8kd}{\varepsilon^2}\rceil$ samples from $\cD$.
Let $\cP$ be the mixture of $\{\cP_i\}_{i\in[\ell]}$ over distribution $\cD'$. With probability at least $1-2de^{-2dk}$ we have that
\begin{enumerate}
    \item For all $\bx\in \cP$,
    there exists an $\bx'\in \cP'$ such that $||\bx-\bx'||_\infty \leq \eps$.
    \item For all $\bx'\in \cP'$,
    there exists an $\bx\in \cP$ such that $||\bx-\bx'||_\infty \leq \eps$.
\end{enumerate}
\end{theorem}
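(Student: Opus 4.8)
The plan is to reduce the two approximate-containment claims to a single uniform statement about \emph{support functions}, and then prove that statement by fixed-direction concentration plus a net over directions. (Here I will write $\cP'$ for the \emph{empirical} mixture $\sum_i q_i' \cP_i$ over $\cD'$, reserving $\cP$ for the true mixture, matching the conclusions of the theorem.) Recall that for a compact convex set $K\subseteq\R^d$ the support function is $h_K(w)=\max_{\bx\in K}\langle w,\bx\rangle$. The first ingredient I would record is the standard consequence of the separating-hyperplane theorem: for compact convex $K,L$, one has $K\subseteq L+\eps\cdot[-1,1]^d$ if and only if $h_K(w)\le h_L(w)+\eps\|w\|_1$ for all $w\in\R^d$. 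Applying this in both directions, it suffices to show that, with probability at least $1-2de^{-2dk}$ over the draw of $\cD'$,
\[
\bigl|h_{\cP}(w)-h_{\cP'}(w)\bigr|\le \eps\,\|w\|_1\qquad\text{for every }w\in\R^d .
\]

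The second ingredient is that the support function is linear under Minkowski mixtures: since $h_{A+B}=h_A+h_B$ and $h_{tA}=t\,h_A$ for $t\ge0$, we get $h_{\cP}(w)=\sum_{i\in[\ell]}q_i\,h_{\cP_i}(w)$ and $h_{\cP'}(w)=\sum_{i\in[\ell]}q_i'\,h_{\cP_i}(w)$, where the empirical weights $q_i'$ come from the $N=\lceil 8kd/\eps^2\rceil$ i.i.d. draws $i_1,\dots,i_N\sim\cD$; hence $h_{\cP'}(w)=\frac1N\sum_{t=1}^N h_{\cP_{i_t}}(w)$ is an empirical mean with expectation exactly $h_{\cP}(w)$. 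Fix $w$ with $\|w\|_1=1$. Because every corner of $\cP_i$ has bit complexity at most $b$, all coordinates of points of $\cP_i$ lie in a bounded range (indeed in $[0,1]$ in all applications in the paper), so the variables $h_{\cP_{i_t}}(w)$ lie in a bounded interval and Hoeffding's inequality gives $\Pr[\,|h_{\cP'}(w)-h_{\cP}(w)|>\eps/2\,]\le 2\exp\bigl(-\Omega(N\eps^2)\bigr)=2\exp(-\Omega(kd))$.

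To upgrade this from a fixed direction to all directions simultaneously, I would use that $w\mapsto h_K(w)$ is $O(1)$-Lipschitz in the $\ell_1$-norm (coordinates of $K$ are bounded), hence so is $w\mapsto h_{\cP}(w)-h_{\cP'}(w)$: take a $\gamma$-net $\cN$ of the $\ell_1$-unit sphere with $\gamma=\Theta(\eps)$ and $|\cN|\le(O(1/\eps))^d$, apply the fixed-direction bound to each point of $\cN$, union-bound, and interpolate by Lipschitzness. The total failure probability is at most $(O(1/\eps))^d\cdot 2\exp(-\Omega(kd))\le 2\exp\bigl(O(d\log(1/\eps))-\Omega(kd)\bigr)$, which is at most $2de^{-2dk}$ once $k\ge\Omega\bigl(d^4(b+\log(1/\eps))\bigr)$, since that hypothesis on $k$ dominates $\log(1/\eps)$ (the extra powers of $d$ and the $b$ term are slack that is needed elsewhere, e.g. for the bit-complexity and running-time guarantees in \Cref{thm:special case of multiplicative approx}). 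On the complementary event, $|h_{\cP}(w)-h_{\cP'}(w)|\le\eps$ for all $\|w\|_1=1$, hence $\le\eps\|w\|_1$ for all $w$ by positive homogeneity, and the first paragraph then yields both conclusions of the theorem.

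The main obstacle is the net/union-bound step: the estimate must hold for the entire continuum of directions from only $\poly(d,b,1/\eps)$ samples, so the net has to be fine enough that Lipschitz interpolation costs only $O(\eps)$ yet coarse enough that its size $(O(1/\eps))^d$ is beaten by the per-direction concentration $\exp(-\Omega(N\eps^2))$; this is exactly what forces the calibration $N=\Theta(kd/\eps^2)$ with $k$ polynomial in $d$ and in $b+\log(1/\eps)$. A secondary point, relevant only if one does not assume $\cP_i\subseteq[0,1]^d$, is that the magnitude bound implied by ``bit complexity at most $b$'' must be tracked through both Hoeffding's range and the Lipschitz constant; with the $[0,1]^d$ normalization used throughout the paper's applications this is immediate. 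The support-function characterization of $\ell_\infty$-Hausdorff closeness and the linearity of support functions under mixtures are routine convex geometry, and I would isolate them as a preliminary lemma.
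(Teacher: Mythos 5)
First, a point of reference: the paper does not actually prove \Cref{thm:approx mixture} — it is imported from the unpublished manuscript \cite{CaiDW21}, with the remark that a special case is Theorem~4 of \cite{CaiDW12bfull} and that the proof is ``essentially the same.'' So your proposal can only be compared with the argument that the hypotheses suggest, namely the Cai--Daskalakis--Weinberg-style one: fix-direction concentration of the support function plus a union bound over the \emph{finitely many directions of bounded bit complexity} (this is exactly what a requirement like $k\geq \Omega(d^4(b+\log(1/\eps)))$ is calibrated for, since the number of such directions is $2^{\poly(d)\cdot(b+\log(1/\eps))}$). Your route replaces that discretization by an $\eps$-net of the $\ell_1$-sphere of size $(O(1/\eps))^d$ together with Lipschitz interpolation of $w\mapsto h_{\cP}(w)-h_{\cP'}(w)$. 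Granting boundedness of the $\cP_i$'s (see below), this is a correct and in fact cleaner argument: the support-function reduction, the identity $h_{\cP'}(w)=\frac1N\sum_t h_{\cP_{i_t}}(w)$, Hoeffding per direction, and the net/union bound all go through, and your hypotheses on $k$ are genuinely weaker ($b$ never enters the concentration), so the $d^4(b+\log(1/\eps))$ bound is, as you say, slack for this part.

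Two caveats you should make explicit. (i) Boundedness is not merely a convenience: the bit-complexity hypothesis alone allows coordinates of size $2^{b}$, and then the stated sample size cannot work for \emph{any} proof — e.g.\ $d=1$, $\cP_1=\{0\}$, $\cP_2=\{2^b\}$, $q_1=q_2=\tfrac12$ forces $|q_2'-\tfrac12|\leq \eps 2^{-b}$, which with $N=\lceil 8k/\eps^2\rceil$ samples and $k=\poly(b,\log(1/\eps))$ fails with probability close to $1$. So the theorem implicitly assumes a normalization such as $\cP_i\subseteq[0,1]^d$ (as in every application, e.g.\ \Cref{thm:special case of multiplicative approx}); your proof should state this assumption rather than defer it, since your Hoeffding range and Lipschitz constant both degrade exponentially in $b$ without it. (ii) The claimed probability $1-2de^{-2dk}$ requires the constant hidden in your ``$2\exp(-\Omega(kd))$'' to exceed $2$ after the union bound. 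With per-net-point tolerance $\eps/2$ and the crude range bound $h_{\cP_i}(w)\in[-\|w\|_1,\|w\|_1]$ the exponent is only $kd$, which does \emph{not} beat $2dk$; you need the sharper observation that for $\cP_i\subseteq[0,1]^d$ and $\|w\|_1=1$ the values $h_{\cP_i}(w)$ lie in an interval of length at most $\|w\|_1=1$, giving exponent $4kd$, after which $(O(1/\eps))^d\cdot 2e^{-4kd}\leq 2e^{-2dk}\leq 2de^{-2dk}$ once $k\gtrsim\log(1/\eps)$. This is a constant-chasing repair inside your own framework, not a change of approach, but as written the final probability claim is not yet justified.
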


To prove \Cref{thm:special case of multiplicative approx}, we need the following lemmas.

\begin{lemma}\label{lem:boxable polytope contained}
Let $\cP\subseteq [0,1]^d$ be a convex and down-monotone polytope. If $c\cdot \pbox\subseteq \cP$ for some $\eps>0$ and $c\in(0,1]$, 
then $\frac{c}{2}\cP \subseteq \frac{c}{2}\ptruncated + \frac{c}{2}\pbox \subseteq \cP$.
\end{lemma}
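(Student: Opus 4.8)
\textbf{Proof plan for Lemma~\ref{lem:boxable polytope contained}.}

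The statement has two inclusions. The first, $\frac{c}{2}\cP \subseteq \frac{c}{2}\ptruncated + \frac{c}{2}\pbox$, is the one that needs a genuine decomposition argument; the second, $\frac{c}{2}\ptruncated + \frac{c}{2}\pbox \subseteq \cP$, follows from convexity plus the hypothesis $c\cdot\pbox\subseteq\cP$. I would handle the easy inclusion first: take any $\bx\in\ptruncated$ and $\by\in\pbox$. Since $\cP$ is down-monotone and $\ptruncated\subseteq\cP$ (noted in~\Cref{def:truncated polytope}), we have $\bx\in\cP$; and $c\by\in\cP$ by hypothesis. Then $\frac{c}{2}\bx + \frac{c}{2}\by = \frac12(c\bx) + \frac12(c\by)$... but wait, $c\bx$ need not be in $\cP$ unless $c\le 1$, which holds; and down-monotonicity gives $c\bx\in\cP$ since $\bx\in\cP$ and $0\le c\bx\le\bx$. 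Hence $\frac{c}{2}\bx+\frac{c}{2}\by$ is a convex combination of the two points $c\bx\in\cP$ and $c\by\in\cP$, so it lies in $\cP$ by convexity. Taking the Minkowski sum over all such $\bx,\by$ gives $\frac{c}{2}\ptruncated+\frac{c}{2}\pbox\subseteq\cP$.

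For the first inclusion, the plan is: given any $\bz\in\cP$, I must write $\frac{c}{2}\bz = \frac{c}{2}\bx + \frac{c}{2}\by$ with $\bx\in\ptruncated$ and $\by\in\pbox$, i.e.\ decompose $\bz = \bx + \by$ coordinatewise. The natural choice is, for each coordinate $j$: if the width $l_j(\cP)\ge\eps$ (the coordinate is ``wide''), put all the mass on $\bx$, i.e.\ $x_j = z_j$ and $y_j = 0$; if $l_j(\cP)<\eps$ (``narrow''), put all the mass on $\by$, i.e.\ $x_j = 0$ and $y_j = z_j$. Then $\bx+\by=\bz$ holds trivially. I must check $\bx\in\ptruncated$ and $\by\in\pbox$. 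For $\bx$: by~\Cref{def:truncated polytope}, $\bx\in\ptruncated$ iff there is $\bx'\in\cP$ with $x_j = x_j'\cdot\ind[l_j(\cP)\ge\eps]$; take $\bx' = \bz\in\cP$, and indeed $x_j = z_j\ind[l_j(\cP)\ge\eps]$ by construction, so $\bx\in\ptruncated$. For $\by\in\pbox$: by~\Cref{def:box polytope} I need $y_j\le\min(\eps, l_j(\cP))$ for all $j$. On wide coordinates $y_j=0$, fine. On narrow coordinates $y_j = z_j\le l_j(\cP)$ since $\bz\in\cP$, and $l_j(\cP)<\eps$, so $y_j\le l_j(\cP) = \min(\eps,l_j(\cP))$. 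Hence $\by\in\pbox$. Therefore $\frac{c}{2}\bz\in\frac{c}{2}\ptruncated+\frac{c}{2}\pbox$ for every $\bz\in\cP$, which is the claim.

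I do not anticipate a serious obstacle here; the only subtlety is making sure the truncation/box definitions are applied with the right index sets (the truncation is defined relative to $\cP$ itself via $\cP^{tr(\eps)} = \cP^{tr(\eps,\cP)}$, so the widths $l_j(\cP)$ appearing in both $\ptruncated$ and $\pbox$ are the same, which is exactly what makes the coordinatewise split consistent) and double-checking that $c\le 1$ is used only where needed (to keep $c\bx, c\by$ inside $[0,1]^d$ and inside $\cP$ via down-monotonicity). Convexity and down-monotonicity of $\cP$, together with the hypothesis $c\cdot\pbox\subseteq\cP$, are the only properties invoked. The resulting set $\frac{c}{2}(\ptruncated+\pbox)$ is itself convex (Minkowski sum of convex sets scaled) and down-monotone (both summands are down-monotone), which is what is needed downstream for constructing $\widehat{\cP}$.
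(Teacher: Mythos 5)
Your proof is correct and follows essentially the same route as the paper's: the same coordinatewise split of $\bz$ into its wide-coordinate part (in $\ptruncated$) and narrow-coordinate part (in $\pbox$) for the first inclusion, and the same use of $\ptruncated\subseteq\cP$, $c\cdot\pbox\subseteq\cP$, convexity, and $c\le 1$ for the second. If anything, you are slightly more explicit than the paper in tracking where $c\le 1$ and down-monotonicity enter when showing $\frac{c}{2}\ptruncated\subseteq\frac{1}{2}\cP$.
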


\begin{proof}
First we prove that $\frac{c}{2}\cP \subseteq \frac{c}{2}\ptruncated + \frac{c}{2}\pbox$.
Note that it is enough to prove that $\cP \subseteq \ptruncated + \pbox$.
For any $\bx\in \cP$,
we consider the vectors $\bx',\bx''\in [0,1]^d$ such that
$$\bx_j'=\bx_j\cdot \ind[l_j(\cP)\geq \eps],\forall j\in [d]$$
$$\bx_j''=\bx_j\cdot \ind[l_j(\cP)<\eps],\forall j\in [d]$$
Note that $\bx = \bx' + \bx''$.
By the definition of $\ptruncated$, $\bx' \in \ptruncated$. For $\bx''$, we notice that for every $j\in [d]$, $\bx_j''=\bx_j\cdot \ind[l_j(\cP)<\eps]\leq l_j(\cP)\cdot \ind[l_j(\cP)<\eps]$. By the definition of $\pbox$,
$\bx''\in \pbox$. Thus $\bx = \bx' + \bx''\in \ptruncated +\pbox$.

For the other direction, note that $\ptruncated\subseteq \cP$ by \Cref{def:truncated polytope} and $c \cdot \pbox\subseteq \cP$ by assumption, so $\frac{c}{2}\pbox + \frac{c}{2}\ptruncated \subseteq \frac{1}{2}\cP + \frac{1}{2}\cP= \cP$.
\end{proof}

\begin{lemma}\label{lem:mixture of truncated distribution}
Let $\ell$ be a positive integer and $\cP \subseteq [0,1]^d$ be a mixture of $\{\cP_i\}_{i\in[\ell]}$ over distribution $\cD=\{q_i\}_{i\in [\ell]}$,
where for each $i\in[\ell]$, $\cP_i\subseteq [0,1]^d$ is a convex and down-monotone polytope.
Then $\ptruncated$ is a mixture of $\{\ptruncatedI\}$ over $\cD$,
where for each $i\in [\ell]$, $\ptruncatedI\subseteq \cP_i$ is the $(\eps,\cP)$-truncated polytope of $\cP_i$ (\Cref{def:truncated polytope}).
\end{lemma}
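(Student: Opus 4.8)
The plan is to recognize that the $(\eps,\cP)$-truncation operation is nothing but the linear coordinate-projection that zeroes out every coordinate $j$ with $l_j(\cP)<\eps$, and that linear maps commute with Minkowski mixtures. First I would fix the set $S=\{j\in[d]:l_j(\cP)\ge\eps\}$ and introduce the linear map $\phi:\mathbb{R}^d\to\mathbb{R}^d$ defined by $(\phi(\bx))_j=x_j$ if $j\in S$ and $(\phi(\bx))_j=0$ otherwise. By \Cref{def:truncated polytope}, for any down-monotone polytope $\cR$ we have $\cR^{tr(\eps,\cP)}=\phi(\cR):=\{\phi(\bx):\bx\in\cR\}$; in particular $\ptruncated=\phi(\cP)$ and $\ptruncatedI=\phi(\cP_i)$ for every $i\in[\ell]$. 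Note that the same $S$ (hence the same $\phi$) is used for all pieces, since the truncation in $\ptruncatedI=\cP_i^{tr(\eps,\cP)}$ is taken with respect to $\cP$, not $\cP_i$; this uniformity is the one point that warrants care and is exactly what makes the argument go through.

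Next I would verify the mixture identity $\phi\big(\sum_{i\in[\ell]}q_i\cP_i\big)=\sum_{i\in[\ell]}q_i\,\phi(\cP_i)$ by the two routine inclusions. For $\subseteq$: any point of $\sum_{i}q_i\cP_i$ has the form $\sum_{i}q_i\bx_i$ with $\bx_i\in\cP_i$, and by linearity of $\phi$ it is mapped to $\sum_{i}q_i\phi(\bx_i)\in\sum_{i}q_i\phi(\cP_i)$. For $\supseteq$: any point of $\sum_{i}q_i\phi(\cP_i)$ has the form $\sum_{i}q_i\phi(\bx_i)=\phi\big(\sum_{i}q_i\bx_i\big)$ with $\bx_i\in\cP_i$, and since $\sum_{i}q_i\bx_i\in\sum_{i}q_i\cP_i$, the point lies in $\phi\big(\sum_{i}q_i\cP_i\big)$. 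Since $\cP=\sum_{i}q_i\cP_i$ by hypothesis, combining these yields $\ptruncated=\phi(\cP)=\sum_{i}q_i\,\phi(\cP_i)=\sum_{i}q_i\,\ptruncatedI$, which is precisely the statement that $\ptruncated$ is the mixture of $\{\ptruncatedI\}_{i\in[\ell]}$ over $\cD$.

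Finally, to match \Cref{def:mixture} verbatim I would note that each $\ptruncatedI$ is convex (the image of the convex set $\cP_i$ under the linear map $\phi$), down-monotone (a short check: if $\bx=\phi(\by)$ with $\by\in\cP_i$ and $0\le\bx'\le\bx$, then since $\bx'_j=0$ for $j\notin S$, the vector $\by'$ with $\by'_j=\bx'_j$ for $j\in S$ and $\by'_j=\by_j$ otherwise satisfies $\by'\le\by$, so $\by'\in\cP_i$ and $\phi(\by')=\bx'$), and contained in $[0,1]^d$ because $\ptruncatedI\subseteq\cP_i\subseteq[0,1]^d$ by the remark in \Cref{def:truncated polytope}; and $\cD=\{q_i\}_{i\in[\ell]}$ is a distribution by hypothesis. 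I do not anticipate any genuine obstacle here: the entire content is that a coordinate-zeroing projection is linear and therefore commutes with finite Minkowski-weighted sums, and the only thing to be careful about is the bookkeeping that the truncation threshold is defined by $\cP$ and is common to all the $\cP_i$'s.
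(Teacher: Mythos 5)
Your proof is correct and takes essentially the same approach as the paper: both arguments observe that the $(\eps,\cP)$-truncation zeroes out a fixed coordinate set determined by $\cP$ (common to all pieces), and verify the two inclusions by decomposing and recombining Minkowski-weighted sums coordinatewise. Your packaging via the linear projection $\phi$ and the identity $\phi\left(\sum_i q_i\cP_i\right)=\sum_i q_i\phi(\cP_i)$ is a slightly cleaner presentation of the same computation the paper carries out element-by-element.
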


\begin{proof}
To prove our statement, we first show that for each $\widehat{\bx}\in \ptruncated$,
there exist $\{\widehat{\bx}^{(i)}\in \ptruncatedI\}_{i\in[\ell]}$ such that $\widehat{\bx}= \sum_{i\in[\ell]}q_i \widehat{\bx}^{(i)}$.
By definition of $\ptruncated$,
there exists $\bx \in \cP$ such that for each $j\in[d]$, $\widehat{\bx}_j = \bx_j\cdot \ind[l_j(\cP)\geq \eps]$.

Since $\bx \in \cP$ and $\cP$ is a mixture of $\{\cP_i\}_{i\in [\ell]}$ over $\cD$, there exist $\{\bx^{(i)} \in \cP_i\}_{i\in[\ell]}$ such that $\bx = \sum_{i\in[\ell]}q_i \bx^{(i)}$.
For each $i\in [\ell]$,
consider $\widehat{\bx}^{(i)}$ be defined such that for all $j\in[d]$:
$$\widehat{\bx}_j^{(i)} = \bx_j^{(i)}\ind[l_j(\cP)\geq \eps]
$$
Clearly, $\widehat{\bx}^{(i)}\in \ptruncatedI$ and $\widehat{\bx}=\sum_{i\in[\ell]}q_i \widehat{\bx}^{(i)}$. Similarly, we can argue that  any point $\widehat{\bx}$ that lies in the mixture of $\{\ptruncatedI\}$ over $\cD$ must also lie in $\ptruncated$, which concludes the proof.
\end{proof}

\begin{lemma}\label{lem:mixture of lower-closed}
Let $\ell$ be a positive integer, and $\cP\subseteq [0,1]^d$ be a mixture of $\{\cP_i\}_{i\in[\ell]}$ over distribution $\cD=\{q_i\}_{i\in[\ell]}$,
where $\cP_i$ is a convex and down-monotone polytope in $[0,1]^d$ for every $i$. Then $\cP$ is a convex and down-monotone polytope.
\end{lemma}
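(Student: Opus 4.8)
The plan is to check the three asserted properties of $\cP=\sum_{i\in[\ell]}q_i\cP_i$ in turn — convexity, being a (bounded) polytope, and down-monotonicity — each of which reduces to the corresponding property of the individual $\cP_i$ together with elementary facts about dilates and Minkowski sums. Throughout I would assume, without loss of generality, that $q_i>0$ for all $i\in[\ell]$, simply deleting the indices with $q_i=0$ (this changes neither $\cP$ nor the hypotheses).

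For convexity: each dilate $q_i\cP_i$ is convex since $\cP_i$ is convex and $q_i\geq 0$, and the Minkowski sum of finitely many convex sets is convex, so $\cP$ is convex. Moreover, since every $\cP_i\subseteq[0,1]^d$ and $\sum_i q_i=1$, any $\bx=\sum_i q_i\bx^{(i)}\in\cP$ is a convex combination of points of $[0,1]^d$, hence $\cP\subseteq[0,1]^d$ and in particular $\cP$ is bounded. For being a polytope: $\cP$ is the image of the (bounded) polyhedron $\prod_{i\in[\ell]}q_i\cP_i$ under the linear map $(\by^{(1)},\dots,\by^{(\ell)})\mapsto\sum_{i}\by^{(i)}$, so it is a bounded polyhedron, i.e.\ a convex polytope in the sense of \Cref{def:facet-complexity}.

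The only part requiring a genuine (but short) argument is down-monotonicity. Given $\bx\in\cP$ and $\mathbf{0}\leq\bx'\leq\bx$, I would write $\bx=\sum_{i\in[\ell]}q_i\bx^{(i)}$ with $\bx^{(i)}\in\cP_i$, and define per-coordinate ratios $r_j\in[0,1]$ by $r_j=\bx'_j/\bx_j$ when $\bx_j>0$ and $r_j=0$ when $\bx_j=0$. Setting $\bx'^{(i)}_j=r_j\,\bx^{(i)}_j$ for all $i,j$ gives $\mathbf{0}\leq\bx'^{(i)}\leq\bx^{(i)}$ coordinatewise, so $\bx'^{(i)}\in\cP_i$ by down-monotonicity of $\cP_i$; and $\sum_i q_i\bx'^{(i)}_j=r_j\sum_i q_i\bx^{(i)}_j=r_j\bx_j$, which equals $\bx'_j$ both when $\bx_j>0$ (by the choice of $r_j$) and when $\bx_j=0$ (since then $\bx'_j=0=r_j\bx_j$). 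Hence $\bx'=\sum_i q_i\bx'^{(i)}\in\cP$.

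I do not expect a real obstacle. The one mild subtlety, which is exactly why the proof uses coordinatewise ratios $r_j$ rather than a single global scalar, is handling the coordinates $j$ with $\bx_j=0$ (where $\bx^{(i)}_j$ need not vanish even though $\bx'_j$ must); the argument above disposes of this cleanly. This lemma is then what justifies, in the proof of \Cref{thm:special case of multiplicative approx}, treating the empirical mixture $\ptruncatedEmp$ and the construction $\widehat{\cP}=\frac{c}{3}(\ptruncatedEmp+\pbox)$ as down-monotone convex polytopes.
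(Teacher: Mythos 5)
Your proof is correct and takes essentially the same route as the paper's: decompose $\bx$ as a mixture, rescale each $\bx^{(i)}$ coordinatewise by the ratio $\hat{\bx}_j/\bx_j$, invoke down-monotonicity of each $\cP_i$, and re-sum. In fact your treatment is slightly more careful than the paper's (which writes $\hat{\bx}^{(i)}_j=\bx^{(i)}_j\cdot \hat{\bx}_j/\bx_j$ without addressing the $\bx_j=0$ case, and does not explicitly argue convexity or the polytope property at all).
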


\begin{proof}
For every $\bx\in \cP$, there exists a set of vectors $\{\bx^{(i)}\}_{i\in [\ell]}$ such that $\bx^{(i)}\in \cP_i,\forall i$, and $\bx = \sum_{i\in[\ell]}q_i \cdot \bx^{(i)}$. Now consider any $\hat{\bx}$ such that $\mathbf{0}\leq \hat{\bx}\leq \bx$.
For each $i \in [\ell]$, let $\hat{\bx}^{(i)}\in [0,1]^d$ be the vector such that $\hat{\bx}^{(i)}_j=\bx^{(i)}_j\cdot \hat{\bx}_j/\bx_j,\forall j\in [d]$.
Clearly, $0\leq \hat{x}^{(i)}_j\leq x^{(i)}_j$ for all $j\in[d]$. 
Since $\cP_i$ is down-monotone, we have $\hat{\bx}^{(i)}\in \cP_i$. Note that for every $j\in [d]$, $\sum_{i}q_i\cdot\hat{\bx}^{(i)}_j=(\sum_{i\in[\ell]} q_i\bx^{(i)}_j)\cdot \hat{\bx}_j/\bx_j=\hat{\bx}_j$. Thus $\hat{\bx}=\sum_{i\in[\ell]} q_i\cdot\hat{\bx}^{(i)}\in \cP$.
\end{proof}

\notshow{
\begin{prevproof}{Theorem}{thm:special case of multiplicative approx}
We condition on the probability \todo{say probability}.
We are only going to prove that for each $p\in \frac{1}{2}\cP^t_{\eps} +\frac{1}{2}\cP^b_{\eps}$,
then there exists a $\widetilde{p}\in \frac{1}{2}\widehat{\cP}^t_{\eps} +\frac{1}{2}\cP^b_{\eps}$ such that:
$$\frac{p(j)}{\widetilde{p}(j)}\in \left[1-\delta,1+\delta\right]$$
since the other case is similar.
Any $p\in \frac{1}{2}\cP^b_{\eps} +\frac{1}{2}\cP^t_{\eps}$,
can be decomposed into $p = \frac{1}{2}p^b+ \frac{1}{2}p^t$ such that $p^b \in \cP^b_{\eps}$ and $p^t \in \cP^t_{\eps}$.
By Theorem~\ref{thm:approx mixture} we know that there exists a $\widehat{p}\in \widehat{\cP}_{\eps}^t$ such that for each $j\in[d]$:
\begin{align}
|\widehat{p}^t(j) - p^t(j)| \leq \delta \eps\label{eq:additive error}
\end{align}
We partition the coordinates of $p$ into four disjoint sets.
\begin{align*}
    S^1 =& \{j\in[d]: l_j(\cP) \geq \eps \land p(j) \geq \eps\} \\
    S^2 =& \{j\in[d]:l_j(\cP) \geq \eps \land  p(j) < \eps \land p^t(j) \leq \widehat{p}^t(j)\} \\
    S^3 =& \{j\in[d]:l_j(\cP) \geq \eps \land  p(j) < \eps \land p^t(j) > \widehat{p}^t(j)\}\\
    S^4 =& \{j\in[d]:l_j(\cP) < \eps\}\\
\end{align*}
We consider the point $\widetilde{p}^t$ such that:
$$
\widetilde{p}^t(j) = 
\begin{cases}
\widehat{p}^t(j) \quad & \text{if $j\in S^1$}\\
{p}^t(j) \quad & \text{if $j\in S^2$}\\
\widehat{p}^t(j) & \text{if $j\in S^3$} \\
0 & \text{if $j\in S^4$} 
\end{cases}
$$
Since for each $i\in[L]$, $\cP_i$ is a lower closed polytpe,
then also $\cP_{\eps,i}^t$ is a lowered closed polytope.
Thus by Lemma~\ref{lem:mixture of lower-closed} 
$\widehat{\cP}_{\eps}^t$ is a lowered-closed polytope.
Notice that for each $j\in[d]$, $\widetilde{p}^t(j) \leq \widehat{p}^t(j)$,
which implies that $\widetilde{p}^t\in \widehat{\cP}_{\eps}^t$.
We also consider the point $\widetilde{p}^b\in [0,1]^d$:
$$
\widetilde{p}^b(j) = 
\begin{cases}
p^b(j) \quad & \text{if $j\in S^1$}\\
p^b(j) \quad & \text{if $j\in S^2$}\\
\min(\eps, p^b(j) + p^t(j) - \widehat{p}^t(j)) & \text{if $j\in S^3$} \\
p^b(j) & \text{if $j\in S^4$}
\end{cases}
$$
Observe that for $j\in S^1 \cup S^2 \cup S^4$,
the elements of $\widetilde{p}^b$ are the same as $p^b\in \cP_{\eps}^b$.
Notice that for $j\in S^2$,
$h_j(\cP)\geq \eps$ and $\widetilde{p}^b(j) \leq \eps$,
which implies that $\widetilde{p}^b \in \cP_{\eps}^b$.

We consider the element $\widetilde{p}=\frac{1}{2}\widetilde{p}^b + \frac{1}{2}\widetilde{p}^t $Now we are going to show that $\widetilde{p} \in \frac{1}{2}\widehat{\cP}_{\eps}^b  +\frac{1}{2}\cP_{\eps}^t$,
we have that for each $j\in [d]$:
$$\frac{p(j)}{\widetilde{p}(j)}\in \left[1-\delta,1+\delta\right]$$
For $j\in S^1$,
note that by Equation~\eqref{eq:additive error} and that for $j\in S^1$,then $p(j) \geq \eps$
\begin{align*}
&-\frac{\delta}{2}p(j)\leq -\frac{\delta \eps}{2} \leq \widetilde{p}(j) - p(j) \leq \frac{\delta \eps}{2} \leq \frac{\delta}{2}p(j) \\
\Rightarrow & \frac{p(j)}{\widetilde{p}(j)}\in \left[1-\frac{\delta}{2},1+\frac{\delta}{2} \right]
\end{align*}

For $j\in S^2$,
observe that 
$$\widetilde{p}(j) = \frac{1}{2}\widetilde{p}^t(j)+ \frac{1}{2}\widetilde{p}^b(j) =
\frac{1}{2}{p}^t(j)+ \frac{1}{2}{p}^b(j) = p(j)$$

For $j\in S^3$,
if $\eps\geq p^b(j) + p^t(j) - \widehat{p}^t(j)$,
then 
\begin{align*}
\widetilde{p}(j) = & \frac{1}{2}\widetilde{p}^t(j)+ \frac{1}{2}\widetilde{p}^b(j)  \\
=& \frac{\widehat{p}^t(j)}{2}+ \frac{p^b(j) + p^t(j) - \widehat{p}^t(j)}{2}\\
=&\frac{1}{2}{p}^t(j)+ \frac{1}{2}{p}^b(j) = p(j)
\end{align*}

if $\eps< p^b(j) + p^t(j) - \widehat{p}^t(j)$,
then
\begin{align*}
\widetilde{p}(j) = & \frac{1}{2}\widetilde{p}^t(j)+ \frac{1}{2}\widetilde{p}^b(j)  \\
=& \frac{\widehat{p}^t(j)}{2}+ \frac{\eps}{2}\\
<& \frac{\widehat{p}^t(j)}{2}+ \frac{p^b(j) + p^t(j) - \widehat{p}^t(j)}{2}\\
=&\frac{1}{2}{p}^t(j)+ \frac{1}{2}{p}^b(j) = p(j)
\end{align*}
Observe that by combining the fact that for each $j\in [d]$ since $p^b\in \cP_{\eps}^b$ then $p^b(j)\leq \eps$ and Equation~\eqref{eq:additive error}, we get that
$$
p^b(j) + p^t(j) - \widehat{p}^t(j)\leq \eps+\delta \eps=(1+\delta)\eps
$$
thus
\begin{align*}
\widetilde{p}(j) = & \frac{1}{2}\widetilde{p}^t(j)+ \frac{1}{2}\widetilde{p}^b(j)  \\
=& \frac{\widehat{p}^t(j)}{2}+ \frac{\eps}{2}\\
\geq& \frac{\widehat{p}^t(j)}{2(1+\delta)}+ \frac{p^b(j) + p^t(j) - \widehat{p}^t(j)}{2(1+\delta)}\\
=&\frac{1}{1+\delta}\left(\frac{1}{2}{p}^t(j)+ \frac{1}{2}{p}^b(j)\right)\\
=&  \frac{1}{1+\delta}p(j)
\end{align*}
which implies that 
$$\frac{p(j)}{\widetilde{p}(j)}\in \left[1,1+\delta\right]$$
For $j\in S^4$,
since $h_j(\cP) < \eps$,
note that $p^t(j) = \widetilde{p}^t(j) = 0$,
which implies that 
$$\widetilde{p}(j) = \frac{1}{2}\widetilde{p}^b(j) = \frac{1}{2}{p}^b(j)= p(j)$$

Now we are going to prove that $\frac{(1-\delta)^2}{2}\cP \subseteq \widehat{\cP}\subseteq \cP$,
where $\widehat{\cP} = (1-\delta)\left( \frac{1}{2}\widehat{\cP}_{\eps}^t  +\frac{1}{2}\cP_{\eps}^b\right)$.
For each $\widetilde{p}\in\widehat{\cP}=(1-\delta)\left(\frac{1}{2}\widehat{\cP}_{\eps}^t  +\frac{1}{2}\cP_{\eps}^b\right)$,
we can decompose $\widetilde{p}=\frac{1-\delta}{2}\widehat{p}^t + \frac{1-\delta}{2}\widehat{p}^b$
such that $\widehat{p}^t\in \widehat{\cP}_{\eps}^t$ and $\widehat{p}^b\in \cP^b$.
By what we proved before,
we know that there exists a $p^t\in{\cP}_{\eps}^t$ such that  for each $j\in[d]$
$$
\frac{p^t(j)}{\widehat{p}^t(j)} \in \left[1-\delta,1+\delta \right]
$$
Consider the point $p = \frac{1}{2}p^t + \frac{1}{2}p^b \in \left(\frac{1}{2}{\cP}_{\eps}^t  +\frac{1}{2}\cP_{\eps}^b\right)$.
For for each $j\in[d]$
$$
\widetilde{p}(j)= \frac{1-\delta}{2}\widehat{p}^t(j) + \frac{1-\delta}{2}\widehat{p}^b(j) \leq \frac{1}{2}p(j)+ \frac{1}{2}\widehat{p}^b(j) = p(j)
$$
Since $\cP_\eps^b\subseteq \cP$ and $\cP$ is a lowered-closed polytope,
by Lemma~\ref{lem:boxable polytope contained}, we know that $p \in \frac{1}{2}{\cP}_{\eps}^t  +\frac{1}{2}\cP_{\eps}^b \subseteq \cP$
Since $\cP$ is a lowered closed polytope, $p \in \cP$ and for each $j\in[d]$ $\widetilde{p}(j) \leq p(j)$,
then $\widetilde{p}\in \cP$,which implies that $\widehat{\cP} \subseteq \cP$.

For any ${p} \in (1-\delta)^2\left( \frac{1}{2}{\cP}_{\eps}^t  +\frac{1}{2}\cP_{\eps}^b\right)$,
we can decompose ${p}=\frac{(1-\delta)^2}{2}{p}^t + \frac{(1-\delta)^2}{2}{p}^b$
such that ${p}^t\in {\cP}_{\eps}^t$ and ${p}^b\in \cP^b$.
By what we have proved before,
we know that there exists a $\widehat{p}^t$ such that for each $j\in[d]$:
$$
\frac{p^t(j)}{\widehat{p}^t(j)} \in \left[1-\delta,1+\delta \right]
$$
Consider the point $\widehat{p} = \frac{1-\delta}{2}\widehat{p}^t + \frac{1-\delta}{2}p^b \in \left(\frac{1-\delta}{2}{\cP}_{\eps}^t  +\frac{1-\delta}{2}\cP_{\eps}^b\right)$.
For each $j\in[d]$ note that
$$p(j) = \frac{(1-\delta)^2}{2}{p}^t(j) + \frac{(1-\delta)^2}{2}{p}^b(j) \leq \frac{1-\delta}{2}\widehat{p}^t(j) + \frac{1-\delta}{2}{p}^b(j)=\widehat{p}(j)$$

We know that $\cP_\eps^b$ is lowered-closed and by Lemma~\ref{lem:mixture of lower-closed} we know that $\widehat{\cP}_\eps^t$ is lowered-closed.
Thus $\widehat{\cP}$ is a lowered-closed polytope,
as the Minkowski sum of lowered-closed polytopes $\frac{(1-\delta)^2}{2}\widehat{\cP}_\eps^t$ and $\frac{(1-\delta)^2}{2}\cP_\eps^b$.
Since for each $p \in (1-\delta)^2\left(\frac{1}{2}\cP_\eps^t +\frac{1}{2}\cP_\eps^b\right)$ and for each $j\in[d]$,
there exists a $\widehat{p}\in \widehat{\cP}$ such that for each $j\in[d]$,
it holds that $p(j) \leq \widehat{p}(j)$, implies that since $\cP$ is lowered-closed  $p \in \widehat{\cP}$.
Thus $(1-\delta)^2\left(\frac{1}{2}\cP_\eps^t +\frac{1}{2}\cP_\eps^b\right)\subseteq \widehat{\cP}$.

By Lemma~\ref{lem:boxable polytope contained},
we know that $\frac{(1-\delta)^2}{2}\cP\subseteq (1-\delta)^2\left( \frac{1}{2}{\cP}_{\eps}^t  +\frac{1}{2}\cP_{\eps}^b\right) \subseteq \widehat{\cP}$.

\end{prevproof}

}


To prove \Cref{thm:special case of multiplicative approx},
we will also need the celebrated result of the equivalence between optimization and separation.

\begin{theorem}[\cite{KarpP80,GrotschelLS81}]\label{thm:equivalence of opt and sep}
Let $\cP\subseteq \mathbb{R}^d$ be a convex polytope and suppose we have access to an algorithm $\mathcal{A}(\bm{a}):\mathbb{R}^d\rightarrow\cP$, that takes input vector $\bm{a}\in \mathbb{R}^d$,
outputs a vector $\bx^* \in \cP$ with bit complexity at most $b$,
such that $\bx^* \in \text{argmax}\{\bm{a}\cdot \bx:\bx\in \cP\}$.
Then we can construct a separation oracle $SO$ for $\cP$, where on {any input $\bm{a}\in \mathbb{R}^d$ with bit complexity at most $b'$, $SO$ makes at most $\poly(d,b,b')$ queries to $\mathcal{A}$, and the input of each query has bit complexity no more than $\poly(d,b,b')$.} Moreover, the running time of $SO$ on $\bm{a}$ is at most  $\poly(d,b,b',RT_{\mathcal{A}}(\poly(d,b,b')))$. Here $RT_{\mathcal{A}}(c)$ is the running time of $\mathcal{A}$ with input whose bit complexity is at most $c$. 
\end{theorem}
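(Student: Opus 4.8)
The plan is to establish only the optimization-to-separation direction (the reverse direction, separation-to-optimization, is already furnished by the ellipsoid method, i.e.\ Theorem~\ref{thm:ellipsoid}), and to do so via \emph{polar duality}, making two nested invocations of the ellipsoid method.

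First I would put $\cP$ into a well-conditioned form. Since $\cP$ is bounded with finite facet-complexity, Fact~\ref{fact:corners2} shows it is the convex hull of finitely many rational corners, and every corner equals $\mathcal{A}(\bm a)$ for a suitable objective $\bm a$, hence has bit complexity at most $b$. Querying $\mathcal{A}$ on $d+1$ generic objectives yields points of $\cP$ whose affine hull is the affine hull of $\cP$; if these are not affinely independent I recurse inside that lower-dimensional hull, and otherwise I obtain a rational point $\bm z\in\mathrm{int}(\cP)$ of bit complexity $\poly(d,b)$. Translating by $-\bm z$, I may assume $\cP$ is full-dimensional with $\mathbf 0\in\mathrm{int}(\cP)$; this changes all bit complexities by at most a $\poly(d,b)$ factor and can be undone at the end.

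Next I would build a separation oracle for the polar $\cP^{\circ}=\{\bm a:\bm a\cdot\bx\le 1\ \forall\,\bx\in\cP\}$ using $\mathcal{A}$ alone: on input $\bm a$, compute $\bx^{*}=\mathcal{A}(\bm a)\in\argmax\{\bm a\cdot\bx:\bx\in\cP\}$; if $\bm a\cdot\bx^{*}\le 1$ then $\bm a\in\cP^{\circ}$, while if $\bm a\cdot\bx^{*}>1$ then $\bx^{*}$ is a valid separating hyperplane, since $\bx^{*}\in\cP=\cP^{\circ\circ}$ gives $\bx^{*}\cdot\bm a'\le 1$ for every $\bm a'\in\cP^{\circ}$ whereas $\bx^{*}\cdot\bm a>1$. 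Because $\mathbf 0\in\mathrm{int}(\cP)$ and $\cP$ is bounded, $\cP^{\circ}$ is a bounded polytope whose defining inequalities are precisely the corners of $\cP$, so its facet-complexity is $\poly(d,b)$. Feeding this oracle into Theorem~\ref{thm:ellipsoid} yields an algorithm that maximizes any linear objective over $\cP^{\circ}$ with $\poly(d,b,b')$ calls to $\mathcal{A}$ on inputs of bit complexity $\poly(d,b,b')$, in total time $\poly(d,b,b',RT_{\mathcal{A}}(\poly(d,b,b')))$. Applying the very same polar construction once more --- now to this optimization oracle for $\cP^{\circ}$ --- produces a separation oracle for $\cP^{\circ\circ}=\cP$ (the bipolar identity being valid since $\cP$ is a closed convex set containing $\mathbf 0$), which is the desired $SO$. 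Composing the query and time bounds through the two ellipsoid rounds, and then undoing the initial translation, preserves the claimed bounds.

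The routine heart of the argument --- the two ``optimize $\Rightarrow$ separate'' steps --- is short; the part I expect to require care is the reduction to the full-dimensional, origin-in-the-interior case: identifying the affine hull of $\cP$ from oracle calls, certifying an interior point with controlled bit complexity, and then tracking how bit complexity, rounding error, and the $\poly(\cdot)$ parameters propagate through two nested ellipsoid invocations so that the final oracle genuinely meets the stated $\poly(d,b,b')$ query bound and $\poly(d,b,b',RT_{\mathcal{A}}(\poly(d,b,b')))$ running-time bound.
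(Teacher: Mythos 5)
The paper does not prove this theorem itself; it cites it as the classical optimization-equals-separation result of Karp--Papadimitriou and Gr\"otschel--Lov\'asz--Schrijver, and your polar-duality argument (optimization over $\cP$ gives separation over $\cP^{\circ}$, ellipsoid turns that into optimization over $\cP^{\circ}$, and a second polarization yields separation over $\cP^{\circ\circ}=\cP$, after the standard reduction to a full-dimensional polytope with $\mathbf{0}$ in its interior) is exactly the canonical proof in those references. So your proposal is correct and takes essentially the same approach as the source the paper relies on, with the only genuinely delicate point being the one you already flag: handling lower-dimensional $\cP$ and certifying an interior point with polynomial bit complexity.
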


\begin{prevproof}{Theorem}{thm:special case of multiplicative approx}

Consider the polytopes $\ptruncated$ and $\pbox$.
By Lemma~\ref{lem:mixture of truncated distribution}, $\ptruncated$ is a mixture of $\{\ptruncatedI\}_{i\in[\ell]}$ over distribution $\cD$. 
Let $\widehat{\cD}$ be the empirical distribution induced by $N=\lceil\frac{8kd}{\eps^2}\rceil$ samples from $\cD$.  
Let $\ptruncatedEmp$ be the mixture of 
$\{\ptruncatedI\}_{i\in[\ell]}$ over $\widehat{\cD}$.
By Theorem~\ref{thm:approx mixture},
we have that with probability at least {$1-2de^{-2dk}$}, both of the two following conditions hold:
\begin{enumerate}
    \item For each $\widehat{p}\in \ptruncatedEmp$,
    there exists a $p\in \ptruncated$ such that $||p-\widehat{p}||_\infty \leq \eps$.
    \item For each ${p}\in {\ptruncated}$,
    there exists a $\widehat{p}\in \ptruncatedEmp$ such that $||p-\widehat{p}||_\infty \leq \eps$.
\end{enumerate}
For the rest of the proof, we condition on the event that both conditions hold.
We consider the polytope $\widehat{\cP}=\frac{c}{3}\left(\ptruncatedEmp+\pbox\right)$.
First we are going to prove that $\widehat{\cP}\subseteq \cP$.
By condition 1, we have that for any $\widehat{p}^{tr}\in \ptruncatedEmp$,
there exists a ${p}^{tr}\in {\ptruncated}$ such that $||p^{tr}-\widehat{p}^{tr}||_{\infty}\leq \eps$.
Consider the vector $\tilde{p}^{tr}$ defined such that for each $j\in[d]$,
$$
\tilde{p}^{tr}_j = \min\left( \widehat{p}^{tr}_j, p^{tr}_j\right).
$$
Since for each $j\in[d]$, $\tilde{p}^{tr}_j \leq {p}^{tr}_j$ and $\ptruncated$ is down-monotone (\Cref{lem:mixture of lower-closed}),
we have $\tilde{p}^{tr}\in \ptruncated$.
Let vector $\tilde{p}^{box}$ be such that for every $j\in[d]$,
$$
\tilde{p}^{box}_j = \widehat{p}^{tr}_j - \tilde{p}^{tr}_j=\widehat{p}^{tr}_j -\min(\widehat{p}^{tr}_j, {p}^{tr}_j).
$$
Notice that for every $\bx\in \ptruncatedI$, $x_j=0$ for all $j$ such that $l_j(\cP)<\eps$. Since $\ptruncated$ and $\ptruncatedEmp$ are both mixtures of $\{\ptruncatedI\}_{i\in[\ell]}$, we have for every $\bx\in \ptruncated$ and $\widehat{\bx}\in \ptruncatedEmp$, $x_j=\widehat{x}_j=0$ for all $j$ such that $l_j(\cP)<\eps$. Therefore, we have $\tilde{p}^{box}_j\leq \eps\cdot \ind[l_j(\cP)\geq \eps]\leq \min(\eps,l_j(\cP))$, for every $j\in [d]$. The first inequality follows from the fact that $||p^{tr}-\widehat{p}^{tr}||_{\infty}\leq \eps$, 
and that if $l_j(\cP) < \eps$,
then $\widehat{p}^{tr}_j=p^{tr}_j=0$.
Thus $\tilde{p}^{box}\in \pbox$.


For every $\widehat{p}^{tr}\in \ptruncatedEmp$, we have found $\tilde{p}^{tr}\in \ptruncated$ and $\tilde{p}^{box}\in \pbox$ such that $\widehat{p}^{tr}=\tilde{p}^{tr}+\tilde{p}^{box}$. Thus

\begin{equation}\label{equ:proof-hatp}
\begin{aligned}
&\ptruncatedEmp\subseteq \ptruncated + \pbox \\
\Rightarrow & \ptruncatedEmp+\pbox \subseteq \ptruncated + 2\pbox\subseteq \frac{3}{c}\cdot\cP \\
\Rightarrow & \widehat{\cP}=\frac{c}{3}\left(\ptruncatedEmp+\pbox\right)\subseteq \cP
\end{aligned}
\end{equation}

The second line follows from the assumption that $c\pbox \subseteq \cP$ and $\ptruncated\subseteq \cP$ (by \Cref{def:truncated polytope} and the fact that $\cP$ is down-monotone), and $c\leq 1$. 
Similarly, by switching the role of $\ptruncated$ and $\ptruncatedEmp$, with condition 2, we also have ${\ptruncated}\subseteq \ptruncatedEmp + \pbox$. Thus
\begin{align*}
&{\ptruncated} + \pbox \subseteq \ptruncatedEmp + 2\pbox \\ 
\Rightarrow& \cP \subseteq {\ptruncated} + \pbox \subseteq 2\left(\ptruncatedEmp + \pbox \right)=\frac{6}{c}\widehat{\cP}\\
\Rightarrow & \frac{c}{6}\cP \subseteq \widehat{\cP}
\end{align*}
The second line follows from $\cP \subseteq {\ptruncated} + \pbox$ (Lemma~\ref{lem:boxable polytope contained}), {the origin $\mathbf{0}\in \ptruncatedEmp$}, and the definition of $\widehat{\cP}$. Thus $\frac{c}{6}\cP \subseteq \widehat{\cP}\subseteq \cP$.

To construct a separation oracle for $\widehat{\cP}$, it is sufficient to optimize any linear objective over $\widehat{\cP}$. For every $\bm{a}\in \mathbb{R}^d$, we are going to solve the maximization problem $\max\{\bm{a}\cdot \bx:~\bx\in \widehat{\cP}\}$.

Let $\{i_1,...,i_N\}$
be the $N$ samples from $\cD$, where $i_k\in [\ell]$ for $k\in [N]$.
We notice that
$$
\widehat{\cP} = \sum_{k\in[N]}\frac{c}{3N}\cdot\cP^{tr(\eps,\cP)}_{i_k} + \frac{c}{3}\pbox 
$$
is the Minkowski addition of a set of polytopes. Thus in order to maximize over $\widehat{\cP}$, it's sufficient to maximize over each polytope. In other words, it is sufficient to solve
$\max\{\bm{a}\cdot \bx:\bx\in \pbox\}$ and 
$\max\{\bm{a}\cdot \bx:\bx\in \cP^{tr(\eps,\cP)}_{i_k}\}$ for each $k\in [N]$.
First consider $\pbox$. Since the polytope is a ``box'' where the constraint for each coordinate $j$ is separate: $x_j\leq \min(\eps,l_j(\cP))$.  Thus the optimum $\bx^{box}\in \pbox$ satisfies that $x^{box}_j =\min(l_j(\cP),\eps)\cdot \ind[a_j>0]$. 
Thus the optimum $\bx^{box}$ can be computed in time {$O(d\cdot(b+\log 1/\eps+b'))$} 
and its bit complexity is at most {$O(d\cdot(b+\log 1/\eps))$, where $b'$ is the bit complexity of $\bm{a}$.} 

Now we show how to 
solve $\max\{\bm{a}\cdot \bx:\bx\in \cP_{i_k}^{tr(\eps,\cP)}\}$
using a single query to $\cQ_{i_k}(\cdot)$, for every $k\in [N]$.
Consider the vector $\bm{a}'\in \mathbb{R}^d$ such that $a_j'=a_j\cdot \ind[l_j(\cP)\geq \eps], \forall j\in [d]$.
Then clearly
$$\max\{\bm{a}\cdot \bx:\bx\in \cP_{i_k}^{tr(\eps,\cP)}\} = \max\{\bm{a}'\cdot \bx:\bx\in \cP_{i_k}\}$$
Let $\widehat{\bx}^{(i_k)}$ be the output from oracle $\cQ_{i_k}(\bm{a}')$,
then $\widehat{\bx}^{(i_k)}\in \text{argmax}\{\bm{a}'\cdot \bx:\bx\in \cP_{i_k}\}$.
Consider the element $\bx^{(i_k)}\in [0,1]^d$ such that for each $j\in [d]$, $x^{(i_k)}_j=\widehat{x}^{(i_k)}_j\cdot\ind[l_j(\cP)\geq \eps]$.
Then $\bx^{(i_k)}\in \text{argmax}\{\bm{a}'\cdot \bx:\bx\in \cP^{tr(\eps,\cP)}_{i_k}\}$ and its bit complexity is at most the bit complexity of $\widehat{\bx}^{(i_k)}$, which is at most $b$, {by our assumption on $\cQ_i(\cdot)$}.



Thus we have
$$\sum_{k\in[N]}\frac{c}{3N}\cdot \bx^{(i_k)} + \frac{c}{3}\cdot \bx^{box}\in\text{argmax}\{\bm{a}\cdot \bx:\bx\in \widehat{\cP}\}$$

{To sum up, we provide an algorithm to optimize any linear objective over $\widehat{\cP}$. Moreover, the output of our optimization algorithm always has bit complexity $\poly(b, k,d, 1/\eps)$. For any $\bm{a}\in \mathbb{R}^d$ with bit complexity $b'$, our optimization algorithm runs in time $\poly(b, b', k,d, 1/\eps)$ and make $N=\left\lceil \frac{8kd}{\eps^2}\right\rceil$ queries to the oracles in $\{\cQ_i\}_{i\in[\ell]}$. Using Theorem~\ref{thm:equivalence of opt and sep}, we can construct a separation oracle for $\widehat{\cP}$ that satisfies the properties in the statement of \Cref{thm:special case of multiplicative approx}.}

\end{prevproof}



\subsection{Approximating Single-Bidder Marginal Reduced Form Polytope for Constrained-Additive Buyers}\label{sec:mrf for constraint additive}

In this section,
we prove \Cref{thm:multiplicative approx for constraint additive-main body} using \Cref{thm:special case of multiplicative approx}.
The goal is to show that the single-bidder marginal reduced form polytope $W_i$ satisfies the requirements of Theorem~\ref{thm:special case of multiplicative approx}.
Recall that for every buyer $i$, the support of her type is $\cT_i$ and the support of her value for each item $j$ is $\cT_{ij}$. Additionally, $W_i$ is a subset of $[0,1]^{\sum_{j\in[m]}|\cT_{ij}|}$, and there is a coordinate for every $j\in [m]$ and every $t_{ij}\in \cT_{ij}$. To ease the notation, we will use $t_{ij}$'s to index the coordinates throughout this section. Since each buyer is constrained-additive, we denote $\cF_i$ the feasibility constraint of buyer $i$, and drop the subscript if the buyer is fixed or clear from context.

\begin{lemma}\label{obs:constraint additive length}
For each $i\in[n],j\in[m]$ and $t_{ij}\in\cT_{ij}$, $l_{t_{ij}}(W_i)=f_{ij}(t_{ij})$. Recall that $l_{t_{ij}}(W_i)$ is the width of $W_i$ at coordinate $t_{ij}$ (\Cref{def:truncated polytope}).
\end{lemma}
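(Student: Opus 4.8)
The plan is to unpack Definition~\ref{def:W_i-constrained-add} and verify both inequalities $l_{t_{ij}}(W_i)\le f_{ij}(t_{ij})$ and $l_{t_{ij}}(W_i)\ge f_{ij}(t_{ij})$ directly. Recall from \Cref{def:truncated polytope} that $l_{t_{ij}}(W_i)=\max_{\widehat w_i\in W_i}\widehat w_{ij}(t_{ij})$, so it suffices to (a) bound $\widehat w_{ij}(t_{ij})$ for an arbitrary point of $W_i$, and (b) exhibit one point of $W_i$ attaining the bound.

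For the upper bound, I would take any $\widehat w_i\in W_i$ with a witnessing allocation rule $\{\sigma_S(t_i)\}_{t_i\in\cT_i,S\in\cF_i}$ as in \Cref{def:W_i-constrained-add}. Condition (ii) gives $\widehat w_{ij}(t_{ij})=f_{ij}(t_{ij})\cdot\sum_{t_{i,-j}}f_{i,-j}(t_{i,-j})\cdot\sum_{S:j\in S}\sigma_S(t_{ij},t_{i,-j})$. Since $\sum_{S:j\in S}\sigma_S(t_i)\le\sum_{S\in\cF_i}\sigma_S(t_i)\le 1$ by condition (i), and $\sum_{t_{i,-j}}f_{i,-j}(t_{i,-j})=1$, we get $\widehat w_{ij}(t_{ij})\le f_{ij}(t_{ij})$, hence $l_{t_{ij}}(W_i)\le f_{ij}(t_{ij})$.

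For the matching lower bound I would construct an explicit feasible allocation. Fix a feasible bundle $S^\star\in\cF_i$ with $j\in S^\star$ (such a bundle exists: for unit-demand, additive, and matroid-rank constraints the singleton $\{j\}$ is feasible, and in general one normalizes the instance by discarding any item contained in no feasible bundle, since such an item can never be allocated). Define the deterministic rule $\sigma_{S^\star}(t_i)=1$ for every $t_i\in\cT_i$ and $\sigma_S(t_i)=0$ for $S\neq S^\star$. Condition (i) holds since $\sum_{S\in\cF_i}\sigma_S(t_i)=1$. Plugging this rule into condition (ii), the induced marginal reduced form satisfies $\widehat w_{ij}(t_{ij})=f_{ij}(t_{ij})\cdot\sum_{t_{i,-j}}f_{i,-j}(t_{i,-j})\cdot 1=f_{ij}(t_{ij})$, so this point lies in $W_i$ and witnesses $l_{t_{ij}}(W_i)\ge f_{ij}(t_{ij})$. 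Combining the two inequalities yields the claim.

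This proof is essentially a definitional check, so there is no real obstacle; the only point requiring a sentence of care is the degenerate case in which item $j$ belongs to no feasible bundle (where the width would be $0$), which is handled by the standard normalization remark above.
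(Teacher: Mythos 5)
Your proof is correct and takes essentially the same route as the paper: the upper bound follows by combining conditions (i) and (ii) of Definition~\ref{def:W_i-constrained-add}, and the lower bound is witnessed by an explicit allocation that always gives item $j$. The only superficial difference is in the witness: the paper sets $\widehat{\sigma}_S(t_i')=\ind[t_{ij}'=t_{ij}\wedge S=\{j\}]$, whereas you always allocate a fixed $S^\star\ni j$; both yield $\widehat w_{ij}(t_{ij})=f_{ij}(t_{ij})$ because the defining formula for $\widehat w_{ij}(t_{ij})$ already fixes the $j$-th coordinate of the type to $t_{ij}$. (Since $\cF_i$ is downward-closed, you could simply take $S^\star=\{j\}$, matching the paper exactly whenever $j$ lies in some feasible bundle.) You are also more explicit than the paper about the degenerate case where $j$ belongs to no feasible bundle, which the paper handles silently by assuming $\{j\}\in\cF_i$.
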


\begin{proof}
For every $i\in [n]$ and every $\widehat{w}_i\in W_i$, by \Cref{def:W_i-constrained-add}, there exists a number $\sigma_S(t_i)\in [0,1]$ for every $t_i\in \cT_i, S\in \cF_i$ such that
\begin{enumerate}
    \item $\sum_{S\in \cF_i}\sigma_S(t_i)\leq 1$, $\forall t_i\in \cT_i$.
    \item $\widehat{w}_{ij}(t_{ij})=f_{ij}(t_{ij})\cdot\sum_{t_{i,-j}}f_{i,-j}(t_{i,-j})\cdot \sum_{S\in \cF_i:j\in S}\sigma_S(t_i)$, for all $j\in[m]$ and $t_{ij}\in \cT_{ij}$.
\end{enumerate}

Thus for every $j,t_{ij}$, by combining both properties above, we have $$\widehat{w}_{ij}(t_{ij})\leq f_{ij}(t_{ij})\cdot\sum_{t_{i,-j}}f_{i,-j}(t_{i,-j})\cdot 1=f_{ij}(t_{ij}).$$

Moreover, for every $j,t_{ij}$, choosing $\widehat{\sigma}$ such that:~\footnote{$\widehat{\sigma}$ is simply the allocation that gives buyer $i$ item $j$ when her value for item $j$ is $t_{ij}$ and does not give her anything otherwise.}
$$
\widehat{\sigma}_S(t_i') =
\begin{cases}
1 \quad & \text{if $t_{ij}'=t_{ij} \land S=\{j\}$} \\
0 & \text{o.w.}
\end{cases}
$$
induces an element $\widetilde{w}_i\in W_i$ such that 
$\widetilde{w}_{ij}(t_{ij})=f_{ij}(t_{ij})$.
Thus $l_{t_{ij}}(W_i)=f_{ij}(t_{ij})$.


\notshow{
$$l_{t_{ij}}(W_i) \leq \max_{\widehat{\sigma}_i\in \widehat{\Sigma}_i} f_{ij}(t_{ij})\cdot\sum_{t_{i,-j}\in \cT_{i,-j}} f_{i,-j}(t_{i,-j})\sum_{S:j\in S}\hat\sigma_{i}((t_{ij},t_{i,-j}),S)\leq f_{ij}(t_{ij})$$
where $\Sigma_i$ is the set such that $\sigma \in \Sigma_i$ if
\begin{align*}
    \sum_{S\subseteq 2^{[m]}} \sigma_i(t_i,S) \leq 1 \quad \forall t_i \in \cT_i \\
    \sigma_i(t_i,S) \geq 0 \quad \forall t_i\in \cT_i, S \subseteq 2^{[m]}
\end{align*}

Moreover observe that when we use the allocations $\widehat{\sigma}_{i}$ such that:
$$
\widehat{\sigma}_{i}(t_i',S) =
\begin{cases}
1 \quad & \text{if $t_{ij}'=t_{ij} \land S=\{j\}$} \\
0 & \text{o.w.}
\end{cases}
$$
the element $\widetilde{w}\in W_i$ that corresponds to this allocation has $\widetilde{w}_{ij}(t_{ij})=f_{ij}(t_{ij})$.
Thus $l_{t_{ij}}(W_i)=f_{ij}(t_{ij})$.
}

\end{proof}

\begin{definition}\label{def:W_i-t_i}
For any buyer $i$ and type $t_i\in \cT_i$, consider $W_{t_i}\subseteq [0,1]^{\sum_{j\in[m]}|\cT_{ij}|}$ defined as follows: $w_i \in W_{t_i}$ if and only if there exists a collection of non-negative numbers $\{\sigma_{S}\}_{S\in \cF_i}$ such that
\begin{enumerate}
    \item $\sum_{S\in \cF_i}\sigma_S\leq 1$.
    \item $w_{ij}(t_{ij}')=\sum_{S\in \cF_i, j\in S}\sigma_S\cdot \ind[t_{ij}'=t_{ij}],\forall j\in [m],t_{ij}'\in \cT_{ij}$.
\end{enumerate}
\end{definition}

The following observation directly follows from \Cref{def:W_i-constrained-add} and \Cref{def:W_i-t_i}. 

\begin{observation}\label{obs:mix constraint additive}
$W_{i}$ is mixture of $\{W_{t_i}\}_{t_i\in \cT_i}$ over distribution $D_i$. Recall that $D_i$ is the distribution for buyer $i$'s type $t_i$.
\end{observation}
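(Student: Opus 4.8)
\textbf{Proof plan for Observation~\ref{obs:mix constraint additive}.} The claim is purely definitional, so the plan is simply to unfold \Cref{def:W_i-constrained-add} and \Cref{def:W_i-t_i} and check that the defining conditions match under the mixture operation of \Cref{def:mixture}. Recall that the mixture of $\{W_{t_i}\}_{t_i\in\cT_i}$ over $D_i$ is the set $\sum_{t_i\in\cT_i} f_i(t_i)\cdot W_{t_i}$, i.e.\ the Minkowski-weighted sum of the single-type polytopes. I would prove the two inclusions separately.

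First I would show that the mixture is contained in $W_i$. Take any $w_i$ in the mixture; by definition there exist $w_i^{(t_i)}\in W_{t_i}$ for each $t_i\in\cT_i$ with $w_i=\sum_{t_i} f_i(t_i)\, w_i^{(t_i)}$. For each $t_i$, \Cref{def:W_i-t_i} hands us non-negative numbers $\{\sigma_S(t_i)\}_{S\in\cF_i}$ with $\sum_{S\in\cF_i}\sigma_S(t_i)\le 1$ and $w^{(t_i)}_{ij}(t_{ij}')=\sum_{S\in\cF_i,\,j\in S}\sigma_S(t_i)\cdot\ind[t_{ij}'=t_{ij}]$. I would then define an allocation rule $\sigma$ as in \Cref{def:W_i-constrained-add} by simply reading off $\sigma_S(t_i)$ for each $t_i$; condition (i) of \Cref{def:W_i-constrained-add} is immediate. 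For condition (ii), I compute
\[
f_{ij}(t_{ij})\cdot\!\!\sum_{t_{i,-j}} f_{i,-j}(t_{i,-j})\!\!\sum_{S:j\in S}\!\sigma_S(t_{ij},t_{i,-j}) \;=\; \sum_{t_i':\,t_{ij}'=t_{ij}} f_i(t_i')\!\!\sum_{S:j\in S}\!\sigma_S(t_i') \;=\; \sum_{t_i'} f_i(t_i')\, w^{(t_i')}_{ij}(t_{ij}),
\]
where the last equality uses that $w^{(t_i')}_{ij}(t_{ij})$ is supported only on types $t_i'$ with $t_{ij}'=t_{ij}$ and equals $\sum_{S:j\in S}\sigma_S(t_i')$ there, and is zero otherwise. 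The right-hand side is exactly $w_{ij}(t_{ij})$, so $w_i\in W_i$.

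For the reverse inclusion, take any $w_i\in W_i$ with witnessing allocation $\{\sigma_S(t_i)\}$. For each fixed $t_i$, I would set $w^{(t_i)}_{ij}(t_{ij}') := \sum_{S\in\cF_i,\,j\in S}\sigma_S(t_i)\cdot\ind[t_{ij}'=t_{ij}]$; by construction $w^{(t_i)}_i\in W_{t_i}$ via the numbers $\{\sigma_S(t_i)\}_S$. The same computation as above, run in reverse, shows $w_i=\sum_{t_i} f_i(t_i)\, w^{(t_i)}_i$, so $w_i$ lies in the mixture. The two inclusions together give the claim.

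There is no real obstacle here; the only point requiring a moment of care is the bookkeeping in the displayed identity—keeping straight that the marginal reduced form $w_{ij}(t_{ij})$ already absorbs the factor $f_{ij}(t_{ij})$ (and more generally the full $f_i(t_i)$ over the conditioning types), whereas the per-type variables $w^{(t_i)}_{ij}$ in \Cref{def:W_i-t_i} are defined \emph{without} that factor. The mixture weights $f_i(t_i)=f_{ij}(t_{ij})\cdot f_{i,-j}(t_{i,-j})$ are precisely what reconciles the two normalizations, which is why the identity goes through with no extra constants.
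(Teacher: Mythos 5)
Your proof is correct and is exactly the argument the paper has in mind: the paper simply asserts that the observation ``directly follows'' from \Cref{def:W_i-constrained-add} and \Cref{def:W_i-t_i}, and your two inclusions, together with the bookkeeping identity $f_i(t_i)=f_{ij}(t_{ij})\cdot f_{i,-j}(t_{i,-j})$, are precisely the definitional unfolding being left implicit. Nothing is missing; you have just written out the check the paper omits.
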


\begin{lemma} \label{lem:monotone-closed polytope W}
For every $i$ and $t_i \in \cT_i$, $W_{t_i}$ is 
a convex and down-monotone polytope. Moreover, given access to a demand oracle $\dem_i(t_i,\cdot)$ for buyer $i$, we can calculate an element $$w_i^*\in \text{argmax}_{w_i\in W_{t_i}} \bm{a}\cdot w_i,$$
for any $\bm{a}\in \mathbb{R}^{\sum_{j\in[m]}|\cT_{ij}|}$ with a single query to the demand oracle. Moreover, the bit complexity of $w_i^*$ is at most $\sum_{j\in[m]}|\cT_{ij}|$.
\end{lemma}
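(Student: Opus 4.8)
```latex
\textbf{Proof plan for Lemma~\ref{lem:monotone-closed polytope W}.}

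The plan is to verify the three claims in order: convexity, down-monotonicity, and the reduction of linear optimization over $W_{t_i}$ to a single demand query (with the bit-complexity bound). Convexity and down-monotonicity are routine. For convexity, given $w_i, w_i' \in W_{t_i}$ with witnesses $\{\sigma_S\}_{S\in\cF_i}$ and $\{\sigma_S'\}_{S\in\cF_i}$ respectively, the convex combination $\eta w_i + (1-\eta) w_i'$ is witnessed by $\{\eta\sigma_S + (1-\eta)\sigma_S'\}_{S\in\cF_i}$; both defining conditions of \Cref{def:W_i-t_i} are preserved because condition 1 is a linear inequality and condition 2 expresses each coordinate as a linear function of the $\sigma_S$'s. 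For down-monotonicity, suppose $w_i \in W_{t_i}$ with witness $\{\sigma_S\}$ and $\mathbf{0}\leq w_i' \leq w_i$. The idea is to scale the allocation: for each coordinate $(j,t_{ij})$ with $w_{ij}(t_{ij})>0$, let $c_j = w_{ij}'(t_{ij})/w_{ij}(t_{ij}) \in [0,1]$ (this is well-defined and independent of which positive coordinate we pick among item $j$, since by condition 2 all the mass on item $j$ sits on the single coordinate $t_{ij}$). Then define $\sigma_S' = \sigma_S \cdot \min_{j\in S} c_j$ (and handle empty/degenerate cases by $\sigma_S'=\sigma_S$ when some $w_{ij}(t_{ij})=0$, in which case $w_{ij}'(t_{ij})=0$ too so we can even set the relevant $\sigma_S'$ to $0$). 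A cleaner alternative: since $W_{t_i}$ has a very simple structure (each item $j$ contributes only to coordinate $t_{ij}$), one can directly pick, for each $j$, a fresh down-scaled allocation; I would argue down-monotonicity by noting $W_{t_i}$ is the image under a non-negative linear map of the simplex-like polytope $\{\sigma \geq 0 : \sum_S \sigma_S \leq 1\}$, intersected appropriately — but the explicit scaling argument above is the most self-contained.

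For the optimization claim, the key observation is that $W_{t_i}$ is essentially a relabeling of the feasibility polytope of buyer $i$ at type $t_i$. Given an objective vector $\bm{a} = \{a_{j,t_{ij}'}\}_{j\in[m], t_{ij}'\in\cT_{ij}}$, the objective $\bm{a}\cdot w_i = \sum_j \sum_{t_{ij}'} a_{j,t_{ij}'} w_{ij}(t_{ij}') = \sum_j a_{j,t_{ij}} \sum_{S\in\cF_i: j\in S}\sigma_S = \sum_{S\in\cF_i} \sigma_S \left(\sum_{j\in S} a_{j,t_{ij}}\right)$, using condition 2 of \Cref{def:W_i-t_i} (only the coordinate $t_{ij}$ matters for item $j$). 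Define prices $p_j := -a_{j,t_{ij}}$ if $a_{j,t_{ij}}<0$ and $p_j := 0$ otherwise, and more to the point, define a "value" vector for the demand query: we want to maximize $\sum_{j\in S}a_{j,t_{ij}}$ over $S\in\cF_i$. Since $\cF_i$ is downward-closed and constrained-additive demand at prices $\{p_j\}$ with type having per-item value $\max(a_{j,t_{ij}},0)$ returns $\arg\max_{R\in\cF_i}\left(\sum_{j\in R}\max(a_{j,t_{ij}},0) - \sum_{j\in R}p_j\right) = \arg\max_{R\in\cF_i}\sum_{j\in R}a_{j,t_{ij}}$ — actually we should feed the demand oracle a type whose value for item $j$ is $a_{j,t_{ij}}$ when positive and set the price so that negative-contribution items are never included; by downward-closedness the optimal $S^*$ is simply the $\cF_i$-feasible subset maximizing $\sum_{j\in S}(a_{j,t_{ij}})^+$ restricted to items with positive contribution. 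Formally I would invoke $\dem_i(t_i, \cdot)$ with the price vector $p_j = (a_{j,t_{ij}})^-$ (the negative part) and a type whose $j$-th value equals $(a_{j,t_{ij}})^+$; the returned bundle $S^*$ satisfies $\sum_{j\in S^*}a_{j,t_{ij}} = \max_{S\in\cF_i}\sum_{j\in S}a_{j,t_{ij}}$. If this maximum is $\leq 0$ we take $S^*=\emptyset$; otherwise set $\sigma_{S^*}=1$ and all other $\sigma_S=0$. The resulting $w_i^*$ is a $0$–$1$ vector (it has a $1$ in coordinate $(j,t_{ij})$ for each $j\in S^*$ and $0$ elsewhere), so its bit complexity is at most $\sum_{j\in[m]}|\cT_{ij}|$, and it lies in $\arg\max_{w_i\in W_{t_i}}\bm{a}\cdot w_i$ since the objective is linear in $\sigma$ and the feasible region for $\sigma$ is the scaled simplex, whose optimum is attained at a vertex.

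The only mild subtlety — and the step I would be most careful about — is the exact translation between the demand oracle's convention and the sign pattern of $\bm{a}$: the demand oracle takes a type and a price vector, so I need to encode "maximize $\sum_{j\in S}a_{j,t_{ij}}$ over $\cF_i$" using non-negative prices and the fixed type $t_i$'s induced per-item values, which for constrained-additive equal $t_{ij}$. Since $\bm{a}$ is arbitrary in $\mathbb{R}^{\sum_j|\cT_{ij}|}$ but only the entries $a_{j,t_{ij}}$ (at the type's own realized value) enter the objective, I can without loss of generality assume $a_{j,t_{ij}} = t_{ij}$ by a scaling of coordinates — no, that is not quite right since the oracle is tied to the actual valuation $v_i(\cdot,\cdot)$. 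The correct move is: the demand oracle for a \emph{constrained-additive} valuation, when queried with type $t_i$ and prices $\{q_j\}$, returns $\arg\max_{R\in\cF_i}\big(\sum_{j\in R}t_{ij} - \sum_{j\in R}q_j\big) = \arg\max_{R\in\cF_i}\sum_{j\in R}(t_{ij}-q_j)$. Choosing $q_j := t_{ij} - a_{j,t_{ij}}$ makes $t_{ij}-q_j = a_{j,t_{ij}}$, so the oracle directly returns the maximizer of $\sum_{j\in R}a_{j,t_{ij}}$; if some $q_j < 0$ we can clamp it to $0$ and remove such items from consideration by hand, or note that items with $q_j<0$ correspond to $a_{j,t_{ij}}>t_{ij}\geq 0$, a case we can preprocess. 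Either way this is a single oracle call plus $O(\sum_j|\cT_{ij}|)$ post-processing, completing the lemma.
```
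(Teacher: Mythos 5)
Your convexity argument is correct and matches the paper's viewpoint that $W_{t_i}$ is the image of the scaled simplex $\{\sigma\geq 0:\sum_S\sigma_S\leq 1\}$ under a linear map; nothing to flag there.

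The down-monotonicity argument, however, has a genuine gap. Recall that condition~2 in \Cref{def:W_i-t_i} is an \emph{equality}: $w_{ij}(t_{ij}')=\sum_{S\in\cF_i:j\in S}\sigma_S\cdot\ind[t_{ij}'=t_{ij}]$. Your construction $\sigma_S'=\sigma_S\cdot\min_{j\in S}c_j$ then yields, for a given item $j_0$,
$$\sum_{S\ni j_0}\sigma_S'=\sum_{S\ni j_0}\sigma_S\cdot\min_{j\in S}c_j\;\leq\; c_{j_0}\sum_{S\ni j_0}\sigma_S,$$
and the inequality is strict whenever some set $S\ni j_0$ with $\sigma_S>0$ contains a $j'$ with $c_{j'}<c_{j_0}$. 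So $\sigma'$ witnesses a point that is \emph{strictly below} $w_i'$ in general — not $w_i'$ itself. The ``cleaner alternative'' you float (image of a simplex under a non-negative linear map) does not rescue this either: such an image need not be down-monotone (e.g., $\sigma\mapsto(\sigma_1,\sigma_1+\sigma_2)$ from the 2-simplex gives $\{(a,b):0\leq a\leq b\leq 1\}$, which is not down-monotone). What makes $W_{t_i}$ down-monotone is the \emph{downward-closedness} of $\cF_i$, and that must be used explicitly. The paper decreases one coordinate at a time: to lower $w_{ij_0}(t_{ij_0})$ by a factor $\alpha$, it replaces $\sigma$ by a new allocation that, upon drawing $S$, discards $j_0$ from $S$ with probability $1-\alpha$; downward-closedness guarantees $S\setminus\{j_0\}\in\cF_i$, and marginals of all items $j\neq j_0$ are unchanged while $j_0$'s is scaled by exactly $\alpha$. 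An equally valid route (and closer to the spirit of your scaling) is independent retention: draw $S$ from $\sigma$, then keep each $j\in S$ independently with probability $c_j$; again downward-closedness is what makes the surviving set feasible. Either way, a per-set rescaling like $\min_{j\in S}c_j$ cannot hit the required marginals.

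For the optimization step, your reduction is the right one and you correctly identify that the only entries of $\bm a$ that matter are $a_{j,t_{ij}}$, and that the nonnegative-part trick combined with downward-closedness gives $\max_{S\in\cF_i}\sum_{j\in S}a_{j,t_{ij}}=\max_{S\in\cF_i}\sum_{j\in S}(a_{j,t_{ij}})^+$. Where you hesitate — the ``clamp or preprocess'' aside for $a_{j,t_{ij}}>t_{ij}$ — the paper resolves cleanly: since the argmax is invariant under positive scaling of $\bm a$, one may assume WLOG that $a_{j,t_{ij}}\leq\frac12 t_{ij}$ for every $j$ (using $t_{ij}>0$), which forces the price $p_{ij}=t_{ij}-(a_{j,t_{ij}})^+$ to be nonnegative so that a \emph{single} demand query suffices with no post-processing. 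Your $0$–$1$ bit-complexity observation is correct.
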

\begin{proof}
{For every feasible set $S\in \cF_i$, consider allocation $\lambda^{(S)}=\{\ind[S'=S]\}_{S'\in\cF_i}$. The set of $\sigma=\{\sigma_S\}_{S\in\cF_i}$ that satisfy property 1 of \Cref{def:W_i-t_i} is equivalent to the set of all convex combinations of the $\lambda^{(S)}$'s and the origin $\mathbf{0}$.
More specifically, $\sigma = \sum_{S\in \cF_i} \sigma_S\cdot \lambda^{(S)}+(1-\sum_{S\in\cF_i}\sigma_S)\cdot \mathbf{0}$. Hence, the set of $\sigma$'s is a convex polytope $P$ in $[0,1]^{|\cF_i|}$. Since $W_{t_i}$ is a projection of $P$ to $[0,1]^{\sum_{j\in[m]}|\cT_{ij}|}$, $W_{t_i}$ is also a convex polytope.}


Next, we prove $W_{t_i}$ is down-monotone. Consider any $w_i\in W_{t_i}$. By \Cref{def:W_i-t_i}, $w_{ij}(t_{ij}')=0$ for all $j,t_{ij}'$ such that $t_{ij}'\not=t_{ij}$. To prove that $W_{t_i}$ is down-monotone, it suffices to prove that for every $j_0\in [m]$, any vector $\widetilde{w}_i$, achieved by only decreasing the coordinate $t_{ij_0}$ from $w_i$, is still in $W_{t_i}$. Formally, $\widetilde{w}_i$ satisfies
\begin{itemize}
    \item $\widetilde{w}_{ij_0}(t_{ij_0})<w_{ij_0}(t_{ij_0})$.
    \item $\widetilde{w}_{ij}(t_{ij})=w_{ij}(t_{ij})$, if $j\not=j_0$.
    \item 
    $\widetilde{w}_{ij}(t_{ij}')=0$, for any $j\in[m]$, if $t_{ij}'\not=t_{ij}$.
\end{itemize}

Let $\{\sigma_S\}_{S\in \cF_i}$ be the vector of numbers associated with $w_i$ in \Cref{def:W_i-t_i}. Let $\alpha=\frac{\widetilde{w}_{ij_0}(t_{ij_0})}{w_{ij_0}(t_{ij_0})}<1$. Consider another vector $\{\widetilde{\sigma}_S\}_{S\in \cF_i}$ where for every $S\in \cF_i$, $$\widetilde{\sigma}_S=
\begin{cases}
\alpha\cdot\sigma_S+(1-\alpha)\cdot (\sigma_{S\cup\{j_0\}}+\sigma_S), & j_0\not\in S\wedge S\cup\{j_0\}\in \cF_i\\
\alpha\cdot \sigma_S, & j_0\in S\\
\sigma_S, & \text{o.w.}
\end{cases}$$

We notice that the above definition is well-defined because $S\in \cF_i$ as long as $S\cup\{j_0\}\in \cF_i$. {Also $\sum_{S\in \cF_i}\widetilde{\sigma}_S\leq 1$. Intuitively, $\sigma=\{\sigma_S\}_{S\in \cF_i}$ represents a randomized allocation of sets $S\in \cF_i$ to the bidder. Then $\{\widetilde{\sigma}_S\}_{S\in \cF_i}$ represents another randomized allocation: choose a set $S$ according to the the randomized allocation  $\sigma$, if $S$ contains $j_0$, then throw away $j_0$ with probability $1-\alpha$.} 

Now we have $\sum_{S\in \cF_i, j_0\in S}\widetilde{\sigma}_S=\alpha\sum_{S\in \cF_i, j_0\in S}\sigma_S=\widetilde{w}_{ij_0}(t_{ij_0})$ and $\sum_{S\in \cF_i, j\in S}\widetilde{\sigma}_S=w_{ij}(t_{ij})$, for all $j\not=j_0$. Thus $\widetilde{w}_i\in W_{t_i}$, and $W_{t_i}$ is down-monotone.

It remains to show that we can find an element in $\text{argmax}_{w_i\in W_{t_i}} \bm{a}\cdot w_i$ for any $\bm{a}\in \mathbb{R}^{\sum_{j\in[m]}|\cT_{ij}|}$, given access to a demand oracle.
W.l.o.g. we assume that for each $j\in[m]$ and $t_{ij}\in \cT_{ij}$,
$a_{ij}(t_{ij})\leq \frac{1}{2}\cdot t_{ij}$ (by scaling and the fact that $t_{ij}>0$).
Observe that any corner $w_i$ of the polytope $W_{t_i}$ corresponds to the choice of $\{\sigma_S\}_{S\in \cF_i}$ such that $\sigma_S=\ind[S=T]$ for some particular $T\in \cF_i$, i.e. $w_{ij}(t_{ij}')=\ind[j\in T\wedge t_{ij}'=t_{ij}]$.

Since 
$w_i^*\in \text{argmax}_{w_i\in W_{t_i}} \bm{a}\cdot w_i$ is a corner of $W_{t_i}$. We have
\begin{align*}
    & \text{max}_{w_i\in W_{t_i}} \bm{a}\cdot w_i\\
    =& \max_{S\in\cF_i} {\sum_{j\in S}} a_{ij}(t_{ij})\\ 
    =& \max_{S\in\cF_i} {\sum_{j\in S}} a_{ij}(t_{ij})^+\\
    =& \max_{S\in\cF_i} {\sum_{j\in S}} (t_{ij}-(t_{ij}-a_{ij}(t_{ij})^+))\\
\end{align*}

Here $x^+=\max\{x,0\}$. The second equality holds because $\cF_i$ is downward-closed.
Notice that $a_{ij}(t_{ij})^+\leq \frac{1}{2}t_{ij}$. Thus with a single query to the demand oracle, with type $t_i$ and prices $p_{ij}=t_{ij}-a_{ij}(t_{ij})^+\geq 0,\forall j$, we can find $\text{argmax}_{w_i\in W_{t_i}} \bm{a}\cdot w_i$. The bit complexity of $w_i^*$ is at most $\sum_{j\in[m]}|\cT_{ij}|$. 
\end{proof}

\begin{lemma}\label{lem:box polytope constraint additive}
Let $T = \sum_{j\in[m]}|\cT_{ij}|$. For any $\eps < \frac{1}{T}$, $(1-\eps T)W_{i}^{box(\eps)}\subseteq W_i$.
\end{lemma}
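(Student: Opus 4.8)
The plan is to exhibit, for an arbitrary point $x\in W_i^{box(\eps)}$, an explicit randomized allocation rule for bidder $i$ whose induced single-bidder marginal reduced form dominates $(1-\eps T)x$ coordinatewise; since $W_i$ is down-monotone, this places $(1-\eps T)x$ in $W_i$. By \Cref{obs:constraint additive length} the width of $W_i$ at coordinate $t_{ij}$ is exactly $f_{ij}(t_{ij})$, so membership $x\in W_i^{box(\eps)}$ means $0\le x_{t_{ij}}\le\min\{\eps,f_{ij}(t_{ij})\}$ for all $j\in[m]$ and $t_{ij}\in\cT_{ij}$. In particular, for each such pair the quantity $q_{ij}(t_{ij}):=x_{t_{ij}}/f_{ij}(t_{ij})$ is a well-defined number in $[0,1]$ (we have $f_{ij}(t_{ij})>0$ since $t_{ij}$ lies in the support of $D_{ij}$), and $\sum_{t_{ij}\in\cT_{ij}}f_{ij}(t_{ij})\,q_{ij}(t_{ij})=\sum_{t_{ij}\in\cT_{ij}}x_{t_{ij}}\le \eps\,|\cT_{ij}|$.

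The allocation rule I would use is a coupled, ``at most one item'' rounding of the independent per-item probabilities $q_{ij}$. Given a type $t_i=(t_{i1},\dots,t_{im})$, draw independent Bernoulli random variables $Y_1,\dots,Y_m$ with $\Pr[Y_j=1\mid t_i]=q_{ij}(t_{ij})$; if $\sum_j Y_j\le 1$, allocate the set $\{j:Y_j=1\}$ (which is either empty or a singleton, hence in $\cF_i$, as used in the proof of \Cref{obs:constraint additive length}), and otherwise allocate the empty set. This defines $\sigma_{\{j\}}(t_i)=q_{ij}(t_{ij})\prod_{j'\ne j}\bigl(1-q_{ij'}(t_{ij'})\bigr)$ and $\sigma_S(t_i)=0$ whenever $|S|\ge 2$, so $\sum_{S\in\cF_i}\sigma_S(t_i)\le 1$ and the rule is feasible in the sense of \Cref{def:W_i-constrained-add}. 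Its marginal reduced form is, by item-independence of $D_i$, $\widehat w_{ij}(t_{ij})=f_{ij}(t_{ij})\,\E_{t_{i,-j}\sim D_{i,-j}}\bigl[\sigma_{\{j\}}(t_i)\bigr]=f_{ij}(t_{ij})\,q_{ij}(t_{ij})\prod_{j'\ne j}\bigl(1-\E_{t_{ij'}\sim D_{ij'}}[q_{ij'}(t_{ij'})]\bigr)=x_{t_{ij}}\prod_{j'\ne j}\bigl(1-\sum_{t_{ij'}\in\cT_{ij'}}x_{t_{ij'}}\bigr)$.

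A union bound over items, together with the box bound $\sum_{t_{ij'}}x_{t_{ij'}}\le\eps|\cT_{ij'}|$, then gives $\prod_{j'\ne j}\bigl(1-\sum_{t_{ij'}}x_{t_{ij'}}\bigr)\ge 1-\sum_{j'\in[m]}\eps|\cT_{ij'}|=1-\eps T$, which is positive since $\eps<1/T$. Hence $\widehat w_{ij}(t_{ij})\ge (1-\eps T)\,x_{t_{ij}}$ for every $j$ and $t_{ij}$, i.e. $0\le (1-\eps T)x\le\widehat w$ entrywise. Since $\widehat w\in W_i$ and $W_i$ is down-monotone — by \Cref{lem:monotone-closed polytope W} each $W_{t_i}$ is down-monotone, by \Cref{obs:mix constraint additive} $W_i$ is a mixture of the $W_{t_i}$ over $D_i$, and by \Cref{lem:mixture of lower-closed} such a mixture is down-monotone — we conclude $(1-\eps T)x\in W_i$, as desired.

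I do not expect a genuine obstacle here: the only points needing care are the verification that the coupled rounding yields the stated product formula for $\sigma_{\{j\}}$ and that the average over $t_{i,-j}$ factorizes across items (both immediate from the independence of $D_i$), together with the elementary bookkeeping that singletons and the empty set are always feasible for a constrained-additive bidder. The scaling constant $1-\eps T$ is precisely the contention loss incurred by forcing the rounded allocation to contain at most one item.
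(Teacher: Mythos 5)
Your proof is correct and is essentially the paper's own argument: both construct a rationed randomized allocation that independently activates each item with a probability proportional to the target coordinate (capped appropriately), allocate only a singleton when exactly one item is active, and bound the contention loss by $1-\eps T$ via a union bound over items. The only difference is bookkeeping — the paper adds a normalizing factor $p_j(t_{ij})$ so that the induced marginal reduced form equals the target point exactly, whereas you dominate $(1-\eps T)x$ coordinatewise and close by invoking the down-monotonicity of $W_i$ (via \Cref{lem:monotone-closed polytope W}, \Cref{obs:mix constraint additive}, and \Cref{lem:mixture of lower-closed}), a fact the paper establishes and relies on elsewhere in any case.
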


\begin{proof}

For any $w_i\in (1-\eps T )W_i^{box(\eps)}$, we prove that $w_i\in W_i$. 

Consider the following set of numbers $\{\sigma_{S}(t_i)\}_{t_i,S}$ (see \Cref{{def:W_i-constrained-add}}): For each $j\in [m], t_{ij}\in \cT_{ij}$, let $c_j(t_{ij})=\min\left(\frac{\eps}{f_{ij}(t_{ij})},1 \right)$ and
$$p_j(t_{ij})=\frac{w_{ij}(t_{ij})}{f_{ij}(t_{ij})c_j(t_{ij})\cdot \sum_{t_{i,-j}}f_{i,-j}(t_{i,-j})\cdot \prod_{j'\not=j}(1-c_{j'}(t_{ij'}))}.$$ 

{Note that for every $j'\in[m]$, there exists a value $t_{ij'}\in\cT_{ij'}$ such that $f_{ij'}(t_{ij'})\geq 1/|\cT_{ij'}|$. Due to our choice of $\eps$, the corresponding $c_{j'}(t_{ij'})<1$. Hence, $\sum_{t_{i,-j}}f_{i,-j}(t_{i,-j})\cdot \prod_{j'\not=j}(1-c_{j'}(t_{ij'}))>0$, and $p_j(t_{ij})$  is well-defined.}

For every $t_i$, define
\begin{align*}
\sigma_{S}(t_i)=
\begin{cases}
p_j(t_{ij})\cdot c_j(t_{ij})\cdot \prod_{j'\not=j}(1-c_{j'}(t_{ij'})), &\text{if  $S=\{j\}$ for some $j\in[m]$}\\
0, &\text{o.w.}
\end{cases}
\end{align*}

For every $j$ and $t_{ij}$, let $C_j(t_{ij})$ be the independent Bernoulli random variable that is $1$ with probability $c_j(t_{ij})$. Then for every $j$, 
$$
\Pr_{C_j(t_{ij}), t_{ij}\sim D_{ij}}\left[C_j(t_{ij})=1\right]=\sum_{t_{ij}\in \cT_{ij}} f_{ij}(t_{ij})\cdot\min\left(\frac{\eps}{f_{ij}(t_{ij})},1 \right) \leq \eps \cdot |\cT_{ij}|
$$

By the union bound,
\begin{align*}\sum_{t_{i,-j}}f_{i,-j}(t_{i,-j})\cdot \prod_{j'\not=j}(1-c_{j'}(t_{ij'}))=&\Pr_{\substack{t_{i,-j}\sim D_{i,-j}\\\forall j' \neq j, C_k(t_{ij'})}}[C_{j'}(t_{ij'}) =0,\forall j'\not=j]\\
=&1- \Pr_{\substack{t_{i,-j}\sim D_{i,-j}\\\forall j' \neq j, C_k(t_{ij'})}}[C_{j'}(t_{ij'}) =1,\exists j'\not=j]\\
\geq & 1-\sum_{j\neq j'}\Pr_{\substack{t_{ij}\sim D_{ij}\\ C_j(t_{ij}) }}[C_j(t_{ij})=1]\\
\geq & 1-\eps\cdot T
\end{align*}

Now we show that $w_i\in W_i$ by verifying both properties in \Cref{def:W_i-constrained-add}. {For the first property, since $w_i\in (1-\eps T)W_{i}^{box(\eps)}$, $0 \leq w_{ij}(t_{ij}) \leq \left(1-\eps \cdot T \right)\cdot\min\{\eps,l_{t_{ij}}(W_i)\}= \left(1-\eps \cdot T \right)\cdot f_{ij}(t_{ij})\cdot c_j(t_{ij})$. The equality is due to the definition of $c_j(t_{ij})$ and \Cref{obs:constraint additive length}.} Thus $p_j(t_{ij})\leq 1$ for every $j$ and $t_{ij}$.
{We have that $\sum_{S}\sigma_S(t_i)=\sum_{j\in[m]} p_j(t_{ij})\cdot c_j(t_{ij})\cdot \prod_{j'\not=j}(1-c_{j'}(t_{ij'}))\leq \sum_{j\in[m]} c_j(t_{ij})\cdot \prod_{j'\not=j}(1-c_{j'}(t_{ij'}))\leq \prod_{j\in[m]}\left(c_j(t_{ij})+(1-c_j(t_{ij})\right)=1$.}

The second property: 
\begin{align*}
&f_{ij}(t_{ij})\cdot\sum_{t_{i,-j}}f_{i,-j}(t_{i,-j})\cdot \sum_{S:j\in S}\sigma_S(t_{ij},t_{i,-j})\\
=&f_{ij}(t_{ij})\sum_{t_{i,-j}}f_{i,-j}(t_{i,-j})\cdot p_j(t_{ij})\cdot c_j(t_{ij})\cdot \prod_{j'\not=j}(1-c_{j'}(t_{ij'}))=w_{ij}(t_{ij})
\end{align*}
\notshow{
The third inequality:
\begin{align*}
&f_{ij}(t_{ij})\cdot\sum_{t_{i,-j}}f_{i,-j}(t_{i,-j})\cdot \sum_{S:j\in S}\sum_{k\in [K]}\sigma_S(t_{ij},t_{i,-j})\cdot \frac{\alpha_{ij}(t_{ij})}{V_{ij}(t_{ij})}\\
=&f_{ij}(t_{ij})\sum_{t_{i,-j}}f_{i,-j}(t_{i,-j})\cdot p_j(t_{ij})\cdot c_j(t_{ij})\cdot \prod_{j'\not=j}(1-c_{j'}(t_{ij'}))\cdot \frac{\max_k\alpha_{ij}(t_{ij})}{V_{ij}(t_{ij})}=\pi_{ij}(t_{ij})=w_{ij}'(t_{ij})
\end{align*}
}
Thus by \Cref{def:W_i-constrained-add}, $w_i\in W_i$. 

\notshow{
The old proof:

We consider the following allocation rule $\sigma_{S}^k(t_i)$ parametrised by variables $\{p_{j,t_{ij}}\in[0,1]\}_{j\in[m],t_{ij}\in T_{ij}}$.
We are going to prove that $W_{i,\epsilon}^r\subseteq W_i$.
When an agent with type $t_i$ appears, the mechanism works in the following way.
For each $j\in[m]$, let 
$$C(j) \sim Ber\left( \min\left(\frac{\epsilon}{f_{ij}(t_{ij})},1 \right)\right)$$
If there exists only one $j^*\in[m]$ such that $C(j^*)=1$,
then allocate only the $j^*$-th item to the agent activating the additive function $k^*\in[K]$ such that $\alpha^{k*}_{j^*}=V_{ij}(t_{ij})$.
Note that:
$$
\Pr_{\substack{t_{ij}\sim D_{ij}\\C(j) \sim Ber\left( \min\left(\frac{\epsilon}{f_{ij}(t_{ij})},1) \right)\right) }}\left[C_j=1  \right] \leq \sum_{t_{ij}\in \cT_{ij}} f_{ij}(t_{ij})\cdot\min\left(\frac{\eps}{f_{ij}(t_{ij})},1 \right) \leq \eps \cdot |\cT_{ij}|
$$
Thus by union bound, $\Pr[C_j=0,\forall j]\geq 1-\sum_j\Pr[C_j=1]\geq 1-\eps\cdot T$.
Thus under this allocation we have that
\begin{align*}
    w_{ij}(t_{ij}) = \pi_{ij}(t_{ij}) \geq (1-\epsilon\cdot M)  \min\left(\frac{\epsilon}{f_{ij}(t_{ij})}, 1 \right)f_{ij}(t_{ij}) =
    (1-\epsilon\cdot M)  \min\left(\epsilon, f_{ij}(t_{ij})\right)
\end{align*}
Thus for any target 
$$\{\hat{\pi}_{ij}(t_{ij})\leq (1-\epsilon \cdot M)\min(\epsilon,f_{ij}(t_{ij})\}_{j\in[m],t_{ij}\in T_{ij}}$$ 
there exists a set of variables $\{p_{j,t_{ij}} \in [0,1]\}_{j\in[m],t_{ij}\in T_{ij}}$ such that when the agent has type $t_{i}$, then with probability $p_{j,t_{ij}}$ we set $C(j)=0$ and then run the mechanism as we described it.
This allocation distribution ensures that:
\begin{align*}
    w_{ij}(t_{ij}) = \pi_{ij}(t_{ij}) =\hat{\pi}_{ij}(t_{ij})
\end{align*}
Thus we proved that for every $(\pi,w)$ such that:
\begin{align*}
    w_{ij}(t_{ij}) = \pi_{ij}(t_{ij}) \leq
    (1-\epsilon\cdot M)  \min\left(\epsilon, f_{ij}(t_{ij})\right)
\end{align*}
$(\pi,w_i)\in  W_i$.
Observe that for each $(\pi',w_i')$ such that:
\begin{align*}
    w_{ij}'(t_{ij})\leq w_{ij}(t_{ij}) \\
    \pi_{ij}'(t_{ij})\geq \pi_{ij}(t_{ij})
\end{align*}

then $(\pi',w')\in W_i$.
This concludes that $W_{i,\eps}^r \subseteq W_i$ for all $\epsilon < \frac{1}{T}$.}
\end{proof}

\notshow{

\begin{theorem}\label{thm:multiplicative approx for constraint additive}
Let $T = \sum_{j\in[m]}|\cT_{ij}|$ and consider parameters $\eps <\frac{1}{T}$ and \\
$k\geq\max \left(\log\left(\frac{1}{\eps}\right),\Omega\left(d^4\left( b + \log\left( \frac{1}{\eps}\right)\right)\right)\right)$.
Let $b$ be an upper bound on the bit complexity of each element in $\{f_{ij}(t_{ij})\}_{j\in[m],t_{ij}\in\cT_{ij}}$.
Then with probability at least $1-2T\exp(-2Tk)$,
having access to $\left\lceil\frac{8kT}{(1-\eps T)^2}\right\rceil$ samples from $\cD_i$,
we can construct a convex polytope $\widehat{W}_i$ such that
\begin{enumerate}
\item $\frac{1-\eps T}{6}W_i\subseteq \widehat{W}_i\subseteq W_i$

\item There exists a separation oracle $SO$ for $\widehat{W}_i$,
whose running time on input with bit complexity $b''$, is $\poly\left(b,b'',k,d,\frac{1}{\eps}\right)$ and performs $\poly\left(b,b'', k,d,\frac{1}{\eps}\right)$ queries to demand oracle $\dem_i(\cdot,\cdot)$ with inputs of bit complexity $\poly\left(b, b'',d,\frac{1}{\eps}\right)$.
\end{enumerate} 
\end{theorem}

\begin{proof}
We notice that:
\begin{enumerate}
\item By \Cref{obs:constraint additive length}, we have that the bit complexity of each element in $\{l_j(W_i)\}$ is $b'$.
\item By Observation~\ref{obs:mix constraint additive}, we have that $W_{i}$ is mixture of $\{W_{t_i}\}_{t_i\in \cT_i}$ over distribution $D_i$.
\item By Lemma~\ref{lem:monotone-closed polytope W}, for each $t_i\in\cT_i$, $W_{t_{i}}$ is lowered-closed and given access to demand oracle $\dem_i(t_i,\cdot)$ we can find an element in $\text{argmax}\{\bm{a}\cdot w_i\mid w_i\in W_{t_i}\}$.
Note that each output of the demand oracle has bit complexity at most $T$. 
\item By Lemma~\ref{lem:box polytope constraint additive}, $(1-\eps T)W_i^{box(\eps)}\subseteq W_i$ 
\end{enumerate}
The proof then follows directly by applying \Cref{thm:special case of multiplicative approx} to $W_i$ for every $i$.
\end{proof}

}


\begin{prevproof}{Theorem}{thm:multiplicative approx for constraint additive-main body}

We simply verify that $W_i$ satisfies all assumptions in \Cref{thm:special case of multiplicative approx}. Recall that $T=\sum_{i,j}|\cT_{ij}|$ and {$b$ is an upper bound of the bit complexity of all $f_{ij}(t_{ij})$'s and all $t_{ij}$'s.} We have the following: 

\begin{enumerate}
\item By \Cref{obs:mix constraint additive}, we have that $W_{i}$ is a mixture of $\{W_{t_i}\}_{t_i\in \cT_i}$ over distribution $D_i$.

\item By Lemma~\ref{lem:monotone-closed polytope W}, for each $t_i\in\cT_i$, $W_{t_{i}}$ is a convex and down-monotone polytope. Given access to the demand oracle $\dem_i(t_i,\cdot)$, we can find an element in $\argmax\{\bm{a}\cdot w_i: w_i\in W_{t_i}\}$ in time {$poly(b',b,T)$ and a single query to $\dem_i(t_i,\cdot)$}, where $b'$ is the bit complexity of the input $\bm{a}$. Note that each output of the demand oracle has bit complexity at most $T$. 

\item By \Cref{obs:constraint additive length}, {$l_{t_{ij}}(W_i)=f_{ij}(t_{ij})$,for all $j$ and $t_{ij}$.  Thus, each $l_{t_{ij}}(W_i)$ has bit complexity at most $b$.} 


%

%
\item By Lemma~\ref{lem:box polytope constraint additive}, $(1-\eps T)W_i^{box(\eps)}\subseteq W_i$ for any $\eps<\frac{1}{T}$. Choosing $\eps=\frac{1}{2T}$ obtains $\frac{1}{2}W_i^{box(\frac{1}{2T})}\subseteq W_i$.

\end{enumerate}

For any $\delta\in (0,1)$, we apply \Cref{thm:special case of multiplicative approx} with parameter $k=poly(n,m,T,b,\log(1/\delta))$, $c=\frac{1}{2}$ and $\eps=\frac{1}{2T}$. The probability that the algorithm successfully constructs a polytope that satisfies both properties of \Cref{thm:special case of multiplicative approx} is at least $1-\delta$. We have $\frac{1}{12}\cdot W_i\subseteq \widehat{W}_i\subseteq W_i$ by the first property of \Cref{thm:special case of multiplicative approx} with $c=\frac{1}{2}$. Since the  vertex-complexity of $W_{t_i}$ for each $t_i$ is no more than $T$, and the vertex-complexity for $W_i^{box(\frac{1}{2T})}$ is no more than $\poly(b,T)$, the vertex-complexity for $\widehat{W}_i$ is no more than $poly(n,m,T,b,\log(1/\delta))$. The running time of the algorithm and the separation oracle $SO$ for $\widehat{W}_i$ follows from the second property of \Cref{thm:special case of multiplicative approx}. 

\end{prevproof}

At last, we give the proof of \Cref{thm:main} by combining \Cref{thm:bounding-lp-simple-mech} and \Cref{thm:multiplicative approx for constraint additive-main body}.

\begin{prevproof}{Theorem}{thm:main}
Fix any $\delta\in (0,1)$. Recall that in the LP of \Cref{fig:bigLP}, we use an estimation of $\prev$, which is $\estprev$, according to \Cref{thm:chms10}. Denote $\cE_1$ the event that an RPP mechanism is successfully computed and $\cE_2$ the event that the algorithm in \Cref{thm:multiplicative approx for constraint additive-main body} successfully constructs a convex polytope $\widehat{W}_i$ that satisfies both properties in the statement of \Cref{thm:multiplicative approx for constraint additive-main body} {for each buyer $i\in [n]$}. Note that $\cE_1$ happens with probability at least $1-\frac{2}{nm}$ and {we take enough samples to make sure that} $\cE_2$ happens with probability at least $1-\delta$, by the union bound, the probability that both $\cE_1$ and $\cE_2$ happen is at least  $1-\delta-\frac{2}{nm}$. From now on, we will condition on both events $\cE_1$ and $\cE_2$.

Now in the LP of \Cref{fig:bigLP}, we replace $W_i$ by $\widehat{W}_i$ for every $i$ (we will call it the modified LP). By property 1 and 2 of \Cref{thm:multiplicative approx for constraint additive-main body}, we can solve the modified LP using the separation oracle for $\widehat{W}_i$, in time $\poly(b,n,m,T,\log(1/\delta))$ (recall that $T=\sum_{i,j}|\cT_{ij}|$) according to \Cref{thm:ellipsoid}. Let $x^*=(w^*,\lambda^*,\hat\lambda^*,\textbf{d}^*)$ be an optimal solution of the modified LP. Then $w_i^*\in \widehat{W}_i$ for every $i$ according to Constraint {\Wconstraint} in the modified LP. By property 1 of \Cref{thm:multiplicative approx for constraint additive-main body}, we have $\widehat{W}_i\subseteq W_i$. Thus $w_i^*\in W_i,\forall i$ and hence $x^*$ is also a feasible solution of the original LP in \Cref{fig:bigLP}.

Recall that $\optlp$ is the optimum objective of the original LP. Denote $\optlp'$ the optimum objective of the modified LP. Thus in order to prove that $(w^*,\lambda^*,\hat\lambda^*,\textbf{d}^*)$ is an approximately-optimal solution in the original LP, it suffices to show that $\optlp'\geq c\cdot \optlp$. Take any feasible solution $(w,\lambda,\hat\lambda,\textbf{d})$ of the original LP. We have that $w_i\in W_i$ for every $i$. Now consider another set of variables $(w',\lambda',\hat\lambda,\textbf{d})$ such that $w'_{ij}(t_{ij})=c\cdot w_{ij}(t_{ij})$ and $\lambda'_{ij}(t_{ij},\beta_{ij},\delta_{ij})=c\cdot \lambda_{ij}(t_{ij},\beta_{ij},\delta_{ij})$, for all $i,j,t_{ij},\beta_{ij},\delta_{ij}$, where $c=1/12$. 
We verify that $(w',\lambda',\hat\lambda,\textbf{d})$ is a feasible solution for the modified LP. For Constraint {\Wconstraint}, since $c\cdot W_i\subseteq \widehat{W}_i$ (property 1 of \Cref{thm:multiplicative approx for constraint additive-main body}), we have that $w'_i\in \widehat{W}_i$, for all $i$. For Constraint {\LambdaMarginalConstraint}, it holds since we multiply both $\lambda$ and $w$ by $c$. Constraint {\PiConstraint}, {\CompareMarginalConstraint} and {\MarginalToGlobalConstraint} hold, as for each of them their LHS is smaller while the RHS remains unchanged. Every other constraint holds since both of their LHS and RHS remain the same. Thus $(w',\lambda',\hat\lambda,\textbf{d})$ is a feasible solution for the modified LP.


Now notice that the objective of the solution $(w',\lambda',\hat\lambda,\textbf{d})$ is exactly $c$ times the objective of the solution $(w,\lambda,\hat\lambda,\textbf{d})$. By choosing $(w,\lambda,\hat\lambda,\textbf{d})$ to be the optimal solution of the original LP, we have that $\optlp'\geq c\cdot \optlp$. It implies that the objective of $(w^*,\lambda^*,\hat\lambda^*,\textbf{d}^*)$ is at least $c\cdot \optlp$. Thus if we compute the simple mechanisms using the decision variables $(w^*,\lambda^*,\hat\lambda^*,\textbf{d}^*)$. By \Cref{thm:multiplicative approx for constraint additive-main body}, we have    $$c_1\cdot \rev(\Mpp)+c_2\cdot\rev(\Mtpt)\geq\optlp'\geq c\cdot \optlp\geq c\cdot \opt$$
We finish our proof by noticing that the simple mechanisms can be computed in time $\poly(n,m,T)$ given the solution $(w^*,\lambda^*,\hat\lambda^*,\textbf{d}^*)$.
\end{prevproof}

\section{Accessing Single-Bidder Marginal Reduced Form Polytopes for XOS Valuations}\label{sec:mrf for xos}

Our goal in this section is to prove \Cref{thm:main XOS-main body} for XOS valuations. 




\begin{theorem}\label{thm:main XOS}
(Restatement of \Cref{thm:main XOS-main body} for XOS valuations) Let $T=\sum_{i,j}|\cT_{ij}|$ and $b$ be the bit complexity of the problem instance (\Cref{def:bit complexity}). 
For XOS buyers, for any $\delta>0$, there exists an algorithm that computes a rationed posted price mechanism or a two-part tariff mechanism, such that the revenue of the mechanism is at least $c\cdot \opt$ for some absolute constant $c>0$ with probability $1-\delta-\frac{2}{nm}$.
Our algorithm assumes query access to a value oracle and an adjustable demand oracle (see \Cref{sec:prelim}) of buyers' valuations, and has running time $\poly(n,m,T,b,\log (1/\delta))$.
\end{theorem}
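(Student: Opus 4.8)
The plan is to mirror the constrained-additive proof of \Cref{thm:main} (given at the end of \Cref{sec:mrf for constraint additive}), replacing the constrained-additive LP in \Cref{fig:bigLP} with its XOS counterpart in \Cref{fig:XOSLP}, and replacing \Cref{thm:multiplicative approx for constraint additive-main body} with an analogous ``approximate separation oracle'' for the XOS single-bidder marginal reduced form polytope $W_i$ of \Cref{def:W_i}. The three pillars we need are: (i) \Cref{lem:bound rev by opt-XOS}, which shows $\opt\le 28\prev+4\optlp$ for the XOS LP; (ii) \Cref{thm:bounding-lp-simple-mech-XOS}, which shows any feasible solution of the XOS LP has objective bounded by $O(\rev(\Mpp)+\rev(\Mtpt))$ with both mechanisms efficiently computable from the solution using only value and (adjustable) demand queries; and (iii) an efficient way to (approximately) optimize the XOS LP. Pillars (i) and (ii) are already established earlier in the paper. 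So the real work is pillar (iii): constructing a polytope $\widehat W_i$ and a separation oracle for it that runs in polynomial time using the adjustable demand oracle, such that solving the LP over $\widehat W_i$ yields an $\Omega(1)$-approximation to $\optlp$.

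First I would analyze the structure of $W_i$ from \Cref{def:W_i}. Note $W_i\subseteq[0,1]^{2\sum_j|\cT_{ij}|}$ now carries both the ``allocation'' coordinates $\pi_{ij}(t_{ij})$ and the ``scaled-welfare'' coordinates $w_{ij}(t_{ij})$, and is not down-monotone in the joint $(\pi,w)$ space (increasing $\pi$ while holding $w$ fixed can leave the polytope). This is exactly the obstruction flagged in \Cref{subsec:result}: the clean down-monotone machinery of \Cref{thm:special case of multiplicative approx} does not apply directly. The plan is to instead establish the weaker ``coordinate-wise multiplicative sandwich'' promised in \Cref{subsec:result}: construct $\widehat W_i$ such that for every $x\in W_i$ there is $x'\in\widehat W_i$ with $x_\ell/x'_\ell\in[a,b]$ coordinate-wise, and vice versa, for absolute constants $0<a<b$. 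To do this I would (a) show $W_i$ is a mixture over $t_i\sim D_i$ of per-type polytopes $W_{t_i}$ (immediate from \Cref{def:W_i}, as in \Cref{obs:mix constraint additive}); (b) show each $W_{t_i}$ admits linear optimization via a single adjustable demand query — here the coefficients $b_j=\alpha_{ij}^{(k)}(t_{ij})/V_{ij}(t_{ij})$ scaling is precisely why the \emph{adjustable} demand oracle (rather than an ordinary demand oracle) is needed, and \Cref{lem:example_adjustable_oracle} shows this strengthening is unavoidable; (c) decompose $W_i$ into its $\pi$-part and $w$-part, argue the $w$-part relative to fixed $\pi$ behaves monotonically enough, and run the sampling/truncation argument of \Cref{thm:approx mixture} and \Cref{thm:special case of multiplicative approx} on the appropriate down-monotone ``slice,'' padding the small coordinates with an $\eps$-box as in \Cref{lem:boxable polytope contained}; (d) lower-bound the relevant widths $l_{t_{ij}}(W_i)$ and $l_{t_{ij}}$ for the $\pi$-coordinates (analogue of \Cref{obs:constraint additive length}: the $\pi$-width is $f_{ij}(t_{ij})$ and the $w$-width is $f_{ij}(t_{ij})\cdot\max_k\alpha_{ij}^{(k)}(t_{ij})/V_{ij}(t_{ij})$, which is $f_{ij}(t_{ij})$ by the XOS definition of $V_{ij}$), and exhibit a box contained in $W_i$ by the same Bernoulli-sampling construction as in \Cref{lem:box polytope constraint additive}.

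Once $\widehat W_i$ and its separation oracle are in hand, the endgame is identical to the proof of \Cref{thm:main}: solve the LP in \Cref{fig:XOSLP} with $W_i$ replaced by $\widehat W_i$ via the ellipsoid method (\Cref{thm:ellipsoid}); the computed optimum $x^*$ is feasible for the true XOS LP since $\widehat W_i\subseteq$ (a constant dilation of) $W_i$; by the other side of the sandwich ($c\cdot W_i\subseteq$ roughly $\widehat W_i$) its objective is $\Omega(\optlp)$; then apply \Cref{thm:bounding-lp-simple-mech-XOS} to extract from $x^*$ a RPP $\Mpp$ (via \Cref{thm:chms10}) or TPT $\Mtpt$ (Mechanism~\ref{def:constructed-SPEM}) with $\rev\ge\Omega(\optlp+\prev)\ge\Omega(\opt)$ by \Cref{lem:bound rev by opt-XOS}; union-bound over the failure probabilities of \Cref{thm:chms10} and of the polytope-approximation step to get success probability $1-\delta-\tfrac{2}{nm}$; and bound all running times by $\poly(n,m,T,b,\log(1/\delta))$. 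The main obstacle, and the only genuinely new ingredient, is step (c)–(d) above: pushing the multiplicative polytope-approximation through the non-down-monotone joint $(\pi,w)$ geometry of $W_i$. I expect this requires carefully separating the two groups of coordinates, applying the sandwich only to the truly down-monotone directions, and verifying that the TPT-extraction in \Cref{thm:bounding-lp-simple-mech-XOS} only ever uses $w$ via Inequality~\eqref{equ:w and gamma} and $\pi$ via the supply constraint \PiConstraint, so that a coordinate-wise constant-factor distortion of $(\pi,w)$ degrades the revenue guarantee by at most a constant — which is exactly the weaker guarantee we can afford.
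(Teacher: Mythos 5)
Your overall architecture matches the paper's: reuse \Cref{lem:bound rev by opt-XOS} and \Cref{thm:bounding-lp-simple-mech-XOS} as black boxes, and build a proxy polytope $\widehat W_i$ for the XOS polytope of \Cref{def:W_i} by writing $W_i$ as a mixture of per-type polytopes, optimizing over each $W_{t_i}$ with a single adjustable demand query, truncating to frequent values, sampling an empirical mixture, and padding the rare coordinates with an $\eps$-box, yielding only the coordinate-wise multiplicative sandwich (this is exactly \Cref{cor:rfs for xos mult}, proved via a four-way partition of $\cT_{ij}$ according to whether $f_{ij}(t_{ij})$, $w_{ij}(t_{ij})$, $\pi_{ij}(t_{ij})$ exceed $\eps$ — your ``down-monotone slice'' sketch in step (c) is where all of that case analysis lives and would need to be supplied).

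The genuine gap is in your endgame. You claim that after replacing $W_i$ by $\widehat W_i$ in Constraint {\Wconstraint} of \Cref{fig:XOSLP}, the computed optimum is feasible for the true LP ``since $\widehat W_i\subseteq$ (a constant dilation of) $W_i$.'' That containment is false and is precisely what the loss of down-monotonicity destroys: the coordinate-wise guarantee of \Cref{cor:rfs for xos mult} distorts different coordinates by different factors in \emph{both} directions, the nearby point in $W_i$ can have strictly larger $\pi$-coordinates (which threatens the supply constraint) and strictly smaller $w$-coordinates (which threatens Constraint {\LambdaMarginalConstraint} and the objective), and because $\pi$ enters \Cref{def:W_i} through an equality you cannot rescue feasibility by a single scalar rescaling of the solution. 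Your fallback — re-verify the rounding theorem under a constant coordinate-wise distortion of $(\pi,w)$ — is not carried out and is the delicate part; the paper avoids it entirely. Instead it changes the LP: it introduces auxiliary variables $(\widehat\pi_i,\widehat w_i)\in\widehat W_i$ coupled to the original ones via the one-sided inequalities $\pi_i\ge\tfrac32\widehat\pi_i$ and $w_i\le\tfrac14\widehat w_i$ (Constraint {\WconstraintNew}), so that the shadow point guaranteed by \Cref{cor:rfs for xos mult} is dominated in exactly the directions $W_i$ tolerates, producing an \emph{exactly} feasible solution of the original LP with the same $w,\lambda,\hat\lambda,\bd$ and an existential $\widetilde\pi$, together with $\optlp\le 64\cdot\optlp'$ (\Cref{lem:XOS-P to P'}). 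A further point your plan misses: the witness $\widetilde\pi$ is never computed, and this is harmless only because the two-part tariff prices $Q_j$ and the guarantee of \Cref{thm:bounding-lp-simple-mech-XOS} depend on the solution solely through $\lambda^*$; making that observation explicit is what lets the paper apply the rounding theorem verbatim rather than re-proving it under distortion.
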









We remind the readers the definition of the single-bidder marginal reduced form polytope $W_i$ for XOS valuations:

\begin{definition}[Restatement of \Cref{def:W_i}]\label{def:W_i-restate}
For every $i\in [n]$, the single-bidder marginal reduced form polytope $W_i\subseteq [0,1]^{2\cdot \sum_j|\cT_{ij}|}$ is defined as follows. Let $\pi_i=(\pi_{ij}(t_{ij}))_{j,t_{ij}\in \cT_{ij}}$ and $w_i=(w_{ij}(t_{ij}))_{j,t_{ij}\in \cT_{ij}}$. Then $(\pi_i,w_i)\in W_i$ if and only if there exist a number $\sigma_S^{(k)}(t_i)\in [0,1]$ for every $t_i\in \cT_i, S\subseteq [m],k\in [K]$, such that
\begin{enumerate}
    \item $\sum_{S,k}\sigma_S^{(k)}(t_i)\leq 1 $, $\forall t_i\in \cT_i$.
    \item $
    {\pi_{ij}(t_{ij})=}f_{ij}(t_{ij})\cdot\sum_{t_{i,-j}}f_{i,-j}(t_{i,-j})\cdot \sum_{S:j\in S}\sum_{k\in [K]}\sigma_S^{(k)}(t_{ij},t_{i,-j})$, for all $i,j,t_{ij}\in \cT_{ij}$.
    \item ${w_{ij}(t_{ij})\leq} f_{ij}(t_{ij})\cdot\sum_{t_{i,-j}}f_{i,-j}(t_{i,-j})\cdot \sum_{S:j\in S}\sum_{k\in [K]}\sigma_S^{(k)}(t_{ij},t_{i,-j})\cdot \frac{\alpha_{ij}^{(k)}(t_{ij})}{V_{ij}(t_{ij})}$, for all $i,j,t_{ij}\in \cT_{ij}$. {If $V_{ij}(t_{ij})=0$, we slightly abuse notation and treat $\frac{\alpha_{ij}^{(k)}(t_{ij})}{V_{ij}(t_{ij})}$ as $0$ for all $k$.}
\end{enumerate}
\end{definition}

The following observation follows directly from \Cref{def:W_i}.

\begin{observation}\label{obs:W_i-upper-bound}
For any $(\pi_i,w_i)\in W_i$ and any $j\in [m], t_{ij}\in \cT_{ij}$, $w_{ij}(t_{ij})\leq \pi_{ij}(t_{ij})\leq f_{ij}(t_{ij})$.
\end{observation}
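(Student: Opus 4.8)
\textbf{Proof proposal for \Cref{obs:W_i-upper-bound}.} The plan is to unwind \Cref{def:W_i-restate} directly: fix $(\pi_i,w_i)\in W_i$, and let $\{\sigma_S^{(k)}(t_i)\}_{t_i\in\cT_i,S\subseteq[m],k\in[K]}$ be the collection of numbers in $[0,1]$ witnessing membership, so that Properties 1--3 of \Cref{def:W_i-restate} hold. Both inequalities will follow by comparing the defining expressions term-by-term, using two elementary facts: that $\frac{\alpha_{ij}^{(k)}(t_{ij})}{V_{ij}(t_{ij})}\le 1$ for every $k$, and that $\sum_{t_{i,-j}}f_{i,-j}(t_{i,-j})=1$ together with $\sum_{S,k}\sigma_S^{(k)}(t_i)\le 1$.

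For the first inequality $w_{ij}(t_{ij})\le \pi_{ij}(t_{ij})$, I would first observe that by the definition of singleton value for XOS valuations, $V_{ij}(t_{ij})=v_i(t_i,\{j\})=\max_{k\in[K]}\alpha_{ij}^{(k)}(t_{ij})$, hence $\alpha_{ij}^{(k)}(t_{ij})\le V_{ij}(t_{ij})$ for every $k\in[K]$; when $V_{ij}(t_{ij})=0$ the ratio is treated as $0$ by convention (and then all $\alpha_{ij}^{(k)}(t_{ij})$ are $0$ as well, so both sides are consistent). Therefore $\frac{\alpha_{ij}^{(k)}(t_{ij})}{V_{ij}(t_{ij})}\le 1$ and, since all the $\sigma_S^{(k)}$ and $f$ values are non-negative, Property 3 gives
$$
w_{ij}(t_{ij})\le f_{ij}(t_{ij})\sum_{t_{i,-j}}f_{i,-j}(t_{i,-j})\sum_{S:j\in S}\sum_{k\in[K]}\sigma_S^{(k)}(t_{ij},t_{i,-j})\cdot 1 = \pi_{ij}(t_{ij}),
$$
where the last equality is exactly Property 2.

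For the second inequality $\pi_{ij}(t_{ij})\le f_{ij}(t_{ij})$, I would bound the inner sum in Property 2: for each fixed $t_{i,-j}$, applying Property 1 to the type $t_i=(t_{ij},t_{i,-j})$ yields $\sum_{S:j\in S}\sum_{k\in[K]}\sigma_S^{(k)}(t_{ij},t_{i,-j})\le\sum_{S\subseteq[m]}\sum_{k\in[K]}\sigma_S^{(k)}(t_{ij},t_{i,-j})\le 1$. Since $f_{i,-j}$ is a probability mass function, $\sum_{t_{i,-j}}f_{i,-j}(t_{i,-j})=1$, so $\pi_{ij}(t_{ij})\le f_{ij}(t_{ij})\cdot 1\cdot 1=f_{ij}(t_{ij})$. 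This completes the proof. There is no real obstacle here — the statement is an immediate consequence of the definitions; the only point worth stating carefully is the $V_{ij}(t_{ij})=0$ degenerate case and the use of $\sum_{t_{i,-j}}f_{i,-j}(t_{i,-j})=1$.
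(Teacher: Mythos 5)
Your proof is correct and follows essentially the same route as the paper: the first inequality comes from $\alpha_{ij}^{(k)}(t_{ij})\le V_{ij}(t_{ij})=\max_{k'}\alpha_{ij}^{(k')}(t_{ij})$ applied to Property 3, and the second from bounding $\sum_{S:j\in S}\sum_{k}\sigma_S^{(k)}(t_{ij},t_{i,-j})$ by $\sum_{S,k}\sigma_S^{(k)}(t_{ij},t_{i,-j})\le 1$ in Property 2. Your treatment is just a slightly more explicit write-up (including the $V_{ij}(t_{ij})=0$ convention and the normalization of $f_{i,-j}$), which the paper leaves implicit.
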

\begin{proof}
The first inequality follows directly from the fact that $\alpha_{ij}^{(k)}(t_{ij})\leq V_{ij}(t_{ij})=\max_{k'}\alpha_{ij}^{(k')}(t_{ij})$. The second inequality follows from $\sum_{S:j\in S}\sum_{k\in [K]}\sigma_S^{(k)}(t_{ij},t_{i,-j})\leq \sum_{S}\sum_{k}\sigma_S^{(k)}(t_{ij},t_{i,-j})\leq 1$.
\end{proof}

In \Cref{sec:program-XOS}, we provide an LP (\Cref{fig:XOSLP}) that helps us to compute the simple mechanisms efficiently. In \Cref{thm:bounding-lp-simple-mech-XOS}, we have proved that given any optimal solution to the LP in \Cref{fig:XOSLP}, we can compute a simple mechanism in polynomial time, whose revenue is a constant factor of the optimal revenue. However, constraint {\Wconstraint} is implicit and thus it's unclear if we can solve the LP in polynomial time. Similar to the idea in \Cref{sec:multi-approx-polytope}, we fix this issue by constructing another polytope $\widehat{W}_i$. Unfortunately, for XOS valuations, $W_i$ is not a down-monotone polytope anymore. To see this, we simply notice that by \Cref{obs:W_i-upper-bound}, $w_{ij}(t_{ij})\leq \pi_{ij}(t_{ij})$ for every coordinate $(j,t_{ij})$. Thus given any $(\pi_i,w_i)\in W_i$ where $w_{ij}(t_{ij})>0$ for some $j,t_{ij}$, the vector $(\textbf{0},w_i)$ is clearly not in $W_i$. Thus the argument in \Cref{sec:multi-approx-polytope} does not apply here.

\subsection{Basic Properties of the Single-Bidder Marginal Reduced Form for XOS Valuations}

In this section we present some basic definitions and properties of the single-bidder Marginal Reduced Form polytope $W_i$ (\Cref{def:W_i}). We fixed any buyer $i$ throughout this section unless otherwise specified. 

\begin{definition}\label{def:welfare frequent}
For any $\eps>0$,
we denote as $W_i^{tr(\eps)}\subseteq [0,1]^{2\sum_{j\in[m]}|\cT_{ij}|}$ the \emph{$\eps$-truncated polytope} of $W_i$.
An element $(\widehat{\pi}_i,\widehat{w}_i)\in W_i^{tr(\eps)}$
if there exists $(\pi_i,w_i)\in W_i$ such that for all $j\in[m]$ and $t_{ij}\in \cT_{ij}$:
\begin{align*}
    \widehat{w}_{ij}(t_{ij}) &=
w_{ij}(t_{ij})\cdot\ind[ f_{ij}(t_{ij})\geq \eps]\\ \widehat{\pi}_{ij}(t_{ij}) &=
\pi_{ij}(t_{ij})\cdot\ind[ f_{ij}(t_{ij})\geq \eps]
\end{align*}

\end{definition}

Similar to Section~\ref{sec:multi-approx-polytope}, we show that $W^{tr(\eps)}_i$ is a mixture (\Cref{def:mixture}) of a set of polytopes $\{W^{tr(\eps)}_{t_i}\}_{t_i\in \cT_i}$ defined in \Cref{def:welfare frequent-t_i} over $\cD$.

\begin{definition}\label{def:welfare frequent-t_i}
For any $i, t_i \in \cT_i$ and $\eps>0$, we define the polytopes $W_{t_i},W^{tr(\eps)}_{t_i} \subseteq [0,1]^{2\sum_{j\in[m]}|\cT_{ij}|}$ as follows: An element{ $\left(x=\left\{x(t'_{ij})\right\}_{t'_{ij}\in \cT_{ij}},y=\left\{y(t'_{ij})\right\}_{t'_{ij}\in \cT_{ij}}\right)\in W_{t_i}$} if there exists a collection of non-negative numbers $\{\sigma_S^{(k)}\}_{S\subseteq [m], k\in [K]}$, such that $\sum_{S \subseteq [m]}\sum_{k \in [K]} \sigma_S^{(k)} \leq 1$,  and for any $j,t_{ij}'\in \cT_{ij}$,
{
\begin{align*}
&x(t_{ij}')=\sum_{S:j\in S} \sum_{k \in [K]} \sigma_S^{(k)}\cdot \ind[t_{ij}'=t_{ij}]\\
&y(t_{ij}')\leq \sum_{S:j\in S} \sum_{k \in [K]} \sigma_S^{(k)}\cdot \frac{\alpha_{ij}^{(k)}(t_{ij})}{V_{ij}(t_{ij})}\cdot\ind[t_{ij}'=t_{ij}].
\end{align*}
}


{Moreover, an element $\left(\hat{x}=\left\{\hat{x}(t'_{ij})\right\}_{t'_{ij}\in \cT_{ij}}, \hat{y}=\left\{\hat{y}(t'_{ij})\right\}_{t'_{ij}\in \cT_{ij}}\right)\in W^{tr(\eps)}_{t_i}$ if there exists $(x,y)\in W_{t_i}$ such that for any $j,t_{ij}'$,
\begin{align*}
\hat{x}(t_{ij}')&=
x(t_{ij}')\cdot\ind[ f_{ij}(t_{ij})\geq \eps]\\
\hat{y}(t_{ij}')&= y(t_{ij}')\cdot\ind[ f_{ij}(t_{ij})\geq \eps]
\end{align*}}
\end{definition}

The following observation directly follows from \Cref{def:W_i} and \Cref{def:welfare frequent-t_i}.

\begin{observation}\label{obs:mix}
$W_{i}$ is a mixture of $\{W_{t_i}\}_{t_i\in \cT_i}$ over distribution $D_i$. For any $\eps>0$, $W_{i}^{tr(\eps)}$ is a mixture of $\{W^{tr(\eps)}_{t_i}\}_{t_i\in \cT_i}$ over distribution $D_i$.
\end{observation}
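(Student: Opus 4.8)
The plan is to prove the stronger identity $W_i=\sum_{t_i\in\cT_i}f_i(t_i)\cdot W_{t_i}$, where the sum is the weighted Minkowski sum of Definition~\ref{def:mixture} (legitimate since $\cT_i$ is finite and $\sum_{t_i}f_i(t_i)=1$); the first assertion is then just a restatement since $f_i(t_i)=\Pr_{D_i}[t_i]$. The truncated assertion will be deduced from this by noting that the truncation operation is a fixed coordinatewise linear map. Concretely, let $T_\eps$ multiply each coordinate indexed by $(j,t_{ij})$ — in both the $\pi$-block and the $w$-block — by $\ind[f_{ij}(t_{ij})\geq\eps]$. By Definition~\ref{def:welfare frequent}, $W_i^{tr(\eps)}=T_\eps(W_i)$, and by Definition~\ref{def:welfare frequent-t_i}, $W^{tr(\eps)}_{t_i}=T_\eps(W_{t_i})$; here I will use that every $(x,y)\in W_{t_i}$ is supported on the coordinates $(j,t_{ij})$ with $t_{ij}$ the $j$-th entry of $t_i$, so the indicator in Definition~\ref{def:welfare frequent-t_i} agrees with $T_\eps$ on the support. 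Since a linear map commutes with Minkowski sums and nonnegative scaling, $W_i^{tr(\eps)}=T_\eps\big(\sum_{t_i}f_i(t_i)W_{t_i}\big)=\sum_{t_i}f_i(t_i)T_\eps(W_{t_i})=\sum_{t_i}f_i(t_i)W^{tr(\eps)}_{t_i}$.

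For the inclusion $W_i\subseteq\sum_{t_i}f_i(t_i)W_{t_i}$, take $(\pi_i,w_i)\in W_i$ with witnessing numbers $\{\sigma^{(k)}_S(t_i)\}_{t_i,S,k}$ as in Definition~\ref{def:W_i}. For each $t_i\in\cT_i$ I will build a point $(x^{(t_i)},y^{(t_i)})\in W_{t_i}$ using the \emph{same} numbers $\{\sigma^{(k)}_S(t_i)\}_{S,k}$: property~1 of Definition~\ref{def:W_i} gives $\sum_{S,k}\sigma^{(k)}_S(t_i)\leq1$, so these are a valid witness, and all resulting coordinates lie in $[0,1]$. The $x$-part is forced by Definition~\ref{def:welfare frequent-t_i}, namely $x^{(t_i)}(t_{ij}')=\sum_{S:j\in S}\sum_k\sigma^{(k)}_S(t_i)\cdot\ind[t_{ij}'=t_{ij}]$, and summing $f_i(t_i)x^{(t_i)}$ over $t_i$ and grouping by the $j$-th coordinate reproduces property~2 of Definition~\ref{def:W_i}, i.e. $\sum_{t_i}f_i(t_i)x^{(t_i)}(t_{ij})=\pi_{ij}(t_{ij})$. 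The $y$-part carries slack, so I will scale it: writing $U_{ij}(t_{ij})=\sum_{t_{i,-j}}f_{i,-j}(t_{i,-j})\sum_{S:j\in S}\sum_k\sigma^{(k)}_S(t_{ij},t_{i,-j})\tfrac{\alpha^{(k)}_{ij}(t_{ij})}{V_{ij}(t_{ij})}$ (treating $\alpha/V$ as $0$ when $V_{ij}(t_{ij})=0$), set $\theta_{ij}(t_{ij})=\tfrac{w_{ij}(t_{ij})}{f_{ij}(t_{ij})U_{ij}(t_{ij})}$ when the denominator is positive and $\theta_{ij}(t_{ij})=0$ otherwise; property~3 of Definition~\ref{def:W_i} forces $\theta_{ij}(t_{ij})\in[0,1]$. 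Defining $y^{(t_i)}(t_{ij}')=\theta_{ij}(t_{ij})\cdot\big(\sum_{S:j\in S}\sum_k\sigma^{(k)}_S(t_i)\tfrac{\alpha^{(k)}_{ij}(t_{ij})}{V_{ij}(t_{ij})}\big)\cdot\ind[t_{ij}'=t_{ij}]$ keeps $(x^{(t_i)},y^{(t_i)})\in W_{t_i}$, and $\sum_{t_i}f_i(t_i)y^{(t_i)}(t_{ij})=f_{ij}(t_{ij})\theta_{ij}(t_{ij})U_{ij}(t_{ij})=w_{ij}(t_{ij})$ exactly.

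For the reverse inclusion, take $(\pi_i,w_i)=\sum_{t_i}f_i(t_i)(x^{(t_i)},y^{(t_i)})$ with each $(x^{(t_i)},y^{(t_i)})\in W_{t_i}$ witnessed by $\{\sigma^{(k),t_i}_S\}_{S,k}$, and simply declare $\sigma^{(k)}_S(t_i):=\sigma^{(k),t_i}_S$ to be the witness for $(\pi_i,w_i)$ in Definition~\ref{def:W_i}. Property~1 is immediate; for properties~2 and~3, since $x^{(t_i)}(t_{ij}')$ and $y^{(t_i)}(t_{ij}')$ vanish unless $t_{ij}'$ is the $j$-th coordinate of $t_i$, grouping the Minkowski-sum terms by the value of that coordinate turns $\sum_{t_i}f_i(t_i)x^{(t_i)}(t_{ij})$ into $f_{ij}(t_{ij})\sum_{t_{i,-j}}f_{i,-j}(t_{i,-j})\sum_{S:j\in S}\sum_k\sigma^{(k)}_S(t_{ij},t_{i,-j})$ and likewise for the $w$-block, with the inequality from Definition~\ref{def:welfare frequent-t_i} carried through. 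This establishes $W_i=\sum_{t_i}f_i(t_i)W_{t_i}$ and hence both statements.

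The only step demanding genuine care is the scaling by $\theta_{ij}(t_{ij})$ in the forward direction: because both $W_i$ and $W_{t_i}$ allow the $w$/$y$-coordinates to take \emph{any} value below a prescribed bound (rather than equalling it), one cannot just copy the witnessing numbers coordinatewise — the per-coordinate factors $\theta_{ij}(t_{ij})$ must be chosen so that the Minkowski sum lands on $w_i$ exactly and not merely dominates it; every other part is a routine unwinding of Definitions~\ref{def:W_i}, \ref{def:welfare frequent}, \ref{def:welfare frequent-t_i} and~\ref{def:mixture}.
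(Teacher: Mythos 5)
Your proof is correct: the paper states this observation without proof, asserting it follows directly from Definitions~\ref{def:W_i} and~\ref{def:welfare frequent-t_i}, and your argument is exactly that unwinding, correctly reducing the truncated claim to the untruncated one via the coordinatewise linear map $T_\eps$ (which commutes with weighted Minkowski sums and agrees with the per-type indicator on the support of each $W_{t_i}$). The one point that genuinely needs care — that the $w$/$y$-coordinates are only bounded above, so the per-type witnesses must be rescaled (your $\theta_{ij}(t_{ij})\in[0,1]$, well-defined by Property~3 of Definition~\ref{def:W_i}) so the mixture hits $w_i$ exactly rather than merely dominating it — is handled correctly.
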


The following observation is useful in later proofs.

\notshow{
\begin{observation}\label{obs:bound xos mrf}
Let $\widetilde{W}_i$ be a mixture of $\{W_{t_i}\}_{t_i\in \cT_i}$ over distribution $\widetilde{D}_i$.
Then for any $a'\geq a>0$,  $a\widetilde{W}_i\subseteq a' \widetilde{W}_i$
\end{observation}
}
\begin{observation}\label{obs:bound xos mrf}
For any $\eps>0$ and $a'\geq a>0$, $a\cdot W_i\subseteq a'\cdot W_i$ and $a\cdot W^{tr(\eps)}_i\subseteq a'\cdot W^{tr(\eps)}_i$.
\end{observation}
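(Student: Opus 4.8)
The plan is to exploit two structural facts about $W_i$ and $W_i^{tr(\eps)}$ that survive even though, as noted right above the statement, neither polytope is down-monotone in the XOS case: both are convex, and both contain the origin $\mathbf{0}$. Convexity of $W_i$ holds because the set of triples $\left(\pi_i, w_i, \{\sigma_S^{(k)}(t_i)\}\right)$ satisfying conditions 1--3 of \Cref{def:W_i} is cut out by linear (in)equalities, hence is a polytope, and $W_i$ is its coordinate projection onto the $(\pi_i, w_i)$-block. That $\mathbf{0}\in W_i$ is witnessed by taking $\sigma_S^{(k)}(t_i)=0$ for all $t_i, S, k$. For $W_i^{tr(\eps)}$, observe that the truncation map of \Cref{def:welfare frequent} is linear: it multiplies each coordinate $(j,t_{ij})$ by the fixed constant $\ind[f_{ij}(t_{ij})\geq\eps]\in\{0,1\}$. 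Hence $W_i^{tr(\eps)}$ is the image of the convex set $W_i$ under a linear map, so it is convex; and the truncation of $\mathbf{0}\in W_i$ is again $\mathbf{0}$, so $\mathbf{0}\in W_i^{tr(\eps)}$.

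Given these two facts the containments are immediate. Fix $0<a\leq a'$ and let $x\in a\cdot W_i$, so that $x=a\cdot w$ for some $w\in W_i$. Since $0\leq a/a'\leq 1$, $w\in W_i$, and $\mathbf{0}\in W_i$, convexity gives $\tfrac{a}{a'}\,w=\tfrac{a}{a'}\,w+\bigl(1-\tfrac{a}{a'}\bigr)\mathbf{0}\in W_i$. But $\tfrac{a}{a'}w=\tfrac{x}{a'}$, so $\tfrac{x}{a'}\in W_i$, i.e. $x\in a'\cdot W_i$. This proves $a\cdot W_i\subseteq a'\cdot W_i$, and the identical argument with $W_i^{tr(\eps)}$ in place of $W_i$ (using convexity of $W_i^{tr(\eps)}$ and $\mathbf{0}\in W_i^{tr(\eps)}$) proves $a\cdot W_i^{tr(\eps)}\subseteq a'\cdot W_i^{tr(\eps)}$.

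There is no genuine obstacle here; the only point requiring care is that the down-monotonicity argument used for the constrained-additive polytopes in \Cref{sec:multi-approx-polytope} is unavailable for XOS (the excerpt explicitly flags that $W_i$ is not down-monotone), so the scaling must be routed through convexity together with membership of the origin rather than through coordinatewise domination. Everything else reduces to the one-line convex-combination computation above.
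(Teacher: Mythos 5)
Your proof is correct, and it takes a mildly different route from the paper's. The paper argues directly at the level of witnesses: given $(\pi_i,w_i)\in W_i$ with certificate $\{\sigma_S^{(k)}(t_i)\}$, it scales the certificate to $\{c\cdot\sigma_S^{(k)}(t_i)\}$ with $c=a/a'\leq 1$, which still satisfies $\sum_{S,k}\sigma_S^{(k)}(t_i)\leq 1$ and exhibits $(c\pi_i,cw_i)\in W_i$; the truncated case is handled by noting that the truncation of the scaled point is the scaled truncation. You instead observe that $W_i$ is a coordinate projection of a polytope cut out by the linear constraints of \Cref{def:W_i}, hence convex, that $\mathbf{0}\in W_i$ (certificate $\sigma\equiv 0$), and that $W_i^{tr(\eps)}$ is the image of $W_i$ under the linear $0/1$-diagonal truncation map of \Cref{def:welfare frequent}, hence also convex and containing the origin; then $\tfrac{a}{a'}w=\tfrac{a}{a'}w+\bigl(1-\tfrac{a}{a'}\bigr)\mathbf{0}$ settles both containments. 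The two arguments are essentially equivalent in content (the convex combination with $\mathbf{0}$, unrolled at the witness level, is exactly the paper's scaled certificate), but yours is a bit more abstract and reusable: it shows the property holds for any convex set containing the origin, so it would apply verbatim to other polytopes in the paper without re-examining their defining constraints, at the small cost of having to justify convexity of $W_i$ and $W_i^{tr(\eps)}$ explicitly (projection of a polytope, linear image of a convex set), which the paper's witness-scaling proof never needs to state. You are also right that down-monotonicity is unavailable here and that neither proof uses it.
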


\begin{proof}
Let $c=\frac{a}{a'}\leq 1$. For the first statement, it suffices to prove that $(c\pi_i,cw_i)\in W_i$, for all $(\pi_i,w_i)\in W_i$. Let $\{\sigma_S^{(k)}(t_i)\}_{t_i,S,k}$ be the collection of numbers that satisfy all properties of \Cref{def:W_i}. Then since $c\leq 1$, by considering the collection of numbers $\{c\cdot \sigma_S^{(k)}(t_i)\}_{t_i,S,k}$, we immediately have that $(c\pi_i,cw_i)\in W_i$. For the second statement, let $(\pi_i',w_i')$ be the vector achieved by zeroing out all coordinates $(j,t_{ij})$ where $f_{ij}(t_{ij})<\eps$ for the vector $(\pi_i,w_i)$. By the definition of $W^{tr(\eps)}_i$, we immediately have $(\pi_i',w_i')\in W^{tr(\eps)}_i$ and $(c\pi_i',cw_i')\in W^{tr(\eps)}_i$.
\end{proof}

{We next present several desirable properties of the polytopes we consider here.}
\notshow{
\begin{proof}
For any $(\pi_i,w_i)\in aW^{tr(\eps)}_i$,
there exists for each $t_i\in \cT_i$, $\left(\pi^{(t_i)}_i,w^{(t_i)}_i\right)\in W^{tr(\eps)}_{t_i}$ such that
$$
(\pi_i,w_i) = \sum_{t_i\in \cT_i} a \cdot f_i(t_i)\left(\pi^{(t_i)}_i,w^{(t_i)}_i\right)
$$
If we show that for each $t_i\in \cT_i$, $\frac{a}{a'} \cdot \left(\pi^{(t_i)}_i,w^{(t_i)}_i\right)\in W^{tr(\eps)}_{t_i}$,
then we have that $(\pi_i,w_i)=\sum_{t_i\in \cT_i} a' \cdot
f_i(t_i)\frac{a}{a'}\left(\pi^{(t_i)}_i,w^{(t_i)}_i\right)$,
which implies that $(\pi_i,w_i)\in a'W^{tr(\eps)}_i$.
Fix some $t_i\in \cT_i$ and   consider the numbers $\sigma_S^{(k)}$ for $S\subseteq [m]$, $k\in [K]$ associated with $\left(\pi^{(t_i)}_i,w^{(t_i)}_i\right)$ according to Definition~\ref{def:welfare frequent-t_i}.

Observe that $\frac{a}{a'}\leq 1$ and for each $S\subseteq[m]$ and $k\in [K]$,
we consider the number $\widehat{\sigma}_S^{(k)}=\frac{a}{a'}\sigma_S^k$.
Using Definition~\ref{def:welfare frequent-t_i} with numbers $\{\widehat{\sigma}_S^{(k)}\}_{S,k}$ is enough to show that $\frac{a}{a'} \cdot \left(\pi^{(t_i)}_i,w^{(t_i)}_i\right)\in W^{tr(\eps)}_{t_i}$ which concludes the proof.
\end{proof}
}


\begin{lemma}\label{lem:W_t_i modify}
For any $t_i\in \cT_i$, any subset of items $B\subseteq[m]$, 
and any {$(x,y)\in W_{t_i}$},
consider any 
{$(\hat{x},\hat{y})\in [0,1]^{2\sum_{j\in[m]}|\cT_{ij}|}$ }such that for each $j\in[m]$ and $t_{ij}'\in \cT_{ij}$,

{
\begin{align*}
\hat{x}(t_{ij}')= & x(t_{ij}') \ind[j\in B]\quad\text{ and } \quad \hat{y}(t_{ij}')\leq  y(t_{ij}') \ind[j\in B].
\end{align*}}
Then $(\widehat{x},\widehat{y})\in W_{t_i}$. {Moreover,  if $(x,y)\in W_{t_i}^{tr(\eps)}$, then $(\hat{x},\hat{y})$ is also in $W_{t_i}^{tr(\eps)}$. Finally, $W_{t_i}^{tr(\eps)}\subseteq W_{t_i}$.}
\end{lemma}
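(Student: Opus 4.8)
\textbf{Proof plan for Lemma~\ref{lem:W_t_i modify}.}

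The plan is to unwind the definition of $W_{t_i}$ (Definition~\ref{def:welfare frequent-t_i}) and exhibit, for the modified pair $(\hat x,\hat y)$, an appropriate collection of non-negative numbers $\{\hat\sigma_S^{(k)}\}_{S\subseteq[m],k\in[K]}$ witnessing membership. Since $(x,y)\in W_{t_i}$, there is a collection $\{\sigma_S^{(k)}\}$ with $\sum_{S,k}\sigma_S^{(k)}\le 1$ realizing $x$ exactly and dominating $y$. The natural candidate for $(\hat x,\hat y)$ is to restrict every allocated set to $B$: set $\hat\sigma^{(k)}_{S}=\sum_{S'\subseteq[m]:\,S'\cap B=S}\sigma^{(k)}_{S'}$ for $S\subseteq B$ (and $\hat\sigma^{(k)}_S=0$ for $S\not\subseteq B$). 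Intuitively this is the allocation ``run $\sigma$, then throw away every item not in $B$.'' Clearly $\sum_{S,k}\hat\sigma_S^{(k)}=\sum_{S,k}\sigma_S^{(k)}\le 1$, so the feasibility constraint is preserved.

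Next I would verify the two coordinate conditions. For the $x$-coordinate: for $j\notin B$, no set $S\subseteq B$ contains $j$, so $\sum_{S:j\in S}\sum_k\hat\sigma_S^{(k)}=0$, matching $\hat x(t_{ij}')=x(t_{ij}')\ind[j\in B]=0$. For $j\in B$, we have $\sum_{S\subseteq B:\,j\in S}\sum_k\hat\sigma_S^{(k)}=\sum_{S'\subseteq[m]:\,j\in S'}\sum_k\sigma_{S'}^{(k)}$ (since $j\in S'$ iff $j\in S'\cap B$ when $j\in B$), which equals the original expression defining $x(t_{ij})$; hence $\hat x$ equals exactly $x(t_{ij}')\ind[t_{ij}'=t_{ij}]\ind[j\in B]$, as required. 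The same index manipulation shows $\sum_{S\subseteq B:\,j\in S}\sum_k\hat\sigma_S^{(k)}\cdot\frac{\alpha_{ij}^{(k)}(t_{ij})}{V_{ij}(t_{ij})}$ equals the original upper bound for the $y$-coordinate; since by hypothesis $\hat y(t_{ij}')\le y(t_{ij}')\ind[j\in B]$ and $y$ is itself dominated by that quantity, the inequality in Definition~\ref{def:welfare frequent-t_i} holds for $\hat y$. This establishes $(\hat x,\hat y)\in W_{t_i}$.

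For the truncated version, observe that truncation at $\eps$ acts coordinatewise by multiplying coordinate $(j,t_{ij}')$ by $\ind[f_{ij}(t_{ij})\ge\eps]$, and this operation commutes with the restriction-to-$B$ operation and with taking sub-vectors in the $y$-coordinate. Concretely: if $(x,y)\in W_{t_i}^{tr(\eps)}$, pick $(x^0,y^0)\in W_{t_i}$ with $x(t_{ij}')=x^0(t_{ij}')\ind[f_{ij}(t_{ij})\ge\eps]$ and likewise for $y$; apply the construction above to $(x^0,y^0)$ with the same $B$ to get $(\hat x^0,\hat y^0)\in W_{t_i}$; then $(\hat x,\hat y)$ is exactly the $\eps$-truncation of $(\hat x^0,\hat y^0)$ (using that the two indicator multiplications combine), so $(\hat x,\hat y)\in W_{t_i}^{tr(\eps)}$. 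Finally, $W_{t_i}^{tr(\eps)}\subseteq W_{t_i}$ is immediate by taking $B=[m]$ and $\hat y=y$: truncation only zeroes out some coordinates, and zeroing out the $(j,t_{ij}')$-coordinates for a subset of $j$'s (in both $x$ and $y$, with $\hat y\le y$ automatically) is a special case of the membership statement just proved. I do not anticipate a real obstacle here — the only thing to be careful about is the bookkeeping in the double sum over sets $S'$ with $S'\cap B=S$, making sure the $x$-coordinate is matched with \emph{equality} (not just $\le$) so that the truncation-commutes argument goes through cleanly.
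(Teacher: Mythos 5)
Your proposal matches the paper's own proof in substance: the paper also defines $\widehat{\sigma}_S^{(k)}=\sum_{T\subseteq[m]\setminus B}\sigma_{S\cup T}^{(k)}$ for $S\subseteq B$ (identical to your re-indexing $\sum_{S':S'\cap B=S}\sigma_{S'}^{(k)}$), verifies the three conditions of Definition~\ref{def:welfare frequent-t_i}, and handles the truncated case by observing that truncation commutes with the restriction-to-$B$ operation. One small slip at the end: to deduce $W_{t_i}^{tr(\eps)}\subseteq W_{t_i}$ you should take $B=\{j: f_{ij}(t_{ij})\geq \eps\}$ (not $B=[m]$, which gives the identity map); the explanatory clause you write immediately after actually describes this correct choice, so the idea is right, just the stated $B$ is wrong.
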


\begin{proof}

It suffices to prove the case where $\hat{y}(t_{ij}')=y(t_{ij}') \ind[j\in B], \forall j, t_{ij}'$, since by \Cref{def:welfare frequent-t_i}, we can decrease any $\hat{y}(t_{ij}')$ while maintaining the vector $(\hat{x},\hat{y})$ to be in $W_{t_i}$. 


Since $(x,y)\in W_{t_i}$, let $\{\sigma_S^{(k)}\}_{S\subseteq [m], k\in [K]}$ be the collection of numbers from \Cref{def:welfare frequent-t_i}. Each $\sigma_S^{(k)}$ can be viewed as the probability of the buyer receiving bundle $S$, and enabling the $k$-th additive function. Consider another collection of numbers $\{\widehat{\sigma}_S^{(k)}\}_{S\subseteq [m], k\in [K]}$ by simply discarding items in $[m]\backslash B$. Formally, $\widehat{\sigma}_S^{(k)}=\sum_{T\subseteq [m]\backslash B}\sigma_{S\cup T}^{(k)}, \forall S\subseteq B,k\in [K]$, and $\widehat{\sigma}_S^{(k)}=0$ otherwise. Notice that for every $j\in B$ and $k\in [K]$, $\sum_{S:j\in S} \sigma_S^{(k)}=\sum_{S:j\in S}\widehat{\sigma}_S^{(k)}$. It is not hard to verify that $(\hat{x},\hat{y})$ and $\{\widehat{\sigma}_S^{(k)}\}_{S,k}$ satisfy all inequalities in \Cref{def:welfare frequent-t_i}. Thus $(\hat{x},\hat{y})\in W_{t_i}$.

{Now $W_{t_i}^{tr(\eps)}\subseteq W_{t_i}$ follows from choosing $B$ to be $\{j: f_{ij}(t_{ij})\geq \eps \}$. For any $(x,y)\in W_{t_i}^{tr(\eps)}$ and any choice set $B\subseteq [m]$, $(\hat{x},\hat{y})\in W_{t_i}$. Since $\hat{x}(t_{ij}')=
\hat{x}(t_{ij}')\cdot\ind[ f_{ij}(t_{ij})\geq \eps]$
 and $\hat{y}(t_{ij}')= \hat{y}(t_{ij}')\cdot\ind[ f_{ij}(t_{ij})\geq \eps]$. By \Cref{def:welfare frequent-t_i}, $(\hat{x},\hat{y})$ also lies in $W_{t_i}^{tr(\eps)}$.}
\notshow{
Let $\widetilde{\sigma}_S^{(k)}$ be the numbers associated with an arbitrary element $(\widetilde{\pi}_i,\widetilde{w}_i)\in W_{t_i}$ according to Definition~\ref{def:W_i-t_i}.
Notice that $\widetilde{\sigma}_S^{(k)}$ can be viewed as a joint distribution $\widetilde{\cJ}$ over $2^{[m]} \times [K]$,
where for $S\subseteq [m]$ and $k\in[K]$,
we sample element $(S,k)$ with probability $\sigma_S^{(k)}$.
Under this lens, $\widetilde{\pi}_{ij}(t_{ij}')$ is the probability that we allocate the item to the agent $$\widetilde{\pi}_{ij}(t_{ij}')=\ind[t_{ij}'=t_{ij}]\E_{(S,k)\sim \widetilde{\cJ}}[\ind[j\in S]]=\ind[t_{ij}'=t_{ij}]\Pr_{(S,k)\sim \widetilde{\cJ}}[j\in S]$$
and $\widetilde{w_{ij}}(t_{ij}')$ is a lower bound on the expected contribution to the welfare of the agent when we are able to choose which additive function to activate
$$\widetilde{w_{ij}}(t_{ij}')\leq \ind[t_{ij}'=t_{ij}]\E_{(S,k)\sim \widetilde{\cJ}}\left[\alpha_{ij}^{(k)}(t_{ij})\cdot \ind[j\in S]\right]$$

Let $\cJ$ be the associated joint distribution over $2^{[m]}\times [K]$ of the element $(\pi_i,w_i)$.
We are going to show how to construct distribution $\widehat{\cJ}$ that certifies that $(\widehat{\pi}_i,\widehat{w}_i)\in W_i$.
We consider the distribution $\widehat{\cJ}$ that takes a sample from $2^{[m]}\times [K]$ in the following way:
\begin{enumerate}
    \item First, take a sample $(S,k)\sim \cJ$ and consider the set $\widehat{S}= S \backslash B$.
    \item Return $(\widehat{S},k)$.
\end{enumerate}

Now we are going to show that $\widehat{\cJ}$ certifies, that $(\widehat{\pi}_i,\widehat{w}_i)\in W_i$.
Observe that for each $j\in B$,
$$
\Pr_{(S,k)\sim \widehat{\cJ}}[j\in S]=0
$$
which implies that for each $j\in B$ and $t_{ij}'\in \cT_{ij}$
$$\ind[t_{ij}'=t_{ij}]\Pr_{(S,k)\sim \widehat{\cJ}}[j\in S]= \ind[t_{ij}'=t_{ij}]\E_{(S,k)\sim \widehat{\cJ}}\left[\alpha_{ij}^{(k)}(t_{ij})\cdot \ind[j\in S]\right]=0$$

For each $j\notin B$,
notice that when we take a sample $(\widehat{S},\widehat{k})\sim \widehat{\cJ}$,
the probability that $j\in \widehat{S}$ is equal to the probability that when we take a sample $(S,k)\sim \cJ$, $j\in S$
$$
\Pr_{(S,k)\sim \widehat{\cJ}}[j\in S]= \Pr_{(S,k)\sim \cJ}[j\in S] $$
which implies that for each $j\notin B$ and $t_{ij}'\in \cT_{ij}$
$$\ind[t_{ij}'=t_{ij}]\Pr_{(S,k)\sim \widehat{\cJ}}[j\in S]=\ind[t_{ij}'=t_{ij}] \widehat{\pi}_{ij}(t_{ij})$$

Moreover, by the construction of $\widehat{\cJ}$ for any $k^*\in[K]$ and $j\notin B$,
notice that
$$
\Pr_{(\widehat{S},\widehat{k})\sim \widehat{\cJ}} \left[\ind[j\in \widehat{S} \land k^*=\widehat{k}]\right] = \Pr_{({S},{k})\sim {\cJ}} \left[\ind[j\in S \land k^*=k]\right]
$$
which implies that for each $j\notin B$ and $t_{ij}'\in \cT_{ij}$
\begin{align*}
\ind[t_{ij}'=t_{ij}]\E_{(S,k)\sim \widehat{\cJ}}\left[\alpha_{ij}^{(k)}(t_{ij})\cdot \ind[j\in S]\right]= &
\ind[t_{ij}'=t_{ij}]\E_{(S,k)\sim {\cJ}}\left[\alpha_{ij}^{(k)}(t_{ij})\cdot \ind[j\in S]\right] \\
\geq & \ind[t_{ij}'=t_{ij}]{w_{ij}}(t_{ij}') \\
\geq & \ind[t_{ij}'=t_{ij}]\widehat{w_{ij}}(t_{ij}')    
\end{align*}
which certifies that $(\widehat{\pi}_i,\widehat{w}_i)\in W_{t_i}$.
}
\end{proof}

\notshow{
\begin{proof}
Let $\widetilde{\sigma}_S^{(k)}$ be the numbers associated with an arbitrary element $(\widetilde{\pi}_i,\widetilde{w}_i)\in W_{t_i}$ according to Definition~\ref{def:W_i-t_i}.
Notice that $\widetilde{\sigma}_S^{(k)}$ can be viewed as a joint distribution $\widetilde{\cJ}$ over $2^{[m]} \times [K]$,
where for $S\subseteq [m]$ and $k\in[K]$,
we sample element $(S,k)$ with probability $\sigma_S^{(k)}$.
Under this lens, $\widetilde{\pi}_{ij}(t_{ij}')$ is the probability that we allocate the item to the agent $$\widetilde{\pi}_{ij}(t_{ij}')=\ind[t_{ij}'=t_{ij}]\E_{(S,k)\sim \widetilde{\cJ}}[\ind[j\in S]]=\ind[t_{ij}'=t_{ij}]\Pr_{(S,k)\sim \widetilde{\cJ}}[j\in S]$$
and $\widetilde{w_{ij}}(t_{ij}')$ is a lower bound on the expected contribution to the welfare of the agent when we are able to choose which additive function to activate
$$\widetilde{w_{ij}}(t_{ij}')\leq \ind[t_{ij}'=t_{ij}]\E_{(S,k)\sim \widetilde{\cJ}}\left[\alpha_{ij}^{(k)}(t_{ij})\cdot \ind[j\in S]\right]$$.

Let $\cJ$ be the associated joint distribution over $2^{[m]}\times [K]$ of the element $(\pi_i,w_i)$.
We are going to show how to construct distribution $\widehat{\cJ}$ that certifies that $(\widehat{\pi}_i,\widehat{w}_i)\in W_i$.
We consider the distribution $\widehat{\cJ}$ that takes a sample from $2^{[m]}\times [K]$ in the following way:
\begin{enumerate}
    \item First, take a sample $(S,k)\sim \cJ$ and consider the set $\widehat{S}= S \backslash B$.
    \item For each $j\notin (S \cup B)$, take a sample $p_j$ from a Bernoulli distribution with probability of success $\frac{\widehat{\pi}_{ij}(t_{ij})-\Pr_{(S,k)\sim {\cJ}}[j\in S]}{1-\Pr_{(S,k)\sim {\cJ}}[j\in S]}$. If $p_j=1$, then $\widehat{S}=\widehat{S}\cup \{j\}$. Notice that since $\Pr_{(S,k)\sim {\cJ}}[j\in S]=\pi_{ij}(t_{ij})$, $0 \leq \frac{\widehat{\pi}_{ij}(t_{ij})-\pi_{ij}(t_{ij})}{1-\pi_{ij}(t_{ij})}\leq 1$.
    \item Return $(\widehat{S},k)$.
\end{enumerate}

Now we are going to show that $\widehat{\cJ}$ certifies, that $(\widehat{\pi}_i,\widehat{w}_i)\in W_i$.
Observe that for each $j\in B$,
$$
\Pr_{(S,k)\sim \widehat{\cJ}}[j\in S]=0
$$
which implies that for each $j\in B$ and $t_{ij}'\in \cT_{ij}$
$$\ind[t_{ij}'=t_{ij}]\Pr_{(S,k)\sim \widehat{\cJ}}[j\in S]= \ind[t_{ij}'=t_{ij}]\E_{(S,k)\sim \widehat{\cJ}}\left[\alpha_{ij}^{(k)}(t_{ij})\cdot \ind[j\in S]\right]=0$$

For each $j\notin B$,
notice that when we take a sample $(\widehat{S},\widehat{k})\sim \widehat{\cJ}$,
the probability that $j\in \widehat{S}$ is equal to
$$
\Pr_{(S,k)\sim \widehat{\cJ}}[j\in S]= \Pr_{(S,k)\sim \cJ}[j\in S] + (1-\Pr_{(S,k)\sim \widehat{\cJ}}[j\in S])\frac{\widehat{\pi}_{ij}(t_{ij})-\Pr_{(S,k)\sim \cJ}[j\in S]}{1-\Pr_{(S,k)\sim {\cJ}}[j\in S]}
= \widehat{\pi}_{ij}(t_{ij})$$
The first term is the probability that when we take a sample $(S,k)\sim \cJ$,
then $j \in S$ and the second term is the probability that $j\notin S$ times $\frac{\widehat{\pi}_{ij}(t_{ij})-\Pr_{(S,k)\sim \cJ}[j\in S]}{1-\Pr_{(S,k)\sim {\cJ}}[j\in S]}$.
Thus if
$j\notin B$,
and for any $t_{ij}'\in \cT_{ij}$
$$\ind[t_{ij}'=t_{ij}]\Pr_{(S,k)\sim \widehat{\cJ}}[j\in S]=\ind[t_{ij}'=t_{ij}] \widehat{\pi}_{ij}(t_{ij})$$

Moreover, by the construction of $\widehat{\cJ}$ for any $k^*\in[K]$ and $j\notin B$,
notice that
$$
\Pr_{(\widehat{S},\widehat{k})\sim \widehat{\cJ}} \left[\ind[j\in \widehat{S} \land k^*=\widehat{k}]\right] \geq \Pr_{({S},{k})\sim {\cJ}} \left[\ind[j\in S \land k^*=k]\right]
$$
which implies that for each $j\notin B$ and $t_{ij}'\in \cT_{ij}$
\begin{align*}
\ind[t_{ij}'=t_{ij}]\E_{(S,k)\sim \widehat{\cJ}}\left[\alpha_{ij}^{(k)}(t_{ij})\cdot \ind[j\in S]\right]\geq &
\ind[t_{ij}'=t_{ij}]\E_{(S,k)\sim {\cJ}}\left[\alpha_{ij}^{(k)}(t_{ij})\cdot \ind[j\in S]\right] \\
\geq & \ind[t_{ij}'=t_{ij}]{w_{ij}}(t_{ij}') \\
\geq & \ind[t_{ij}'=t_{ij}]\widehat{w_{ij}}(t_{ij}')    
\end{align*}

\end{proof}
}

\begin{lemma}\label{lem:contain subtypes}
Given any $\eps>0$ and any distribution $\widetilde{D}_i$ over $\cT_i$. Let $\widetilde{W}_i^{tr(\eps)}$ be a mixture of 
$\left\{W^{tr(\eps)}_{t_i}\right\}_{t_i\in \cT_i}$ over $\widetilde{D}_i$, {that is, $\widetilde{W}_i^{tr(\eps)}:=\sum_{t_i\in\cT_i} \Pr_{s\sim \widetilde{D}_i}[s=t_i] \cdot W^{tr(\eps)}_{t_i}$.} For each $j\in [m]$, let $S_j \subseteq \cT_{ij}$ be any set. For any $(\pi_i,w_i)\in \widetilde{W}_i^{tr(\eps)}$ and any $(\widehat{\pi}_i,\widehat{w}_i)\in [0,1]^{2\sum_{j\in[m]}|\cT_{ij}|}$ such that for each $j\in[m]$ and $t_{ij}'\in \cT_{ij}$,
\begin{align*}
\widehat{\pi}_{ij}(t_{ij}')=  \pi_{ij}(t_{ij}') \ind[t_{ij}'\in S_j]\quad \text{and} \quad \widehat{w}_{ij}(t_{ij}')\leq  w_{ij}(t_{ij}') \ind[t_{ij}'\in S_j].
\end{align*}
then $(\widehat{\pi}_i,\widehat{w}_i)\in \widetilde{W}_i^{tr(\eps)}$.
\end{lemma}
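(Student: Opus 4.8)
\textbf{Proof plan for \Cref{lem:contain subtypes}.}

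The plan is to reduce the statement to a per-type claim, and then invoke \Cref{lem:W_t_i modify} with a carefully chosen ``choice set'' for each type. First I would use the mixture structure: since $(\pi_i,w_i)\in \widetilde{W}_i^{tr(\eps)} = \sum_{t_i\in\cT_i}\Pr_{s\sim\widetilde{D}_i}[s=t_i]\cdot W_{t_i}^{tr(\eps)}$, there exist vectors $\left(x^{(t_i)},y^{(t_i)}\right)\in W_{t_i}^{tr(\eps)}$ for every $t_i\in\cT_i$ such that $\pi_{ij}(t_{ij}') = \sum_{t_i}\Pr[s=t_i]\cdot x^{(t_i)}(t_{ij}')$ and $w_{ij}(t_{ij}') = \sum_{t_i}\Pr[s=t_i]\cdot y^{(t_i)}(t_{ij}')$ for all $j,t_{ij}'$. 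The goal will then be to construct, for each $t_i$, a modified vector $\left(\hat x^{(t_i)},\hat y^{(t_i)}\right)\in W_{t_i}^{tr(\eps)}$ so that the mixture of the $\left(\hat x^{(t_i)},\hat y^{(t_i)}\right)$ over $\widetilde D_i$ is exactly $(\widehat\pi_i,\widehat w_i)$.

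The key observation is that, by \Cref{def:welfare frequent-t_i}, any vector in $W_{t_i}$ (or $W_{t_i}^{tr(\eps)}$) is supported only on coordinates of the form $(j, t_{ij})$ — i.e. the coordinate whose value index matches $t_i$'s value for item $j$. So zeroing out coordinates from the set $S_j\subseteq\cT_{ij}$ at the \emph{mixture} level amounts, for each individual type $t_i$, to either keeping coordinate $(j,t_{ij})$ entirely (if $t_{ij}\in S_j$) or zeroing it entirely (if $t_{ij}\notin S_j$). Concretely, I would define $B(t_i) = \{j\in[m] : t_{ij}\in S_j\}$, and apply \Cref{lem:W_t_i modify} with this $B(t_i)$ to the vector $\left(x^{(t_i)},y^{(t_i)}\right)$ together with the target $\hat x^{(t_i)}(t_{ij}') = x^{(t_i)}(t_{ij}')\ind[j\in B(t_i)]$ and $\hat y^{(t_i)}(t_{ij}')\leq y^{(t_i)}(t_{ij}')\ind[j\in B(t_i)]$; the lemma (final clause about $W_{t_i}^{tr(\eps)}$) gives $\left(\hat x^{(t_i)},\hat y^{(t_i)}\right)\in W_{t_i}^{tr(\eps)}$. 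To hit the exact target for $\widehat w_i$ rather than just an upper bound, I would scale each $\hat y^{(t_i)}(t_{ij}')$ down appropriately — i.e. choose the per-type contributions so that their $\widetilde D_i$-average equals $\widehat w_{ij}(t_{ij}')$; since $\widehat w_{ij}(t_{ij}')\leq w_{ij}(t_{ij}') = \sum_{t_i}\Pr[s=t_i]\cdot y^{(t_i)}(t_{ij}')$, a coordinatewise proportional scaling of the $y^{(t_i)}(t_{ij}')$'s works and stays within the allowed range $\hat y^{(t_i)}(t_{ij}')\leq y^{(t_i)}(t_{ij}')$ (so that \Cref{lem:W_t_i modify} still applies, since it only requires an upper bound on $\hat y$). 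Then $(\widehat\pi_i,\widehat w_i) = \sum_{t_i}\Pr[s=t_i]\cdot\left(\hat x^{(t_i)},\hat y^{(t_i)}\right)\in\widetilde W_i^{tr(\eps)}$.

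The main obstacle — more a bookkeeping subtlety than a deep difficulty — is making sure the coordinatewise scaling for $\widehat w_i$ is consistent across types without breaking feasibility. For the $\pi$-part there is no freedom: the constraint $\widehat\pi_{ij}(t_{ij}') = \pi_{ij}(t_{ij}')\ind[t_{ij}'\in S_j]$ is an equality, and it is matched exactly by the indicator-based definition of $B(t_i)$ since each $\pi$-contribution comes from a unique type-matched coordinate. For the $w$-part, I need to verify that if I set $\hat y^{(t_i)}(t_{ij}') := y^{(t_i)}(t_{ij}')\cdot\ind[j\in B(t_i)]\cdot r_{ij}(t_{ij}')$ where $r_{ij}(t_{ij}') = \widehat w_{ij}(t_{ij}')/w_{ij}(t_{ij}')\in[0,1]$ (and $0$ when $w_{ij}(t_{ij}')=0$), then each $\hat y^{(t_i)}(t_{ij}')\leq y^{(t_i)}(t_{ij}')\ind[j\in B(t_i)]$ holds (immediate, since $r_{ij}(t_{ij}')\leq 1$), and the mixture recovers $\widehat w_{ij}(t_{ij}') = \sum_{t_i}\Pr[s=t_i]\cdot\hat y^{(t_i)}(t_{ij}') = r_{ij}(t_{ij}')\cdot\sum_{t_i: t_{ij}=t_{ij}', t_{ij}'\in S_j}\Pr[s=t_i]\cdot y^{(t_i)}(t_{ij}') = r_{ij}(t_{ij}')\cdot w_{ij}(t_{ij}')\ind[t_{ij}'\in S_j] = \widehat w_{ij}(t_{ij}')\ind[t_{ij}'\in S_j]$, using that $\widehat w_{ij}(t_{ij}') \le w_{ij}(t_{ij}')\ind[t_{ij}'\in S_j]$ is consistent with the hypothesis. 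Applying \Cref{lem:W_t_i modify} to each type with choice set $B(t_i)$ then finishes the argument. I expect this to go through cleanly once the notation is set up; no new ideas beyond \Cref{lem:W_t_i modify} and the support structure of \Cref{def:welfare frequent-t_i} should be needed.
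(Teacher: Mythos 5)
Your proposal is correct and follows essentially the same route as the paper's proof: decompose $(\pi_i,w_i)$ via the mixture over types, define $B(t_i)=\{j:t_{ij}\in S_j\}$, apply \Cref{lem:W_t_i modify} per type, and recombine. The only cosmetic difference is that you handle the slack $\widehat{w}_i\le w_i$ by coordinatewise scaling of the per-type $y$-components, whereas the paper first reduces to the equality case using the down-monotonicity of each $W_{t_i}^{tr(\eps)}$ in its $y$-coordinates — same idea, different ordering.
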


\begin{proof}

It suffices to prove the case where $\widehat{w}_{ij}(t_{ij}')= w_{ij}(t_{ij}') \ind[t_{ij}'\in S_j], \forall j, t_{ij}'$. This is because $\widetilde{W}_i^{tr(\eps)}$ is a mixture of $\{W^{tr(\eps)}_{t_i}\}_{t_i\in \cT_i}$. By \Cref{def:welfare frequent-t_i}, for every $t_i$ and any vector $(x,y)\in W^{tr(\eps)}_{t_i}$, we can decrease any $y(t_{ij}')$ while maintaining the vector $(x,y)$ to be in $W^{tr(\eps)}_{t_i}$. Thus, for any $(\widehat{\pi}_i,\widehat{w}_i)\in \widetilde{W}_i^{tr(\eps)}$, it remains in $\widetilde{W}_i^{tr(\eps)}$ after decreasing any $\widehat{w}_{ij}(t_{ij}')$.  

For each $t_i\in \cT_i$, let $\left(\pi^{(t_i)}_i,w^{(t_i)}_i\right)\in W_{t_i}^{tr(\eps)}$ such that $\left(\pi_i,w_i\right) = \sum_{t_i\in \cT_i}\Pr_{s\sim \widetilde{D}_i}\left[s=t_i\right]\cdot\left(\pi^{(t_i)}_i,w^{(t_i)}_i\right)$. Consider vector $\left(\widehat{\pi}^{(t_i)}_i,\widehat{w}^{(t_i)}_i\right)$ such that for every $j\in [m]$ and $t_{ij}'\in\cT_{ij}$
$$
\widehat{\pi}^{(t_i)}_{ij}(t_{ij}')=  \pi_{ij}^{(t_i)}(t_{ij}') \ind[t_{ij}\in S_j]\quad \text{and} \quad \widehat{w}_{ij}^{(t_i)}(t_{ij}')=  w_{ij}^{(t_i)}(t_{ij}') \ind[t_{ij}\in S_j].
$$


{For each $t_i \in \cT_i$, define set $B(t_i):=\{j: t_{ij} \in S_j\}$,
by applying Lemma~\ref{lem:W_t_i modify} to  $\left(\pi^{(t_i)}_i,w^{(t_i)}_i\right)$ and set $B(t_i)$},
we have that $\left(\widehat{\pi}^{(t_i)}_i,\widehat{w}^{(t_i)}_i \right)\in W_{t_i}^{tr(\eps)}$.

We notice that by \Cref{def:welfare frequent-t_i}, $w_{ij}^{(t_i)}(t_{ij}')=\pi_{ij}^{(t_i)}(t_{ij}')=0$ if $t'_{ij}\not=t_{ij}$. Thus, for every $j\in [m]$ and $t'_{ij}\in \cT_{ij}$, 
$$
\widehat{\pi}^{(t_i)}_{ij}(t_{ij}')=  \pi_{ij}^{(t_i)}(t_{ij}') \ind[t_{ij}'\in S_j] \quad \text{and} \quad \widehat{w}_{ij}^{(t_i)}(t_{ij}')=  w_{ij}^{(t_i)}(t_{ij}') \ind[t_{ij}'\in S_j].
$$
The proof concludes by noticing that $\left(\widehat{\pi}_i,\widehat{w}_i\right) = \sum_{t_i\in \cT_i}\Pr_{s\sim \widetilde{D}_i}\left[s=t_i\right]\cdot\left(\widehat{\pi}^{(t_i)}_i,\widehat{w}^{(t_i)}_i\right)\in \widetilde{W}_i^{tr(\eps)}$.
\end{proof}

The following corollary follows from Observation~\ref{obs:mix} and Lemma~\ref{lem:W_t_i modify}.
\begin{corollary}\label{obs:contained trauncated xos}
For any $\eps>0$, $W_i^{tr(\eps)}\subseteq W_i$.
\end{corollary}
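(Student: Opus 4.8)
The final statement to be proved is Corollary~\ref{obs:contained trauncated xos}: $W_i^{tr(\eps)}\subseteq W_i$ for every $\eps>0$. The plan is to deduce this directly from the two structural facts already established in the excerpt. First, recall from Observation~\ref{obs:mix} that $W_i^{tr(\eps)}$ is a mixture of the family $\{W_{t_i}^{tr(\eps)}\}_{t_i\in\cT_i}$ over the distribution $D_i$, and that $W_i$ is the mixture of $\{W_{t_i}\}_{t_i\in\cT_i}$ over the same distribution $D_i$. So it suffices to compare the two mixtures coordinatewise in $t_i$: a mixture over a fixed distribution is monotone under containment of the constituent polytopes, so if $W_{t_i}^{tr(\eps)}\subseteq W_{t_i}$ for every single type $t_i\in\cT_i$, then taking the Minkowski sum weighted by $\Pr_{s\sim D_i}[s=t_i]$ on both sides yields $W_i^{tr(\eps)}\subseteq W_i$.

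Thus the whole proof reduces to the single-type inclusion $W_{t_i}^{tr(\eps)}\subseteq W_{t_i}$, and this is exactly the last sentence of Lemma~\ref{lem:W_t_i modify}, which was already proved. Concretely, in that lemma one takes $B=\{j:f_{ij}(t_{ij})\geq\eps\}$: then for any $(x,y)\in W_{t_i}^{tr(\eps)}$ there is a witnessing $(x',y')\in W_{t_i}$ with $x(t_{ij}')=x'(t_{ij}')\cdot\ind[f_{ij}(t_{ij})\geq\eps]=x'(t_{ij}')\cdot\ind[j\in B]$ and likewise for $y$, which is precisely the hypothesis of Lemma~\ref{lem:W_t_i modify} with this choice of $B$, so $(x,y)\in W_{t_i}$. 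So I would simply cite Observation~\ref{obs:mix} for the mixture representations, cite Lemma~\ref{lem:W_t_i modify} (its final clause) for the per-type inclusion, and invoke monotonicity of mixtures under containment — which is immediate from \Cref{def:mixture}, since $\sum_{t_i} q_{t_i}\cdot\bm{x}^{(t_i)}$ with each $\bm{x}^{(t_i)}\in W_{t_i}^{tr(\eps)}\subseteq W_{t_i}$ is a valid element of $\sum_{t_i} q_{t_i}\cdot W_{t_i}=W_i$.

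There is essentially no obstacle here; the statement is a corollary precisely because both ingredients are already in place. The only thing worth being careful about is the trivial lemma that a mixture $\sum_i q_i\cP_i$ is monotone in the $\cP_i$'s — i.e. $\cP_i\subseteq\cQ_i$ for all $i$ implies $\sum_i q_i\cP_i\subseteq\sum_i q_i\cQ_i$ — which follows immediately by writing out Minkowski addition elementwise, so I would state it in one line rather than spelling out a separate lemma. I would keep the whole proof to two or three sentences: reduce to per-type, cite Lemma~\ref{lem:W_t_i modify}, conclude by mixture monotonicity.
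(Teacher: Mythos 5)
Your proof is correct and is exactly the paper's approach: the paper proves this corollary by citing the per-type inclusion $W_{t_i}^{tr(\eps)}\subseteq W_{t_i}$ from the final clause of Lemma~\ref{lem:W_t_i modify}, implicitly using the mixture representation from Observation~\ref{obs:mix} and monotonicity of mixtures. You have merely spelled out the mixture-monotonicity step that the paper leaves tacit.
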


\begin{proof}
This follows easily from the claim that $W_{t_i}^{tr(\eps)}\subseteq W_{t_i}$ (\Cref{lem:W_t_i modify}).
\end{proof}

\notshow{
\begin{proof}
Let $(\pi_i,w_i)\in W_i^{tr(\eps)}$,
by Observation~\ref{obs:mix} for each $t_i\in \cT_i$, there exists $\left(\pi^{(t_i)}_i,w^{(t_i)}_i\right)\in W_{t_i}^{tr(\eps)}$ such that $(\pi_i,w_i)= \sum_{t_i\in \cT_i}f_i(t_i)\cdot(\pi^{(t_i)}_i,w^{(t_i)}_i)$.
In order to prove the claim,
it suffices to show that $\left( \pi^{(t_i)}_i,w^{(t_i)}_i\right)\in W_{t_i}$. Then $(\pi_i,w_i)\in W_i$ due to Observation~\ref{obs:mix}.

For each $t_i\in \cT_i$,
by Definition~\ref{def:welfare frequent-t_i},
there exists a $\left(\widehat{\pi}^{(t_i)}_i,\widehat{w}^{(t_i)}_i \right)\in W_{t_i}$ such that for each $j\in[m], t_{ij}\in \cT_{ij}$,
\begin{align*}
{\pi}_{ij}^{(t_i)}(t_{ij}) = \widehat{\pi}_{ij}^{(t_i)}(t_{ij})\ind[f_{ij}(t_{ij})\geq \eps],\quad
{w}_{ij}^{(t_i)}(t_{ij}) = \widehat{w}_{ij}^{(t_i)}(t_{ij})\ind[f_{ij}(t_{ij})\geq \eps]
\end{align*}

By applying Lemma~\ref{lem:W_t_i modify} to each $\left(\widehat{\pi}^{(t_i)}_i,\widehat{w}^{(t_i)}_i \right)$ with set $B=\{j\in [m]: f_{ij}(t_{ij})\geq\eps\}$,
we conclude that $\left({\pi}^{(t_i)}_i,{w}^{(t_i)}_i \right)\in W_{t_i}$. Thus $(\pi_i,w_i)\in W_i$. 
\end{proof}
}

Similar to the constrained-additive case, we define the $\eps$-box polytope of $W_i$ for XOS valuations in Definition~\ref{def:welfare rare}. 

\begin{definition}\label{def:welfare rare}
For $\eps>0$,
we denote as $W_i^{box(\eps)}\subseteq [0,1]^{2\sum_{j\in[m]}|\cT_{ij}|}$ the $\eps$-box polytope of $W_i$: $(\pi_i,w_i)\in W_i^{box(\eps)}$ if and only if for every $j\in[m]$ and  $t_{ij}\in \cT_{ij}$ it holds that
\begin{align*}
    0 \leq w_{ij}(t_{ij}) \leq \pi_{ij}(t_{ij}) \leq \min(\eps,f_{ij}(t_{ij}))
\end{align*}
\end{definition}

\notshow{
I don't think they are no longer needed

\begin{observation}\label{obs:W_i(t_i) null}
Let $\textbf{0}$ be the vector with 0 in all entries. For any $t_i$ and $\pi_i\in [0,1]^{M_i}$such that $\pi_{ij}(t_{ij}')\leq \ind[t_{ij}'=t_{ij}]$, $(\pi_i,\textbf{0})\in W_{t_i}$. 
\end{observation}

\begin{proof}
We are going to choose numbers $\{\sigma_S^k\}_{S\subseteq[m],k\in[K]}$ and use Definition~\ref{def:welfare frequent-t_i} to prove the statement.

We are going to choose numbers $\{\sigma_S^k(t_i)\}_{t_i\in \cT_i,S\subseteq[m],k\in[K]}$ and use Definition~\ref{def:welfare frequent-t_i} to prove the statement.
Fix a $k^*\in [K]$ (the choice of $k^*$ is not going to matter in our proof).
For each $S\subseteq [m]$ and $k\in[K]$
$$
\sigma_S^{k} = \prod_{j\in S}\pi_{ij}(t_{ij}) \prod_{j\notin S}(1-\pi_{ij}(t_{ij}))
\ind[k=k^*]$$
Now we are going to prove the first property in Definition~\ref{def:welfare frequent-t_i}.
Notice that
$$
\sum_{S\subseteq[m]}\sum_{k\in[K]}\sigma_S^{k} =
\sum_{S\subseteq[m]}\sigma_S^{k^*} =\sum_{S\subseteq[m]} \prod_{j\in S}\pi_{ij}(t_{ij})\prod_{j\notin S}(1-\pi_{ij}(t_{ij}))
$$
Consider a random variables $C_j$ that is sampled from a Bernoulli Distribution, with probability of success $0\leq \pi_{ij}(t_{ij})\leq 1$.
Then $\prod_{j\in S}\pi_{ij}(t_{ij})\prod_{j\notin S}(1-\pi_{ij}(t_{ij}))$ 
is equal to the probability that for each $j\in S$, $C_j=1$ and for $j\notin S$, $C_j=0$,
which implies that $\sum_{S\subseteq[m]} \prod_{j\in S}\pi_{ij}(t_{ij})\prod_{j\notin S}(1-\pi_{ij}(t_{ij}))= 1$.

Now we prove that $\sigma$ satisfied the second property of Definition ~\ref{def:welfare frequent-t_i}.
For each $j\in[m]$,
observe that 
\begin{align*}
\sum_{ S: k \in S}\sum_{k\in[K]}\sigma_S^{k} =&
\sum_{S:j\in S}\sigma_S^{k^*} \\ =& \pi_{ij}(t_{ij})\sum_{S: j \in S} \prod_{j'\in S \land j'\neq j}\pi_{ij'}(t_{ij'})
\prod_{j'\notin S}(1-\pi_{ij'}(t_{ij'})) \\
=& \pi_{ij}(t_{ij})
\end{align*}

where the last equality follows by considering random variable $C_j$ as described above.
Then for $S:j \in S$, $\prod_{j'\in S\land j'\neq j}\pi_{ij'}(t_{ij'})\prod_{j'\notin S}(1-\pi_{ij'}(t_{ij'}))$ 
is equal to the probability that for each $j\in S\backslash\{j\}$, $C_j=1$ and for $j\notin S$, $C_j=0$,
which implies that $\sum_{S: j \in S} \prod_{j'\in S \land j'\neq j}\pi_{ij'}(t_{ij'})
\prod_{j'\notin S}(1-\pi_{ij'}(t_{ij'}))= 1$.

The third property of Definition~\ref{def:welfare frequent-t_i} follow trivially since the LHS is $0$.
\end{proof}

A corollary of Observation~\ref{obs:W_i(t_i) null} is the following.
\begin{corollary}\label{obs:W_i(t_i)-1}
Let $\textbf{0}$ be the vector with 0 in all entries. For any $t_i$ and $\pi_i\in [0,1]^{M_i}$such that $\pi_{ij}(t_{ij}')\leq \ind[t_{ij}'=t_{ij}\wedge f_{ij}(t_{ij})\geq \eps]$, $(\pi_i,\textbf{0})\in W_{t_i}^{tr(\eps)}$. 
\end{corollary}

\begin{proof}
We Choosing $\sigma_{S}^k=0$ for all $S,k$ finishes the proof.
\end{proof}

\begin{definition}\label{def:welfare rare}
For $\eps>0$,
we denote as $W_i^{box(\eps)}\subseteq [0,1]^{2M_i}$ the $\eps$-box polytope of $W_i$,
such that $(\widehat{\pi}_i,\widehat{w}_i)\in W_i^{box(\eps)}$ if and only if for every $j\in[m]$ and  $t_{ij}$ it holds that
\begin{align*}
    0 \leq w_{ij}(t_{ij}) \leq \pi_{ij}(t_{ij}) \leq \cdot\min(\eps,f_{ij}(t_{ij}))
\end{align*}
\end{definition}


Clearly the first polytope $W_i^{tr(\eps)}\subseteq W_i$. 
We prove in the following lemma that the second polytope $\left(1-\eps \cdot M_i \right) W_i^{box(\eps)} \subseteq W_i, \forall \eps<\nicefrac{1}{M_i}$. 

}
\notshow{
\begin{observation}\label{obs:box within a box}
For any $\eps>0$ and $a'\geq a >0$,
$a W_i^{box(\eps)}\subseteq a' W_i^{box(\eps)}$
\end{observation}
}

The following lemma is similar to \Cref{lem:box polytope constraint additive}. 

\begin{lemma}\label{lem:welfare rare}
Let $T =\sum_{i\in[n]}\sum_{j\in[m]}|\cT_{ij}|$, then $\left(1-\eps \cdot T \right) W_i^{box(\eps)} \subseteq  W_i$, for all $\eps < \nicefrac{1}{T}$.
\end{lemma}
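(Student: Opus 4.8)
\textbf{Proof proposal for Lemma~\ref{lem:welfare rare}.}

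The plan is to explicitly construct, for any target vector $(\pi_i,w_i)\in (1-\eps T) W_i^{box(\eps)}$, a collection of numbers $\{\sigma_S^{(k)}(t_i)\}_{t_i,S,k}$ witnessing membership of $(\pi_i,w_i)$ in $W_i$ via \Cref{def:W_i}. The idea mirrors the constrained-additive construction in the proof of \Cref{lem:box polytope constraint additive}: we allocate at most one item to the bidder, using an independent ``inclusion coin'' per item whose success probability is small enough that a union bound guarantees the bidder ends up with a singleton (or the empty set) with overwhelming probability. The new feature for XOS is that the allocation rule must also commit to an additive function $k$; since we only ever allocate a single item $j$, it is harmless to fix $k$ to be a favorite function for that singleton, namely any $k_j^*(t_{ij})\in\argmax_{k\in[K]}\alpha_{ij}^{(k)}(t_{ij})$, so that $\alpha_{ij}^{(k_j^*)}(t_{ij})=V_{ij}(t_{ij})$ and the contribution to $w_{ij}$ equals the contribution to $\pi_{ij}$.

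Concretely, first I would set, for each $j\in[m]$ and $t_{ij}\in\cT_{ij}$, a coin probability $c_j(t_{ij})=\min\!\left(\frac{\eps}{f_{ij}(t_{ij})},1\right)$ and a scaling factor $p_j(t_{ij})\in[0,1]$ defined by
$$p_j(t_{ij})=\frac{w_{ij}(t_{ij})}{f_{ij}(t_{ij})\,c_j(t_{ij})\cdot \sum_{t_{i,-j}}f_{i,-j}(t_{i,-j})\prod_{j'\neq j}(1-c_{j'}(t_{ij'}))},$$
noting that $\pi_{ij}(t_{ij})=w_{ij}(t_{ij})$ is forced here since $V_{ij}(t_{ij})=\alpha_{ij}^{(k_j^*)}(t_{ij})$ on the realized singleton, so the same $p_j$ controls both coordinates. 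Then define $\sigma_{S}^{(k)}(t_i)=p_j(t_{ij})c_j(t_{ij})\prod_{j'\neq j}(1-c_{j'}(t_{ij'}))$ when $S=\{j\}$ and $k=k_j^*(t_{ij})$, and $0$ otherwise. Next I would verify: (i) $\sum_{t_{ij}}f_{ij}(t_{ij})c_j(t_{ij})\le \eps|\cT_{ij}|$, so by the union bound $\sum_{t_{i,-j}}f_{i,-j}(t_{i,-j})\prod_{j'\neq j}(1-c_{j'}(t_{ij'}))\ge 1-\eps T$, which makes $p_j$ well-defined and shows $p_j\le 1$ exactly because $w_{ij}(t_{ij})\le (1-\eps T)\min(\eps,f_{ij}(t_{ij}))=(1-\eps T)f_{ij}(t_{ij})c_j(t_{ij})$ by \Cref{def:welfare rare}; (ii) $\sum_{S,k}\sigma_S^{(k)}(t_i)=\sum_j p_j(t_{ij})c_j(t_{ij})\prod_{j'\neq j}(1-c_{j'}(t_{ij'}))\le \prod_j\big(c_j(t_{ij})+(1-c_j(t_{ij}))\big)=1$, so property~1 of \Cref{def:W_i} holds; (iii) for each $j,t_{ij}$, the right-hand side of property~2 equals $f_{ij}(t_{ij})\sum_{t_{i,-j}}f_{i,-j}(t_{i,-j})p_j(t_{ij})c_j(t_{ij})\prod_{j'\neq j}(1-c_{j'}(t_{ij'}))=w_{ij}(t_{ij})=\pi_{ij}(t_{ij})$; and (iv) property~3 then holds since the allocation only ever enables a favorite $k$ for the unique allocated singleton, so $\frac{\alpha_{ij}^{(k)}(t_{ij})}{V_{ij}(t_{ij})}=1$ on that event (handling $V_{ij}(t_{ij})=0$ by the convention in \Cref{def:W_i-restate}, where both $w_{ij}$ and the target are $0$).

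I do not expect a serious obstacle here; the lemma is essentially a ``box fits inside'' fact and the proof is the XOS analogue of \Cref{lem:box polytope constraint additive}. The one point that needs a little care is the coupling between the $\pi$ and $w$ coordinates: in the XOS polytope $w_{ij}(t_{ij})$ can in general be strictly below $\pi_{ij}(t_{ij})$, but \Cref{def:welfare rare} constrains both by $\min(\eps,f_{ij}(t_{ij}))$, and for a target with $w_{ij}(t_{ij})\le \pi_{ij}(t_{ij})$ we can first realize the $\pi$-coordinate exactly by the construction above (which gives $w=\pi$ on each singleton because we enabled a favorite function) and then, if needed, invoke the ``decrease $w$ freely'' slack in \Cref{def:W_i} (the inequality in property~3) to lower $w_{ij}(t_{ij})$ down to its target value. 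This keeps membership in $W_i$ intact and completes the argument.
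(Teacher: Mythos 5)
Your proposal is correct and takes essentially the same route as the paper: the paper also uses per-item inclusion coins $c_j(t_{ij})=\min(\eps/f_{ij}(t_{ij}),1)$, singleton allocations with the favorite additive function enabled, a union bound giving $\sum_{t_{i,-j}}f_{i,-j}(t_{i,-j})\prod_{j'\neq j}(1-c_{j'}(t_{ij'}))\geq 1-\eps T$, and then shows $(\pi_i,\pi_i)\in W_i$ before dropping $w$ down via the inequality in property~3 of \Cref{def:W_i}. The only slip is your displayed $p_j(t_{ij})$, which puts $w_{ij}(t_{ij})$ in the numerator even though property~2 is an equality constraint on $\pi$; as your closing paragraph already concedes, the numerator should be $\pi_{ij}(t_{ij})$ so that the $\pi$-coordinate is realized exactly and only $w$ is lowered afterwards.
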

\begin{proof}

For any $(\pi_i,w_i)\in W_i^{box(\eps)}$, we will prove that $(\pi_i,\pi_i)\in W_i$. Then $(\pi_i,w_i)\in W_i$ since $w_{ij}(t_{ij})\leq \pi_{ij}(t_{ij})$ for any $j,t_{ij}$.

To prove $(\pi_i,\pi_i)\in W_i$, we consider the following set of numbers $\{\sigma_{S}^{(k)}(t_i)\}_{t_i,S,k}$ (see \Cref{def:W_i}): For each $j\in [m], t_{ij}\in \cT_{ij}$, let $c_j(t_{ij})=\min\left(\frac{\eps}{f_{ij}(t_{ij})},1 \right)$ and
$$p_j(t_{ij})=\frac{\pi_{ij}(t_{ij})}{f_{ij}(t_{ij})\cdot \sum_{t_{i,-j}}f_{i,-j}(t_{i,-j})\cdot c_j(t_{ij})\cdot \prod_{j'\not=j}(1-c_{j'}(t_{ij'}))}.$$ 

{Note that for every $j'\in[m]$, there exists a  $t_{ij'}\in\cT_{ij'}$ such that $f_{ij'}(t_{ij'})\geq 1/|\cT_{ij'}|$. Due to our choice of $\eps$, the corresponding $c_{j'}(t_{ij'})<1$. Hence, $\sum_{t_{i,-j}}f_{i,-j}(t_{i,-j})\cdot \prod_{j'\not=j}(1-c_{j'}(t_{ij'}))>0$, and $p_j(t_{ij})$  is well-defined.}

For every $t_i$, define~\footnote{When there are two indices $k,k'\in [K]$ such that $k,k^*\in \argmax_{k'}\alpha_{ij}^{(k')}(t_{ij})$, we break ties in lexicographic order.}
\begin{align*}
\sigma_{S}^{(k)}(t_i)=
\begin{cases}
p_j(t_{ij})\cdot c_j(t_{ij})\cdot \prod_{j'\not=j}(1-c_{j'}(t_{ij'})), &\text{if } S=\{j\} \text{ and } k=\argmax_{k'}\alpha_{ij}^{(k')}(t_{ij})\\
0, &\text{o.w.}
\end{cases}
\end{align*}

For every $j$, let $C_j(t_{ij})$ be the independent Bernoulli random variable that activates with probability $c_j(t_{ij})$. Then for every $j$, 
$$
\Pr_{C_j(t_{ij}), t_{ij}\sim D_{ij}}\left[C_j(t_{ij})=1\right] \leq \sum_{t_{ij}\in \cT_{ij}} f_{ij}(t_{ij})\cdot\min\left(\frac{\eps}{f_{ij}(t_{ij})},1 \right) \leq \eps \cdot |\cT_{ij}|
$$
By the union bound,
$$\sum_{t_{i,-j}}f_{i,-j}(t_{i,-j})\cdot \prod_{j'\not=j}(1-c_{j'}(t_{ij'}))=\Pr_{\substack{t_{i,-j}\sim\cD_{i-j}\\ \forall j' \neq j, C_{j'}(t_{ij'})}}[C_{j'}(t_{ij'}) =0,\forall j'\not=j]\geq 1-\sum_j\Pr[C_j(t_{ij})=1]\geq 1-\eps\cdot T$$

Now we prove that $(\pi_i,\pi_i)\in W_i$ by verifying all three conditions in \Cref{def:W_i}. For the first condition, since $(\pi_i,w_i)\in (1-\eps T)W_i^{box(\eps)}$, $0 \leq w_{ij}(t_{ij}) \leq \pi_{ij}(t_{ij}) \leq \left(1-\eps \cdot T \right)\cdot f_{ij}(t_{ij})\cdot c_j(t_{ij})$. Thus $p_j(t_{ij})\leq 1$ for every $j,t_{ij}$. We have that $\sum_{S,k}\sigma_S^{(k)}(t_i)=\sum_jp_j(t_{ij})\cdot c_j(t_{ij})\cdot \prod_{j'\not=j}(1-c_{j'}(t_{ij'}))\leq \sum_j c_j(t_{ij})\cdot \prod_{j'\not=j}(1-c_{j'}(t_{ij'}))$. We notice that $\sum_j c_j(t_{ij})\cdot \prod_{j'\not=j}(1-c_{j'}(t_{ij'}))$ is exactly the probability that there exists a unique $C_j(t_{ij})=1$. Thus $\sum_{S,k}\sigma_S^{(k)}(t_i)\leq 1$ for all $t_i\in \cT_i$.

The second condition: 
\begin{align*}
&f_{ij}(t_{ij})\cdot\sum_{t_{i,-j}}f_{i,-j}(t_{i,-j})\cdot \sum_{S:j\in S}\sum_{k\in [K]}\sigma_S^{(k)}(t_{ij},t_{i,-j})\\
=&f_{ij}(t_{ij})\sum_{t_{i,-j}}f_{i,-j}(t_{i,-j})\cdot p_j(t_{ij})\cdot c_j(t_{ij})\cdot \prod_{j'\not=j}(1-c_{j'}(t_{ij'}))=\pi_{ij}(t_{ij})
\end{align*}

The third condition:
\begin{align*}
&f_{ij}(t_{ij})\cdot\sum_{t_{i,-j}}f_{i,-j}(t_{i,-j})\cdot \sum_{S:j\in S}\sum_{k\in [K]}\sigma_S^{(k)}(t_{ij},t_{i,-j})\cdot \frac{\alpha_{ij}^{(k)}(t_{ij})}{V_{ij}(t_{ij})}\\
=&f_{ij}(t_{ij})\sum_{t_{i,-j}}f_{i,-j}(t_{i,-j})\cdot p_j(t_{ij})\cdot c_j(t_{ij})\cdot \prod_{j'\not=j}(1-c_{j'}(t_{ij'}))\cdot \frac{\max_k\alpha_{ij}^{(k)}(t_{ij})}{V_{ij}(t_{ij})}=\pi_{ij}(t_{ij})
\end{align*}

By \Cref{def:W_i}, $(\pi_i,\pi_i)\in W_i$. Thus, $(\pi_i,w_i)\in W_i$.
\end{proof}

\vspace{.2in}

Similar to \Cref{lem:boxable polytope contained}, we prove in the following lemma that
$W_i$ can be approximated by the polytope $\frac{1}{2}W_i^{tr(\eps)} + \frac{1-\eps T}{2}W_i^{box(\eps)}$ within a multiplicative factor.

\begin{lemma}\label{lem:xos mrf contain}
Let $T=\sum_{i\in[n]}\sum_{j\in[m]}|\cT_{ij}|$ and any $0\leq \eps<\frac{1}{T}$, then $$\frac{1-\eps T}{2}W_i \subseteq \frac{1}{2}W_i^{tr(\eps)} + \frac{1-\eps T}{2}W_i^{box(\eps)} \subseteq  W_i.$$
\end{lemma}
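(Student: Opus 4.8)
\textbf{Proof plan for Lemma~\ref{lem:xos mrf contain}.}

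The plan is to prove the two set inclusions separately, mirroring the structure of the proof of \Cref{lem:boxable polytope contained} but being careful about the fact that $W_i$ is no longer down-monotone in the XOS case. For the \emph{right-hand inclusion} $\frac{1}{2}W_i^{tr(\eps)} + \frac{1-\eps T}{2}W_i^{box(\eps)} \subseteq W_i$, I would combine three facts already established: (i) $W_i^{tr(\eps)}\subseteq W_i$ by \Cref{obs:contained trauncated xos}; (ii) $(1-\eps T)W_i^{box(\eps)}\subseteq W_i$ by \Cref{lem:welfare rare}; and (iii) $W_i$ is convex. Then any point of the form $\frac{1}{2}u + \frac{1-\eps T}{2}v$ with $u\in W_i^{tr(\eps)}$ and $v\in W_i^{box(\eps)}$ can be written as $\frac12 u + \frac12\bigl((1-\eps T)v\bigr)$, and since both $u$ and $(1-\eps T)v$ lie in $W_i$, convexity gives that this midpoint is in $W_i$. (Here I should double-check the coefficients sum correctly: the Minkowski-sum point is $\frac12 u + \frac12 w$ with $u,w\in W_i$, a genuine convex combination, so this works cleanly.)

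For the \emph{left-hand inclusion} $\frac{1-\eps T}{2}W_i \subseteq \frac{1}{2}W_i^{tr(\eps)} + \frac{1-\eps T}{2}W_i^{box(\eps)}$, it suffices to show $W_i \subseteq W_i^{tr(\eps)} + (1-\eps T)W_i^{box(\eps)}$ — wait, the scalings don't quite line up, so more carefully: I want to show that for every $(\pi_i,w_i)\in W_i$, the scaled point $\frac{1-\eps T}{2}(\pi_i,w_i)$ decomposes as $\frac12(\pi_i',w_i') + \frac{1-\eps T}{2}(\pi_i'',w_i'')$ with $(\pi_i',w_i')\in W_i^{tr(\eps)}$ and $(\pi_i'',w_i'')\in W_i^{box(\eps)}$. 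The natural split is by coordinate frequency: let $(\pi_i',w_i')$ agree with $(1-\eps T)(\pi_i,w_i)$ on coordinates $(j,t_{ij})$ with $f_{ij}(t_{ij})\geq\eps$ and be zero elsewhere; let $(\pi_i'',w_i'')$ agree with $(\pi_i,w_i)$ on the coordinates with $f_{ij}(t_{ij})<\eps$ and be zero elsewhere. I need to verify: (a) $(\pi_i',w_i')\in W_i^{tr(\eps)}$ — this follows from \Cref{obs:bound xos mrf} (which gives $(1-\eps T)W_i \subseteq W_i$, hence the truncation of a point in $(1-\eps T)W_i$ lies in the truncation of $W_i$... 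I should instead argue directly: $(\pi_i,w_i)\in W_i$ and zeroing the low-frequency coordinates yields a point of $W_i^{tr(\eps)}$ by \Cref{def:welfare frequent}, then scale by $1-\eps T\le 1$ using \Cref{obs:bound xos mrf}); (b) $(\pi_i'',w_i'')\in W_i^{box(\eps)}$ — by \Cref{obs:W_i-upper-bound}, $w_{ij}(t_{ij})\le\pi_{ij}(t_{ij})\le f_{ij}(t_{ij})$, and on the retained coordinates $f_{ij}(t_{ij})<\eps$, so $0\le w_{ij}''(t_{ij})\le\pi_{ij}''(t_{ij})\le\min(\eps,f_{ij}(t_{ij}))$, matching \Cref{def:welfare rare}; (c) the two pieces add up correctly: on high-frequency coordinates the sum is $\frac12\cdot(1-\eps T)(\pi_i,w_i)$ coordinate value $+\,0 = \frac{1-\eps T}{2}(\pi_i,w_i)$ value, and on low-frequency coordinates it is $0 + \frac{1-\eps T}{2}(\pi_i,w_i)$ value, so in all cases the total is $\frac{1-\eps T}{2}(\pi_i,w_i)$, as desired.

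The main obstacle — and the reason this lemma needs a separate proof rather than a direct appeal to \Cref{lem:boxable polytope contained} — is the loss of down-monotonicity. In the constrained-additive proof one freely truncated a point of $\cP$ coordinatewise and knew the result stayed in $\cP$; here I must instead route the truncation through the explicitly-defined polytope $W_i^{tr(\eps)}$ and the structural lemmas \Cref{obs:contained trauncated xos} and \Cref{obs:bound xos mrf}, and I must use \Cref{obs:W_i-upper-bound} to control the $\pi$-coordinates (not just the $w$-coordinates) so that the low-frequency remainder genuinely lands in the box polytope $W_i^{box(\eps)}$. I expect the bookkeeping with the two scalars $\frac12$ and $\frac{1-\eps T}{2}$, and the verification that the generative-process / $\sigma$-based definitions of the polytopes are respected after the coordinatewise surgery, to be the only delicate points; everything else is a routine Minkowski-sum and convexity argument.
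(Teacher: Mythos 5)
Your proof plan is correct and matches the paper's argument essentially line for line: the right inclusion is a convexity step combining \Cref{obs:contained trauncated xos} and \Cref{lem:welfare rare}, and the left inclusion decomposes a point of $\frac{1-\eps T}{2}W_i$ by coordinate frequency into a truncated piece (placed in $W_i^{tr(\eps)}$ via \Cref{def:welfare frequent} and \Cref{obs:bound xos mrf}) and a low-frequency remainder (placed in $W_i^{box(\eps)}$ via the bound $w_{ij}(t_{ij})\le\pi_{ij}(t_{ij})\le f_{ij}(t_{ij})$ from \Cref{obs:W_i-upper-bound}). The only cosmetic difference is where you apply the scalar $1-\eps T$ — you put it on the unscaled point's truncation before halving, whereas the paper starts from the pre-scaled point and notes $\frac{1-\eps T}{2}W_i^{tr(\eps)}\subseteq\frac{1}{2}W_i^{tr(\eps)}$; the bookkeeping is equivalent.
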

\begin{proof}
Let $W'=\frac{1}{2}W_{i}^{tr(\eps)} + \frac{1-\eps T}{2}W_{i}^{box(\epsilon)}$. Then $W'\subseteq  W_i$ directly follows from Corollary~\ref{obs:contained trauncated xos} and Lemma~\ref{lem:welfare rare}. We are going to show that for each $(\pi_i,w_i) \in \frac{1-\eps \cdot T}{2}\cdot W_i$, $(\pi_i,w_i) \in W'$. We consider the following vector $(\pi_{i}^{tr}, w_{i}^{tr})$ such that for every $j\in [m], t_{ij}\in \cT_{ij}$,
\begin{align*}
\pi_{ij}^{tr}(t_{ij})=
\pi_{ij}(t_{ij})\ind[f_{ij}(t_{ij})\geq \eps], \qquad
w_{ij}^{tr}(t_{ij})=
w_{ij}(t_{ij})\ind[f_{ij}(t_{ij})\geq \eps]
\end{align*}
Then by \Cref{def:welfare frequent} and Observation~\ref{obs:bound xos mrf}, $(\pi_{i}^{tr}, w_{i}^{tr})\in \frac{1-\eps\cdot T}{2}W_{i}^{tr(\eps)}\subseteq \frac{1}{2}W_{i}^{tr(\eps)}$. Consider the vector $(\pi_{i}^{box}, w_{i}^{box})$ such that
\begin{align*}
\pi_{ij}^{box}(t_{ij})=
\pi_{ij}(t_{ij})\ind[f_{ij}(t_{ij})< \eps] , \qquad
w_{ij}^{box}(t_{ij})=
w_{ij}(t_{ij})\ind[f_{ij}(t_{ij}) < \eps]
\end{align*}
For every $j\in [m], t_{ij}\in \cT_{ij}$, since $(\pi_i,w_i)\in \frac{1-\eps \cdot T}{2}\cdot W_i$, by \Cref{obs:W_i-upper-bound} we have $w_{ij}(t_{ij})\leq \pi_{ij}(t_{ij})\leq \frac{1-\eps \cdot T}{2}\cdot f_{ij}(t_{ij})$. Thus $w_{ij}^{box}(t_{ij})\leq \pi_{ij}^{box}(t_{ij})\leq \frac{1-\eps \cdot T}{2}\cdot f_{ij}(t_{ij})\cdot \ind[f_{ij}(t_{ij})<\eps]\leq \frac{1-\eps \cdot T}{2}\cdot \min(f_{ij}(t_{ij}),\eps)$. Thus $(\pi_{i}^{box}, w_{i}^{box})\in \frac{1-\eps T}{2}W_{i}^{box(\eps)}$. Now observe that $(\pi_i,w_i) = (\pi^{tr}_i + \pi^{box}_i,w^{tr}_i + w^{box}_i)\in \frac{1}{2}W_{i}^{tr(\eps)} + \frac{1-\eps T}{2}W_{i}^{box(\eps)}$,
which concludes the proof.
\end{proof}

\subsection{Approximating the Single-Bidder Marginal Reduced Form Polytope}

In this section we provide the important step that allows us to prove \Cref{thm:main XOS}. In the constrained-additive case (\Cref{thm:multiplicative approx for constraint additive-main body}), we construct (with high probability) another polytope $\widehat{W}_i$ such that (i) $c\cdot W_i\subseteq \widehat{W}_i\subseteq W_i$ for some absolute constant $c>0$, and (ii) we have an efficient separation oracle for $\widehat{W}_i$. The proof heavily relies on the fact that $W_i$ is down-monotone and thus cannot be easily extended  to the single-bidder marginal reduced form polytope $W_i$ in the XOS case. Here we prove a similar statement with a weaker version of property (i): For every vector $x$ in $W_i$, there exists another vector $x'$ in $\widehat{W}_i$ such that for every coordinate $j$, $x_j/x_j'\in [a,b]$ for some absolute constant $0<a<b$, and vice versa. It's not hard to see that the original property (i) directly implies this weaker version. A formal statement is shown in \Cref{cor:rfs for xos mult}.

\begin{theorem}\label{cor:rfs for xos mult}
Let $T=\sum_{i\in[n]}\sum_{j\in[m]}|\cT_{ij}|$ and $b$ be the bit complexity of the problem instance (\Cref{def:bit complexity}). For any $i\in [n]$ and $\delta>0$, there is an algorithm that constructs a convex polytope $\widehat{W}_i\in [0,1]^{2\cdot\sum_{j\in [m]}|\cT_{ij}|}$ using $\poly(n,m,T,\log(1/\delta))$ samples from $D_i$, such that with probability at least $1-\delta$ all of the following are satisfied:

\begin{enumerate}
    \item For each $(\widetilde{\pi}_i,\widetilde{w}_i) \in \widehat{W}_i$,
there exists a $(\pi_i,w_i) \in W_i$ such that for all $j\in[m]$ and $t_{ij} \in \cT_{ij}$:

$$\frac{{\pi}_{ij}(t_{ij})}{\widetilde{\pi}_{ij}(t_{ij})}\in \left[\frac{1}{4},\frac{3}{2}\right],\quad \frac{{w}_{ij}(t_{ij})}{\widetilde{w}_{ij}(t_{ij})}\in \left[\frac{1}{4},\frac{5}{4}\right]$$

\item For each $(\pi_i,w_i) \in W_{i}$,
there exists a $(\widetilde{\pi}_i,\widetilde{w}_i) \in \widehat{W}_i$ such that for all $j\in[m]$ and $t_{ij} \in \cT_{ij}$:

$$\frac{\widetilde{\pi}_{ij}(t_{ij})}{\pi_{ij}(t_{ij})}\in \left[\frac{1}{16},\frac{3}{8}\right],\quad \frac{\widetilde{w}_{ij}(t_{ij})}{w_{ij}(t_{ij})}\in \left[\frac{1}{16},\frac{5}{16}\right]$$

\item The vertex-complexity (\Cref{def:vertex-complexity}) of $\widehat{W}_i$ is $\poly(n,m,T,b,\log(1/\delta))$. 

\item There exists a separation oracle $SO$ for $\widehat{W}_i$, given access to the value oracle and an adjustable demand oracle (\Cref{def:adjustable demand oracle}) of buyer $i$'s valuation. The running time of $SO$ on any input with bit complexity $b'$ is $\poly(n,m,T,b,b',\log(1/\delta))$ and makes $\poly(n,m,T,b,b',\log(1/\delta))$ queries to both oracles.
\end{enumerate}

The algorithm constructs the polytope and the separation orcle $SO$ in time 
$\poly(n,m,T,b,\log(1/\delta))$.
\end{theorem}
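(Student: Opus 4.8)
\textbf{Proof plan for \Cref{cor:rfs for xos mult}.}

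The plan is to mirror the structure of the proof of \Cref{thm:multiplicative approx for constraint additive-main body}, but replacing the role of ``down-monotone polytope + box containment'' by the pair of structural results proved in the preceding subsection, namely \Cref{lem:xos mrf contain} (which sandwiches $W_i$ between $\frac{1-\eps T}{2}(W_i^{tr(\eps)}+W_i^{box(\eps)})$ scaled appropriately) and the mixture description from \Cref{obs:mix}. Concretely, I would first fix $\eps = \frac{1}{2T}$, so that $1-\eps T = \frac12$, and set $\widehat{W}_i := \frac{c}{3}\big(\widehat{W}_i^{tr(\eps)} + W_i^{box(\eps)}\big)$ for an appropriate constant $c$, where $\widehat{W}_i^{tr(\eps)}$ is the mixture of $\{W_{t_i}^{tr(\eps)}\}_{t_i\in\cT_i}$ over the empirical distribution $\widehat D_i$ obtained from $N = \poly(n,m,T,\log(1/\delta))$ i.i.d. samples from $D_i$. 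By \Cref{obs:mix}, $W_i^{tr(\eps)}$ is the mixture of the same family over the true $D_i$, and each $W_{t_i}^{tr(\eps)}$ has vertex-complexity at most $T$ (from its explicit description in \Cref{def:welfare frequent-t_i}), so \Cref{thm:approx mixture} applies with parameter $k=\poly(n,m,T,b,\log(1/\delta))$: with probability $\ge 1-2de^{-2dk}\ge 1-\delta$, every point of $W_i^{tr(\eps)}$ is within $\ell_\infty$-distance $\eps$ of some point of $\widehat{W}_i^{tr(\eps)}$ and vice versa. Condition on this event henceforth.

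Next I would establish the two ``multiplicative'' guarantees (items 1 and 2). Given $(\widetilde\pi_i,\widetilde w_i)\in\widehat{W}_i$, decompose it as $\frac{c}{3}(\hat p^{tr}+p^{box})$ with $\hat p^{tr}\in\widehat{W}_i^{tr(\eps)}$, $p^{box}\in W_i^{box(\eps)}$; pick $p^{tr}\in W_i^{tr(\eps)}$ with $\|p^{tr}-\hat p^{tr}\|_\infty\le\eps$. The point is that on every coordinate $(j,t_{ij})$ with $f_{ij}(t_{ij})<\eps$ both truncated vectors vanish, so that coordinate is carried entirely by $p^{box}$ (which is bounded by $\min(\eps,f_{ij}(t_{ij}))$), and on coordinates with $f_{ij}(t_{ij})\ge\eps$ the additive $\eps$-error is a constant-factor relative error because the corresponding entry of $\widehat{W}_i$ is either $0$ or at least $\Omega(\eps)$ — this is exactly where I must be careful, and where the nontrivial combinatorial lemmas \Cref{lem:W_t_i modify} and \Cref{lem:contain subtypes} are used: they let me delete or rescale individual coordinates of a point of $W_i^{tr(\eps)}$ (or of $W_{t_i}^{tr(\eps)}$) while staying inside the polytope, so that I can repair the per-coordinate discrepancy between $p^{tr}$ and $\hat p^{tr}$ and land on an honest point of $W_i$ whose $\pi$- and $w$-coordinates are within the stated constant ratios $[1/4,3/2]$ and $[1/4,5/4]$. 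The reverse direction (item 2) is symmetric, using the other half of \Cref{thm:approx mixture} together with \Cref{lem:xos mrf contain} and \Cref{obs:bound xos mrf}, and the extra factor $2\cdot\frac{3}{c}$ loss accounts for the ratio ranges $[1/16,3/8]$ and $[1/16,5/16]$. Tracking the numerology here — making sure the constants in items 1 and 2 come out exactly as stated — is the main bookkeeping obstacle, though not a conceptual one.

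For item 3 (vertex-complexity), $\widehat{W}_i$ is a Minkowski sum of $N$ scaled copies of polytopes $W_{t_i}^{tr(\eps)}$ (each of vertex-complexity $\le T$, hence facet-complexity $\poly(T)$ by \Cref{fact:corners2}) and the box $W_i^{box(\eps)}$ (facet-complexity $\poly(b,T)$ since its facets are $x\le\min(\eps,f_{ij}(t_{ij}))$), so its facet-complexity, and hence its vertex-complexity, is $\poly(n,m,T,b,\log(1/\delta))$. For item 4 (separation oracle), by \Cref{thm:equivalence of opt and sep} it suffices to optimize an arbitrary linear objective $\bm a$ over $\widehat{W}_i$; since Minkowski sum commutes with linear optimization, I optimize separately over the box (coordinatewise, trivially) and over each $W_{t_i}^{tr(\eps)}$. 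The latter is the one genuinely new ingredient relative to the constrained-additive case: I must show that $\max\{\bm a\cdot(\pi_i,w_i): (\pi_i,w_i)\in W_{t_i}^{tr(\eps)}\}$ can be computed with a single query to the \emph{adjustable} demand oracle (plus the value oracle). Unpacking \Cref{def:welfare frequent-t_i}, a corner of $W_{t_i}$ corresponds to a choice of a single set $S\in 2^{[m]}$ and a single additive index $k\in[K]$, with $\pi_{ij}(t_{ij})=\ind[j\in S]$ and $w_{ij}(t_{ij})=\ind[j\in S]\cdot\alpha_{ij}^{(k)}(t_{ij})/V_{ij}(t_{ij})$ (and we may further zero out coordinates with $f_{ij}(t_{ij})<\eps$, or discard items whose marginal contribution to $\bm a$ is negative, using \Cref{lem:W_t_i modify}). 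Writing the objective as $\sum_{j\in S}\big(a^\pi_{ij}+a^w_{ij}\cdot\alpha_{ij}^{(k)}(t_{ij})/V_{ij}(t_{ij})\big)$ and using the value oracle to learn $V_{ij}(t_{ij})$, this is exactly an instance that the adjustable demand oracle solves: set the scaling factors $b_j = a^w_{ij}/V_{ij}(t_{ij})$ (nonnegative after the discarding step) and prices so that the oracle's objective matches ours, query $\adem_i$ once to obtain the optimal $(S^*,k^*)$, and read off the corner. The remaining work — bounding bit complexities of the oracle inputs, assembling the $N+1$ partial maximizers into a maximizer of $\bm a\cdot(\cdot)$ over $\widehat{W}_i$, and invoking \Cref{thm:ellipsoid}/\Cref{thm:equivalence of opt and sep} — is routine, so I expect the adjustable-demand-oracle reduction and the constant-chasing in items 1–2 to be the only places requiring real care.
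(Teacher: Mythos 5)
Your plan matches the paper's proof essentially step for step: the same construction $\widehat{W}_i$ as a (scaled) Minkowski sum of the empirical mixture of $\{W_{t_i}^{tr(\eps)}\}$ with the box polytope at $\eps=1/(2T)$, the same conversion of the additive guarantee from \Cref{thm:approx mixture} into multiplicative bounds via a per-coordinate case analysis that zeroes the small coordinates of the truncated part (\Cref{lem:W_t_i modify}, \Cref{lem:contain subtypes}) and carries them on the box component, and the same single-adjustable-demand-query linear optimization over each $W_{t_i}^{tr(\eps)}$ feeding the optimization-to-separation equivalence. The two imprecisions in your sketch (the parenthetical claim that entries at frequent coordinates are ``either $0$ or $\Omega(\eps)$'', and vertex-complexity $T$ rather than $O(bT)$ for $W_{t_i}^{tr(\eps)}$) are exactly the bookkeeping you flag and are repaired in the way you indicate, so the approach is sound.
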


\subsubsection{Efficiently Optimizing over the Polytope}

Both \Cref{obs:bit complexity mrf xos} and \Cref{lem:optimize} are useful to show that there exists an efficient separation oracle for our constructed polytope.

In \Cref{lem:optimize}, we show that given access to the adjustable demand oracle and the value oracle, we can optimize any linear objective over the polytope $W_{t_i}^{tr(\eps)}$ (\Cref{def:welfare frequent-t_i}).

\begin{lemma}\label{lem:optimize}
Let $T=\sum_{i\in[n]}\sum_{j\in[m]}|\cT_{ij}|$,
and $b$ be the bit complexity of the instance.
For any $t_i\in \cT_i$ and any $\eps>0$, given access to the adjustable demand oracle $\adem_i(\cdot,\cdot,\cdot)$ and the value oracle for buyer $i$'s valuation $v_i(\cdot,\cdot)$, there exists an algorithm that takes arbitrary $\bm{x},\bm{y}\in \mathbb{R}^{\sum_j|\cT_{ij}|}$ as input and finds $$(\pi_i^*,w_i^*)\in \text{argmax}_{(\pi_i,w_i)\in W_{t_i}^{tr(\eps)}}\bm{x}\cdot \pi_i+ \bm{y}\cdot w_i.$$ The algorithm runs in time $poly(n,m,T,b,b',1/\eps)$ and makes $O(m)$ queries to both oracles, where $b'$ is the bit complexity of $(\bm{x},\bm{y})$. Moreover the bit complexity of $(\pi_i^*,w_i^*)$ is at most $O(b T)$.
\end{lemma}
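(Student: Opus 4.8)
# Proof Plan for Lemma~\ref{lem:optimize}

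The plan is to reduce the linear optimization over $W_{t_i}^{tr(\eps)}$ to a single adjustable demand query per candidate bundle, exploiting the extreme-point structure of the polytope. First I would observe that by \Cref{def:welfare frequent-t_i}, any point $(\pi_i,w_i)\in W_{t_i}^{tr(\eps)}$ is a convex combination (with total weight at most $1$, the remaining mass going to the origin) of ``pure'' allocations indexed by a pair $(S,k)$ with $S\subseteq[m]$, $k\in[K]$: the pure allocation gives $\pi_{ij}(t_{ij}')=\ind[j\in S\land t_{ij}'=t_{ij}\land f_{ij}(t_{ij})\geq\eps]$ and allows $w_{ij}(t_{ij}')$ to be anything in $[0,\,\ind[j\in S\land t_{ij}'=t_{ij}\land f_{ij}(t_{ij})\geq\eps]\cdot\alpha_{ij}^{(k)}(t_{ij})/V_{ij}(t_{ij})]$. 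Since we are maximizing a linear objective, the optimum is attained at a vertex, so it suffices to find the best pure allocation: for each coordinate $(j,t_{ij})$ with $f_{ij}(t_{ij})\geq\eps$ and $j\in S$, the contribution to the objective from $w$ is $y_{ij}(t_{ij})\cdot w_{ij}(t_{ij})$, which we should set to its maximum allowed value if $y_{ij}(t_{ij})>0$ and to $0$ otherwise. Writing $c^{(k)}_{ij} := \big(y_{ij}(t_{ij})\big)^+\cdot \alpha_{ij}^{(k)}(t_{ij})/V_{ij}(t_{ij})$ (with the convention $c^{(k)}_{ij}=0$ when $V_{ij}(t_{ij})=0$), the value of the best pure allocation equals $\max_{S,k}\sum_{j\in S}\big(x_{ij}(t_{ij})+c^{(k)}_{ij}\big)\cdot\ind[f_{ij}(t_{ij})\geq\eps]$, and the overall optimum is the maximum of this quantity and $0$ (the origin).

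Next I would show this inner maximization is exactly what the adjustable demand oracle computes. Set the coefficients $b_j := \big(y_{ij}(t_{ij})\big)^+\big/V_{ij}(t_{ij})\cdot \ind[f_{ij}(t_{ij})\geq\eps]$ when $V_{ij}(t_{ij})>0$ (and $b_j=0$ otherwise), so that $b_j\,\alpha_{ij}^{(k)}(t_{ij})=c^{(k)}_{ij}\cdot\ind[f_{ij}(t_{ij})\geq\eps]$, and set the prices $p_j := -\,x_{ij}(t_{ij})\cdot\ind[f_{ij}(t_{ij})\geq\eps]$ when $x_{ij}(t_{ij})<0$. There is a subtlety: the adjustable demand oracle as defined in \Cref{def:adjustable demand oracle} requires non-negative prices and non-negative coefficients, and it maximizes $\sum_{j\in S}(b_j\alpha_{ij}^{(k)}(t_{ij})-p_j)$ over all $S,k$. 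Items with $x_{ij}(t_{ij})\geq 0$ want a ``negative price''; I would handle these by noting that such an item is always profitable to include once we've fixed $k$ and the rest of $S$, so I can first query the oracle restricted to the profitable items, or — cleaner — shift: run the oracle with $p_j=0$ on the items where $x_{ij}(t_{ij})+c^{(k)}_{ij}$ could be negative is still not separable across $k$. The correct approach is to iterate over which items with $x_{ij}(t_{ij})\ge 0$ are ``forced in'': actually since $\ind[j\in S]$ is chosen freely and $\cF$ is not a constraint here, each item's inclusion decision for a \emph{fixed} $k$ is independent, so I can instead run $K$... no — the oracle already optimizes over $k$. The genuine fix: for items $j$ with $x_{ij}(t_{ij})\geq 0$, replace them by setting $p_j=0$ and additionally adding $x_{ij}(t_{ij})$ to the objective unconditionally only if the oracle's returned bundle and $k^*$ would include them anyway; rather than this, I would simply observe that including a non-negatively-priced-and-rewarded item can only help, and preprocess: run the oracle once on the full instance with $p_j:=\max(0,-x_{ij}(t_{ij}))\cdot\ind[\cdots]$ and $b_j$ as above; then for any item $j$ with $x_{ij}(t_{ij})\ge 0$ not returned in $S^*$, note adding it keeps $k^*$ feasible and only increases the objective, contradiction — so it must already be in $S^*$. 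Hence a single query with these parameters (prices $\max(0,-x_{ij})$, coefficients $b_j$) returns $(S^*,k^*)$ achieving $\max_{S,k}\sum_{j\in S}(b_j\alpha_{ij}^{(k)}-p_j)$; adding back $\sum_{j\in S^*}x_{ij}(t_{ij})^+\cdot\ind[\cdots]$ (computable via $O(1)$ value-oracle or arithmetic) recovers the true inner max. Comparing with $0$ gives the optimum.

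Finally I would assemble the output: if $0$ wins, return $(\pi_i^*,w_i^*)=(\mathbf{0},\mathbf{0})$; otherwise, using the returned $(S^*,k^*)$, set $\pi^*_{ij}(t_{ij}):=\ind[j\in S^*\land f_{ij}(t_{ij})\geq\eps]$, $\pi^*_{ij}(t_{ij}')=0$ for $t_{ij}'\ne t_{ij}$, and $w^*_{ij}(t_{ij}):=\ind[j\in S^*\land y_{ij}(t_{ij})>0\land f_{ij}(t_{ij})\geq\eps]\cdot\alpha_{ij}^{(k^*)}(t_{ij})/V_{ij}(t_{ij})$ (the value $\alpha_{ij}^{(k^*)}(t_{ij})$ is obtained from the adjustable demand oracle's output, which returns the additive function $\{\alpha_{ij}^{(k^*)}(t_{ij})\}_{j}$, or alternatively via the value oracle). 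One checks directly that $(\pi^*_i,w^*_i)\in W_{t_i}^{tr(\eps)}$ via \Cref{def:welfare frequent-t_i} with $\sigma_{S^*}^{(k^*)}=1$ and all other $\sigma$'s zero, and that it is optimal by the vertex argument above. For the bit-complexity claim: the coordinates of $(\pi_i^*,w_i^*)$ are either $0$, $1$, or ratios $\alpha_{ij}^{(k^*)}(t_{ij})/V_{ij}(t_{ij})$ of numbers of bit complexity $\le b$, so each has bit complexity $O(b)$, and there are $O(T)$ of them, giving total bit complexity $O(bT)$. The running time is dominated by $O(m)$ arithmetic passes over the coordinates plus the oracle calls; with inputs of bit complexity $b'$, setting up the oracle query costs $\poly(n,m,T,b,b')$, and since $\eps$ only enters through the threshold $\ind[f_{ij}(t_{ij})\ge\eps]$ (a comparison, costing $O(\log(1/\eps)+b)$ per coordinate), the total is $\poly(n,m,T,b,b',1/\eps)$. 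The main obstacle I anticipate is the sign/non-negativity mismatch between the generic linear objective and the restricted form of the adjustable demand oracle (non-negative prices and coefficients only), which is why the argument above needs the ``profitable items are always included'' observation; once that is nailed down, the rest is bookkeeping.
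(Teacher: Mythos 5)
Your plan follows the paper's route closely (reduce to a best pure $(S^*,k^*)$, recast as a single adjustable demand query with coefficients $b_j=y_j^+/V_{ij}(t_{ij})$ and prices $p_j=\max(0,-x_j)$), but the step where you argue that every $j$ with $x_j\ge 0$ \emph{must already} lie in the oracle's output $S^*$ is not sound, and your final construction $\pi^*_{ij}(t_{ij})=\ind[j\in S^*\land f_{ij}(t_{ij})\ge\eps]$ inherits the resulting bug.

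Concretely: for $j\in Q$ with $x_j>0$ and $y_j\le 0$ one has $p_j=0$ and $b_j=0$, so adding $j$ changes the oracle's objective $\sum_{j\in S}(b_j\alpha^{(k)}_{ij}-p_j)$ by exactly $0$. The adjustable demand oracle is only guaranteed to return \emph{some} maximizer, so it may legitimately omit such a $j$. Your ``only increases the objective, contradiction'' argument requires a strict increase, which you do not have. When this happens, your $\pi^*_{ij}(t_{ij})=0$ while the true optimizer of $\bm{x}\cdot\pi_i+\bm{y}\cdot w_i$ over $W_{t_i}^{tr(\eps)}$ has $\pi_{ij}(t_{ij})=1$ and strictly more objective; likewise, your formula ``oracle max plus $\sum_{j\in S^*}x_j^+\ind[\cdots]$'' underestimates the true value. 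The fix is what the paper does: take the final bundle to be $S^*\cup S_+$ with $S_+:=\{j:x_j(t_{ij})>0\}$ (and set $\sigma_{S^*\cup S_+}^{(k^*)}=1$), rather than relying on the oracle to have included these items. Everything else in your plan (handling of the sign mismatch, the truncation to $Q=\{j:f_{ij}(t_{ij})\ge\eps\}$, the bit-complexity and running-time bookkeeping) is fine and matches the paper's argument.
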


\begin{proof}
We are going to solve the problem:
\begin{align*}
    \max&\quad \bm{x}\cdot \pi_i + \bm{y}\cdot w_i \\
    s.t.&\quad (\pi_i,w_i) \in W_{t_i}^{tr(\eps)}
\end{align*}
By \Cref{def:welfare frequent-t_i}, $\pi_{ij}(t_{ij}')=w_{ij}(t_{ij}')=0$ if $t_{ij}'\not=t_{ij}$ or $f_{ij}(t_{ij})<\eps$. Let $Q=\{j\in[m]: f_{ij}(t_{ij})\geq \eps\}$. Then according to \Cref{def:welfare frequent-t_i}, the problem is equivalent to
\begin{align*}
  (O_1)\quad   \max &\sum_{j\in Q}\left( x_j(t_{ij})\cdot \pi_{ij}(t_{ij}) + y_j(t_{ij})\cdot w_{ij}(t_{ij})  \right)\\
    s.t.& \sum_{S\subseteq [m]}\sum_{k\in[K]}\sigma_S^{(k)} \leq 1 \\
    & \pi_{ij}(t_{ij}) = \sum_{S:j\in S}\sum_{k\in[K]}\sigma_S^{(k)} \quad &\forall j\in[m] \\
    &0\leq w_{ij}(t_{ij}) \leq \sum_{S:j\in S}\sum_{k\in[K]}\sigma_S^{(k)} \frac{\alpha^{(k)}_{ij}(t_{ij})}{V_{ij}(t_{ij})} &\forall j\in[m] \\
    &\sigma_S^{(k)}\geq 0  &\forall S\subseteq [m], k\in[K]
\end{align*}

We notice that at the maximum, $w_{ij}(t_{ij})$ must be equal to $\sum_{S:j\in S}\sum_{k\in[K]}\sigma_S^{(k)} \frac{\alpha^{(k)}_{ij}(t_{ij})}{V_{ij}(t_{ij})}$ if $y_j(t_{ij})>0$, and $0$ otherwise. $(O_1)$ is equivalent to (denote $y_j(t_{ij})^+=\max\{y_j(t_{ij}),0\}$):
\begin{align*}
    (O_2)\quad\max &\sum_{j\in Q}\left( x_j(t_{ij})\cdot \pi_{ij}(t_{ij}) + y_j(t_{ij})^+\cdot w_{ij}(t_{ij})  \right)\\
    s.t.& \sum_{S\subseteq [m]}\sum_{k\in[K]}\sigma_S^{(k)} \leq 1 \\
    & \pi_{ij}(t_{ij}) = \sum_{S:j\in S}\sum_{k\in[K]}\sigma_S^{(k)} \quad &\forall j\in[m] \\
    &w_{ij}(t_{ij}) = \sum_{S:j\in S}\sum_{k\in[K]}\sigma_S^{(k)} \frac{\alpha^{(k)}_{ij}(t_{ij})}{V_{ij}(t_{ij})} &\forall j\in[m]\\
    &\sigma_S^{(k)}\geq 0  &\forall S\subseteq [m], k\in[K]
\end{align*}

or equivalently
\begin{align*}
   (O_3)\quad \max &\sum_{S\subseteq [m]} \sum_{k\in[K]}  \sigma_S^{(k)}\left(\sum_{j\in S\cap Q}\left( y_j(t_{ij})^+\frac{\alpha_{ij}^{(k)}(t_{ij})}{V_{ij}(t_{ij})} + x_j(t_{ij})\right) \right)\\
    s.t.& \sum_{S\subseteq [m]}\sum_{k\in[K]}\sigma_S^{(k)} \leq 1 \\
    &\sigma_S^{(k)}\geq 0  \qquad \qquad \qquad \qquad \qquad S\subseteq [m], k\in[K]
\end{align*}


Clearly, to solve $(O_3)$, it suffices to find $S^*\subseteq Q$ and $k^*\in [K]$ that lies in $$\argmax_{S\subseteq Q, k\in[K]}\sum_{j\in S}\left(y_j(t_{ij})^+\cdot\frac{\alpha_{ij}^{(k)}(t_{ij})}{V_{ij}(t_{ij})} + x_j(t_{ij})\right).$$
We notice that since $y_j(t_{ij})^+\cdot\frac{\alpha_{ij}^{(k)}(t_{ij})}{V_{ij}(t_{ij})}\geq 0$ for every $j,k$, thus $j\in S^*$ for every $j$ such that $x_j(t_{ij})\geq 0$. Consider the vector $\textbf{b}\in \mathbb{R}^m_+$ and $\textbf{p}\in \mathbb{R}^m_+$ such that for each $j\in[m]$,
\begin{align*}
b_j=\frac{y_j(t_{ij})^+}{V_{ij}(t_{ij})},\quad
p_j =
\begin{cases}
-x_j(t_{ij})\ind[x_j(t_{ij})<0], & j\in Q\\
\infty, & j\not\in Q
\end{cases}
\end{align*}

Then $(S^*,k^*)\in \argmax\sum_{j\in S}\left( b_j\alpha_{ij}^{(k)}(t_{ij}) - p_j \right)$
, which can be achieved by a single query to the adjustable demand oracle with input $(t_i,\textbf{b},\textbf{p})$. Now the corresponding vector $(\pi_i^*,w_i^*)$ can be computed according to $(O_2)$, with $\sigma_S^{(k)}=\ind[S=S^*\cup S_{+} \land k=k^*]$, where $S_{+}:=\{j\in[m]: x_{j}(t_{ij})>0\}$. ~\footnote{Both $\alpha^{(k^*)}_{ij}(t_{ij})$ and $V_{ij}(t_{ij})=\max_{k}\alpha^{(k)}_{ij}(t_{ij})$ can be computed with $O(1)$ queries to the adjustable demand oracle.}

\notshow{

Let $(S^*,k^*)$ be the optimal solution of $(O_3)$.
Note that the solution $\widehat{\sigma}_{S}^{k}=\ind[S=S^*\land k=k^*]$ is an optimal solution of $(O_2)$.
Consider the set $S_\pi=\{j\in [m]: c_\pi(t_{ij})>0\}$ and the following program
\begin{align*}
(O_4)  \quad  {\arg\max}_{S\subseteq [m], k\in [K]} &\sum_{j\in S}\left( \frac{c_{w}(t_{ij})^+}{V_j(t_j)}\alpha_j^{(k)}(t_j) + c_{\pi}(t_{ij})\ind[c_{\pi}(t_{ij}) \leq 0]\right)
\end{align*}

Let $(S^*,k^*)$ be the optimal solution of $(O_4)$.
Observe that $(S^*\cup S_\pi,k^*)$ is an optimal solution of $(O_3)$.
Consider the vector $\textbf{b}\in \mathbb{R}^m_+$ and $\textbf{p}\in \mathbb{R}^m_+$ such that for each $j\in[m]$.
\begin{align*}
\textbf{b}_j =& \frac{c_w(t_{ij})^+}{V_{ij}(t_{ij})}\\
\textbf{p}_j =& -c_{\pi}(t_{ij})\ind[c_{\pi}(t_{ij})\leq 0]
\end{align*}

Observe we can rewrite Program $(O_4)$

\begin{align*}
(O_5)  \quad  {\arg\max}_{S\subseteq [m], k\in [K]} &\sum_{j\in S}\left( \textbf{b}_j\alpha_j^{(k)}(t_j) - \textbf{p}_j \right)
\end{align*}

Also observe that we can find $(S^*,k^*)$ by a querying a scalable demand oracle with input $\dem(t_i,\textbf{b},\textbf{p})$.
Given our reasoning above,
having values $(S^*,k^*)$ that optimize Program $(O_5)$,
we can find $(\pi^*_i,w_i^*)\in W_i$ that maximizes Program $(O_1)$,
where
\begin{align*}
\pi_{ij}^*(t_{ij}') =& \ind[t_{ij}'=t_{ij}\land j\in S_\pi \cup S^*] \\
w_{ij}^*(t_{ij}') =& \frac{\alpha_{ij}^{(k^*)}(t_{ij}')}{V_{ij}(t_{ij}')}\ind[t_{ij}'=t_{ij}\land j\in S^*]
\end{align*}
where we can calculate the values of $\alpha^{(k^*)}_{ij}(t_{ij})$ and $V_{ij}(t_{ij})=\max_{k}\alpha^{(k)}_{ij}(t_{ij})$ by one more call to the adjastuble demand oracle and two accesses to value oracles.
The bit complexity of each coordinate of $(\pi_i^*,w_i^*)$ is at most $O(b)$,
which implies that the bit complexity of $(\pi_i^*,w_i^*)$ is at most $O(b T)$.

We can similarly calculate 
\begin{align*}
    \arg\max_{(\pi,w)} c_\pi^T\pi^{tr} + c_w^T w^{tr} \\
    s.t.: (\pi^{tr},w^{tr}) \in W^{tr(\eps)}_{t_i}
\end{align*}

}

Finally, the bit complexity of each coordinate of $(\pi_i^*,w_i^*)$ is at most $O(b)$,
which implies that the bit complexity of $(\pi_i^*,w_i^*)$ is at most $O(b T)$.
\end{proof}

\begin{corollary}\label{obs:bit complexity mrf xos}
Let $T=\sum_{i\in[n]}\sum_{j\in[m]}|\cT_{ij}|$ and $b$ be the bit complexity of the instance.
For any $t_i\in \cT_i$, 
$W_{t_i}^{tr(\eps)}$ has vertex-complexity $O(bT)$. 

\end{corollary}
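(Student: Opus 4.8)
\textbf{Proof plan for Corollary~\ref{obs:bit complexity mrf xos}.}
The plan is to derive the vertex-complexity bound for $W_{t_i}^{tr(\eps)}$ directly from the optimization result in Lemma~\ref{lem:optimize}, using the standard equivalence between the bit complexity of optimal solutions and the vertex-complexity of a polytope. First I would recall (from Definition~\ref{def:vertex-complexity}) that a vector $\vec{y}$ is a corner of a $d$-dimensional polytope $P$ exactly when there are $d$ linearly independent directions $\vec{w}^{(1)},\dots,\vec{w}^{(d)}$ for which $\vec{y}$ uniquely maximizes $\vec{x}\cdot\vec{w}^{(\ell)}$ over $P$ (up to the usual tie-breaking); in particular, every corner of $P$ arises as the output of a linear optimization over $P$ for an appropriate objective. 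Here $d = 2\sum_{j\in[m]}|\cT_{ij}|\le 2T$, so it suffices to argue that for every linear objective, $W_{t_i}^{tr(\eps)}$ admits an optimal solution of bit complexity $O(bT)$.

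The key step is to invoke Lemma~\ref{lem:optimize}: for any input vectors $\bm{x},\bm{y}\in\mathbb{R}^{\sum_j|\cT_{ij}|}$ the algorithm there returns a point $(\pi_i^*,w_i^*)\in\argmax_{(\pi_i,w_i)\in W_{t_i}^{tr(\eps)}}\,\bm{x}\cdot\pi_i+\bm{y}\cdot w_i$ whose bit complexity is at most $O(bT)$. Since this holds for an arbitrary linear objective (the objective in Definition~\ref{def:vertex-complexity} corresponding to any corner is such a linear objective, and the solution returned is a maximizer, hence without loss of generality can be taken to be a corner when the maximizer is unique), every corner of $W_{t_i}^{tr(\eps)}$ — i.e., every element of $\Cor(W_{t_i}^{tr(\eps)})$ — has bit complexity $O(bT)$. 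That is precisely the statement that $W_{t_i}^{tr(\eps)}$ has vertex-complexity $O(bT)$.

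A minor technical point I would address carefully is the gap between "Lemma~\ref{lem:optimize} returns \emph{some} maximizer of bit complexity $O(bT)$" and "\emph{every} corner has bit complexity $O(bT)$." When a corner $\vec{y}$ is the \emph{unique} maximizer of its defining objective, any maximizer returned by the algorithm must equal $\vec{y}$, so $\vec{y}$ inherits the bit-complexity bound. Corners are by definition non-degenerate extreme points that are uniquely optimal for some objective, so this covers all of $\Cor(W_{t_i}^{tr(\eps)})$. Alternatively, one can argue more structurally: inspecting the proof of Lemma~\ref{lem:optimize}, the optimal $(\pi_i^*,w_i^*)$ is built from a single set–index pair $(S^*\cup S_+,k^*)$ via $\sigma_S^{(k)}=\ind[S=S^*\cup S_+\wedge k=k^*]$, and its nonzero coordinates are exactly $1$ (for the $\pi$-part) and $\alpha_{ij}^{(k^*)}(t_{ij})/V_{ij}(t_{ij})$ (for the $w$-part), each a ratio of two numbers of bit complexity $b$, hence of bit complexity $O(b)$; there are at most $2T$ coordinates, giving $O(bT)$ overall. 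I do not expect a serious obstacle here; the only thing to be slightly careful about is making the ``corner $\Rightarrow$ unique maximizer of a linear objective'' reduction precise and noting that the $O(bT)$ bound is uniform over all choices of $t_i$ and of the objective, so it is genuinely a bound on the vertex-complexity of $W_{t_i}^{tr(\eps)}$ as claimed.
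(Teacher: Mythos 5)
Your argument is correct and follows exactly the route the paper intends: the corollary is stated immediately after Lemma~\ref{lem:optimize} precisely because the bit-complexity bound on the optimizer $(\pi_i^*,w_i^*)$ returned there, together with the standard fact that every corner is the unique maximizer of some linear objective, directly yields the vertex-complexity bound. Your handling of the ``some maximizer vs.\ every corner'' subtlety (via uniqueness for a generic combination of the $d$ defining directions, or alternatively via the explicit form $\sigma_S^{(k)}=\ind[S=S^*\cup S_+ \wedge k=k^*]$ with coordinates $0$, $1$, or $\alpha_{ij}^{(k^*)}(t_{ij})/V_{ij}(t_{ij})$) is exactly the right thing to verify, and it is also worth noting, as you implicitly do, that the $O(bT)$ bound in Lemma~\ref{lem:optimize} is independent of the bit complexity $b'$ of the objective vector, which is what makes the uniqueness argument go through.
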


\subsubsection{Proof of \Cref{cor:rfs for xos mult}}

\notshow{
\begin{theorem}\label{thm:multi-approx-mixture}
Let $T=\sum_{i\in[n]}\sum_{j\in[m]}|\cT_{ij}|$ and $b$ be the bit complexity of the problem instance (\Cref{def:bit complexity}).
Consider parameters $0 < \eps \leq 1/T$, $\eta\in (\eps T,1)$ \argyrisnote{Argyris: I think we only need that $\eta \in [0,1]$ now    } and
integer $k\geq \Omega\left( T^4\left( Tb+\log\left(\frac{1}{\eta \eps}\right)\right) \right)$.
If we have access to $n=\left\lceil\frac{8kT}{(\eta\eps)^2}\right\rceil$ independent samples from $\cD$,
then with probability at least $1-2T\exp(-2T k)$,
we can construct a polytope $\widehat{W}_i$ such that

\begin{enumerate}
    \item For each $(\widetilde{\pi}_i,\widetilde{w}_i) \in \widehat{W}_i$,
there exists a $(\pi_i,w_i) \in \frac{1}{2}{W}_{i}^{tr(\eps)} + \frac{1-\eps T}{2}W^{box(\eps)}_{i}$ such that for all $j\in[m]$ and $t_{ij} \in \cT_{ij}$:

$$\frac{{\pi}_{ij}(t_{ij})}{\widetilde{\pi}_{ij}(t_{ij})}\in \left[\frac{1}{2}-\frac{\max\left(\eps\cdot  T,\eta\right)}{2}, \frac{3}{2}+\frac{\eta}{2}\right],\quad \frac{{w}_{ij}(t_{ij})}{\widetilde{w}_{ij}(t_{ij})}\in \left[\frac{1}{2}-\frac{\max\left(\eps\cdot  T,\eta\right)}{2} ,1+\frac{\eta}{2}\right]$$

\item For each $(\pi_i,w_i) \in \frac{1}{2}{W}_{i}^{tr(\eps)} + \frac{1-\eps T}{2}W^{box(\eps)}_{i}$,
there exists a $(\widetilde{\pi}_i,\widetilde{w}_i) \in \widehat{W}_i$ such that for all $j\in[m]$ and $t_{ij} \in \cT_{ij}$:

$$\frac{\widetilde{\pi}_{ij}(t_{ij})}{\pi_{ij}(t_{ij})}\in \left[\frac{1}{2}-\frac{\max\left(\eps\cdot  T,\eta\right)}{2}, \frac{3}{2}+\frac{\eta}{2}\right],
\quad 
\frac{\widetilde{w}_{ij}(t_{ij})}{w_{ij}(t_{ij})}\in \left[\frac{1}{2}-\frac{\max\left(\eps\cdot  T,\eta\right)}{2} ,1+\frac{\eta}{2}\right]$$

\item Given access to a scalable demand oracle $\dem_i(\cdot,\cdot,\cdot)$ and a value oracle $\val_i(\cdot,\cdot)$,
we can construct a separation oracle $SO$ for $\widehat{W}_i$. The running time of $SO$ on any input with bit complexity $b'$ is $\poly\left(b,b',k,T,\frac{1}{\eps},\frac{1}{\eta}\right)$ and perform $\poly\left(b,b',k,T,\frac{1}{\eps},\frac{1}{\eta}\right)$ queries to $\dem_i,\val_i$ with input of bit complexity at most $\poly\left(b,b',T\right)$.
\end{enumerate}
\end{theorem}
}

Choose parameter $\eps=\frac{1}{2T}$ and any $k=\Omega\left( T^4\left( Tb+\log\left(1/\eps\right)+n\log(1/\delta)\right) \right)$ 
Hence, in this proof, $1-\eps T=1/2$. Let $\widehat{D}_i$ be the empirical distribution induced by $N=\left\lceil\frac{32kT}{\eps^2}\right\rceil$ independent samples from $D_i$.
Let $\widehat{W}_{i}^{tr(\eps)}$ be the mixture of $\{W_{t_i}^{tr(\eps)}\}_{t_i\in T_i}$ over the empirical distribution $\widehat{D}_i$. By~\Cref{obs:bit complexity mrf xos}, the bit complexity of each corner of polytopes in $\{W_{t_i}^{tr(\eps)}\}_{t_i\in\cT_i}$ is at most $O(T b)$.
By Theorem~\ref{thm:approx mixture}, with probability at least $1-2Te^{-2Tk}\geq 1-\delta$, 
both of the following properties hold:
\begin{enumerate}[label=(\roman*)]
    \item For each $(\widehat{\pi}_i,\widehat{w}_i)\in \widehat{W}_i^{tr(\eps)}$,
    there exists a $(\pi_i,w_i)\in {W}_i^{tr(\eps)}$ such that $||(\widehat{\pi}_i,\widehat{w}_i)-(\pi_i,w_i)||_\infty \leq \frac{\eps}{2}$
    \item For each $(\pi_i,w_i)\in {W}_i^{tr(\eps)}$,
    there exists a $(\widehat{\pi}_i,\widehat{w}_i)\in \widehat{W}_i^{tr(\eps)}$ such that $||(\widehat{\pi}_i,\widehat{w}_i)-(\pi_i,w_i)||_\infty \leq \frac{\eps}{2}$
\end{enumerate}

For the rest of the proof, we condition on the event where both properties above hold. Let $W_i'=\frac{1}{2}W_{i}^{tr(\eps)} + \frac{1}{4}W^{box(\eps)}_{i}$. Consider the polytope $\widehat{W}_i = \frac{1}{2}\widehat{W}_{i}^{tr(\eps)} + \frac{1}{4}W^{box(\eps)}_{i}$.
We are going to show that for each $(\pi_i,w_i) \in W_i'$, there exists a $(\widetilde{\pi}_i,\widetilde{w}_i) \in \widehat{W}_i$ such that for all $j,t_{ij}$, 

$$\frac{\widetilde{\pi}_{ij}(t_{ij})}{\pi_{ij}(t_{ij})}\in \left[\frac{1}{4}, \frac{3}{2}\right]\quad \text{and} \quad
\frac{\widetilde{w}_{ij}(t_{ij})}{w_{ij}(t_{ij})}\in \left[\frac{1}{4},\frac{5}{4}\right].$$
It is not hard to see that this implies Property 2 in the statement of \Cref{cor:rfs for xos mult}, as $W_i'=\frac{1}{2}W_{i}^{tr(\eps)} + \frac{1}{4}W^{box(\eps)}_{i}\supseteq \frac{1}{4}W_i$(due to \Cref{lem:xos mrf contain} and our choice of $\eps$). 


For any $(\pi_i,w_i) \in W_i'$,
we rewrite $(\pi_i,w_i)$ as  $\frac{1}{2}(\pi_i^{tr},w_i^{tr}) + \frac{1}{2}(\pi_i^{box},w_i^{box})$ where  $(\pi_i^{tr},w_i^{tr}) \in W^{tr(\eps)}_{i}$ and $(\pi_i^{box},w_i^{box}) \in \frac{1}{2} W_i^{box(\eps)}$.
For every $j\in [m]$, we partition $\cT_{ij}$ into four disjoint sets: 
\begin{align*}
S_{j}^{(1)}=&\{t_{ij}\in \cT_{ij} : f_{ij}(t_{ij}) \geq \eps \land \eps < w_{ij}(t_{ij})\leq \pi_{ij}(t_{ij})  \} \\
S_j^{(2)}=&\{t_{ij}\in \cT_{ij} : f_{ij}(t_{ij}) \geq \eps \land w_{ij}(t_{ij})\leq \eps < \pi_{ij}(t_{ij})  \} \\
S_j^{(3)}=&\{t_{ij}\in \cT_{ij} : f_{ij}(t_{ij}) \geq \eps \land w_{ij}(t_{ij})\leq \pi_{ij}(t_{ij}) \leq \eps \}\\
S_j^{(4)}=&\{t_{ij}\in \cT_{ij} : f_{ij}(t_{ij}) <\eps\}
\end{align*}

By property (ii), there exists a $(\widehat{\pi}^{tr}_{i},\widehat{w}^{tr}_i)\in \widehat{W}^{tr(\eps)}_{i}$ such that for all $j\in[m]$ and $t_{ij} \in \cT_{ij}$:
\begin{align*}
\pi^{tr}_{ij}(t_{ij}) - \eps/2\leq \widehat{\pi}^{tr}_{ij}(t_{ij}) \leq \pi^{tr}_{ij}(t_{ij}) + \eps/2 \\
w^{tr}_{ij}(t_{ij}) - \eps/2\leq \widehat{w}^{tr}_{ij}(t_{ij}) \leq w^{tr}_{ij}(t_{ij}) + \eps/2
\end{align*}

Now consider the following vector $(\widetilde{\pi}^{tr}_i,\widetilde{w}^{tr}_i) \in [0,1]^{2\sum_{j\in[m]}|\cT_{ij}|}$:
\begin{align*}
\widetilde{\pi}^{tr}_{ij}(t_{ij}) =&
\begin{cases}
\widehat{\pi}_{ij}^{tr}(t_{ij}) \quad & \text{ if $t_{ij}\in S_j^{(1)}\cup S_j^{(2)}$} \\
0 \quad & \text{ o.w.}
\end{cases}\\
\widetilde{w}^{tr}_{ij}(t_{ij}) =&
\begin{cases}
\widehat{w}_{ij}^{tr}(t_{ij}) \quad & \text{ if $t_{ij}\in S_j^{(1)}$} \\
0 \quad & \text{ o.w.}
\end{cases}\\
\end{align*}

{In other words, $$(\widetilde{\pi}^{tr}_{ij}(t_{ij}),\widetilde{w}^{tr}_{ij}(t_{ij}))=(\widehat{\pi}^{tr}_{ij}(t_{ij})\cdot \ind[{\pi}_{ij}(t_{ij})> \eps],\widehat{w}^{tr}_{ij}(t_{ij})\cdot\ind[{w}_{ij}(t_{ij})> \eps]),$$}
and 
$(\widetilde{\pi}^{tr}_i,\widetilde{w}^{tr}_i)\in \widehat{W}_i^{tr(\eps)}$.
\notshow{
\begin{lemma}
$(\widetilde{\pi}^{tr}_i,\widetilde{w}^{tr}_i) \in \widehat{W}_{i}^{tr(\eps)}$.
\end{lemma}
\begin{proof}
Since $(\widehat{\pi}^{tr}_{i},\widehat{w}^{tr}_i)\in \widehat{W}_{i}^{tr(\eps)}$, for every $t_i$, there exists $(\widehat{\pi}^{(t_i)},\widehat{w}^{(t_i)})\in {W}_{t_i}^{tr(\eps)}$, such that $(\widehat{\pi}^{tr}_{i},\widehat{w}^{tr}_i)=\sum_{t_i\in \cT_i}\Pr_{t_i'\sim \widehat{D}_i}[t_i'=t_i]\cdot (\widehat{\pi}^{(t_i)},\widehat{w}^{(t_i)})$. Define $(\widetilde{\pi}^{(t_i)},\widetilde{w}^{(t_i)})$ as follows:

\begin{align*}
\widetilde{\pi}^{(t_i)}_j(t_{ij}') =&
\begin{cases}
\widehat{\pi}^{(t_i)}_j(t_{ij}) \quad & \text{ if $t_{ij}=t_{ij}'\wedge t_{ij}\in S_j^{(1)}$} \\
\min\left(1 ,a\cdot(\pi_{ij}(t_{ij}) - w_{ij}(t_{ij}))/\widehat{f}_{ij}(t_{ij})\right)  \quad & \text{ if $t_{ij}=t_{ij}'\wedge t_{ij}\in S_j^{(2)}$} \\
0 \quad & \text{ o.w.}
\end{cases}\\
\widetilde{w}^{(t_i)}_{j}(t_{ij}') =&
\begin{cases}
\widehat{w}_{j}^{(t_i)}(t_{ij}) \quad & \text{ if $t_{ij}=t_{ij}'\wedge t_{ij}\in S_j^{(1)}$} \\
0 \quad & \text{ o.w.}
\end{cases}\\
\end{align*}

By \Cref{cor:W_t_i truncate} we have $(\widetilde{\pi}^{(t_i)},\widetilde{w}^{(t_i)})\in {W}_{t_i}^{tr(\eps)}$. Since $\widehat{W}_{i}^{tr(\eps)}$ is a mixture of ${W}_{t_i}^{tr(\eps)}$ over distribution $\widehat{D}_i$, we have that $(\widetilde{\pi}^{tr}_{i},\widetilde{w}^{tr}_i)=\sum_{t_i\in \cT_i}\Pr_{t_i'\sim \widehat{D}_i}[t_i'=t_i]\cdot (\widetilde{\pi}^{(t_i)},\widetilde{w}^{(t_i)})$ is in $\widehat{W}_{i}^{tr(\eps)}$. 
\end{proof}
}

Consider another vector $(\widetilde{\pi}^{box}_i,\widetilde{w}^{box}_i) \in [0,1]^{2\sum_{j\in[m]}|\cT_{ij}|}$: 
\begin{align*}
(\widetilde{\pi}^{box}_{ij}(t_{ij}),\widetilde{w}^{box}_{ij}(t_{ij})) =&
\begin{cases}
({\pi}_{ij}^{box}(t_{ij}),w^{box}_{ij}(t_{ij})) \quad & \text{ if $t_{ij}\in S_{j}^{(1)}\cup S_{j}^{(4)}$} \\ 
(w_{ij}(t_{ij})/2, w_{ij}(t_{ij})/2)  \quad & \text{ if $t_{ij}\in S_{j}^{(2)}$} \\
({\pi}_{ij}(t_{ij})/2, w_{ij}(t_{ij})/2) \quad & \text{ if $t_{ij}\in S_{j}^{(3)}$} \\
\end{cases}
\end{align*}

For every $j\in[m]$ and $t_{ij}\in \cT_{ij}$, clearly $\widetilde{\pi}_{ij}^{box}(t_{ij})\geq \widetilde{w}_{ij}^{box}(t_{ij})$ since $\pi_{ij}^{box}(t_{ij})\geq w_{ij}^{box}(t_{ij})$ and $\pi_{ij}(t_{ij})\geq w_{ij}(t_{ij})$. Moreover, if $t_{ij}\in S_j^{(1)}\cup S_j^{(4)}$, $\widetilde{\pi}_{ij}^{box}(t_{ij})=\pi_{ij}^{box}(t_{ij})\leq \frac{1}{2}\cdot \min\{\eps,f_{ij}(t_{ij})\}$, since $(\pi_i^{box},w_i^{box}) \in \frac{1}{2}W_i^{box(\eps)}$.
If $t_{ij}\in S_j^{(2)}\cup S_j^{(3)}$, by the definitions of $S_j^{(2)}$ and $S_j^{(3)}$, we have that $\widetilde\pi_{ij}^{box}(t_{ij})\leq \eps/2=\frac{1}{2}\cdot \min\{\eps,f_{ij}(t_{ij})\}$.
Thus $(\widetilde{\pi}^{box}_i,\widetilde{w}^{box}_i)\in \frac{1}{2} {W}_{i}^{box(\eps)}$.
Now define $(\widetilde{\pi}_i,\widetilde{w}_i)=\frac{1}{2}(\widetilde{\pi}^{tr}_i,\widetilde{w}^{tr}_i) + \frac{1}{2}(\widetilde{\pi}^{box}_i,\widetilde{w}^{box}_i)$.
Then $(\widetilde{\pi}_i,\widetilde{w}_i)\in \widehat{W}_i$.
It remains to prove that $(\widetilde{\pi}_i,\widetilde{w}_i)$ satisfies: for all $j, t_{ij}$

$$\frac{\widetilde{\pi}_{ij}(t_{ij})}{\pi_{ij}(t_{ij})}\in \left[\frac{1}{4}, \frac{3}{2}\right]\quad \text{and} \quad
\frac{\widetilde{w}_{ij}(t_{ij})}{w_{ij}(t_{ij})}\in \left[\frac{1}{4},\frac{5}{4}\right].$$


We verify all cases based on which set $t_{ij}$ is in. 
\paragraph{Case 1: $t_{ij}\in S_j^{(1)}$.} 
Recall that for $t_{ij}\in S_j^{(1)}$, $\eps<w_{ij}(t_{ij}) \leq \pi_{ij}(t_{ij})$. We have that $$\widetilde\pi_{ij}(t_{ij})-\pi_{ij}(t_{ij})=\frac{1}{2}(\hat\pi_{ij}^{tr}(t_{ij})-\pi_{ij}^{tr}(t_{ij})).$$ According to property (ii), $\frac{1}{2}(\hat\pi_{ij}^{tr}(t_{ij})-\pi_{ij}^{tr}(t_{ij}))\in [-\frac{\eps}{4},\frac{\eps}{4}]$, so $\frac{1}{2}(\hat\pi_{ij}^{tr}(t_{ij})-\pi_{ij}^{tr}(t_{ij}))$ also lies in $[-\frac{\pi_{ij}(t_{ij})}{4},\frac{\pi_{ij}(t_{ij})}{4}]$. 
Similarly, 
$$\widetilde {w}_{ij}(t_{ij})-w_{ij}(t_{ij})=\frac{1}{2}(\hat{w}_{ij}^{tr}(t_{ij})-w_{ij}^{tr}(t_{ij}))\in [-\frac{\eps }{4},\frac{\eps }{4}] \subseteq [-\frac{w_{ij}(t_{ij})}{4}, \frac{w_{ij}(t_{ij})}{4}].$$
Hence, $$\frac{\widetilde{\pi}_{ij}(t_{ij})}{\pi_{ij}(t_{ij})}\in \left[\frac{3}{4}, \frac{5}{4}\right]\quad \text{and} \quad
\frac{\widetilde{w}_{ij}(t_{ij})}{w_{ij}(t_{ij})}\in \left[\frac{3}{4},\frac{5}{4}\right].$$

\paragraph{Case 2: $t_{ij}\in S_j^{(2)}$.}
Recall that
$
\widetilde{w}_{ij}(t_{ij}) = \frac{1}{4}\cdot {w}_{ij}(t_{ij})
$ and $\widetilde{\pi}_{ij}(t_{ij}) = \frac{1}{4}\cdot {w}_{ij}(t_{ij}) + \frac{1}{2}\cdot\widehat{\pi}^{tr}_{ij}(t_{ij})$. We have that
$$
\widetilde{\pi}_{ij}(t_{ij}) \leq \frac{1}{4}\cdot {w}_{ij}(t_{ij}) + \frac{{\pi}^{tr}_{ij}(t_{ij})}{2}  + \frac{\eps}{4} \leq \frac{3}{2}\cdot{\pi}_{ij}(t_{ij}).
$$
The first inequality follows from property (ii), and the second inequality follows from $\pi_{ij}(t_{ij}) =\frac{\pi^{tr}_{ij}(t_{ij})}{2} + \frac{\pi^{box}_{ij}(t_{ij})}{2}\geq \frac{\pi^{tr}_{ij}(t_{ij})}{2}$ and $w_{ij}(t_{ij})\leq \eps < \pi_{ij}(t_{ij})$ when $t_{ij}\in S_j^{(2)}$. We also have that
$$
\widetilde{\pi}_{ij}(t_{ij})\geq \frac{\widetilde{\pi}^{tr}_{ij}(t_{ij})}{2} \geq \frac{{\pi}^{tr}_{ij}(t_{ij})}{2}  - \frac{\eps}{4} = \pi_{ij}(t_{ij}) - \frac{\pi_{ij}^{box}(t_{ij})}{2} - \frac{ \eps}{4}\geq \pi_{ij}(t_{ij})- \frac{\eps}{2} \geq \frac{1}{2}{\pi}_{ij}(t_{ij}) 
$$
The first inequality follows from the non-negativity of $\widetilde{\pi}^{box}_{ij}(t_{ij})$; the second inequality follows from property (ii); the third inequality is due to the fact that $\pi^{box}_{ij}(t_{ij})\leq \eps/2$; the last inequality is because $\eps<\pi_{ij}(t_{ij})$.

Hence, $$\frac{\widetilde{\pi}_{ij}(t_{ij})}{\pi_{ij}(t_{ij})}\in \left[\frac{1}{2}, \frac{3}{2}\right]\quad \text{and} \quad
\frac{\widetilde{w}_{ij}(t_{ij})}{w_{ij}(t_{ij})}= \frac{1}{4}.$$


\paragraph{Case 3: $t_{ij}\in S_j^{(3)}\cup S_j^{(4)}$.}
Recall that when $t_{ij}\in S_j^{(3)}$, $\widetilde{\pi}_{ij}(t_{ij}) = \frac{1}{4}\cdot {\pi}_{ij}(t_{ij})$ and $\widetilde{w}_{ij}(t_{ij}) = \frac{1}{4}\cdot {w}_{ij}(t_{ij})$. 
When $t_{ij}\in S_j^{(4)}$, as $f_{ij}(t_{ij})<\eps$, $w_{ij}^{tr}(t_{ij})=\pi_{ij}^{tr}(t_{ij})=0$. Thus $\widetilde{\pi}_{ij}(t_{ij})=\frac{1}{2}\pi_{ij}^{box}(t_{ij})=\pi_{ij}(t_{ij})$, and $w_{ij}(t_{ij})=\frac{1}{2}w_{ij}^{box}(t_{ij})=w_{ij}(t_{ij})$. 

To sum up, we have argued that for each $(\pi_i,w_i) \in W_i'$, there exists a $(\widetilde{\pi}_i,\widetilde{w}_i) \in \widehat{W}_i$ such that for all $j,t_{ij}$, 

$$\frac{\widetilde{\pi}_{ij}(t_{ij})}{\pi_{ij}(t_{ij})}\in \left[\frac{1}{4}, \frac{3}{2}\right]\quad \text{and} \quad
\frac{\widetilde{w}_{ij}(t_{ij})}{w_{ij}(t_{ij})}\in \left[\frac{1}{4},\frac{5}{4}\right].$$

With a similar analysis,
~{\footnote{{We only need to switch the role of $(\widetilde{\pi}_i,\widetilde{w}_i)$ and $(\pi_i,w_i)$. We provide a brief sketch here. First, rewrite $(\widetilde{\pi}_i,\widetilde{w}_i)$ as $\frac{1}{2}(\widehat{\pi}^{tr}_i,\widehat{w}^{tr}_i)+\frac{1}{2}(\widetilde{\pi}^{box}_i,\widetilde{w}^{box}_i)$, where $(\widehat{\pi}^{tr}_i,\widehat{w}^{tr}_i)\in \widehat{W}^{tr(\eps)}_i$ and $(\widetilde{\pi}^{box}_i,\widetilde{w}^{box}_i)\in \frac{1}{2}W_i^{box(\eps)}$. Also, redefine $S_j^{(i)}$ in the same fashion but according to $(\widetilde{\pi}_i,\widetilde{w}_i)$. Take $({\pi}^{tr}_i,{w}^{tr}_i)\in W_i^{tr(\eps)}$ to be the point guaranteed to exist by property (i), and define $(\bar{\pi}^{tr}_{ij}(t_{ij}),\bar{w}^{tr}_{ij}(t_{ij})):= ({\pi}^{tr}_{ij}(t_{ij})\cdot \ind[\widetilde{\pi}_{ij}(t_{ij})>\eps],w^{tr}_{ij}(t_{ij})\cdot \ind[\widetilde{w}_{ij}(t_{ij})>\eps])$. Also, define $(\pi_{ij}^{box}(t_{ij}),w_{ij}^{box}(t_{ij}))$ according to which set $t_{ij}$ belongs to in a fashion similar to the proof above. Now define $(\pi_i,w_i)=\frac{1}{2}(\bar{\pi}^{tr}_{i},\bar{w}^{tr}_{i})+\frac{1}{2}({\pi}^{box}_{i},{w}^{box}_{i})$. Using a similar case analysis, we can prove the claim that $\frac{{\pi}_{ij}(t_{ij})}{\widetilde{\pi}_{ij}(t_{ij})}\in \left[\frac{1}{4},\frac{3}{2}\right]$ and  $\frac{{w}_{ij}(t_{ij})}{\widetilde{w}_{ij}(t_{ij})}\in \left[\frac{1}{4},\frac{5}{4}\right]$ for all $j\in[m]$ and $t_{ij}\in \cT_{ij}$.}}}
we can also show that for each $(\widetilde{\pi}_i,\widetilde{w}_i) \in \widehat{W}_i$, there exists a $({\pi}_i,{w}_i) \in W_i'$ such that for all $j,t_{ij}$, 

$$\frac{{\pi}_{ij}(t_{ij})}{\widetilde{\pi}_{ij}(t_{ij})}\in \left[\frac{1}{4},\frac{3}{2}\right]\quad\text{and}\quad \frac{{w}_{ij}(t_{ij})}{\widetilde{w}_{ij}(t_{ij})}\in \left[\frac{1}{4},\frac{5}{4}\right].$$

Thus Property 1 in the statement of \Cref{cor:rfs for xos mult} follows from the fact that $W_i'=\frac{1}{2}W_{i}^{tr(\eps)} + \frac{1}{4}W^{box(\eps)}_{i}\subseteq W_i$ (\Cref{lem:xos mrf contain}). 

Let $\{t_i^{(1)},...,t_i^{(N)}\}$
be the $N$ samples from $D_i$. Then 
$$
\widehat{W}_i = \sum_{\ell\in[N]}\frac{1}{2N}\cdot W_{t_i^{(\ell)}}^{tr(\eps)} 
+ \frac{1}{4}W^{box(\eps)}_{i} 
$$
is the Minkowski addition of $N+1$ polytopes. For Property 3 of the statement,
since the vertex-complexity of $W_{t_i}^{tr(\eps)}$ is $O(bT)$ for each $t_i$ (\Cref{obs:bit complexity mrf xos}), and the vertex-complexity of $W_i^{box(\eps)}$ is no more than $\poly(b,T)$, the vertex-complexity of $\widehat{W}_i$ is no more than $poly(n,m,T,b,\log(1/\delta))$. 

At last, we show the existence of an efficient separation oracle $SO$ for $\widehat{W}_i$, by efficiently optimizing any linear objective over $\widehat{W}_i$. Since $\widehat{W}_i$ is the Minkowski addition of polytopes $\{W_{t_i^{(\ell)}}^{tr(\eps)}\}_{\ell\in [N]}$ and $W^{box(\eps)}_{i}$, in order to maximize over $\widehat{W}_i$, it's sufficient to maximize over each polytope. By \Cref{lem:optimize}, we can efficiently optimize any linear objective over $W_{t_i}^{tr(\eps)}$ for every $t_i$, given the adjustable demand oracle and value oracle. Thus it is sufficient to solve
$\max\{\bm{x}\cdot \pi_i+\bm{y}\cdot w_i: (\pi_i,w_i)\in W^{box(\eps)}_{i}\}$ for any vector $\bm{x},\bm{y}$. Since in $W^{box(\eps)}_{i}$, the constraint for each coordinate $(j,t_{ij})$ is separate: $0\leq w_{ij}(t_{ij})\leq \pi_{ij}(t_{ij})\leq \min(\eps,f_{ij}(t_{ij}))$. Thus the optimum can be achieved by solving the following LP for every coordinate:
\begin{align*}
    \max&\quad x_j(t_{ij})\cdot \pi_{ij}(t_{ij}) + y_j(t_{ij})\cdot w_{ij}(t_{ij}) \\
    s.t.&\quad 0\leq w_{ij}(t_{ij})\leq \pi_{ij}(t_{ij})\leq \min(\eps,f_{ij}(t_{ij}))
\end{align*}
Note that the bit complexity of the output of our optimization algorithm is $poly(n,m,T,b,\log(1/\delta))$.
Thus by \Cref{thm:equivalence of opt and sep}, there exists a separation oracle $SO$ of $\widehat{W}_i$, that satisfies Property 4 in the statement of \Cref{cor:rfs for xos mult}.


\notshow{

\begin{proof}
By applying Theorem~\ref{thm:multi-approx-mixture} with $\eps=\frac{1}{2T}$, $\eta=\frac{1}{2}$, integer $k=\Omega\left(\frac{1}{2T}\log\left(\frac{n2T}{\delta} \right),T^4\left(Tb +\log\left( \frac{1}{\eta\eps}\right) \right)\right)$  and
having access to $\left\lceil\frac{8kT}{(\eta \eps )^2} \right\rceil=\poly\left(b,T, \log\left(\frac{1}{\delta}\right) \right)$ samples,
we can construct a polytope $\widehat{W}_i$ such that with probability at least $1-2T\exp(-2Tk)\geq 1-\frac{\delta}{n}$    
\begin{enumerate}
    \item For each $(\widetilde{\pi}_i,\widetilde{w}_i) \in \widehat{W}_i$,
there exists a $(\pi_i,w_i) \in \frac{1}{2}{W}_{i}^{tr(\eps)} + \frac{1}{4}W^{box(\eps)}_{i}$ such that for all $j\in[m]$ and $t_{ij} \in \cT_{ij}$:

$$\frac{{\pi}_{ij}(t_{ij})}{\widetilde{\pi}_{ij}(t_{ij})}\in \left[\frac{1}{4},\frac{7}{4}\right],\quad \frac{{w}_{ij}(t_{ij})}{\widetilde{w}_{ij}(t_{ij})}\in \left[\frac{1}{4},\frac{5}{4}\right]$$

\item For each $(\pi_i,w_i) \in \frac{1}{2}{W}_{i}^{tr(\eps)} + \frac{1}{4}W^{box(\eps)}_{i}$,
there exists a $(\widetilde{\pi}_i,\widetilde{w}_i) \in \widehat{W}_i$ such that for all $j\in[m]$ and $t_{ij} \in \cT_{ij}$:

$$\frac{\widetilde{\pi}_{ij}(t_{ij})}{\pi_{ij}(t_{ij})}\in \left[\frac{1}{4},\frac{7}{4}\right],\quad \frac{\widetilde{w}_{ij}(t_{ij})}{w_{ij}(t_{ij})}\in \left[\frac{1}{4},\frac{5}{4}\right]$$

\item  Given access to a scalable demand oracle $\dem_i(\cdot,\cdot,\cdot)$ and a value oracle $\val_i(\cdot,\cdot)$,
we can construct a separation oracle $SO_i$ for $\widehat{W}_i$. The running time of $SO_i$ on any input with bit complexity $b'$ is $\poly\left(b,b',T,\log\left(\frac{1}{\delta}\right)\right)$ and perform $\poly\left(b,b',T,\log\left(\frac{1}{\delta}\right)\right)$ queries to $\dem_i,\val_i$ with input of bit complexity $\poly\left(b,b',T,\log\left(\frac{1}{\delta}\right)\right)$. 
\end{enumerate}
By Lemma~\ref{lem:xos mrf contain},
we know that $\frac{1}{4}W_i\subseteq\frac{1}{2}{W}_{i}^{tr(\eps)} + \frac{1}{4}W^{box(\eps)}_{i}\subseteq W_i$.

Thus for each $(\widetilde{\pi}_i,\widetilde{w}_i)\in \widehat{W}_i$ there exists a $(\pi_i,w_i) \in \frac{1}{2}{W}_{i}^{tr(\eps)} + \frac{1}{4}W^{box(\eps)}_{i} \subseteq W_i$ such that
$$\frac{{\pi}_{ij}(t_{ij})}{\widetilde{\pi}_{ij}(t_{ij})}\in \left[\frac{1}{4},\frac{7}{4}\right],\quad \frac{{w}_{ij}(t_{ij})}{\widetilde{w}_{ij}(t_{ij})}\in \left[\frac{1}{4},\frac{5}{4}\right]$$

And for 
$(\pi_i,w_i)\in 
\frac{1}{2}W_i^{tr(\eps)} + \frac{1}{4}W_i^{box} \subseteq \frac{1}{4}W_i$
there exists a $(\widetilde{\pi}_i,\widetilde{w}_i) \in \widehat{W}_i$ such that for all $j\in[m]$ and $t_{ij} \in \cT_{ij}$:

$$\frac{\widetilde{\pi}_{ij}(t_{ij})}{\pi_{ij}(t_{ij})}\in \left[\frac{1}{4},\frac{7}{4}\right],\quad \frac{\widetilde{w}_{ij}(t_{ij})}{w_{ij}(t_{ij})}\in \left[\frac{1}{4},\frac{5}{4}\right]$$

\end{proof}

}



\notshow{

\begin{lemma}\label{lem:W_t_i modify}
For any $t_i$ and $(\pi_i,w_i)\in W_{t_i}$ and set of items $B\subseteq [m]$.
Consider $\widehat{\pi}_i,\widehat{w}_i\in [0,1]^{\sum_{j\in[m]}|\cT_{ij}|}$ such that for $j\in S$:
\begin{align*}
\widehat{w}_{ij}(t_{ij}) = w_{ij}(t_{ij}) \\
\widehat{\pi}_{ij}(t_{ij}) = \pi_{ij}(t_{ij})    
\end{align*}
and for $j\notin S$:
\begin{align*}
\widehat{w}_{ij}(t_{ij}) = 0 \\
0 \leq \widehat{\pi}_{ij}(t_{ij}) \leq 1   
\end{align*}
For $j\in [m]$ and $t_{ij}'\neq t_{ij}$
\begin{align*}
\widehat{w}_{ij}(t_{ij}) = 0 \\
\widehat{\pi}_{ij}(t_{ij}) = 0
\end{align*}
then $(\pi,w)\in W_{t_i}$.
\end{lemma}

\begin{proof}
Note that for each $j\in[m]$ and $t_{ij}'\neq t_{ij}$,
then $\widehat{w}_{ij}(t_{ij}')=\widehat{\pi}_{ij}(t_{ij}')={w}_{ij}(t_{ij}')={\pi}_{ij}(t_{ij}')=0$.
Let $\{\sigma_{S}^k\}_{S,k}$ be the set of numbers associated with $(\pi,w)\in W_{t_i}$ according to Definition~\ref{def:welfare frequent-t_i}.
We consider the set of numbers $\{\widehat{\sigma}_{S}^k\}_{S,k}$,
and we are going to prove that the element in $W_{t_i}$ associated with $\{\widehat{\sigma}_{S}^k\}_{S,k}$, is $(\widehat{w}_i,\widehat{\pi}_i)$.
For each $k\in [K]$ and $S\subseteq [m]$,
where $S=S_b\cup S_g$, where $S_B\subseteq B$ and $S_G\subseteq [m] \backslash B$:
$$
\widehat{\sigma}_{S}^k = 
\left(\prod_{j\in S_B} \widehat{\pi}_{ij}(t_{ij})\prod_{j \in B\backslash S_B} (1-\widehat{\pi}_{ij}(t_{ij})) \right) \left(\sum_{\substack{S_G \subseteq T \subseteq S_G \cup B}} {\sigma}_{T}^k\right)
$$
First observe that each set for $T\subseteq [m]$,
there exists a unique set $U\subseteq [m]\backslash B$, such that $U\subseteq T\subseteq U\cup B$ (in that case $U=T \backslash B$).
Thus

\begin{align*}
\sum_{S_B\subseteq B}\sum_{k\in [K]}\sum_{S_G\subseteq[m]\backslash B} \widehat{\sigma}_{S_B\cup S_G}^k = &\sum_{S_B\subseteq B}
\left(\prod_{j\in S_B} \widehat{\pi}_{ij}(t_{ij})\prod_{j \in B\backslash S_B} (1-\widehat{\pi}_{ij}(t_{ij})) \right) \sum_{k\in [K]}\sum_{S_G\subseteq[m]\backslash B}\left(\sum_{\substack{S_G \subseteq T \subseteq S_G \cup B}} {\sigma}_{T}^k\right) \\
=&\sum_{S_B\subseteq B}\left(\prod_{j\in S_B} \widehat{\pi}_{ij}(t_{ij})\prod_{j \in B\backslash S_B} (1-\widehat{\pi}_{ij}(t_{ij})) \right) \sum_{k\in [K]}\sum_{T\subseteq[m]}{\sigma}_{T}^k \\
=&\sum_{S_B\subseteq B}\prod_{j\in S_B} \widehat{\pi}_{ij}(t_{ij})\prod_{j \in B\backslash S_B} (1-\widehat{\pi}_{ij}(t_{ij})) 
\end{align*}

Consider $|B|$ random variables,
such that for each $j\in B$, random variables $C_j$ is sampled from a Bernoulli distribution with probability of success $\widehat{\pi}_{ij}(t_{ij})$.
Observe that $\prod_{j\in S_B} \widehat{\pi}_{ij}(t_{ij})\prod_{j \in B\backslash S_B} (1-\widehat{\pi}_{ij}(t_{ij}))$,
is the probability that once we take a sample from the $|B|$ random variables,
$C_j=1$ only if $j\in S_B$ and $C_j=0$ if $j\notin S_B$.
Thus
\begin{align*}
\sum_{S_B\subseteq B}\sum_{k\in [K]}\sum_{S_G\subseteq[m]\backslash B} \widehat{\sigma}_{S_B\cup S_G}^k 
=&\sum_{S_B\subseteq B}\prod_{j\in S_B} \widehat{\pi}_{ij}(t_{ij})\prod_{j \in B\backslash S_B} (1-\widehat{\pi}_{ij}(t_{ij})) \\
=& 1
\end{align*}

which proves that $\{\widehat{\sigma}_{S}^k\}_{S,k}$ first property of Definition~\ref{def:welfare frequent-t_i}.
Now we are going to prove the second property.
For $j\in B$:

\begin{align*}
\sum_{S_B\subseteq B:j\in S_B}\sum_{k\in [K]}\sum_{S_G\subseteq[m]\backslash B} \widehat{\sigma}_{S_B\cup S_G}^k 
=& \widehat{\pi}_{ij}(t_{ij}) \sum_{\substack{S_B\subseteq B\\j\in B}}\prod_{\substack{j'\in S_B \\ \land j'\neq j}} \widehat{\pi}_{ij'}(t_{ij'})\prod_{j' \in B\backslash S_B} (1-\widehat{\pi}_{ij'}(t_{ij'})) \\
=& \widehat{\pi}_{ij}(t_{ij})
\end{align*}

The last equation follows by a similar way of thinking as above,
by interpreting $\prod_{\substack{j'\in S_B \\\land j'\neq j}} \widehat{\pi}_{ij'}(t_{ij'})\prod_{j' \in B\backslash S_B} (1-\widehat{\pi}_{ij'}(t_{ij'}))$,
as the event where we have $|B|$ random variables $C_j$,
where $C_j$ is sampled from a Bernoulli distribution with probability of success $\widehat{\pi}_{ij}(t_{ij})$.
Notice that for $j\in B$, setting $\widehat{w}_{ij}(t_{ij})$ trivially satisfy the third requirement.

For $j\notin B$,
observe that

\begin{align*}
\sum_{\substack{S_G\subseteq[m]\backslash B\\j \in S_G}}\sum_{S_B\subseteq B}\sum_{k\in [K]} \widehat{\sigma}_{S_B\cup S_G}^k = 
&\sum_{S_B\subseteq B}
\left(\prod_{j\in S_B} \widehat{\pi}_{ij}(t_{ij})\prod_{j \in B\backslash S_B} (1-\widehat{\pi}_{ij}(t_{ij})) \right) \sum_{k\in [K]}\sum_{\substack{S_G\subseteq[m]\backslash B\\j \in S_G}}\left(\sum_{\substack{S_G \subseteq T \subseteq S_G \cup B}} {\sigma}_{T}^k\right) \\
=&\sum_{k\in [K]}\sum_{\substack{S_G\subseteq[m]\backslash B\\j \in S_G}}\left(\sum_{\substack{S_G \subseteq T \subseteq S_G \cup B}} {\sigma}_{T}^k\right)\\
=&\sum_{k\in [K]}\sum_{\substack{S\subseteq[m]\\j \in S}} {\sigma}_{S}^k=\pi_{ij}(t_{ij})
\end{align*}

We can similarly prove that 
\begin{align*}
\sum_{\substack{S_G\subseteq[m]\backslash B\\j \in S_G}}\sum_{S_B\subseteq B}\sum_{k\in [K]} \widehat{\sigma}_{S_B\cup S_G}^k \frac{\alpha_j^{(k)}}{V_{ij}(t_{ij})}= 
w_{ij}(t_{ij})
\end{align*}

which concludes the proof.
\end{proof}

A corollary of Lemma~\ref{lem:W_t_i modify} is the following.

\begin{corollary}\label{cor:W_t_i truncate}
For any $t_i$ and $(\pi^{tr}_i,w^{tr}_i)\in W_{t_i}^{tr(\eps)}$ and set of items $B\subseteq [m]$.
Consider $\widehat{\pi}^{tr}_i,\widehat{w}^{tr}_i\in [0,1]^{\sum_{j\in[m]}|\cT_{ij}|}$ such that for $j\in S$:
\begin{align*}
\widehat{w}^{tr}_{ij}(t_{ij}) = w_{ij}^{tr}(t_{ij}) \\
\widehat{\pi}^{tr}_{ij}(t_{ij}) = \pi^{tr}_{ij}(t_{ij})    
\end{align*}
and for $j\notin S$:
\begin{align*}
\widehat{w}^{tr}_{ij}(t_{ij}) = 0 \\
0 \leq \widehat{\pi}^{tr}_{ij}(t_{ij}) \leq 1   
\end{align*}
For $j\in [m]$ and $t_{ij}'\neq t_{ij}$
\begin{align*}
\widehat{w}_{ij}^{tr}(t_{ij}) = 0 \\
\widehat{\pi}_{ij}^{tr}(t_{ij}) = 0
\end{align*}
then $(\pi^{tr},w^{tr})\in W_{t_i}^{tr(\eps)}$.
\end{corollary}

A corollary of Lemma~\ref{lem:W_t_i modify} is the following.
\begin{observation}\label{obs:W_i(t_i)-2}
Fix any $t_i$. For any $j$, $S_j\subseteq \cT_{ij}$, $(\pi,w)\in W_{t_i}$, define $(\pi',w')\in [0,1]^{2\sum_{j\in[m]}|\cT_{ij}|}$ as: 
$$\pi_{j}'(t_{ij}')=\pi_j(t_{ij}')\cdot\ind[t_{ij}'\in S_j],\quad w_{j}'(t_{ij}')=w_j(t_{ij}')\cdot\ind[t_{ij}'\in S_j].$$ 
Then $(\pi',w')\in W_{t_i}$. 
\end{observation}

We conclude with the main theorem of this sections.
We show how to construct a polytope $\widehat{W}_i$,
such that each element in $\widehat{W}_i$ is within a multiplicative factor of an element in $\frac{1}{2}{W}_{i}^{tr(\eps)} + \frac{1-\eps T}{2}W^{box(\eps)}_{i}$ and vice versa and we have access to an efficient seperation oracle for $\widehat{W}_i$.
\argyrisnote{Unfortunately we cannot show that our polytope is contained in $W_i$ and contains a constant factor of $W_i$ as we did in Section~\ref{sec:multi-approx-polytope},
but we have the necessary properties to be able to ``round" our solution to one that can give us meaningful guarantees.}

\begin{theorem}\label{thm:multi-approx-mixture}
Let $T=\sum_{i\in[n]}\sum_{j\in[m]}|\cT_{ij}|$ and $b$ be the bit complexity of the problem instance (\Cref{def:bit complexity}).
Consider parameters $0 < \eps < 1/T$, $\delta\in (\eps T,1)$ and
integer $k\geq \Omega\left( T^4\left( Tb+\log\left(\frac{1}{\delta \eps}\right)\right) \right)$.
If we have access to $n=\left\lceil\frac{8kT}{(\delta\eps)^2}\right\rceil$ independent samples from $\cD$,
then with probability at least $1-2T\exp(-2T k)$,
we can construct a polytope $\widehat{W}_i$ such that there exists a constant $c\geq 1$

\begin{enumerate}
    \item For each $(\widetilde{\pi}_i,\widetilde{w}_i) \in \widehat{W}_i$,
there exists a $(\pi_i,w_i) \in \frac{1}{2}{W}_{i}^{tr(\eps)} + \frac{1-\eps T}{2}W^{box(\eps)}_{i}$ such that for all $j\in[m]$ and $t_{ij} \in \cT_{ij}$:

$$\frac{{\pi}_{ij}(t_{ij})}{\widetilde{\pi}_{ij}(t_{ij})}\in \left[\frac{1}{2}-c\cdot(\delta+\eps \cdot T),1+c\cdot\delta\right],\quad \frac{{w}_{ij}(t_{ij})}{\widetilde{w}_{ij}(t_{ij})}\in \left[\frac{1}{2}-c\cdot(\delta+\eps \cdot T),1+c\cdot\delta\right]$$

\item For each $(\pi_i,w_i) \in \frac{1}{2}{W}_{i}^{tr(\eps)} + \frac{1-\eps T}{2}W^{box(\eps)}_{i}$,
there exists a $(\widetilde{\pi}_i,\widetilde{w}_i) \in \widehat{W}_i$ such that for all $j\in[m]$ and $t_{ij} \in \cT_{ij}$:

$$\frac{\widetilde{\pi}_{ij}(t_{ij})}{\pi_{ij}(t_{ij})}\in \left[\frac{1}{2}-c\cdot(\delta+\eps \cdot T),1+c\cdot\delta\right],\quad \frac{\widetilde{w}_{ij}(t_{ij})}{w_{ij}(t_{ij})}\in \left[\frac{1}{2}-c\cdot(\delta+\eps \cdot T),1+c\cdot\delta\right]$$

\item Given access to a scalable demand oracle and an value oracle for the $i$-th buyer, we can construct an efficient separation oracle for $\widehat{W}_i$.

\item \argyrisnote{Write instead:} Given access to a scalable demand oracle $\dem_i(\cdot,\cdot,\cdot)$ and a value oracle $\val_i(\cdot,\cdot)$,
we can construct a separation oracle $SO$ for $\widehat{W}_i$. The running time of $SO$ on any input with bit complexity $b'$ is $\poly(d,b,b',\frac{1}{\eps},\log(1/\delta),\mathcal{A}(\poly(d,b,b',\frac{1}{\eps},\log(1/\delta))))$.
\end{enumerate}
\end{theorem}

\mingfeinote{
1.$\alpha_{ij}^k(t_{ij})$ not $a_j$.

2. The notation is inconsistent for XOS and c-a. Here you use $M_i=\sum_j|\cT_{ij}|$. Also there is a `T' in the statement.

3. Figure out the exact constant for $\delta$. I suggest you pick a specific constant for $\delta$ at the end of the proof.  

4. Argue for $W_i$ instead of $\frac{1}{2}{W}_{i}^{tr(\eps)} + \frac{1-\eps M_i}{2}W^{box(\eps)}_{i}$. Put Lemma 22 inside in the proof.

}

\begin{proof}
Let $\widehat{D}_i$ be the empirical distribution induced by $n$ independent samples from $D_i$.
Let $\widehat{W}_{i}^{tr(\eps)}$ be the mixture of $\{W_{t_i}^{tr(\eps)}\}_{t_i\in T_i}$ over the empirical distribution $\widehat{D}_i$.
By Observation~\ref{obs:bit complexity mrf xos},
the bit complexity of corners of polytopes in $\{W_{t_i}^{tr(\eps)}\}_{t_i\in\cT_i}$ is at most $O(T b)$.
By Theorem~\ref{thm:approx mixture} we have that with probability at least $1-2T\exp(-2Tk)$ the two following properties hold:
\begin{enumerate}
    \item Property (1): for each $(\widehat{\pi},\widehat{w})\in \widehat{W}_i^{tr(\eps)}$,
    there exists a $({\pi},{w})\in {W}_i^{tr(\eps)}$ such that $||(\widehat{pi},\widehat{w})-({pi},{w})||_\infty \leq \eps \delta$
    \item Property (2): for each $({\pi},{w})\in {W}_i^{tr(\eps)}$,
    there exists a $(\widehat{\pi},\widehat{w})\in \widehat{W}_i^{tr(\eps)}$ such that $||(\widehat{pi},\widehat{w})-({pi},{w})||_\infty \leq \eps \delta$
\end{enumerate}
For the rest of the proof, we condition that both of the events we described above happens.
Consider the polytope $\widehat{W}_i = \frac{1}{2}\widehat{W}_{i}^{tr(\eps)} + \frac{1-\eps T}{2}W^{box(\eps)}_{i}$.
We are going to show that for each $(\pi_i,w_i) \in \frac{1}{2}{W}_{i}^{tr(\eps)} + \frac{1-\eps T}{2}W^{box(\eps)}_{i}$,
there exists a $(\widetilde{\pi}_i,\widetilde{w}_i) \in \widehat{W}_i$ and constant $c\geq $, such that

$$\frac{\widetilde{\pi}_{ij}(t_{ij})}{\pi_{ij}(t_{ij})}\in \left[\frac{1}{2}-c\cdot(\delta+\eps \cdot T),1+c\cdot\delta\right],\quad \frac{\widetilde{w}_{ij}(t_{ij})}{w_{ij}(t_{ij})}\in \left[\frac{1}{2}-c\cdot(\delta+\eps \cdot T),1+c\cdot\delta\right]$$
The other direction is similar.

For every $j,t_{ij}$, let $\widehat{f}_{ij}(t_{ij})=\Pr_{t_i'\sim \widehat{D}_i}[t_{ij}'=t_{ij}]$ be the marginal probability of $t_{ij}$ when $t_i\sim \widehat{D}_i$. We first prove the following lemma using Property (1) and Property (2). 

\begin{lemma}\label{lem:f_hat and f} 
For every $t_{ij}$ such that $f_{ij}(t_{ij})\geq \eps$, we have $\frac{\widehat{f}_{ij}(t_{ij})}{f_{ij}(t_{ij})}\in [1-\delta,1+\delta]$.
\end{lemma}
\begin{proof}

For any $t_i$, consider the following vector $(\pi_{i}^{(t_i)},w_i^{(t_i)})$ such that $\pi_{ij}^{(t_i)}(t_{ij}')=\ind[t_{ij}'=t_{ij}\wedge f_{ij}(t_{ij})\geq \eps], w_{ij}^{(t_i)}(t_{ij}')=0$. By \Cref{def:welfare frequent-t_i}, we have $(\pi_{i}^{(t_i)},w_i^{(t_i)})\in W_{t_i}^{tr(\eps)}$ by considering the set of numbers $\{\sigma_S^{(k)}\}_{S,k}$ such that 
\argyrisnote{$\sigma_S^{(k)}=\ind[S=[j]\wedge k=\argmax_{k'}\alpha_{ij}^{(k')}(t_{ij})]$}. Now let $(\pi_{i},w_i)=\sum_{t_{i}\in \cT_i}f_i(t_i)\cdot (\pi_{i}^{(t_i)},w_i^{(t_i)})$. 
Since $W_{i}^{tr(\eps)}$ is a mixture of $W_{t_i}^{tr(\eps)}$ over $t_i\sim D_i$,
then $(\pi_{i},w_i)\in W_{i}^{tr(\eps)}$. Moreover, $w_{ij}(t_{ij})=0$ and $$\pi_{ij}(t_{ij})=\sum_{t_i'\in \cT_i}f_i(t_i')\cdot \ind[t_{ij}'=t_{ij}\wedge f_{ij}(t_{ij}')\geq \eps]=f_{ij}(t_{ij})\cdot \ind[f_{ij}(t_{ij})\geq \eps]$$  

Similarly, let $(\widehat\pi_{i},\widehat w_i)=\sum_{t_{i}\in \cT_i}\Pr_{t_i'\sim \widehat{D}_i}[t_i'=t_i]\cdot (\pi_{i}^{(t_i)},w_i^{(t_i)})$.
Then $(\widehat\pi_{i},\widehat w_i)\in \widehat{W}_{i}^{tr(\eps)}$. Moreover, $\widehat w_{ij}(t_{ij})=0$ and $$\widehat\pi_{ij}(t_{ij})=\sum_{t_i'\in \cT_i}\Pr_{x\sim \widehat{D}_i}[x=t_i']\cdot \ind[t_{ij}'=t_{ij}\wedge f_{ij}(t_{ij}')\geq \eps]=\widehat f_{ij}(t_{ij})\cdot \ind[f_{ij}(t_{ij})\geq \eps]$$   

Now by Property (2), there exists $(\pi_{i}',w_i')\in  \widehat{W}_{i}^{tr(\eps)}$, such that $||\pi_i-\pi_i'||_{\infty}\leq \delta \eps$. Thus for $t_{ij}$ such that $f_{ij}(t_{ij})\geq \eps$, we have 
$$(1-\delta)f_{ij}(t_{ij})\leq f_{ij}(t_{ij})-\delta \eps =\pi_{ij}(t_{ij})-\delta \eps \leq  \pi_{ij}'(t_{ij}) \leq \widehat{f}_{ij}(t_{ij})$$

the last inequality is because $\widehat{W}_{i}^{tr(\eps)}$ is a mixture of $W_{t_i}^{tr(\eps)}$ over $t_i\sim \widehat D_i$,
which implies  
$$\pi_{ij}'(t_{ij})\leq \sum_{t_{i'}\in \cT_i}\Pr_{x\sim \widehat{D}_i}[x=t_i']\cdot \ind[t_{ij}=t_{ij}']=\widehat{f}_{ij}(t_{ij}).$$

On the other hand, again by Property (1), there exists $(\pi_{i}'',w_i'')\in W_{i}^{tr(\eps)}$, such that $||\widehat\pi_i-\pi_i''||_{\infty}<\delta \eps$. Thus for $t_{ij}$ such that $f_{ij}(t_{ij})\geq \eps$, we have 
$$\widehat f_{ij}(t_{ij})=\widehat\pi_{ij}(t_{ij})\leq \pi_{ij}''(t_{ij})+\delta \eps\leq (1+\delta)\cdot f_{ij}(t_{ij})$$
the last inequality uses the fact that $\pi_{ij}''(t_{ij})\leq f_{ij}(t_{ij})$ and $f_{ij}(t_{ij})\geq \eps$.

\end{proof}


Back to the proof of \Cref{thm:multi-approx-mixture}. Since $(\pi_i,w_i) \in \frac{1}{2}{W}_{i}^{tr(\eps)} + \frac{1-\eps T}{2}W^{box(\eps)}_{i}$,
we can decompose $(\pi_i,w_i) = \frac{1}{2}(\pi_i^{tr},w_i^{tr}) + \frac{1}{2}(\pi_i^{box},w_i^{box})$ such that $(\pi_i^{tr},w_i^{tr}) \in W^{tr(\eps)}_{i}$ and $(\pi_i^{box},w_i^{box}) \in (1-\eps T)W_i^{box(\eps)}$.
For every $j\in [m]$, we partition $\cT_{ij}$ into four disjoint sets: 
\begin{align*}
S_{j}^{(1)}=&\{t_{ij}\in \cT_{ij} : f_{ij}(t_{ij}) \geq \eps \land \eps < w_{ij}(t_{ij})\leq \pi_{ij}(t_{ij})  \} \\
S_j^{(2)}=&\{t_{ij}\in \cT_{ij} : f_{ij}(t_{ij}) \geq \eps \land w_{ij}(t_{ij})\leq \eps < \pi_{ij}(t_{ij})  \} \\
S_j^{(3)}=&\{t_{ij}\in \cT_{ij} : f_{ij}(t_{ij}) \geq \eps \land w_{ij}(t_{ij})\leq \pi_{ij}(t_{ij}) \leq \eps \}\\
S_j^{(4)}=&\{t_{ij}\in \cT_{ij} : f_{ij}(t_{ij}) <\eps\}
\end{align*}

By Property (2), there exists a $(\widehat{\pi}^{tr}_{i},\widehat{w}^{tr}_i)\in \widehat{W}^{tr(\eps)}_{i}$ such that for all $j\in[m]$ and $t_{ij} \in \cT_{ij}$:
\begin{align*}
\pi^{tr}_{ij}(t_{ij}) - \delta\eps\leq \widehat{\pi}^{tr}_{ij}(t_{ij}) \leq \pi^{tr}_{ij}(t_{ij}) + \delta\eps \\
w^{tr}_{ij}(t_{ij}) - \delta\eps\leq \widehat{w}^{tr}_{ij}(t_{ij}) \leq w^{tr}_{ij}(t_{ij}) + \delta\eps
\end{align*}

Let $a=1-\eps\cdot T>0$. Now consider the following vector $(\widetilde{\pi}^{tr}_i,\widetilde{w}^{tr}_i) \in [0,1]^{2M_i}$:
\begin{align*}
\widetilde{\pi}^{tr}_{ij}(t_{ij}) =&
\begin{cases}
\widehat{\pi}_{ij}^{tr}(t_{ij}) \quad & \text{ if $t_{ij}\in S_j^{(1)}$} \\
\min\left( \widehat{f}_{ij}(t_{ij}) ,a\cdot(\pi_{ij}(t_{ij}) - w_{ij}(t_{ij}))\right)  \quad & \text{ if $t_{ij}\in S_j^{(2)}$} \\
0 \quad & \text{ o.w.}
\end{cases}\\
\widetilde{w}^{tr}_{ij}(t_{ij}) =&
\begin{cases}
\widehat{w}_{ij}^{tr}(t_{ij}) \quad & \text{ if $t_{ij}\in S_j^{(1)}$} \\
0 \quad & \text{ o.w.}
\end{cases}\\
\end{align*}

\begin{lemma}
$(\widetilde{\pi}^{tr}_i,\widetilde{w}^{tr}_i) \in \widehat{W}_{i}^{tr(\eps)}$.
\end{lemma}
\begin{proof}
Since $(\widehat{\pi}^{tr}_{i},\widehat{w}^{tr}_i)\in \widehat{W}_{i}^{tr(\eps)}$, for every $t_i$, there exists $(\widehat{\pi}^{(t_i)},\widehat{w}^{(t_i)})\in {W}_{t_i}^{tr(\eps)}$, such that $(\widehat{\pi}^{tr}_{i},\widehat{w}^{tr}_i)=\sum_{t_i\in \cT_i}\Pr_{t_i'\sim \widehat{D}_i}[t_i'=t_i]\cdot (\widehat{\pi}^{(t_i)},\widehat{w}^{(t_i)})$. Define $(\widetilde{\pi}^{(t_i)},\widetilde{w}^{(t_i)})$ as follows:

\begin{align*}
\widetilde{\pi}^{(t_i)}_j(t_{ij}') =&
\begin{cases}
\widehat{\pi}^{(t_i)}_j(t_{ij}) \quad & \text{ if $t_{ij}=t_{ij}'\wedge t_{ij}\in S_j^{(1)}$} \\
\min\left(1 ,a\cdot(\pi_{ij}(t_{ij}) - w_{ij}(t_{ij}))/\widehat{f}_{ij}(t_{ij})\right)  \quad & \text{ if $t_{ij}=t_{ij}'\wedge t_{ij}\in S_j^{(2)}$} \\
0 \quad & \text{ o.w.}
\end{cases}\\
\widetilde{w}^{(t_i)}_{j}(t_{ij}') =&
\begin{cases}
\widehat{w}_{j}^{(t_i)}(t_{ij}) \quad & \text{ if $t_{ij}=t_{ij}'\wedge t_{ij}\in S_j^{(1)}$} \\
0 \quad & \text{ o.w.}
\end{cases}\\
\end{align*}

By \Cref{cor:W_t_i truncate} we have $(\widetilde{\pi}^{(t_i)},\widetilde{w}^{(t_i)})\in {W}_{t_i}^{tr(\eps)}$. Since $\widehat{W}_{i}^{tr(\eps)}$ is a mixture of ${W}_{t_i}^{tr(\eps)}$ over distribution $\widehat{D}_i$, we have that $(\widetilde{\pi}^{tr}_{i},\widetilde{w}^{tr}_i)=\sum_{t_i\in \cT_i}\Pr_{t_i'\sim \widehat{D}_i}[t_i'=t_i]\cdot (\widetilde{\pi}^{(t_i)},\widetilde{w}^{(t_i)})$ is in $\widehat{W}_{i}^{tr(\eps)}$. 
\end{proof}

Now we consider another vector $(\widetilde{\pi}^{box}_i,\widetilde{w}^{box}_i)$: 
\begin{align*}
(\widetilde{\pi}^{box}_{ij}(t_{ij}),\widetilde{w}^{box}_{ij}(t_{ij})) =&
\begin{cases}
({\pi}_{ij}^{box}(t_{ij}),w^{box}_{ij}(t_{ij})) \quad & \text{ if $t_{ij}\in S_{j}^{(1)}\cup S_{j}^{(4)}$} \\ 
(a\cdot w_{ij}(t_{ij}),a\cdot w_{ij}(t_{ij}))  \quad & \text{ if $t_{ij}\in S_{j}^{(2)}$} \\
(a\cdot{\pi}_{ij}(t_{ij}),a\cdot w_{ij}(t_{ij})) \quad & \text{ if $t_{ij}\in S_{j}^{(3)}$} \\
\end{cases}
\end{align*}

For every $j,t_{ij}$, clearly $\widetilde{\pi}_{ij}^{box}(t_{ij})\geq \widetilde{w}_{ij}^{box}(t_{ij})$ since $\pi_{ij}^{box}(t_{ij})\geq w_{ij}^{box}(t_{ij})$ and $\pi_{ij}(t_{ij})\geq w_{ij}(t_{ij})$. Moreover, if $t_{ij}\in S_j^{(1)}\cup S_j^{(4)}$, $\widetilde{\pi}_{ij}^{box}(t_{ij})=\pi_{ij}^{box}(t_{ij})\leq a\cdot \min\{\eps,f_{ij}(t_{ij})\}$.
If $t_{ij}\in S_j^{(2)}\cup S_j^{(3)}$, by definition of $S_j^{(2)}$ and $S_j^{(3)}$, we have that $\widetilde\pi_{ij}^{box}(t_{ij})\leq a\cdot\eps=a\cdot \min\{\eps,f_{ij}(t_{ij})\}$.
Thus $(\widetilde{\pi}^{box}_i,\widetilde{w}^{box}_i)\in a {W}_{i}^{box(\eps)}$.
Now define $(\widetilde{\pi},\widetilde{w})=\frac{1}{2}(\widetilde{\pi}^{tr}_i,\widetilde{w}^f_i) + \frac{1}{2}(\widetilde{\pi}^{box}_i,\widetilde{w}^r_i)$.
Then $(\widetilde{\pi}_i,\widetilde{w}_i)\in \widehat{W}_i$.
We remain to prove that $(\widetilde{\pi}_i,\widetilde{w}_i)$ satisfies:

$$\frac{\widetilde{\pi}_{ij}(t_{ij})}{\pi_{ij}(t_{ij})}\in [\frac{1-\delta}{2},1+\frac{\delta}{2}],\quad \frac{\widetilde{w}_{ij}(t_{ij})}{w_{ij}(t_{ij})}\in [\frac{1-\delta}{2},1+\frac{\delta}{2}]$$

We will verify all cases depending on which set $t_{ij}$ is in. First observe that for $t_{ij}\in S_j^{(1)}$, $\eps<w_{ij}(t_{ij}) \leq \pi_{ij}(t_{ij})$. We have $$-\frac{\delta}{2}\cdot \pi_{ij}(t_{ij})\leq-\frac{\eps \delta}{2}\leq\widetilde\pi_{ij}(t_{ij})-\pi_{ij}(t_{ij})=\frac{1}{2}(\hat\pi_{ij}^{tr}(t_{ij})-\pi_{ij}^{tr}(t_{ij}))\leq\frac{\eps\delta}{2}\leq \frac{\delta}{2}\cdot \pi_{ij}(t_{ij})$$
Similarly, 
$$-\frac{\delta}{2}\cdot w_{ij}(t_{ij})\leq-\frac{\eps \delta}{2}\leq\widetilde {w}_{ij}(t_{ij})-w_{ij}(t_{ij})=\frac{1}{2}(\hat{w}_{ij}^f(t_{ij})-w_{ij}^f(t_{ij}))\leq\frac{\eps \delta}{2}\leq \frac{\delta}{2}\cdot w_{ij}(t_{ij})$$

For $t_{ij}\in S_j^{(2)}$,
we have
$
\widetilde{w}_{ij}(t_{ij}) = \frac{a}{2}\cdot {w}_{ij}(t_{ij})
$.
For $\widetilde{\pi}_{ij}(t_{ij})$, when $\widehat{f}_{ij}(t_{ij}) \geq a\cdot(\pi_{ij}(t_{ij}) - w_{ij}(t_{ij}))$,
$
\widetilde{\pi}_{ij}(t_{ij}) = \frac{a}{2}\cdot {\pi}_{ij}(t_{ij})
$.

Now we deal with the case where $\widehat{f}_{ij}(t_{ij}) < a\cdot(\pi_{ij}(t_{ij}) - w_{ij}(t_{ij}))$.
By Lemma~\ref{lem:xos mrf contain}, since $(\pi_i,w_i) \in \frac{1}{2}{W}_{i}^{tr(\eps)} + \frac{1-\eps T}{2}W^{box(\eps)}_{i}\subseteq W_i$, we have $\pi_{ij}(t_{ij})-w_{ij}(t_{ij})\leq \pi_{ij}(t_{ij}) \leq f_{ij}(t_{ij})$. Also by \Cref{lem:f_hat and f}, $\widehat{f}_{ij}(t_{ij}) \geq (1-\delta)f_{ij}(t_{ij})$. Thus if $\widehat{f}_{ij}(t_{ij}) < a\cdot(\pi_{ij}(t_{ij}) - w_{ij}(t_{ij}))$,
then
$$
\left(1-\delta\right)\left(\pi_{ij}(t_{ij}) - w_{ij}(t_{ij}) \right)\leq \left(1-\delta \right)f_{ij}(t_{ij})\leq \widehat{f}_{ij}(t_{ij})
$$


Thus
\begin{align*}
    \widetilde{\pi}_{ij}(t_{ij}) = \frac{1}{2} \left(\widehat{f}_{ij}(t_{ij}) + a\cdot w_{ij}(t_{ij})\right) \geq \frac{1}{2} \left(1-\delta\right)(\pi_{ij}(t_{ij}) - w_{ij}(t_{ij})) + \frac{a}{2}w_{ij}(t_{ij}) \geq \frac{1-\delta}{2} \pi_{ij}(t_{ij}),
\end{align*}
where the last inequality uses the fact that $a\geq 1-\delta$. On the other hand,
\begin{align*}
    \widetilde{\pi}_{ij}(t_{ij})<\frac{a}{2} (\pi_{ij}(t_{ij}) - w_{ij}(t_{ij}) + w_{ij}(t_{ij})) \leq \frac{a}{2} \pi_{ij}(t_{ij})
\end{align*}

For $t_{ij}\in S_j^{(3)}$, we have $\widetilde{\pi}_{ij}(t_{ij}) = \frac{a}{2}\cdot {\pi}_{ij}(t_{ij})$ and $\widetilde{w}_{ij}(t_{ij}) = \frac{a}{2}\cdot {w}_{ij}(t_{ij})$. 
For $t_{ij}\in S_j^{(4)}$, since $f_{ij}(t_{ij})<\eps$, $w_{ij}^f(t_{ij})=\pi_{ij}^f(t_{ij})=0$. Thus $\widetilde{\pi}_{ij}(t_{ij})=\frac{1}{2}\pi_{ij}^r(t_{ij})=\pi_{ij}(t_{ij})$, and $w_{ij}(t_{ij})=\frac{1}{2}w_{ij}^r(t_{ij})=w_{ij}(t_{ij})$. 

Now we are left to prove that we can have access to an efficient separation oracle for $\widehat{W}_i$.
Similar to the proof of Theorem~\ref{thm:special case of multiplicative approx},
$\widehat{W}_i$ is the Minkowski sum of polytope $W_i^{tr(\eps)}$ and $n$ polytopes in $\{W_{t_i}^{tr(\eps)}\}_{t_i\in \cT_i}$.
With Lemma~\ref{lem:optimize} and using arguments similar to the proof of Theorem~\ref{thm:special case of multiplicative approx},
we can argue that we can construct an efficient seperation oracle for $\widehat{W}_i$.
\end{proof}
}


\subsection{Putting Everything Together}


In this section, we put all pieces together and provide a complete proof of \Cref{thm:main XOS}.
Denote $(P)$ the LP in \Cref{fig:XOSLP} and $\optlp$ the optimal objective of $(P)$. We consider another LP denoted as $(P')$. In $(P')$, in addition to all variables in $(P)$, we introduce new variables $\widehat{\pi}_i=\{\widehat{\pi}_{ij}(t_{ij})\}_{j\in[m],t_{ij}\in\cT_{ij}}$ and $\widehat{w}_i=\{\widehat{w}_{ij}(t_{ij})\}_{j\in[m],t_{ij}\in\cT_{ij}}$ for every $i\in [n]$. Both $(P)$ and $(P')$ have the same objective function. The only difference between $(P)$ and $(P')$ is that in $(P')$, we replace Constraint {\Wconstraint} with the following constraints:
$$\text{Constraint {\WconstraintNew}}: \quad (\widehat{\pi}_i,\widehat{w}_i) \in \widehat{W}_i, \pi_i \geq \frac{3}{2}\widehat{\pi}_i\geq \bm{0}, w_i \leq \frac{1}{4}\widehat{w}_i,\quad\forall i\in [n].$$ Here $\widehat{W}_i$ is the proxy polytope from \Cref{cor:rfs for xos mult} for each $i\in [n]$. Both inequalities hold coordinate-wisely. Denote $\optlp'$ the optimal objective of $(P')$.     
By Property 4 of \Cref{cor:rfs for xos mult}, there exists an efficient separation oracle for each $\widehat{W}_i$. Thus we can solve $(P')$ in polynomial time using the Ellipsoid algorithm (\Cref{thm:ellipsoid}). The following lemma shows the relationship between $(P)$ and $(P')$. 

\notshow{
\begin{figure}[H]
\colorbox{MyGray}{
\begin{minipage}{.98\textwidth}
$$\quad\textbf{max  } \sum_{i\in[n]} \sum_{j\in[m]} \sum_{t_{ij}\in \cT_{ij}} 
 f_{ij}(t_{ij})\cdot V_{ij}(t_{ij})\cdot \sum_{\substack{\beta_{ij}\in \cV_{ij}\\
 \delta_{ij} \in \Delta}} \lambda_{ij}(t_{ij},\beta_{ij}, \delta_{ij})\cdot \ind[V_{ij}(t_{ij})\leq \beta_{ij} + \delta_{ij}]$$
\vspace{-.3in}
  \begin{align*}
 & s.t.\\
 &\quad\textbf{Allocation Feasibility Constraints:}\\
 &\quad\Wconstraint \quad (\widehat{\pi}_i,\widehat{w}_i) \in \widehat{W}_i ,\quad \pi_i \geq \frac{3}{2}\widehat{\pi}_i,w_i \leq \frac{1}{4}\widehat{w}_i & \forall i \\
&\quad \PiConstraint \quad \sum_i \sum_{t_{ij}\in \cT_{ij}} \pi_{ij}(t_{ij})\leq 1 
& \forall j \\  
 &\quad\textbf{Natural Feasibility Constraints:}\\
    &\quad\LambdaMarginalConstraint\quad f_{ij}(t_{ij})\cdot\sum_{\beta_{ij}\in \cV_{ij}}\sum_{\delta_{ij}\in \Delta} \lambda_{ij}(t_{ij},\beta_{ij},\delta_{ij}) = w_{ij}(t_{ij}) & \forall i,j,t_{ij}\in \cT_{ij}\\
    &\quad\CompareMarginalConstraint\quad\lambda_{ij}(t_{ij},\beta_{ij},\delta_{ij})\leq \hat\lambda_{ij}(\beta_{ij}, \delta_{ij}) & \forall i,j, t_{ij},\beta_{ij}\in \cV_{ij},\delta_{ij}\\
    &\quad\HatLambdaDistributionConstraint\quad \sum_{\substack{\beta_{ij}\in \cV_{ij}\\ \delta_{ij} \in \Delta}} \hat\lambda_{ij}(\beta_{ij},\delta_{ij}) = 1 & \forall i,j\\
    &\quad\textbf{Problem Specific Constraints:}\\
    &\quad\ReduceDemandConstaint \quad \sum_{i\in[n]}\sum_{\beta_{ij} \in \cV_{ij}}\sum_{\delta_{ij}\in \Delta} \hat\lambda_{ij}(\beta_{ij},\delta_{ij}) \cdot \Pr_{t_{ij}\sim D_{ij}}[V_{ij}(t_{ij})\geq \beta_{ij}] \leq \frac{1}{2}& \forall j\\
    & \quad\MarginalToGlobalConstraint\quad \frac{1}{2}\sum_{t_{ij} \in \cT_{ij}} f_{ij}(t_{ij}) \left(\lambda_{ij}(t_{ij},\beta_{ij},\delta_{ij})+ \lambda_{ij}(t_{ij},\beta_{ij}^+,\delta_{ij})\right) \leq \\
    &\qquad\qquad\qquad\hat\lambda_{ij}(\beta_{ij},\delta_{ij}) \cdot \Pr_{t_{ij}}[t_{ij}\geq \beta_{ij}]+\hat\lambda_{ij}(\beta_{ij}^+,\delta_{ij}) \cdot \Pr_{t_{ij}}[t_{ij}\geq\beta_{ij}^+] & \forall i,j,\beta_{ij}\in \cV_{ij}^0,\delta_{ij}\in \Delta\\
    &\quad\BoundMeanDeltaConstraint\quad \sum_{\substack{\beta_{ij}\in \cV_{ij}\\ \delta_{ij} \in \Delta}} \delta_{ij}\cdot \hat\lambda_{ij}(\beta_{ij},\delta_{ij}) \leq d_i  & \forall i,j\\
    &\quad\BoundSumDeltaConstraint\quad\sum_{i\in[n]} d_i \leq 
    {111\cdot \estprev}\\
     & \lambda_{ij}(t_{ij},\beta_{ij},\delta_{ij})\geq 0,\hat\lambda_{ij}(\beta_{ij},\delta_{ij})\geq 0,\pi_{ij}(t_{ij})\geq 0, w_{ij}(t_{ij})\geq 0, d_i\geq 0 & \forall i,j,t_{ij},\beta_{ij}\in \cV_{ij},\delta_{ij}
\end{align*}
\end{minipage}}
\caption{Efficiently Computable LP for XOS Valuations}~\label{fig:computable XOS}.

\end{figure}

}

\notshow{
\begin{theorem}\label{thm:XOS mrf}
Let $T=\sum_{i\in[n]}\sum_{j\in[m]}|\cT_{ij}|$ and $b$ be the bit complexity of the problem instance.
Consider the LP in Figure~\ref{fig:computable XOS},
where for each $i\in[n]$, polytope $\widehat{W}_i$ is defined in \Cref{cor:rfs for xos mult}. 
We condition on the event that each proxy polytope $\widehat{W}_i$ satisfies the properties in \Cref{cor:rfs for xos mult}. Then
\begin{enumerate}
\item For any $(\pi,w,\hat\pi,\hat w,\lambda,\hat\lambda, \bd)$  feasible solution of the LP in \Cref{fig:computable XOS},
there exists $\widetilde{\pi}\in [0,1]^{\sum_{i\in[n]}\sum_{j\in[m]}|\cT_{ij}|}$ such that $(\widetilde{\pi},w,\lambda,\hat\lambda, \bd)$ is a feasible solution of the LP in Figure~\ref{fig:XOSLP}.
\item Let $\opt$ be the value of the optimal solution of the LP in Figure~\ref{fig:XOSLP} and $\widehat{\opt}$ be the value of the optimal solution of the LP in Figure~\ref{fig:computable XOS}. Then $\widehat{\opt}\geq \frac{\opt}{64}$
\end{enumerate}
\end{theorem}
}

\begin{lemma}\label{lem:XOS-P to P'}
Suppose for every $i\in [n]$, $\widehat{W}_i$ satisfies the properties in \Cref{cor:rfs for xos mult}. Then
\begin{itemize}
\item For any feasible solution $(\pi,w,\hat\pi,\hat w,\lambda,\hat\lambda, \bd)$ to $(P')$,
there exists $\widetilde{\pi}\in [0,1]^{\sum_{i,j}|\cT_{ij}|}$ such that $(\widetilde{\pi},w,\lambda,\hat\lambda, \bd)$ is a feasible solution to $(P)$.
\item $\optlp\leq 64\cdot \optlp'$.
\end{itemize}
\end{lemma}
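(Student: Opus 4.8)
The plan is to prove the two bullets separately, with the first bullet doing essentially all the work and the second following as an easy corollary.

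\textbf{Constructing a feasible solution of $(P)$ from one of $(P')$.} Given a feasible solution $(\pi,w,\hat\pi,\hat w,\lambda,\hat\lambda,\bd)$ of $(P')$, I would first invoke Property 1 of \Cref{cor:rfs for xos mult}: since $(\hat\pi_i,\hat w_i)\in\widehat W_i$, there exists $(\pi_i',w_i')\in W_i$ such that for every $j,t_{ij}$, $\pi_{ij}'(t_{ij})/\hat\pi_{ij}(t_{ij})\in[1/4,3/2]$ and $w_{ij}'(t_{ij})/\hat w_{ij}(t_{ij})\in[1/4,5/4]$. Set $\widetilde\pi_i:=\pi_i'$. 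Now I need to check that $(\widetilde\pi,w,\lambda,\hat\lambda,\bd)$ satisfies every constraint of $(P)$. Constraint {\Wconstraint} ($\widetilde\pi_i,w_i)\in W_i$) is the issue: we have $(\pi_i',w_i')\in W_i$, but the $w_i$ in the $(P')$-solution is not $w_i'$. However, Constraint {\WconstraintNew} guarantees $w_i\le \tfrac14\hat w_i\le w_i'$ coordinate-wise (the last inequality because $w_{ij}'(t_{ij})\ge\tfrac14\hat w_{ij}(t_{ij})$). So $w_i\le w_i'$ and $\widetilde\pi_i=\pi_i'$. By \Cref{def:W_i-restate}, decreasing any $w_{ij}(t_{ij})$ while keeping $\pi_i$ fixed preserves membership in $W_i$ (condition 3 is an inequality), hence $(\widetilde\pi_i,w_i)\in W_i$. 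For Constraint {\PiConstraint} ($\sum_i\sum_{t_{ij}}\pi_{ij}(t_{ij})\le 1$): we have $\widetilde\pi_{ij}(t_{ij})=\pi_{ij}'(t_{ij})\le\tfrac32\hat\pi_{ij}(t_{ij})\le\pi_{ij}(t_{ij})$ by Property 1 and Constraint {\WconstraintNew}, so $\sum_i\sum_{t_{ij}}\widetilde\pi_{ij}(t_{ij})\le\sum_i\sum_{t_{ij}}\pi_{ij}(t_{ij})\le 1$. All remaining constraints of $(P)$ — {\LambdaMarginalConstraint}, {\CompareMarginalConstraint}, {\HatLambdaDistributionConstraint}, {\ReduceDemandConstaint}, {\MarginalToGlobalConstraint}, {\BoundMeanDeltaConstraint}, {\BoundSumDeltaConstraint} and the nonnegativity constraints — do not mention $\pi$ at all, only $w,\lambda,\hat\lambda,\bd$, which are carried over unchanged; so they hold verbatim. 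Since $(P)$ and $(P')$ have the same objective (which only involves $\lambda$ and the data), the objective value is preserved, which incidentally gives $\optlp\ge\optlp'$.

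\textbf{The bound $\optlp\le 64\cdot\optlp'$.} Here I would go in the other direction: take an optimal solution $(\widetilde\pi^*,w^*,\lambda^*,\hat\lambda^*,\bd^*)$ of $(P)$ with objective $\optlp$, and build a feasible solution of $(P')$ achieving objective at least $\optlp/64$. Since $(\widetilde\pi^*_i,w^*_i)\in W_i$, Property 2 of \Cref{cor:rfs for xos mult} gives $(\hat\pi_i,\hat w_i)\in\widehat W_i$ with $\hat\pi_{ij}(t_{ij})/\widetilde\pi^*_{ij}(t_{ij})\in[1/16,3/8]$ and $\hat w_{ij}(t_{ij})/w^*_{ij}(t_{ij})\in[1/16,5/16]$. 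I would then scale everything down by a factor of $4$: set $\pi_i:=\tfrac14\widetilde\pi^*_i$, $w_i:=\tfrac14 w^*_i$, $\lambda:=\tfrac14\lambda^*$, keep $\hat\lambda:=\hat\lambda^*$ and $\bd:=\bd^*$ unchanged, and use the $(\hat\pi_i,\hat w_i)$ just obtained. The objective scales by $1/4$ (it is linear in $\lambda$). I must verify Constraint {\WconstraintNew}: $\pi_i=\tfrac14\widetilde\pi^*_i\ge\tfrac14\cdot\tfrac{16}{3}\hat\pi_i = \tfrac43\hat\pi_i\ge\tfrac32\hat\pi_i$? That needs $\tfrac43\ge\tfrac32$, which is false — so the scaling factor needs to be chosen more carefully. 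Using $\pi_i:=\widetilde\pi^*_i/\alpha$ with $\alpha$ to be fixed: need $\widetilde\pi^*_i/\alpha\ge\tfrac32\hat\pi_i$, and since $\hat\pi_i\le\tfrac38\widetilde\pi^*_i$, it suffices that $1/\alpha\ge\tfrac32\cdot\tfrac38=\tfrac{9}{16}$, i.e. $\alpha\le\tfrac{16}{9}$; similarly $w_i=w^*_i/\alpha$ needs $\le\tfrac14\hat w_i$, and since $\hat w_i\ge\tfrac1{16}w^*_i$, i.e. $\tfrac14\hat w_i\ge\tfrac1{64}w^*_i$, need $1/\alpha\le\tfrac1{64}$? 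That would force $\alpha\ge 64$. But then the $\pi$ inequality is fine (smaller $1/\alpha$). So take $\alpha=64$, scale $\pi,w,\lambda$ by $1/64$, keep $\hat\lambda,\bd$: objective becomes $\optlp/64$, giving $\optlp'\ge\optlp/64$. I then re-check: Constraint {\Wconstraint}=\{{\WconstraintNew}\}: $(\hat\pi_i,\hat w_i)\in\widehat W_i$ ✓; $\pi_i=\widetilde\pi^*_i/64\ge\tfrac32\hat\pi_i$ since $\hat\pi_i\le\tfrac38\widetilde\pi^*_i\le\tfrac{64}{3\cdot2/3}$... concretely $\tfrac32\hat\pi_i\le\tfrac32\cdot\tfrac38\widetilde\pi^*_i=\tfrac{9}{16}\widetilde\pi^*_i\le\tfrac1{64}\widetilde\pi^*_i$? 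No — $\tfrac9{16}>\tfrac1{64}$. I have the direction of the $\pi$ constraint backwards in my head; let me note the correct reasoning: $(P')$ requires $\pi_i\ge\tfrac32\hat\pi_i$, i.e. $\pi$ large relative to $\hat\pi$. Scaling $\pi$ down by $64$ makes it smaller, which is bad for this. So the right move is \emph{not} to scale $\pi$ down at all for this constraint — but $\pi$ also appears in {\PiConstraint} ($\sum\le1$), where it was already $\le1$ in the $(P)$ solution. The clean fix: set $\pi_i:=\widetilde\pi^*_i$ (unscaled), $w_i:=w^*_i/64$, $\lambda:=\lambda^*/64$, $\hat\lambda:=\hat\lambda^*$, $\bd:=\bd^*$, and $\hat\pi_i,\hat w_i$ as from Property 2. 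Then {\WconstraintNew}: $\pi_i=\widetilde\pi^*_i\ge\tfrac{16}{3}\hat\pi_i\ge\tfrac32\hat\pi_i$ ✓ (using $\hat\pi_i\le\tfrac38\widetilde\pi^*_i$); $w_i=w^*_i/64\le\tfrac14\hat w_i$ ✓ (using $\hat w_i\ge\tfrac1{16}w^*_i$, so $\tfrac14\hat w_i\ge\tfrac1{64}w^*_i=w_i$). {\PiConstraint}: $\sum\pi_i=\sum\widetilde\pi^*_i\le1$ ✓. {\LambdaMarginalConstraint}: both $\lambda$ and $w$ scaled by $1/64$, so $f_{ij}(t_{ij})\sum\lambda_{ij}=\tfrac1{64}f_{ij}(t_{ij})\sum\lambda^*_{ij}=\tfrac1{64}w^*_{ij}(t_{ij})=w_{ij}(t_{ij})$ ✓. {\CompareMarginalConstraint}: $\lambda_{ij}=\tfrac1{64}\lambda^*_{ij}\le\tfrac1{64}\hat\lambda^*_{ij}\le\hat\lambda^*_{ij}=\hat\lambda_{ij}$ ✓. {\HatLambdaDistributionConstraint}, {\ReduceDemandConstaint}, {\BoundMeanDeltaConstraint}, {\BoundSumDeltaConstraint}: only involve $\hat\lambda,\bd$, unchanged ✓. {\MarginalToGlobalConstraint}: LHS scaled by $1/64$ (only $\lambda$ terms), RHS unchanged, and it held before, so still holds ✓. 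Objective $=\tfrac1{64}\sum f_{ij}t_{ij}\sum\lambda^*_{ij}[\cdots]=\optlp/64$. Hence $\optlp'\ge\optlp/64$, i.e. $\optlp\le64\cdot\optlp'$.

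\textbf{Main obstacle.} The only genuinely delicate point is getting the scaling factors right so that \emph{both} inequalities of Constraint {\WconstraintNew} ($\pi_i\ge\tfrac32\hat\pi_i$ and $w_i\le\tfrac14\hat w_i$) are satisfied simultaneously with the specific multiplicative windows $[1/16,3/8]$ and $[1/16,5/16]$ from \Cref{cor:rfs for xos mult}, while not violating {\PiConstraint}; this is why $\pi$ is left unscaled and only $w,\lambda$ are divided by $64$. Everything else — checking that each of the seven remaining LP constraints and the nonnegativity constraints is preserved under these substitutions, and that the objective transforms correctly — is routine bookkeeping, since those constraints either don't mention the changed variables or change only by the uniform factor $1/64$ on a side of an inequality that was already satisfied. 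In the writeup I would present the first bullet as a direct verification constraint-by-constraint, and the second bullet as: take the optimum of $(P)$, apply Property 2 of \Cref{cor:rfs for xos mult}, perform the $(\pi\text{ unscaled},\ w,\lambda\ \text{scaled by }1/64)$ substitution, verify feasibility for $(P')$, and read off the objective.
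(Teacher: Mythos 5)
Your proof is correct and follows essentially the same route as the paper: Property~1 of \Cref{cor:rfs for xos mult} together with the fact that $W_i$ is closed under coordinate-wise decrease of $w$ gives the first bullet, and Property~2 combined with a scale-by-$1/64$ of $(w,\lambda)$ gives the second. The only cosmetic difference is that the paper sets $\pi_i := \tfrac32\hat\pi_i$ in the second bullet whereas you keep $\pi_i := \widetilde\pi^*_i$ unscaled (both satisfy Constraint~\WconstraintNew{} and the supply constraint); also note a small arithmetic slip in your verification of \WconstraintNew: from $\hat\pi_{ij}\le\tfrac38\widetilde\pi^*_{ij}$ one gets $\widetilde\pi^*_{ij}\ge\tfrac83\hat\pi_{ij}$, not $\tfrac{16}{3}\hat\pi_{ij}$, though since $\tfrac83>\tfrac32$ the constraint is still satisfied.
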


\begin{proof}







We prove the first part of the statement. Let $(\pi,w,\hat\pi,\hat w,\lambda,\hat\lambda, \bd)$ be any feasible solution to $(P')$. Then for every $i\in[n]$, $(\widehat{\pi}_i,\widehat{w}_i)\in \widehat{W}_i$. By Property 1 of \Cref{cor:rfs for xos mult}, there exists a $(\widetilde{\pi}_i,\widetilde{w}_i)\in W_i$ such that for each $j\in[m]$ and $t_{ij}\in \cT_{ij}$,
$$\widetilde{\pi}_{ij}(t_{ij}) \leq \frac{3}{2} \widehat{\pi}_{ij}(t_{ij})\leq{\pi}_{ij}(t_{ij}) \quad \widetilde{w}_{ij}(t_{ij}) \geq \frac{1}{4} \widehat{w}_{ij}(t_{ij}) \geq{w}_{ij}(t_{ij}) $$

Let $\widetilde{\pi}=\{\widetilde{\pi}_i\}_{i\in [n]}$. We are going to show that $(\widetilde\pi,w,\lambda,\hat\lambda, \bd)$ is a feasible solution to $(P)$ by verifying all constraints. 
For Constraint~$\Wconstraint$, since $(\widetilde{\pi}_i,\widetilde{w}_i)\in W_i$ and $w_{ij}(t_{ij})\leq \widetilde{w}_{ij}(t_{ij}),\forall j,t_{ij}$, thus by \Cref{def:W_i} $(\widetilde{\pi}_i,w_i)\in W_i$.



For Constraint {\PiConstraint}, since $(\pi,w,\hat\pi,\hat w,\lambda,\hat\lambda, \bd)$ is a feasible solution to $(P')$, for every $j\in [m]$,  $$\sum_i\sum_{t_{ij}\in \cT_{ij}}\widetilde{\pi}_{ij}(t_{ij})\leq \sum_i\sum_{t_{ij}\in \cT_{ij}}\pi_{ij}(t_{ij})\leq 1.$$ Furthermore, Constraints~$\LambdaMarginalConstraint-\BoundSumDeltaConstraint$ are clearly satisfied since $(\pi,w,\hat\pi,\hat w,\lambda,\hat\lambda, \bd)$ is a feasible solution to $(P')$.
Thus $(\widetilde{\pi},w,\lambda,\hat\lambda, \bd)$ is a feasible solution to $(P)$.

Now we prove the second part of the statement.
Let $(\pi^*,w^*,\lambda^*,\hat\lambda^*, \bd^*)$ be any optimal feasible solution to $(P)$. 
For every buyer $i$, $(\pi_i^*,w_i^*)\in W_i$. By Property 2 of \Cref{cor:rfs for xos mult}, there exists a $(\widehat{\pi}_i,\widehat{w}_i)\in \widehat{W}_i$ such that for every $j,t_{ij}$, 
$$\widehat{\pi}_{ij}(t_{ij})\leq \frac{3}{8}\pi_{ij}^*(t_{ij}),\quad \widehat{w}_{ij}(t_{ij})\geq \frac{1}{16}w_{ij}^*(t_{ij})$$
We are going to show that $\left(\pi=\frac{3}{2}\widehat{\pi},w=\frac{1}{64}w^*,\hat\pi,\hat w,\lambda=\frac{1}{64}\lambda^*,\hat\lambda=\hat\lambda^*,\bd= \bd^*\right)$ is a feasible solution to $(P')$, which implies that $\optlp'\geq \optlp/64$. Firstly, for each $i\in[n]$, $(\widehat{\pi}_i,\widehat{w}_i)\in \widehat{W}_i$ and 
$$
\pi_i = \frac{3}{2}\widehat{\pi}_i, \qquad w_i = \frac{1}{64}w_i^* \leq \frac{1}{4}\widehat{w}_i 
$$
Thus Constraint {\WconstraintNew} is satisfied. For Constraint~$\PiConstraint$, since $(\pi^*,w^*,\lambda^*,\hat\lambda^*, \bd^*)$ is a feasible solution to $(P)$, we have that for every $j$, 
$$
\sum_{i\in[n]} \sum_{t_{ij}\in \cT_{ij}}\pi_{ij}(t_{ij})=\sum_{i\in[n]} \sum_{t_{ij}\in \cT_{ij}}\frac{3}{2}\hat\pi_{ij}(t_{ij}) \leq \sum_{i\in[n]} \sum_{t_{ij}\in \cT_{ij}}\frac{9}{16}\pi^*_{ij}(t_{ij}) \leq \frac{9}{16}<1
$$

One can easily verify that when $(\pi^*,w^*,\lambda^*,\hat\lambda^*, \bd^*)$ is a feasible solution to $(P)$, 
$\left(\frac{1}{64}\pi^*,\frac{1}{64}w^*,\frac{1}{64}\lambda^*,\hat\lambda^*, \bd^*\right)$ is also a feasible solution to $(P)$, 
which implies that $\left(\pi,w,\hat\pi,\hat w,\lambda,\hat\lambda,\bd\right)$ satisfies Constraints~$\LambdaMarginalConstraint-\BoundSumDeltaConstraint$. Thus $\left(\pi,w,\hat\pi,\hat w,\lambda,\hat\lambda,\bd \right)$ is a feasible solution to $(P')$. The objective of the solution is a $\frac{1}{64}$-fraction of the objective of $(\pi^*,w^*,\lambda^*,\hat\lambda^*, \bd^*)$ since $\lambda=\frac{1}{64}\lambda^*$, which concludes the proof.  

\end{proof}

Now we are ready to give the proof of \Cref{thm:main XOS}.

\begin{prevproof}{Theorem}{thm:main XOS}

{We consider a fixed $\delta \in(0,1)$.
For each $i\in[n]$,
let $\widehat{W}_i$ be the proxy polytope for the single-bidder marginal reduced form polytope $W_i$ that is constructed in \Cref{cor:rfs for xos mult} with parameter $\delta'=\frac{\delta}{n}$.
Let $\cE_1$ be the event that the RPP mechanism  computed in \Cref{thm:chms10} has revenue $\estprev=\Omega(\prev)$, and let $\cE_2$  be the event that for each $i\in[n]$,
the proxy polytope $\widehat{W}_i$, satisfies the properties in \Cref{cor:rfs for xos mult}.
By the union bound combined with \Cref{thm:chms10} and \Cref{cor:rfs for xos mult}, with probability at least $1-\delta-\frac{2}{nm}$, both events happen.
}


We condition on the event that both $\cE_1$ and $\cE_2$ happens. By \Cref{thm:ellipsoid} and Property 3 and 4 of \Cref{cor:rfs for xos mult}, there exists an algorithm that solves the LP $(P')$ in time $\poly(n,m,T,b,\log (1/\delta))$, given access to the adjustable demand oracle and value oracle for all buyers' valuations. Let $(\pi^*,w^*,\hat\pi^*,\hat w^*,\lambda^*,\hat\lambda^*, \bd^*)$ be an optimal solution to $(P')$. 
By \Cref{thm:chms10}, \Cref{lem:XOS-P to P'} and \Cref{lem:bound rev by opt-XOS}, $\opt \leq 28 \prev + 4 \opt_{LP}\leq \frac{nm}{nm-1}189\cdot \estprev + 256\cdot\opt_{LP}'$.

\Cref{lem:XOS-P to P'} also guarantees the existence of $\widetilde{\pi}\in [0,1]^{\sum_{i,j}|\cT_{ij}|}$ such that $(\widetilde{\pi},w^*,\lambda^*,\hat\lambda^*, \bd^*)$ is a feasible solution to $(P)$. Although we do not know the value $\widetilde{\pi}$, it turns out sufficient to know $\lambda^*$ to compute a simple and approximately optimal mechanism. We compute the prices $\{Q^*_j\}_{j\in[m]}$ using $\lambda^*$ according to \Cref{def:Q_j-XOS}. In particular,
$$Q^*_j = \frac{1}{2}\cdot\sum_{i\in[n]}\sum_{t_{ij}\in \cT_{ij}}  f_{ij}(t_{ij})\cdot V_{ij}(t_{ij})\cdot
    \sum_{\substack{\beta_{ij}\in \cV_{ij}\\\delta_{ij} \in \Delta}}\lambda^*_{ij}(t_{ij},\beta_{ij},\delta_{ij})\cdot\ind[V_{ij}(t_{ij})\leq \beta_{ij}+\delta_{ij}],
$$

and $$2\sum_{j\in[m]} Q^*_j=\opt_{LP}'.$$ According to \Cref{thm:bounding-lp-simple-mech-XOS}, we can construct a two-part tariff mechanism $\Mtpt$ with prices $\{Q^*_j\}_{j\in[m]}$ and a rationed posted price mechanism $\Mpp$ (computed in \Cref{thm:chms10}) in time $\poly(n,m,T)$, such that $$\opt_{LP}'=2\sum_{j\in[m]} Q^*_j= O(\rev(\Mtpt))+O(\estprev).$$

To sum up, we can compute in time $\poly(n,m,T,b,\log (1/\delta))$ a two-part tariff mechanism $\Mtpt$ and a rationed posted price mechanism $\Mpp$, such that $\opt\leq c_1\cdot\rev(\Mpp) + c_2\cdot \rev(\Mtpt)$ for some absolute constants $c_1,c_2>0$ with probability at least $1-\delta-\frac{2}{nm}$.
\end{prevproof}

\section{Missing Details from \Cref{sec:sample access}}\label{appx:sample access}

\begin{definition}\label{def:kolmogorov}
The Kolmogorov distance between two distributions $\cD$ and $\widehat{\cD}$ supported on $\mathbb{R}$ is defined as
$$
d_K(\cD,\widehat{\cD}) = \sup_{z\in \mathbb{R}}\left| \Pr_{t\sim \cD}[t\leq z] -\Pr_{\widehat{t}\sim \widehat{\cD}}[\widehat{t}\leq z]\right|
$$
\end{definition}

We need the following robustness result from Cai and Daskalakis~\cite{CaiD17}.
\begin{theorem}[Theorem~3 in \cite{CaiD17}]\label{thm:robust mechanism}
Suppose all bidders' valuations are constrained additive. Let $\cM$ be a Sequential Posted Price with Entry Fee Mechanism (as defined in Mechanism~\ref{def:general SPEM}~\footnote{Indeed the result holds for any buyers' order, we choose the lexicographical order to keep the notation light.}) whose prices are $\{p_{ij}\}_{i\in[n], j\in[m]}$ and its entry fee function for the $i$-th bidder is $\xi_i:2^{[m]}\rightarrow\mathbb{R}_+$.
If for each $i\in[n]$ and $j\in[m]$ we have $d_K(\cD_{ij},\widehat{\cD}_{ij})\leq \eps$, and $\cD_{i,j}$ and $\widehat{\cD}_{i,j}$ are both supported on  $[0,1]$ (that is when each bidder's value for a single item is at most $1$), then
$$
|\rev(\cM,\cD) - \rev(\cM,\widehat{\cD})| \leq 4 n m^2 \eps,
$$ where $\rev(\cM,\cD)$ and $\rev(\cM,\widehat{\cD})$ are the revenues of $\cM$ under $\cD=\bigtimes_{i,j} \cD_{ij}$ and  $\widehat{\cD}=\bigtimes_{i,j} \widehat{\cD}_{ij}$  respectively.
\end{theorem}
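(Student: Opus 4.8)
The plan is a hybrid argument over the $nm$ coordinates of the type profile: fix a total order on the pairs $(i,j)$ and change the marginal $\cD_{ij}$ to $\widehat\cD_{ij}$ one coordinate at a time, showing that each such swap changes $\rev(\cM,\cdot)$ by $O(m\eps)$; summing the $nm$ steps yields $O(nm^2\eps)$. Two easy reductions first. Since every value lies in $[0,1]$, a bidder never buys an item priced above $1$, so I may assume $p_{ij}\le 1$; more importantly, a bidder accepts her entry fee only when her surplus $\max_{R\in\cF_i}\sum_{j\in R}(t_{ij}-p_{ij})$ is at least that fee, so if she accepts and buys bundle $R_i$ she pays at most $\sum_{j\in R_i}(t_{ij}-p_{ij})+\sum_{j\in R_i}p_{ij}=\sum_{j\in R_i}t_{ij}\le m$. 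Hence every bidder's payment is at most $m$ and $\rev(\cM,\cdot)\le nm$. The only other ingredient is the \emph{monotone coupling}: for a single coordinate with $d_K(\cD_{ij},\widehat\cD_{ij})\le\eps$, drawing $u\sim U[0,1]$ and setting $t_{ij}=F_{ij}^{-1}(u)$, $\widehat t_{ij}=\widehat F_{ij}^{-1}(u)$ gives, for every fixed threshold $\theta$, $\Pr\!\big[\min(t_{ij},\widehat t_{ij})<\theta\le\max(t_{ij},\widehat t_{ij})\big]\le 2\eps$; this is the only place the Kolmogorov hypothesis enters.

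The structural heart is the following claim: freeze every coordinate of the profile except $t_{kj}$. Then the set $S_k$ of items available to bidder $k$ is frozen, and since $v_k$ is constrained-additive, her favorite feasible subset of $S_k$ under the prices, her accept/reject decision, her payment, and hence the bundle she removes, are governed by a single data-dependent threshold on $t_{kj}$: below it her favorite bundle is a fixed set $R_a\not\ni j$ (or she rejects), above it a fixed set $R_b\ni j$. Consequently the total realized revenue, viewed as a function of $t_{kj}$ with all other coordinates frozen, is a step function with at most one jump; at that jump the realized revenue changes by at most bidder $k$'s payment ($\le m$) plus the change in the revenue collected from bidders $k+1,\dots,n$, and the latter is entirely determined by $S_{k+1}$ flipping between two fixed sets $T_1,T_2\subseteq[m]$.

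The one delicate point — and \textbf{the main obstacle} — is to bound $\big|\mathrm{Rev}_{>k}(T_1)-\mathrm{Rev}_{>k}(T_2)\big|$, the change in downstream revenue with all later types frozen, by $O(m)$ rather than by the trivial $O(nm)$; one must show the divergence between the two available‑set trajectories does not amplify down the sequence. This is exactly where the constrained-additive structure is essential: the surplus function is $1$-Lipschitz in the available set, every item that has already been sold contributes $0$ to all later revenue, and the entry fees arising in the mechanisms we care about are themselves ``surplus of a sampled type'' and hence $1$-Lipschitz in the available set — so a symmetric-difference-$d$ change in the available set shifts each downstream bidder's paid amount by $O(d)$ while only $O(m)$ items are ever affected. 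Granting this, the single jump has size at most $m+m=2m$.

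Putting it together: fixing a coordinate $(k,j)$, conditioning on all other coordinates, and applying the monotone coupling, the swap $\cD_{kj}\to\widehat\cD_{kj}$ changes the expected revenue by at most (straddle probability)$\times$(jump size) $\le 2\eps\cdot 2m=4m\eps$; the hybrid over the $nm$ coordinates then gives $|\rev(\cM,\cD)-\rev(\cM,\widehat\cD)|\le 4nm^2\eps$. Everything except the non-amplification step is routine bookkeeping, so that step — bounding the cascade of one critical bidder's changed purchase on all later bidders' entry fees and purchases — is where I would concentrate the effort, and it is also the only place where the assumption that valuations are constrained-additive (rather than arbitrary) is used in an essential way.
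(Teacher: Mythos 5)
You should first note that the paper does not prove this statement at all: Theorem~\ref{thm:robust mechanism} is imported verbatim from Cai and Daskalakis~\cite{CaiD17}, so there is no in-paper proof to compare against, and your attempt has to stand on its own. Its skeleton is sound: the hybrid over the $nm$ coordinates, the monotone (inverse-CDF) coupling giving straddle probability at most $2\eps$ per coordinate, and the observation that with all other coordinates frozen the outcome for bidder $k$ (accept/reject and purchased bundle, hence the set passed downstream) is governed by a single threshold in $t_{kj}$, are all correct for constrained-additive bidders, and the bookkeeping $nm\cdot 2\eps\cdot 2m\le 4nm^2\eps$ is fine.

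The genuine gap is exactly the step you flag, and the repair you sketch for it does not work under the theorem's hypotheses. The statement allows an \emph{arbitrary} deterministic entry-fee function $\xi_i:2^{[m]}\rightarrow\mathbb{R}_+$ (Mechanism~\ref{def:general SPEM}); nothing forces $\xi_i$ to be ``the surplus of a sampled type'' or to be $1$-Lipschitz in the available set. With adversarial $\xi_i$, a one-item change in the available set can flip a downstream bidder's acceptance decision outright (e.g.\ $\xi_i(T_1)=0$, $\xi_i(T_2)$ huge), so the per-bidder Lipschitz/non-amplification argument you propose has no footing, and without it your jump bound degenerates to the trivial $O(nm)$, giving only $O(n^2m^2\eps)$. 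Fortunately the needed bound is much easier than you fear and needs no cascade analysis at all: a bidder accepts only when the entry fee is at most her surplus, so her total payment is at most $\sum_{j\in R^*_i}t_{ij}\le |R^*_i|$ where $R^*_i$ is the bundle she actually receives; the received bundles are pairwise disjoint subsets of $[m]$ and all values lie in $[0,1]$, so the revenue collected from \emph{any} set of bidders, under \emph{any} available set and \emph{any} entry fees, is at most $m$ (in particular your global bound $\rev\le nm$ is loose by a factor $n$). Hence the downstream difference $|\mathrm{Rev}_{>k}(T_1)-\mathrm{Rev}_{>k}(T_2)|\le m$ and the single jump is at most $2m$ for trivial reasons, and your hybrid argument then closes to $4nm^2\eps$ as claimed. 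So the proposal is salvageable, but as written its central step is unjustified and the justification offered would be false for the mechanism class in the statement.
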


\begin{algorithm}[H]
\floatname{algorithm}{Mechanism}
\begin{algorithmic}[1]
\setcounter{ALG@line}{-1}
\State Before the mechanism starts, the seller decides on a collection of $\{p_{ij}\}_{i\in[n],j\in[m]}$ and a collection of  entry fee functions $\{\xi_i(\cdot)\}_{i\in [n]}$, where $\xi_i:2^{[m]}\rightarrow\mathbb{R}_+$ is buyer $i$'s entry fee.  
\State Bidders arrive sequentially in the lexicographical order.
\State When buyer $i$ arrive, the seller shows her the set of available items $S\subseteq [m]$, as well as their prices $\{p_{ij}\}_{j\in S}$ and asks buyer $i$ to pay an \emph{entry fee} of $\xi_i(S)$. Note that $S$ is the set of items that are not purchased by the first $i-1$ buyers.
\If{Bidder $i$ pays the entry fee ${\delta_i}(S)$}
        \State $i$ receives her favorite bundle $S_i^{*}$ and pays $\sum_{j\in S_i^{*}}p_{ij}$.
        \State $S\gets S\backslash S_i^{*}$.
    \Else
        \State $i$ gets nothing and pays $0$.
    \EndIf
\end{algorithmic}
\caption{{\sf \quad Sequential Posted Price with Entry Fee Mechanism (SPEM)}}\label{def:general SPEM}
\end{algorithm}

\begin{lemma}\label{lem:revenue close kolmogorov}
Suppose all bidders' valuations are constrained additive. If for each $i\in[n]$ and $j\in[m]$ we have $d_K(\cD_{ij},\widehat{\cD}_{ij})\leq \eps$, and $\cD_{i,j}$ and $\widehat{\cD}_{i,j}$ are both supported on  $[0,1]$ (that is when each bidder's value for a single item is at most $1$), then 
$$ c_1\cdot \opt(\widehat{\cD})- O(nm^2\eps)\leq \opt(\cD)\leq \frac{\opt(\widehat{\cD})}{c_2}+ O(nm^2\eps)$$ for some absolute constant $c_1, c_2>0$, where 
$\opt({\cD})$ (or $\opt(\widehat{\cD})$) is the optimal revenue for distribution $\cD=\bigtimes_{i,j} \cD_{ij}$ (or $\widehat{\cD}=\bigtimes_{i,j} \widehat{\cD}_{ij}$).
\end{lemma}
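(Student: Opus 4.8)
The plan is to prove \Cref{lem:revenue close kolmogorov} by combining the robustness result of Cai--Daskalakis (\Cref{thm:robust mechanism}) with the structural approximation results for simple mechanisms (\Cref{lem:caiz17-constrained-additive}, \Cref{thm:chms10}, and \Cref{thm:bounding-lp-simple-mech}), applied \emph{in both directions} between the true distribution $\cD$ and the empirical distribution $\widehat{\cD}$. The key observation is that $\opt$ itself is not a mechanism and thus \Cref{thm:robust mechanism} does not apply to it directly; however, the optimal revenue is sandwiched (up to constants) by revenues of \emph{simple} mechanisms — rationed posted price mechanisms and two-part tariff mechanisms — and \emph{those} are special cases of Sequential Posted Price with Entry Fee Mechanisms (Mechanism~\ref{def:general SPEM}), so \Cref{thm:robust mechanism} does apply to them. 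A RPP is an SPEM with entry fee $\xi_i(S)\equiv 0$ and the rationing enforced by the pricing; a TPT as in Mechanism~\ref{def:constructed-SPEM} is an SPEM with $p_{ij}=Q_j$ for all $i$ and the stated deterministic (in $S$) entry fee. One subtlety: the entry fee in Mechanism~\ref{def:constructed-SPEM} is defined via a \emph{fresh sample} $t_i'\sim D_i$, but by \Cref{obs:restricted to general tpt}-type reasoning (sketched in the commented-out material) the realized mechanism, for any fixed realization of the $t_i'$'s, is an SPEM with a deterministic entry fee function, so \Cref{thm:robust mechanism} applies to it and then we take expectation over the sampled $t_i'$'s.

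For the direction $\opt(\cD)\le \opt(\widehat\cD)/c_2 + O(nm^2\eps)$, I would proceed as follows. First, by \Cref{lem:caiz17-constrained-additive} (Properties 4 and 5) applied to the revenue-optimal mechanism for $\cD$, there is a RPP $\Mpp^{\cD}$ and a TPT $\cM_1^{\cD}$ with $\opt(\cD)=\rev(\cM^*,\cD)\le 28\prev(\cD)+4\cdot\core\le 28\prev(\cD)+256\prev(\cD)+32\rev(\cM_1^{\cD},\cD)$, and $\prev(\cD)\le 6.75\cdot\rev(\Mpp^{\cD},\cD)/(1-1/(nm))$ using that $\prev(\cD)$ is itself the optimal revenue over RPP mechanisms (in fact here one can use the \emph{optimal} RPP directly rather than the approximation of \Cref{thm:chms10}). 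So $\opt(\cD)\le c_1'\rev(\Mpp^{\cD},\cD)+c_2'\rev(\cM_1^{\cD},\cD)$ for absolute constants. Now $\Mpp^{\cD}$ and $\cM_1^{\cD}$ are SPEMs, so by \Cref{thm:robust mechanism}, $\rev(\Mpp^{\cD},\cD)\le \rev(\Mpp^{\cD},\widehat\cD)+4nm^2\eps\le \opt(\widehat\cD)+4nm^2\eps$ and similarly for $\cM_1^{\cD}$ (for the TPT, average \Cref{thm:robust mechanism} over the fresh samples $t_i'$). Combining gives the claimed inequality with $c_2 = 1/(c_1'+c_2')$ and the additive error $O(nm^2\eps)$. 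The reverse direction is symmetric: apply the same argument with the roles of $\cD$ and $\widehat\cD$ swapped, obtaining $\opt(\widehat\cD)\le c_1'\rev(\Mpp^{\widehat\cD},\widehat\cD)+c_2'\rev(\cM_1^{\widehat\cD},\widehat\cD)\le c_1'\rev(\Mpp^{\widehat\cD},\cD)+c_2'\rev(\cM_1^{\widehat\cD},\cD)+O(nm^2\eps)\le (c_1'+c_2')\opt(\cD)+O(nm^2\eps)$, which rearranges to $c_1\opt(\widehat\cD)-O(nm^2\eps)\le\opt(\cD)$ with $c_1=1/(c_1'+c_2')$.

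The main obstacle — really the only place care is needed — is the bookkeeping to make sure every mechanism to which we apply \Cref{thm:robust mechanism} is genuinely in the SPEM form of Mechanism~\ref{def:general SPEM} with an entry fee that is a \emph{deterministic function of the remaining item set $S$}, and that the optimality/approximation inequalities we invoke ($\opt$ vs.\ $\prev$ vs.\ $\rev$ of a TPT) are applied consistently to the \emph{same} distribution before we invoke robustness. For the TPT of Mechanism~\ref{def:constructed-SPEM}, the entry fee depends on a random sample $t_i'$; I would fix this by conditioning on the realization of all $t_i'$'s — then the entry fee $\xi_i(S)=\max_{S'\subseteq S}(v_i(t_i',S')-\sum_{j\in S'}Q_j)$ is a deterministic function of $S$, so for each realization \Cref{thm:robust mechanism} applies with the same prices $\{Q_j\}$, giving a bound uniform in the realization, which then survives taking expectations. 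One should also double check that the prices $Q_j$ are the \emph{same} whether computed w.r.t.\ $\cD$ or $\widehat\cD$ is \emph{not} needed here — we only move a \emph{fixed} mechanism between distributions — so no issue arises there. Finally, all constants collapse into $c_1,c_2$ absolute and the error terms are all of the form $C\cdot nm^2\eps$, so writing $O(nm^2\eps)$ is justified.
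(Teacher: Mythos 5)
Your proposal follows essentially the same route as the paper's proof: sandwich $\opt$ between (constant multiples of) the revenues of a RPP and a TPT via \Cref{lem:caiz17-constrained-additive}, transfer each \emph{fixed} simple mechanism between $\cD$ and $\widehat\cD$ using \Cref{thm:robust mechanism} (paying $4nm^2\eps$ each time), and run the argument symmetrically in both directions; the paper merely phrases this through the \emph{optimal} RPP and TPT of each distribution rather than the specific Cai--Zhao mechanisms, which is the same bookkeeping. Your treatment of the random entry fee in $\Mtpt$ (condition on the realized samples $t_i'$, apply the robustness bound uniformly, then take expectations) is a correct way to handle a detail the paper's proof leaves implicit.

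One step in your justification is stated incorrectly, though it is easily repaired. You claim a RPP ``is an SPEM with entry fee $\xi_i(S)\equiv 0$ and the rationing enforced by the pricing.'' That is not true for general constrained-additive bidders: in Mechanism~\ref{def:general SPEM} a bidder who enters takes her \emph{entire} favorite bundle at the item prices, so an additive bidder, say, may buy several items, whereas the RPP restricts her to at most one; the rationing is a constraint of the mechanism, not a consequence of the prices. The fix is the one the paper uses: reinterpret each bidder as unit-demand (keeping the same marginals $D_{ij}$), under which the zero-entry-fee SPEM's purchase behavior and revenue coincide with those of the RPP, and \Cref{thm:robust mechanism} applies because its hypothesis only concerns the per-item marginals $D_{ij}$ versus $\widehat D_{ij}$. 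With that substitution your argument goes through and yields exactly the claimed bounds.
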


\begin{proof}

Let $M_1$, $M_2$ be the optimal RPP and TPT for $\cD$, and we denote their revenue as $\rev(M_1)$, $\rev(M_2)$ respectively. Let $M_3$, $M_4$ be the optimal RPP and TPT for $\widehat{\cD}$, and we denote their revenue as $\rev(M_3)$, $\rev(M_4)$ respectively. Since any TPT is a SPEM, by Theorem~\ref{thm:robust mechanism}, we know that $|\rev(M_2)-\rev(M_4)|\leq 4nm^2\varepsilon$, as $$\rev(M_4)\geq \rev(M_2,\widehat{\cD})\geq \rev(M_2,\cD)-4nm^2\varepsilon=\rev(M_2)-4nm^2\varepsilon,$$ and $$\rev(M_2)\geq \rev(M_4,{\cD})\geq \rev(M_4,\widehat{\cD})-4nm^2\varepsilon=\rev(M_4)-4nm^2\varepsilon.$$
Similarly, since any RPP is also a SPEM if we treat the buyers' valuation as Unit-Demand, we have $|\rev(M_1)-\rev(M_3)|\leq 4nm^2\varepsilon$. By \Cref{lem:caiz17-constrained-additive}, $\max\{\rev(M_1),\rev(M_2)\}\geq \Omega(\opt(\cD))$. Hence, $\opt(\widehat{\cD})\geq \max\{\rev(M_3),\rev(M_4)\}\geq \max\{\rev(M_1),\rev(M_2)\}-4nm^2\varepsilon\geq  \Omega(\opt(\cD))-4nm^2\varepsilon$. Similarly, $\opt({\cD})\geq \Omega(\opt(\widehat{\cD}))-4nm^2\varepsilon$.

\end{proof}

\begin{prevproof}{Theorem}{thm:sample access}
For each $D_{ij}$, we first take $O\left(\frac{\log(nm/\delta)}{\varepsilon^2}\right)$ samples. Let $\widehat{D}_{ij}$ be the uniform distribution over the samples. By the union bound and the DKW inequality~\cite{DvoretzkyKW56}, $d_K(D_{ij},\widehat{D}_{ij})\leq \eps$ for all $i\in[n]$ and $j\in[m]$ with probability at least $1-\delta$. Now we run the algorithm from \Cref{thm:main XOS-main body} on $\bigtimes_{i,j}\widehat{D}_{ij}$ and let $\cM$ be the computed mechanism. Note that $\rev(\cM,\widehat{D})$, the revenue of $\cM$ under $\widehat{D}=\bigtimes_{i,j} \widehat{D}_{ij}$, is $\Omega(\opt(\widehat{D}))$ -- the optimal revenue for distribution $\widehat{D}$. By Theorem~\ref{thm:robust mechanism}, $\rev(\cM,{D})\geq \rev(\cM,\widehat{D})-O(nm^2\eps)$. By Lemma~\ref{lem:revenue close kolmogorov}, $\opt(\widehat{D})\geq\Omega(\opt({D}))-O(nm^2\eps)$. Chaining the ineuqalities together, we have that $\rev(\cM,{D})\geq \Omega(\opt({D}))-O(nm^2\eps)$.
\end{prevproof}
\section{Counterexample for Adjustable Demand Oracle}\label{sec:example_adjustable_oracle}


For XOS valuations, our algorithm for constructing the simple mechanism requires access to a special adjustable demand oracle $\adem_i(\cdot,\cdot,\cdot)$. Readers may wonder if this enhanced oracle (rather than a demand oracle) is necessary to prove our result. In this section we show that (even an approximation of) $\adem_i$ can not be implemented using polynomial number of queries from the value oracle, demand oracle and a classic XOS oracle. All the oracles are defined as follows. Throughout this section, we only consider a single buyer and thus drop the subscript $i$. 
Recall that the XOS valuation $v(\cdot)$ satisfies that $v(S)=\max_{k\in [K]}\left\{\sum_{j\in S}\alpha_j^{(k)}\right\}$ for every set $S$, where $\{\alpha_j^{(k)}\}_{j\in [m]}$ is the $k$-th additive function.
\begin{itemize}
    \item Demand Oracle ($\dem$): takes a price vector $p\in \mathbb{R}^m$ as input, and outputs\\ $S^*\in \argmax_{S\subseteq [m]} \left(v(S) - \sum_{j\in S}p_j\right)$.
    \item XOS Oracle (\xos): takes a set $S\subseteq [m]$ as input, and outputs the $k^*$-th additive function $\{\alpha_j^{(k^*)}\}_{j\in [m]}$, where $k^*\in \argmax_{k\in [K]}\left\{\sum_{j\in S}\alpha_j^{(k)}\right\}$.
\item Value Oracle: takes a set $S\subseteq [m]$ as input, and outputs $v(S)$. We notice that a value oracle can be easily simulated with an XOS oracle. Thus we focus on XOS oracles for the rest of this section.
    \item Adjustable Demand Oracle (\adem): takes a coefficient vector $b\in \mathbb{R}^m$ and a price vector $p\in \mathbb{R}^m$ as inputs, and outputs $(S^*,\{\alpha_j^{(k^*)}\}_{j\in [m]})$ where $(S^*,k^*) \in \argmax_{S\subseteq [m],k\in[K]} \left\{\sum_{j\in S}b_j \alpha_{j}^{(k)}- \sum_{j\in S}p_j\right\}$. 
\end{itemize}

An (approximate) implementation of $\adem$ is an algorithm that takes inputs $b,p\in \mathbb{R}^m$, and outputs a set $S\subseteq [m]$ and $k\in [K]$. The algorithm has access to the demand oracle and XOS oracle of $v$. We denote $\alg(v,b,p)$ the output of the algorithm. For any $\alpha>1$, $\alg$ is an $\alpha$-approximation to $\adem$ if for every XOS valuation $v$ and every $b,p\in \mathbb{R}^m$, the algorithm outputs $(S',k')$ that satisfies:
$$\max_{S\subseteq [m],k\in[K]} \left\{\sum_{j\in S}b_j \alpha_{j}^{(k)}- \sum_{j\in S}p_j\right\}\leq \alpha\cdot \left(\sum_{j\in S}b_j \alpha_{j}^{(k')}- \sum_{j\in S'}p_j\right)$$

In Theorem~\ref{lem:example_adjustable_oracle} we show that we cannot approximate the output of an Adjustable Demand Oracle within any finite factor,
if we are permitted to query polynomial many times the XOS, Value and Demand Oracle.

\begin{theorem}\label{lem:example_adjustable_oracle}
Given any $\alpha>1$, there \textbf{does not exist} an implementation of $\adem$ (denoted as $\alg$) that satisfies both of the following properties:
\begin{enumerate}
    \item For any XOS valuation $v$ over $m$ items, $\alg$ makes $\poly(m)$ queries to the value oracle, the demand oracle and XOS oracle of $v$, and runs in time $\poly(m, b)$. Here $b$ is the bit complexity of the problem instance (See Definition~\ref{def:bit complexity}).
    \item $\alg$ is an $\alpha$-approximation to $\adem$.
\end{enumerate}
\end{theorem}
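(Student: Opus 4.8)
The plan is to prove Theorem~\ref{lem:example_adjustable_oracle} by an information-theoretic indistinguishability argument: I will construct a large family of XOS valuations that all look identical to the value, demand, and XOS oracles on any polynomial set of queries, yet differ wildly in the answer to a single adjustable-demand query. First I would fix the number of items $m$ and set aside a distinguished item, say item $1$. Consider a ``base'' valuation $v_0$ with $K$ additive functions, where one additive function $\alpha^{(k_0)}$ assigns a large weight $L$ to item $1$ and $0$ to every other item, while all the other additive functions assign a small value (say $1$) to every item. For generic prices and generic sets, the small additive functions dominate, so the oracles reveal nothing about the hidden large coefficient on item $1$ unless a query specifically ``probes'' it.

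Next I would plant a hidden index. For each subset $T \subseteq \{2,\dots,m\}$ of some fixed size, define a valuation $v_T$ that agrees with $v_0$ except that it has an extra additive function $\beta^{(T)}$ which is large (value $M \gg L$) on the items in $T\cup\{1\}$ and $0$ elsewhere. The key point is to choose the magnitudes so that $\beta^{(T)}$ is never the maximizing additive function for any set $S$ \emph{unless} $S \supseteq T$, and to choose the prices in a value/demand/XOS query so that no polynomial-size collection of queries can ever contain such an $S$ for more than a negligible fraction of the $T$'s (there are exponentially many candidate $T$'s, and each query ``covers'' at most one of them). Concretely, one makes $v_0(S) = v_T(S)$ and the demand-oracle and XOS-oracle outputs identical for all $S$ that the algorithm queries, so the algorithm's transcript is the same across a huge family $\{v_T\}$. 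This is the standard ``needle in a haystack'' construction for oracle lower bounds.

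Then I would exhibit the single adjustable-demand query that separates them. Take the coefficient vector $b$ with $b_j$ large for $j \in \{2,\dots,m\}$ and $b_1$ tiny, and the price vector $p$ identically zero. Under $v_T$, scaling the contributions of items $2,\dots,m$ by a huge factor makes the additive function $\beta^{(T)}$ the unique maximizer, and the optimal bundle is exactly $T \cup \{1\}$ with a value of order $M \cdot (\text{large } b)\cdot |T|$. Under a different $v_{T'}$ the optimal value for this \emph{same} query would be attained on $T' \cup\{1\}$; an algorithm that has only seen the common transcript cannot tell which $T$ is planted, so whatever pair $(S', k')$ it outputs, for all but at most a $\poly(m)/\binom{m-1}{|T|}$ fraction of the $T$'s the quantity $\sum_{j\in S'} b_j \alpha_j^{(k')}$ is smaller than the true optimum by a factor that can be made larger than $\alpha$ by pushing $M$ large relative to $L$. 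Since this fraction is $o(1)$, some $v_T$ forces $\alg$ to violate the $\alpha$-approximation guarantee. Formally I would argue by contradiction: if $\alg$ makes $q = \poly(m)$ queries and is correct on every valuation, then on input $(b,p)$ above its transcript on $v_0$ already determines its output, and that output is $\alpha$-suboptimal for every $v_T$ whose planted set is not ``hit'' by the $q$ queries — a nonempty set since $q < \binom{m-1}{|T|}$ for appropriate $|T|$.

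The main obstacle I anticipate is making the oracle-indistinguishability airtight: I must verify that the demand oracle and especially the XOS oracle of $v_T$ genuinely return the \emph{same} answers as those of $v_0$ on every query the algorithm could plausibly make, including adaptive ones, and that a query which would reveal $\beta^{(T)}$ (namely one whose set $S$ contains $T$, or whose prices make $\beta^{(T)}$ maximal) can be charged to ``covering'' the single index $T$. This requires a careful choice of the parameters $L, M$, the size $|T|$, and a union bound over the (at most $q$, or $q$ times $2^{O(1)}$ after accounting for the structure of demand queries) queries, showing the adversary can always pick an uncovered $T$. A secondary subtlety is that the adjustable demand oracle outputs \emph{both} a bundle and an additive function, so I need the approximation notion to be the one stated in the excerpt (comparing $\sum_{j\in S'} b_j\alpha^{(k')}_j - \sum_{j\in S'} p_j$ against the true optimum), and to confirm that no choice of the \emph{reported} additive function can rescue a wrong bundle — which follows because under $v_0$ every additive function is small on the scaled-up items, while the true optimum under $v_T$ is large. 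Once the parameters are pinned down, the rest is a routine counting argument.
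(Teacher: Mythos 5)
Your overall strategy (a needle-in-a-haystack family of XOS valuations indistinguishable to value/demand/XOS queries, separated by one adjustable-demand query) is the same as the paper's, but your specific construction has a genuine internal contradiction. You require the planted function $\beta^{(T)}$ to be \emph{large} ($M\gg L$ on the items of $T\cup\{1\}$) so that the adjusted query with large coefficients yields value of order $M\cdot|T|$, and simultaneously you require that $\beta^{(T)}$ ``is never the maximizing additive function for any set $S$ unless $S\supseteq T$.'' These two requirements are incompatible: if $\beta^{(T)}$ assigns $M\gg L$ to each $j\in T$, then already on the singleton $\{j\}$, $j\in T$, it is the unique maximizer, so $v_T(\{j\})=M$ while $v_0(\{j\})=1$; the algorithm distinguishes $v_T$ from $v_0$ with $m$ value queries, and a single XOS query on such a singleton reveals $T$ entirely. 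For the same reason your counting step (``each query covers at most one $T$'') fails: with a large planted function, a query on a set $S$ leaks information about \emph{every} $T$ intersecting $S$, not just one, so you cannot union-bound over $\poly(m)$ queries against $\binom{m-1}{|T|}$ candidates. Conversely, if you shrink $M$ so that $\beta^{(T)}$ is dominated on all but one special set (which is what hiding actually forces), then your separating query no longer gives any gap at all, since with prices zero and all ``small'' functions worth $1$ on every item, scaling items $2,\dots,m$ uniformly inflates the visible functions exactly as much as the hidden one; in particular the gap cannot ``be made larger than $\alpha$ by pushing $M$ large'' — that knob is precisely the one the hiding constraint removes.

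The missing idea, which the paper supplies, is that the planted function must be of \emph{comparable magnitude} to the visible ones (it wins the unweighted max only on one carefully engineered set, by margins of order $\eps'$, which is what makes ``each XOS/demand query eliminates at most one candidate $\mathcal{C}(r)$'' literally true), and the separation must come from the \emph{shape} of the scaling vector rather than from a magnitude gap. Concretely, the paper's visible functions are spikes of height $\approx j$ concentrated on single items $j\le\ell$, the hidden function is flat of height $\approx 1$ on all of $\{1,\dots,\ell\}$, and the adjustable query uses harmonic coefficients $b_j=1/j$: every visible function then scores $O(1)$, while the hidden one scores $\sum_{j\le\ell}1/j>\log\ell$, so taking $\ell>e^{2\alpha}$ beats any fixed approximation factor $\alpha$. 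Without some such shape-based mechanism (and the accompanying near-tie bookkeeping that pins down exactly which query sets and which price vectors can reveal the hidden index), your construction cannot be repaired by tuning $L$, $M$, and $|T|$; the hiding and separation requirements pull $M$ in opposite directions.
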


\begin{proof}
Recall that a value oracle can be easily simulated with an XOS oracle. Thus we only argue for demand queries and XOS queries. For the sake of contradiction, assume there exists such an algorithm $\alg$. Let $\ell>e^{2\alpha}$ be an arbitrary even integer. Let $m=2\ell$.
Denote by $L = {\ell \choose \ell/2}$.
We decompose the items into two sets $S_1 = \left\{1,\ldots,\ell \right\}$ and $S_2=\left\{\ell+1,\ldots,m \right\}$.

We consider an XOS valuation $\widehat{v}$ generated by $2\ell$ additive functions denoted by $\{\widehat{\alpha}_j^{(k)}\}_{j\in[m],k\in[2\ell]}$ parameterized by variables $\eps',\eps$ such that $0 < (\ell+1)\eps' < \eps < \frac{1}{2}$. For $k=1,\ldots,\ell$ and $j\in [m]$:
\notshow{
$$
\widehat{\alpha}_{j}^{(k)}
=
\begin{cases}
j + \eps+ \left(1 - \frac{1}{\ell} \right) \eps' \quad &\text{if $j=k$} \\
0 &\text{if $j\in S_1$ and $j\not=k$} \\
0 \quad &\text{if $j\in S_2$} \\
\end{cases}
$$
}

{$$
\widehat{\alpha}_{j}^{(k)}
=
\begin{cases}
j + \eps+ \left(1 - \frac{1}{\ell} \right) \eps' \quad &\text{if $j=k$} \\
0 &\text{otherwise} \\
\end{cases}
$$}

For $k=\ell + 1,\ldots,2\ell$ and $j\in [m]$, define
$$
\widehat{\alpha}_{j}^{(k)}
=
\begin{cases}
j \quad &\text{if $j=k-\ell$} \\
0 \quad &\text{if $j\in S_1$ and $j\not=k-\ell$}\\
\frac{2}{\ell}\cdot\eps \quad &\text{if $j\in S_2$} \\
\end{cases}
$$

Next, we introduce a family of XOS valuations $\{v_r\}_{r\in [L]}$ over $m$ items. For every $r$, the valuation $v_r$ is generated by $K=2\ell + 1$ additive functions, denoted as $\{\alpha_{r, j}^{(k)}\}_{j\in [m]}, \forall k\in [K]$. We define all the additive functions as follows (and hence $v_r$ is defined as $v_r(S)=\max_{k\in [K]}\sum_{j\in S}\alpha_{r,j}^{(k)}, \forall S\subseteq [m]$):

\paragraph{For $k\in[2\ell]$:}  For every $r\in [L]$ and $j\in[m]$, define
$$
\alpha_{r, j}^{(k)}
=
\widehat{\alpha}_{j}^{(k)}
$$
\paragraph{For $k=2\ell + 1$:} Take any bijective mapping $\mathcal{C}$ between $[L]$ and subsets of $S_2$ of size $\frac{\ell}{2}$. 
For every $r\in [L]$, define
$$
\alpha_{r, j}^{(2\ell+1)}
=
\begin{cases}
1\quad &\text{if $j\in S_1$ and $j\not=\ell$}\\
1+\eps \quad &\text{if $j = \ell$}\\
\frac{2}{\ell}\cdot\eps' \quad &\text{if $j\in \mathcal{C}(r)$} \\
0 \quad &\text{if $j\in S_2\backslash \mathcal{C}(r)$} 
\end{cases}
$$

In the following lemmas, we prove that given access to both the XOS and Demand oracle, the algorithm can not distinguish between valuations $v_r$ and $\widehat{v}$, unless the algorithm knows $\mathcal{C}(r)$ (or $r$).


\begin{lemma}\label{lem:v_r val}
For any $r\in [L]$ and any nonempty set $S\subseteq [m]$,
$
\sum_{j\in S}\alpha_{r,j}^{(2\ell+1)} > \sum_{j\in S}\alpha_{r,j}^{(k)} 
$ for all $k\in [2\ell]$ if and only if $S=S_1\cup \mathcal{C}(r)$.
Hence for any $r\in [\ell]$ and any nonempty set $S\subseteq [m]$,
$$\xos(v_r,S)=
\begin{cases}
\{\alpha_{r,j}^{(2\ell+1)}\}_{j\in[m]} \quad & \text{if $S=S_1\cup \mathcal{C}(r)$ 
}\\
\xos(\widehat{v},S) \quad & \text{otherwise}
\end{cases}$$
\end{lemma}

\begin{proof}
Let $j^* = \max\{j: j\in S_1\cap S\}$ (define it to be 0 if $S\cap S_1=\emptyset$). We consider the following cases:

\begin{itemize}
\item Case 1: $S\cap S_1=\emptyset$. Then $S\subseteq S_2$. For any $k^*=\ell + 1,\ldots,2\ell$,
$$
\sum_{j\in S} \alpha_{r,j}^{(k^*)} = \sum_{j\in S}  \frac{2}{\ell}\eps
> \sum_{j\in S} \frac{2}{\ell}\eps'\geq \sum_{j\in S\cap \mathcal{C}(r)} \frac{2}{\ell}\eps'= \sum_{j\in S} \alpha_{r,j}^{(2\ell+1)}
$$

\item Case 2: $S\cap S_2=\emptyset$. Then $S\subseteq S_1$ and $S\cap S_1 \neq \emptyset$, which implies that $j^* > 0$. We have
$$
\sum_{j\in S} \alpha_{r,j}^{(j^*)} = j^* + \eps+ \left(1 - \frac{1}{\ell} \right) \eps'
> j^* + \eps
\geq \sum_{j=1}^{j^*} \alpha_{r,j}^{(2\ell+1)}
\geq 
\sum_{j\in S} \alpha_{r,j}^{(2\ell+1)}
$$
Here the second last inequality follows from the fact that $S\subseteq S_1$ and $\alpha_{r,j}^{(2\ell+1)}=1$ for every $j\in S_1/\{\ell\}$ and $\alpha_{r,\ell}^{(2\ell+1)}=1+\eps$. The last inequality follows from the definition of $j^*$. 

\item Case 3: $S_1\not\subseteq S$, $S\cap S_1\neq \emptyset$ and $S\cap S_2 \neq \emptyset$. We have that

$$
\sum_{j\in S} \alpha_{r,j}^{(\ell+j^*)} = j^* + \sum_{j\in S_2\cap S}\frac{2}{\ell}\eps 
>
j^* +  \sum_{j\in S_2\cap S}\frac{2}{\ell}\eps'
\geq 
\sum_{j\in S} \alpha_{r,j}^{(2\ell+1)}
$$

The last inequality follows from the fact that $\sum_{j\in S\cap S_1}\alpha_{r,j}^{(2\ell+1)}\leq j^*$: If $j^*\not=\ell$, then $\alpha_{r,j}^{(2\ell+1)}=1,\forall j\in S\cap S_1$; If $j^*=\ell$, since $S_1\not\subseteq S$, $\sum_{j\in S\cap S_1}\alpha_{r,j}^{(2\ell+1)}=|S\cap S_1|+\eps\leq (\ell-1)+\eps<\ell$. 

\item Case 4: $S_1\subseteq S$ and $S\cap S_2 \neq \mathcal{C}(r)$. We have

$$
\sum_{j\in S} \alpha_{r,j}^{(\ell)} = \alpha_{r,\ell}^{(\ell)}=
\ell+ \eps+ \left(1 - \frac{1}{\ell} \right) \eps'
>
\left(\ell + \eps\right) +  \sum_{j\in \mathcal{C}(r)\cap S}\frac{2}{\ell}\eps'
= 
\sum_{j\in S} \alpha_{r,j}^{(2\ell+1)}
$$
Here the inequality is because: $\mathcal{C}(r)\cap S \neq \mathcal{C}(r)$, then $|\mathcal{C}(r)\cap S| \leq \frac{\ell}{2} - 1$,
which implies that $\sum_{j\in \mathcal{C}(r)\cap S}\frac{2}{\ell}\eps' \leq \left(\frac{\ell}{2}-1\right)\frac{2}{\ell}\eps' < \left(1-\frac{1}{\ell}\right)\eps'$.

\item Case 5: $S_1\subseteq S$ and $S\cap S_2 = \mathcal{C}(r)$. Recall that $|\mathcal{C}(r)|=\frac{\ell}{2} $. We notice that $\sum_{j\in S}  \alpha_{r,j}^{(2\ell+1)} = \ell + \eps + \eps'
$. On the other hand,
$$
\max_{k\in [\ell]} \sum_{j\in S}  \alpha_{r,j}^{(k)} 
= \sum_{j\in S} \alpha_{r,j}^{(\ell)}= \ell + \eps+ \left(1 - \frac{1}{\ell} \right) \eps' < \sum_{j\in S}  \alpha_{r,j}^{(2\ell+1)} 
$$

and

$$
\max_{\ell< k\leq 2\ell} \sum_{j\in S}  \alpha_{r,j}^{(k)} 
= \sum_{j\in S} \alpha_{r,j}^{(2\ell)}
= \ell + \sum_{j\in \mathcal{C}(r)} \alpha_{r,j}^{(2\ell)}= \ell + \eps < \sum_{j\in S}  \alpha_{r,j}^{(2\ell+1)} 
$$
\end{itemize}

We have proved the first part of the statement. The second part of the statement then follows by noticing that the first $2\ell$ additive functions of $v_r$ and $\widehat{v}$ are exactly the same.  
\end{proof}



\begin{lemma}\label{lem:output-Demand-oracle}
For any $r\in [L]$ and any set of prices $\bm{p} \in \mathbb{R}^m_{\geq 0}$ such that $\{j\in S_2: p_j \leq \frac{2}{\ell}\eps'\}\neq \mathcal{C}(r)$, we have $\dem(v_r,\bm{p})= \dem(\widehat{v},\bm{p})$.
\end{lemma}

\begin{proof}
Recall that

$$ \dem(v_r,\bm{p}) \in \argmax_{S\subseteq [m]} \left\{v_r(S) - \sum_{j\in S} p_j\right\} = \argmax_{S\subseteq [m]}\left\{\max_{k \in [K]}  \sum_{j\in S} (\alpha_{r,j}^{(k)} - p_j)\right\}$$

We notice that valuations $v_r$ and $\widehat{v}$ differ only in the $(2\ell+1)$-th additive valuation (with coefficients $\{\alpha_{r,j}^{(2\ell+1)}\}_{j\in[m]}$). Thus by \Cref{lem:v_r val}, $\dem(v_r,\bm{p})= \dem(\widehat{v},\bm{p})$ unless $S^*=S_1\cup \mathcal{C}(r)$ is the favorite bundle for valuation $v_r$ at price $\bm{p}$, i.e. 
$$\sum_{j\in S_1\cup \mathcal{C}(r)}(\alpha_{r,j}^{(2\ell+1)} - p_j) = \max_{S\subseteq [m], k \in [K]}  \sum_{j\in S} (\alpha_{r,j}^{(k)} - p_j)$$

Let $S_0= \{j\in S_2 : p_j \leq \frac{2}{\ell}\eps'\}$.
Firstly, if $\mathcal{C}(r)\not\subseteq S_0$, then there exists $j\in S_2$ such that $p_j>\frac{2}{\ell}\eps'$. Since $\alpha_{r,j}^{(2\ell+1)}=\frac{2}{\ell}\eps',\forall j\in \mathcal{C}(r)$, we have  
$$\sum_{j\in S_1\cup \mathcal{C}(r)}(\alpha_{r,j}^{(2\ell+1)} - p_j)
<
\sum_{j\in S_1\cup({\mathcal{C}(r)\cap} S_0)}(\alpha_{r,j}^{(2\ell+1)} -  p_j)
\leq \max_{S\subseteq [m], k \in [K]}  \sum_{j\in S} (\alpha_{r,j}^{(k)} - p_j),$$

which implies that $\dem(v_r,\bm{p})=\dem(\widehat{v},\bm{p})$. It remains to consider the case where $\mathcal{C}(r)\subseteq S_0$ and $\mathcal{C}(r)\not=S_0$. We have
$$\sum_{j\in S_1\cup \mathcal{C}(r)}(\alpha_{r,j}^{(2\ell+1)} - p_j)
\leq \ell +\eps - \sum_{j\in S_1}p_j 
+ \sum_{j \in \mathcal{C}(r)} \alpha_{r,j}^{(2\ell+1)} 
= \ell +\eps - \sum_{j\in S_1}p_j 
+ \eps'
$$
Here the first inequality follows from $\sum_{j\in S_1}\alpha_{r,j}^{(2\ell+1)}=\ell+\eps$ and $p_j\geq 0,\forall j\in \mathcal{C}(r)$. And the equality follows from $\sum_{j \in \mathcal{C}(r)} \alpha_{r,j}^{(2\ell+1)}= |\mathcal{C}(r)|\cdot \frac{2}{\ell}\eps'=\ell/2\cdot \frac{2}{\ell}\eps'=\eps'$. 
On the other hand, 

\begin{align*}
\sum_{j\in S_1\cup S_0}(\alpha_{r,j}^{(2\ell)} -  p_j)
&= \ell  - \sum_{j\in S_1}p_j 
+ \sum_{j \in S_0} \left(\alpha_{r,j}^{(2\ell)} - p_j\right) \\
&\geq \ell  - \sum_{j\in S_1}p_j 
+ \sum_{j \in S_0} \left(\frac{2}{\ell}\eps - \frac{2}{\ell}\eps'\right) \\
&\geq  \ell  - \sum_{j\in S_1}p_j 
+  \left( \frac{\ell}{2}+1\right)\left(\frac{2}{\ell}\eps - \frac{2}{\ell}\eps'\right)\\
&=  \ell  - \sum_{j\in S_1}p_j 
+ \eps -\eps' + \frac{2}{\ell}\left( \eps - \eps'\right) \\
&> \ell  - \sum_{j\in S_1}p_j 
+ \eps -\eps' + \frac{2}{\ell}\left( (\ell+1)\eps' - \eps'\right) \\
&=  \ell + \eps + \eps' - \sum_{j\in S_1}p_j
\end{align*}
Here the first inequality follows from $p_j \leq \frac{2}{\ell}\eps'$, $\alpha_{r,j}^{(2\ell)}=\frac{2}{\ell}\eps$ for $j\in {S_0}$. The second inequality follows from $|S_0| \geq \frac{\ell}{2}+1$, since $|\mathcal{C}(r)|=\frac{\ell}{2}$, $\mathcal{C}(r)\subseteq S_0$, and $\mathcal{C}(r)\not=S_0$. The third inequality follows from our choice of $\eps$ and $\eps'$ such that $\eps > (\ell+1)\eps'$. Thus,
$$
\sum_{j\in S_1\cup \mathcal{C}(r)}(\alpha_{r,j}^{(2\ell+1)} - p_j) < \sum_{j\in S_1\cup S_0}(\alpha_{r,j}^{(2\ell)} - p_j) \leq \max_{S\subseteq [m], k \in [K]}  \sum_{j\in S} (\alpha_{r,j}^{(k)} - p_j)
$$
which implies that $\dem(v_r,\bm{p})=\dem(\widehat{v},\bm{p})$.
\end{proof}

To complete the proof of \Cref{lem:example_adjustable_oracle}, set $\eps = 0.1$ and $\eps' = \frac{0.1}{2(\ell+1)}$. We notice that the bit complexity of our input is $b=O(\poly(\ell))$ for any valuation $v_r$.
Now consider the coefficient vector $\bm{c}=(1, 1/2,\ldots, 1/\ell, 0, \ldots, 0)$ with price vector $\bm{p}=\textbf{0}$. For any valuation $v_r$ in the family, clearly $\adem$ will select the whole set $[m]$ since all coefficients are non-negative. 

For any $k\in [2\ell]$, $\sum_{j\in [m]}c_j\alpha_{r,j}^{(k)}\leq 1 + \eps + \left(1-\frac{1}{\ell}\right)\eps'<2$. 
On the other hand, when $k=2\ell+1$, since $\ell>e^{2\alpha}$ we have
$$\sum_{j\in [m]}c_j\alpha_{r,j}^{(2\ell+1)}\geq\sum_{j\in [\ell]}\frac{1}{j}\cdot 1>\log(\ell)>2\alpha>\alpha\cdot \max_{k\not=2\ell+1}\left\{\sum_{j\in [m]}c_j\alpha_{r,j}^{(k)}\right\}=\alpha\cdot \max_{k}\left\{\sum_{j\in [m]}c_j\widehat{\alpha}_{j}^{(k)}\right\}$$




Thus informally speaking, to obtain an $\alpha$-approximation to $\adem$ for every valuation $v_r$, the algorithm needs to distinguish $v_r$ from $\widehat{v}$ by identifying the $(2\ell+1)$-th additive function. 
By Lemmas~\ref{lem:v_r val} and \ref{lem:output-Demand-oracle}, the algorithm must be able to identify $r$ or $\mathcal{C}(r)$. However, there are $L = {\ell \choose \ell/2}$ different valuations in the family $\{v_r\}_{r\in [L]}$. Thus there must exist one $v_r$ such that the algorithm can not distinguish $v_r$ from $\widehat{v}$ within $poly(\ell)=poly(m,b)$ queries to both oracles. 

Formally, consider the execution of $\alg$ on valuation $\widehat{v}$. Let $Q$ be the total number of queries during the execution and define set $S^{(1)},\ldots,S^{(Q)}\subseteq [m]$ as follows: For every $q\in [Q]$, if the $q$-th query is an XOS query, define $S^{(q)}$ as the input set to the XOS oracle; If it's a demand query, let $S^{(q)}=\{j\in S_2: p_j^{(q)} \leq \frac{2}{\ell}\eps'\}$, where $\bm{p}^{(q)}=\{p_j^{(q)}\}_{j\in [m]}$ is the input price vector to the demand oracle.  
Recall that $m=2\ell$ and $b=\poly(\ell)$, we have $Q=\poly(m, b)<L = {\ell \choose \ell/2}$ for sufficiently large $\ell$. Thus there exists some $r^*\in [L]$ such that $\mathcal{C}(r^*)\not=S_2\cap S^{(q)}$ 
for any $q\in [Q]$.
By \Cref{lem:v_r val}, we have that $\xos(\widehat{v},S^{(q)})=\xos(v_r,S^{(q)})$ and by \Cref{lem:output-Demand-oracle} we have that $\dem(\widehat{v},\bm{p}^{(q)}) = \dem(v_r,\bm{p}^{(q)})$ for any $q\in [Q]$. 
This implies that the execution (and thus output) of $\alg$ on input valuation $\widehat{v}$ is exactly the same as its execution on $v_{r^*}$.  

We notice that from the above calculation, $\max_{k}\left\{\sum_{j\in [m]}c_j\alpha_{r^*,j}^{(k)}\right\}\geq\sum_{j\in [\ell]}\frac{1}{j}>\log(\ell)>2\alpha$, while
$\max_{k}\left\{\sum_{j\in [m]}c_j\widehat{\alpha}_{j}^{(k)}\right\}=\max_{k\not=2\ell+1}\left\{\sum_{j\in [m]}c_j\alpha_{r^*,j}^{(k)}\right\}<2$. Thus on input $(v_{r^*},\bm{c})$, $\alg$ achieves less than a $\frac{1}{\alpha}$-fraction of the optimal objective from $\adem$, contradicting with the fact that $\alg$ is an $\alpha$-approximation to $\adem$. 
\end{proof}


\newpage
\bibliographystyle{alpha}
\bibliography{Yang.bib}

\end{document}